\definecolor{shadecolor}{rgb}{0.95, 0.95, 0.86}
\renewcommand{\d}{{\mathrm d}}
\newcommand{\im}{\mathrm{i}}
\newcommand{\e}{\mathrm{e}}
\newcommand{\z}{\zeta}
\renewcommand{\b}{\beta}
\def\res{\mathop{\mathrm{res}}\limits}
\numberwithin{equation}{section}
\newtheorem{theo}{Theorem}[section]
\newtheorem{lem}[theo]{Lemma}
\newtheorem{rem}[theo]{Remark}
\newtheorem{problem}[theo]{Riemann-Hilbert Problem}
\newtheorem{prop}[theo]{Proposition} 
\newtheorem{cor}[theo]{Corollary}
\begin{document}

\title[Extended Jimbo-Miwa-Ueno differential]{On the analysis of incomplete spectra in random matrix theory through an extension of the Jimbo-Miwa-Ueno differential}

\author{Thomas Bothner}
\address{Department of Mathematics, University of Michigan, 2074 East Hall, 530 Church Street, Ann Arbor, MI 48109-1043, United States}
\email{bothner@umich.edu}
\author{Alexander Its}
\address{Department of Mathematical Sciences, Indiana University-Purdue University Indianapolis, 402 N. Blackford St., Indianapolis, IN 46202, U.S.A.}
\email{aits@iupui.edu}
\author{Andrei Prokhorov}
\address{Department of Mathematical Sciences, Indiana University-Purdue University Indianapolis, 402 N. Blackford St., Indianapolis, IN 46202, U.S.A.}
\email{aprokhor@iupui.edu}

\keywords{Thinned LUE process, isomonodromic tau-functions, tail asymptotics, action integrals, Weibull statistics, Poisson statistics, Riemann-Hilbert problem, Deift-Zhou nonlinear steepest descent method.}

\subjclass[2010]{Primary 60B20; Secondary 45M05, 82B26, 70S05, 33C10, 33C15.}

\thanks{T.B. acknowledges support of the AMS and the Simons Foundation through a travel grant as well as the hospitality of Indiana University-Purdue University Indianapolis in August $2016$. A.I and A.P. acknowledge support by the  NSF Grant DMS-1361856, the NSF Grant DMS-1700261, the SPbGU grant \#11.38.215.2014 and the RSF grant  \#17-11-01126}

\begin{abstract}
Several distribution functions in the classical unitarily invariant matrix ensembles are prime examples of isomonodromic tau functions as introduced by Jimbo, Miwa and Ueno (JMU) in the early 1980s \cite{JMU}. Recent advances in the theory of tau functions  \cite{ILP}, based on earlier works of B. Malgrange and M. Bertola, have allowed to extend the original Jimbo-Miwa-Ueno differential form to a 1-form closed on the full space of
extended monodromy data of the underlying Lax pairs. This in turn has yielded a novel approach for the asymptotic evaluation of isomonodromic tau functions,  including the exact computation of all relevant constant factors. We use  this method to efficiently compute  the tail asymptotics of soft-edge, hard-edge and bulk scaled distribution and gap functions in the complex Wishart ensemble, provided each eigenvalue particle has been removed independently with probability $1-\gamma\in(0,1]$.
\end{abstract}

\date{\today}
\maketitle
\section{Introduction and statement of results}

This paper is concerned with the large gap asymptotics of the universal limiting distributions in random matrix theory. 
The issue which we will specifically address  is the evaluation of the constant factors appearing in these asymptotics, the so-called
``constant problem''. We will present a new method for the derivation of tail expansions which does not rely on Fredholm, or Toeplitz, or Hankel determinant formul\ae\,, which are  the usual tools in the analysis of distribution functions. Instead,  our approach is based on the interpretation of the distribution 
functions as tau functions of the theory of isomonodromic deformations of  certain systems of linear ODEs with rational coefficients. 
Specifically we shall evaluate, including the constant factors, the tail asymptotics of soft-edge, hard-edge and bulk scaled distribution and gap functions in the complex Wishart ensemble, provided each eigenvalue particle has been removed independently with probability $1-\gamma\in(0,1]$.
In what follows, we shall describe the content of our work and its principal results in detail.

\subsection{Complete Wishart ensemble}\label{sec:11}
The complex Wishart ensemble \cite{M,F} can be realized as a log-gas system of (eigenvalue) particles $0<\lambda_1<\lambda_2<\ldots<\lambda_n$ on the positive real axis with probability density function for the location of the $\lambda_j$'s given by
\begin{equation}\label{JME:1}
	f(\lambda_1,\ldots,\lambda_n)=\frac{1}{Z_n}\prod_{1\leq j<k\leq n}(\lambda_k-\lambda_j)^2\prod_{j=1}^n\lambda_j^{\alpha}\e^{-\lambda_j},\ \ \ \alpha>-1.
\end{equation}
The constant $Z_n$ serves as normalization and \eqref{JME:1} is also well-known under the name Laguerre Unitary Ensemble (LUE). The great benefit and applicability of a unitarily invariant ensemble (such as the LUE) stems from the fact that the point process \eqref{JME:1} is determinantal, i.e. the underlying rescaled marginal densities (a.k.a. $k$-point correlation functions) can be computed in closed determinantal form, cf. \cite{M,F},
\begin{equation}\label{JME:2}
	R_k(\lambda_1,\ldots,\lambda_k)=\frac{n!}{(n-k)!}\int_0^{\infty}\cdots\int_0^{\infty}f(\lambda_1,\ldots,\lambda_n)\prod_{j=k+1}^n\d\lambda_j=\det\big[K_n(\lambda_j,\lambda_{\ell})\big]_{j,\ell=1}^k,\ \ \ k=1,\ldots,n.
\end{equation}
Here, the kernel function $K_n(x,y)$ is a Christoffel-Darboux kernel expressed in terms of classical Laguerre polynomials. It is well-known that \eqref{JME:2} encodes the core integrable structure of the LUE and at the same time paths the way to a rigorous analysis of the thermodynamical limit $n\rightarrow\infty$ of the $k$-point correlation function. Indeed, cf. \cite{F2,F}, the eigenvalue density obeys the Marchenko-Pastur law,
\begin{equation}\label{JME:3}
	\lim_{n\rightarrow\infty}R_1(n\lambda)=\rho_{_\textnormal{MP}}(\lambda)\equiv\frac{1}{2\pi}\sqrt{\frac{4-\lambda}{\lambda}}\chi_{(0,4)}(\lambda),\ \ \ \ \ \ \ \chi_A(x)=\begin{cases}1,&x\in A\\ 0,&x\notin A\end{cases},
\end{equation}
shown in Figure \ref{figure0} below. The global limiting law \eqref{JME:3} leads in turn to three qualitatively different local scenarios: provided we center and scale correctly, 
\begin{equation*}
	\begin{cases}
		\mu_j^{S}\,=2^{-\frac{4}{3}}n^{-\frac{1}{3}}(\lambda_j-4n)&\,\,\textnormal{soft-edge}\smallskip\\
		\mu_j^{H}=4^{-1}n^{-1}\lambda_j&\,\,\textnormal{hard-edge}\smallskip\\
		\mu_j^{B}=2\pi\rho_{_{\textnormal{MP}}}(cn)(\lambda_j-cn)&\,\,\textnormal{bulk}
	\end{cases},\ \ \ \ \ \ \ c\in(0,4)\ \ \textnormal{fixed};\ \ \ \ \ j=1,\ldots,n,
\end{equation*}
then, cf. \cite{F}, for any $t\in\mathbb{R}$,
\begin{equation}\label{JME:4}
	\lim_{n\rightarrow\infty}\mathbb{P}\big(\#\{\mu_j^S\in(t,+\infty)\}=0\big)=F_S(t)
\end{equation}
which is the limiting distribution function of the largest eigenvalue in the LUE, and for $t\in\mathbb{R}_{>0}$,
\begin{equation}\label{JME:5}
	\lim_{n\rightarrow\infty}\Big(1-\mathbb{P}\big(\#\{\mu_j^H\in(0,t)\}=0\big)\Big)=1-F_H(t,\alpha);\ \ \textnormal{resp.}\ \ \lim_{n\rightarrow\infty}\mathbb{P}\left(\#\left\{\mu_j^B\in\left(-\frac{t}{\pi},\frac{t}{\pi}\right)\right\}=0\right)=F_B(t)
\end{equation}
which is the limiting distribution function of the smallest eigenvalue, resp. the limiting bulk gap function in the LUE.
\begin{center}
\begin{figure}[tbh]
\resizebox{0.8\textwidth}{!}{\includegraphics{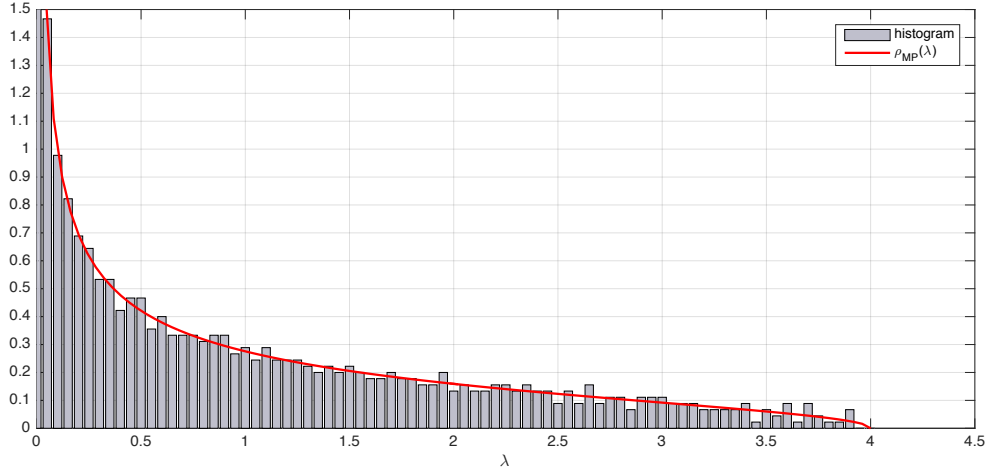}}
\caption{Histogram of the eigenvalues of a $900\times 900$ (rescaled) complex Wishart matrix in comparison with the Marchenko-Pastur density \eqref{JME:3}.}
\label{figure0}
\end{figure}
\end{center} 
The intimate connection of the three functions $F_B(t),F_S(t)$ and $F_H(t,\alpha)$ defined in \eqref{JME:4} and \eqref{JME:5} to the theory of integrable systems is remarkable and well-known: first, for the bulk function, as proven by Jimbo-Miwa-Mori-Sato \cite{JMMS}, 
\begin{equation}\label{JME:6}
	\ln F_B(t)=\int_0^t\mathcal{H}_B\big(q(s),p(s),s)\,\d s,\ \ \ \ \ t\in\mathbb{R}_{\geq 0}
\end{equation}
in terms of the Hamiltonian dynamical system 
\begin{equation}\label{HH:1}
	\frac{\d p}{\d t}=-\frac{\partial\mathcal{H}_B}{\partial q},\ \ \ \frac{\d q}{\d t}=\frac{\partial\mathcal{H}_B}{\partial p};\ \ \ \ \ \ \ \ \ \ \mathcal{H}_B(q,p,t)=-4\im q+\frac{4}{t}q^2\sinh^2\left(\frac{p}{2}\right).
\end{equation}
The required solutions $(q,p)$ to this system are smooth on the positive real axis and uniquely determined by the boundary behavior $q(t)\sim\frac{1}{2\pi\im}$ and $p(t)\sim4\im t$ as $t\downarrow 0$. The dynamical system (\ref{HH:1}) is equivalent to a special case of the Painlev\'e-V equation for
the function $\omega(t)=\exp(p(\frac{t}{2}))$, cf. \cite{JMMS},
\begin{equation}\label{PV.1}
\frac{\d^2\omega}{\d t^2} = \left(\frac{\d\omega}{\d t}\right)^2\frac{3\omega-1}{2\omega(\omega-1)}
+\frac{2\omega(\omega+1)}{\omega-1} +\frac{2i\omega}{t} - \frac{1}{t}\frac{\d\omega}{dt}
\end{equation} 
Second, for the distribution function of the largest eigenvalue, as proven by Tracy-Widom \cite{TW1}, 
\begin{equation}\label{JME:7}
	\ln F_S(t)=-\int_t^{\infty}\mathcal{H}_S\big(q(s),p(s),s\big)\,\d s,\ \ \ \ \ t\in\mathbb{R}
\end{equation}
in terms of the Hamiltonian dynamical system
\begin{equation}\label{JME:8}
	\frac{\d p}{\d t}=-\frac{\partial\mathcal{H}_S}{\partial q},\ \ \ \frac{\d q}{\d t}=\frac{\partial\mathcal{H}_S}{\partial p};\ \ \ \ \ \ \ \ \ \ \mathcal{H}_S(q,p,t)=\frac{1}{4}p^2-tq^2-q^4.
\end{equation}
Here the solutions $(q,p)$ are smooth on the real line and fixed in such a way that $q(t)\sim\textnormal{Ai}(t)$ and $p(t)=2q_t(t)$ as $t\rightarrow+\infty$, where $\textnormal{Ai}(z)$ is the Airy function, cf. \cite{NIST}. The system (\ref{JME:8}) is equivalent to a special case of the
Painlev\'e II  equation for the function $q(t)$,
\begin{equation}\label{PII0}
\frac{\d^2q}{\d t^2} = tq + 2 q^3,
\end{equation}
and the solution $q(t)$ selected by the condition $q(t)\sim\textnormal{Ai}(t)$ is known as the Hastings-McLeod solution to (\ref{PII0}), see \cite{HM}. Third, again by Tracy-Widom \cite{TW2},
\begin{equation}\label{JME:9}
	\ln F_H(t,\alpha)=\int_0^t\mathcal{H}_H\big(q(s,\alpha),p(s,\alpha),s,\alpha)\,\d s,\ \ \ \ \ \ \ \ t\in\mathbb{R}_{\geq 0},\ \ \alpha\in\mathbb{R}_{>-1}
\end{equation}
in terms of the Hamiltonian dynamical system
\begin{equation}\label{HH:2}
	\frac{\d p}{\d t}=-\frac{\partial\mathcal{H}_H}{\partial q},\ \ \ \frac{\d q}{\d t}=\frac{\partial\mathcal{H}_H}{\partial p};\ \ \ \ \ \ \ \ \ \ \mathcal{H}_H(q,p,t,\alpha)=\frac{q^2-1}{4t}p^2-\frac{\alpha^2 q^2}{4t(q^2-1)}-\frac{q^2}{4}.
\end{equation}
We enforce
\begin{equation*}
	q(t,\alpha)\sim\frac{t^{\frac{1}{2}\alpha}}{2^{\alpha}\Gamma(1+\alpha)},\ \ \ \ p(t,\alpha)=\frac{2tq_t(t,\alpha)}{q^2(t,\alpha)-1},\ \ \ t\downarrow 0,\ \ \ \ q_t=\frac{\d q}{\d t},
\end{equation*}
where $\Gamma(z)$ is Euler's Gamma function. In addition, $q^2(t,\alpha)$ is smooth and real-valued on the half ray $(0,+\infty)\subset\mathbb{R}$. Dynamical system (\ref{HH:2}) is equivalent to yet another special case of the Painlev\'e-V equation for the function 
$y(t)=(q(t^2)-1)/(q(t^2)+1)$,
\begin{equation}\label{PV.2}
\frac{\d^2y}{\d t^2} = \left(\frac{\d y}{\d t}\right)^2\frac{3y-1}{2y(y-1)}
-\frac{2y(y+1)}{y-1} - \frac{1}{t}\frac{\d y}{\d t} +\frac{\alpha^2}{8}\frac{(y-1)^2}{t^2}\left(y-\frac{1}{y}\right).
\end{equation} 
\begin{rem}The aforementioned smoothness  properties of $(q,p)$ in \eqref{HH:1} and \eqref{JME:8} are well-known, cf. \cite{DIZ,HM}. 
The smoothness of $q$ and $p$ in the case  \eqref{HH:2} is proven in Appendix \ref{AppC}.
\end{rem}
\begin{rem} Each Hamiltonian $\mathcal{H}$ listed above solves itself a $\sigma$-Painlev\'e equation in the variable $t$, see for instance \cite{F}, Chapter $8$.
\end{rem}

From \eqref{JME:6}, \eqref{JME:7} and \eqref{JME:9} we see that $F_B(t),F_S(t)$ and $F_H(t,\alpha)$ are generating functions of Hamiltonians
associated with specific Painlev\'e systems. As such they are directly related to the theory of isomonodromic tau-functions in the sense of Jimbo-Miwa-Ueno \cite{JMU}. We will discuss this connection in more detail in Section \ref{sec:3_0} below.

\subsection{Incomplete Wishart ensemble}\label{sec:13} We now return to the discussion of the complex Wishart ensemble and the collection of soft-edge, hard-edge and bulk scaled eigenvalues $\{\mu_j^S,\mu_j^H,\mu_j^B\}_{j=1}^n$. But instead of the complete setup \eqref{JME:1} we will be interested in the following thinned/incomplete Wishart ensemble (cf. \cite{BP1,BP2,BP3}): fix $\gamma\in[0,1]$ and discard each (either soft-edge, or hard-edge or bulk scaled) eigenvalue $\mu_j^r,r=S,H,B$ independently with probability $1-\gamma$. This operation reduces correlation in our initial setup and introduces a new particle system on the real line,
\begin{equation*}
	\mu_{1,\gamma}^r<\mu_{2,\gamma}^r<\ldots<\mu_{N,\gamma}^r\ \ \ \ \ \ r=S,H,B;\ \ \ \ \ \ N=N(n,\gamma)\leq n.
\end{equation*}
Quite naturally we are interested in the statistical properties of this new system, in particular what can be said about the thinned extremal distributions
\begin{equation}\label{JME:15}
	\lim_{n\rightarrow\infty}\mathbb{P}\big(\#\{\mu_{j,\gamma}^S\in(t,+\infty)\}=0)=F_S(t;\gamma);\ \ \ \ \lim_{n\rightarrow\infty}\Big(1-\mathbb{P}\big(\#\{\mu_{j,\gamma}^H\in(0,t)\}=0)\Big)=1-F_H(t,\alpha;\gamma),
\end{equation}
and the thinned bulk gap function
\begin{equation}\label{JME:16}
	\lim_{n\rightarrow\infty}\mathbb{P}\left(\#\left\{\mu_{j,\gamma}^B\in\left(-\frac{t}{\pi},\frac{t}{\pi}\right)\right\}=0\right)=F_B(t;\gamma),
\end{equation}
which directly generalize \eqref{JME:4} and \eqref{JME:5}? As it turns out the thinning mechanism preserves Hamiltonian structure as summarized in our first result below.
\begin{theo}\label{theo:1} Fix $\gamma\in[0,1]$ and let $F_B(t;\gamma),F_S(t;\gamma)$ and $F_H(t,\alpha;\gamma)$
denote the functions defined in \eqref{JME:15} and \eqref{JME:16}. Then
\begin{equation*}
	\ln F_B(t;\gamma)=\int_0^t\mathcal{H}_B\big(q(s;\gamma),p(s;\gamma),s\big)\,\d s,\ \ \ \ \ \ \ \ \ln F_S(t;\gamma)=-\int_t^{\infty}\mathcal{H}_S\big(q(s;\gamma),p(s;\gamma),s\big)\,\d s,
\end{equation*}
and
\begin{equation*}
	\ln F_H(t,\alpha;\gamma)=\int_0^t\mathcal{H}_H\big(q(s,\alpha;\gamma),p(s,\alpha;\gamma),s,\alpha\big)\,\d s,
\end{equation*}
using the Hamiltonians in \eqref{HH:1}, \eqref{JME:8}, \eqref{HH:2} and the solutions to the underlying dynamical systems are fixed as follows: for the bulk function after thinning,
\begin{equation}\label{JME:17}
	 q(t;\gamma)\sim\frac{\gamma}{2\pi\im},\ \ \ p(t;\gamma)\sim4\im t,\ \ \ \ \ \ \ t\downarrow 0;
\end{equation}
whereas for the distribution function of the largest eigenvalue after thinning the Hastings-McLeod solution is replaced by the Ablowitz-Segur solution \cite{AS}, i.e.
\begin{equation}\label{JME:18}
	 q(t;\gamma)\sim\sqrt{\gamma}\,\textnormal{Ai}(t),\ \ \ p(t;\gamma)=2q_t(t,\gamma),\ \ \ \ \ \ \ \  t\rightarrow+\infty.
\end{equation}
In addition, related to the distribution function of the smallest eigenvalue after thinning,
\begin{equation}\label{JME:19}
	q(t,\alpha;\gamma)\sim\frac{\sqrt{\gamma}\,t^{\frac{1}{2}\alpha}}{2^{\alpha}\Gamma(1+\alpha)},\ \ \ p(t,\alpha;\gamma)=\frac{2tq_t(t,\alpha;\gamma)}{q^2(t,\alpha;\gamma)-1},\ \ \ \ \ \ \ \ t\downarrow 0.
\end{equation}
\end{theo}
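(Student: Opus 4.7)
My starting point is the observation that independent Bernoulli thinning with retention probability $\gamma\in[0,1]$ preserves the determinantal nature of the LUE: if the complete eigenvalue process has kernel $K_n$, then the thinned process is again determinantal with kernel $\gamma K_n$, as follows from a direct manipulation of \eqref{JME:2}. Translating the gap probabilities \eqref{JME:15}--\eqref{JME:16} into Fredholm determinants and passing to the three local scaling limits of the LUE, I would obtain
\begin{align*}
F_S(t;\gamma)&=\det\bigl(I-\gamma K_{\mathrm{Ai}}\bigr)\Big|_{L^2(t,\infty)},\\
F_H(t,\alpha;\gamma)&=\det\bigl(I-\gamma K_{\mathrm{Be},\alpha}\bigr)\Big|_{L^2(0,t)},\\
F_B(t;\gamma)&=\det\bigl(I-\gamma K_{\sin}\bigr)\Big|_{L^2(-t/\pi,t/\pi)},
\end{align*}
where $K_{\mathrm{Ai}}, K_{\mathrm{Be},\alpha}, K_{\sin}$ are the standard Airy, Bessel and sine kernels. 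The crucial point is that the full effect of thinning amounts to replacing each limiting kernel by its $\gamma$-scaling.

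Next I would apply the Its--Izergin--Korepin--Slavnov procedure to cast each of these Fredholm determinants as the tau function of an auxiliary Riemann--Hilbert problem of integrable type. Since the deformation $K\mapsto\gamma K$ multiplies only a constant entry of the associated jump matrices by $\gamma$, the underlying Lax pair and hence the Painlev\'e equation it encodes---namely \eqref{PII0} for the soft edge, \eqref{PV.2} for the hard edge, and \eqref{PV.1} for the bulk---are left unchanged. Consequently, the Tracy--Widom and Jimbo--Miwa--Mori--Sato derivations of \eqref{JME:6}, \eqref{JME:7}, \eqref{JME:9} apply verbatim to produce the three claimed integral representations, with the original transcendents replaced by the appropriate $\gamma$-deformed solutions of the very same Hamiltonian systems.

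It then remains to pin down these $\gamma$-deformed solutions through their boundary conditions. For the soft edge, the factor $\gamma$ turns the Hastings--McLeod Stokes data $(s_1,s_2,s_3)=(-1,0,1)$ into the Ablowitz--Segur data $(-\gamma,0,\gamma)$, and the classical large-$t$ expansion of \cite{AS} combined with the standard identification of $q^2$ with a logarithmic derivative of $F_S(t;\gamma)$ yields \eqref{JME:18} with its characteristic $\sqrt{\gamma}$ prefactor. Parallel Stokes-data analyses at the origin of the sine- and Bessel-kernel RHPs give \eqref{JME:17} and \eqref{JME:19}. I expect the delicate step to be the hard-edge boundary condition \eqref{JME:19}: the $\alpha$-dependent irregular singularity at $t=0$ forces one to match a $\gamma$-thinned Bessel parametrix against a $\gamma$-deformed model solution of \eqref{PV.2} and to extract the $\sqrt{\gamma}$ prefactor through the connection formulas of this transcendent near $t=0$. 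This is precisely the content of Appendix \ref{AppC}, which I would invoke to close the argument.
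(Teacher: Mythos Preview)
Your approach is essentially the same as the paper's, and it is correct. Both arguments first reduce \eqref{JME:15}--\eqref{JME:16} to Fredholm determinants with $\gamma$-scaled kernels; the paper does this via the generating-function identity
\[
F_B(t;\gamma)=\sum_{m\ge0}E_B(m,(-t,t))(1-\gamma)^m=E_B\bigl((-t,t),\gamma\bigr),
\]
whereas you invoke the equivalent fact that Bernoulli thinning of a determinantal process rescales the kernel. For the second half (Hamiltonian representations and boundary data) the paper is more economical than your sketch: it simply cites the relevant formul\ae\ in Forrester's book, Chapter~9, where the Tracy--Widom/JMMS computations are already carried out with a free parameter $\xi$ in front of the kernel, so setting $\xi=\gamma$ gives the result directly. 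Your IIKS/RHP route would of course reproduce those same computations.

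Two small corrections. First, the Ablowitz--Segur Stokes data in this normalization are $s_1=-s_3=-\im\sqrt{\gamma}$, $s_2=0$, not $(-\gamma,0,\gamma)$; this is consistent with the $\sqrt{\gamma}$ prefactor you correctly identify in \eqref{JME:18}. Second, your invocation of Appendix~\ref{AppC} is off target: that appendix establishes the \emph{smoothness} of $q,p,\mathcal{H}_H$ on $(0,\infty)$, not the small-$t$ behavior \eqref{JME:19}. The boundary condition \eqref{JME:19} comes straight from the Fredholm series of $\det(1-\gamma K^{\alpha}_{\textnormal{Bess}})$ at $t\downarrow0$ (equivalently from the explicit Bessel-kernel RHP at small $t$), and no connection formula for Painlev\'e~V is needed there.
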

The proof of Theorem  \ref{theo:1} follows from a combination of standard arguments based on the Fredholm determinant representations of $F_B(t),F_S(t)$ and $F_H(t,\alpha)$, see Section \ref{sec:2} below. In order to prepare for our next objective we remind the reader that the thinning process removes correlations from the initial setup $\{\mu_j^S,\mu_j^H,\mu_j^B\}_{j=1}^n$, thus varying $\gamma$ we are able to interpolate between particle systems that obey random matrix theory statistics and systems modeled by more classical distribution families, e.g. Poisson and Weibull, see Subsection \ref{sec:16} below. This interpolation mechanism is well-known by now, see e.g. \cite{BP1,BP3,BB,BDIK2} and the analytic challenge lies in the derivation of tail expansions for $F_B(t;\gamma),F_S(t;\gamma)$ and $F_H(t,\alpha;\gamma)$ as $t\downarrow 0,t\rightarrow+\infty$ (bulk), $t\rightarrow\pm\infty$ (soft-edge) and $t\downarrow 0,t\rightarrow+\infty$ (hard-edge) which are uniform with respect to $\gamma\in[0,1]$.
\begin{rem} The uniformity requirement poses a clear challenge: the introduction of $\gamma$ into the boundary conditions in Theorem \ref{theo:1} 
has a very subtle effect on, both, analytic and asymptotic properties of $(q,p)$. For instance, in case of \eqref{JME:18}, solutions are bounded on the entire real axis for $\gamma\in[0,1)$, but unbounded (as $t\rightarrow-\infty$) once $\gamma=1$. Similar phenomena also occur for \eqref{JME:17} and \eqref{JME:19} and we shall return to these interesting phase transitions after the next two subsections.
\end{rem}
\subsection{Tail asymptotics and action integral formul\ae}\label{sec:14} The principal analytical question concerning the distribution functions  $F_B(t;\gamma)$, $F_H(t,\alpha;\gamma)$  and of $F_S(t;\gamma)$ is their {\it tail asymptotics}, i.e., the behavior of $F_B(t;\gamma),F_H(t,\alpha;\gamma)$ as $t\downarrow 0$ or $t \rightarrow \infty$, and of $F_S(t;\gamma)$ as $t\rightarrow\pm\infty$. In view of Theorem \ref{theo:1} we  realize at once that  half of the tail expansions are easy to compute. Indeed upon substitution of the boundary data \eqref{JME:17}, \eqref{JME:18}, \eqref{JME:19} into the Hamiltonian formul\ae\,we obtain immediately the leading order behavior of $F_B(t;\gamma),F_H(t,\alpha;\gamma)$ as $t\downarrow 0$ and of $F_S(t;\gamma)$ as $t\rightarrow+\infty$,
\begin{equation*}
	F_B(t;\gamma)=1-\frac{2\gamma}{\pi}t\big(1+o(1)\big),\ \ t\downarrow 0;\ \ \ \ \ \ \ F_S(t;\gamma)=1-\frac{\gamma}{16\pi}t^{-\frac{3}{2}}\e^{-\frac{4}{3}t^{\frac{3}{2}}}\big(1+o(1)\big),\ \ t\rightarrow+\infty;
\end{equation*}
and
\begin{equation*}
	F_H(t,\alpha;\gamma)=1-\frac{\gamma}{\Gamma^2(2+\alpha)}\left(\frac{t}{4}\right)^{\alpha+1}\big(1+o(1)\big),\ \ t\downarrow 0.
\end{equation*}
Most importantly, these expansions are uniform with respect to $\gamma\in[0,1]$ and (in case of $F_H(t,\alpha;\gamma)$) $\alpha$ chosen from compact subsets of $(-1,+\infty)\subset\mathbb{R}$. Much more challenging are the remaining three tails: for these we could in principle use Painlev\'e asymptotic information, see \cite{NIST}, Chapter $32$. For instance in case of \eqref{JME:8} it is known that
\begin{equation*}
	q(t;\gamma)=(-t)^{-\frac{1}{4}}\sqrt{\frac{v}{\pi}}\cos\left(\frac{2}{3}(-t)^{\frac{3}{2}}-\frac{v}{2\pi}\ln\big(8(-t)^{\frac{3}{2}}\big)+\phi\right)+\mathcal{O}\left(t^{-1}\right),\ \ t\rightarrow-\infty,
\end{equation*}
where $v=-\ln(1-\gamma),\phi=\frac{\pi}{4}-\textnormal{arg}\,\Gamma(\frac{v}{2\pi\im})$ and $\gamma\in[0,1)$ is fixed. In addition,
\begin{equation*}
	q(t;1)=\sqrt{-\frac{t}{2}}\left(1+\frac{1}{8t^3}+\mathcal{O}\left(t^{-6}\right)\right),\ \ \ t\rightarrow-\infty.
\end{equation*}
Thus, upon $t$-differentiation of the Hamiltonian formula in Theorem \ref{theo:1} and subsequent indefinite integration,
\begin{equation}\label{JME:20}
	\ln F_S(t;\gamma)=-\frac{2v}{3\pi}(-t)^{\frac{3}{2}}+\frac{v^2}{4\pi^2}\ln\big(8(-t)^{\frac{3}{2}}\big)+\mathcal{O}(1),\ \ t\rightarrow-\infty,\ \ \gamma\in[0,1);
\end{equation}
as well as
\begin{equation}\label{JME:21}
	\ln F_S(t;1)=\frac{t^3}{12}-\frac{1}{8}\ln(-t)+\mathcal{O}(1),\ \ t\rightarrow-\infty.
\end{equation}
A similar approach can be carried out for \eqref{HH:1} and \eqref{HH:2} once we use the relevant asymptotic information given in \cite{S, MT1, MT2, AK}. But in either case, the outlined method does not allow us to compute the $\mathcal{O}(1)$ terms (see \eqref{JME:20} and \eqref{JME:21}) in an efficient way.
And these terms are needed for the rigorous analysis of the phase transition as $\gamma\uparrow 1$, for instance $\mathcal{O}(1)$ in \eqref{JME:20} is bounded as $t\rightarrow-\infty$ but not as $\gamma\uparrow 1$. The problem of finding these terms  is sometime referred to as  the ``constant problem'' and this is  the main issue we are addressing in this paper.\smallskip

The usual approaches to the computation of the above mentioned outstanding constant factors,
as well as the description of the full transitional regime, are based on the utilization of additional integrable structures, i.e., 
 Fredholm determinant 
 formul\ae\,(see Section \ref{sec:2} below),
\begin{equation}\label{Fred:1}
	F_B(t;\gamma)=\det\big(1-\gamma K_{\sin}\upharpoonright_{L^2(-t,t)}\big),\ \ \ \ \ \ \ \ \ \ F_S(t;\gamma)=\det\big(1-\gamma K_{\textnormal{Ai}}\upharpoonright_{L^2(t,\infty)}\big),
\end{equation}
and
\begin{equation}\label{Fred:2}
	F_H(t,\alpha;\gamma)=\det\big(1-\gamma K_{\textnormal{Bess}}^{\alpha}\upharpoonright_{L^2(0,t)}\big).
\end{equation}
Here, $K_{\sin},K_{\textnormal{Ai}}$ and $K_{\textnormal{Bess}}^{\alpha}$ are the trace-class integral operators with kernels
\begin{equation}\label{JME:22}
	K_{\sin}(\lambda,\mu)=\frac{\sin(\lambda-\mu)}{\pi(\lambda-\mu)},\ \ \ \ \ \ \ \ \ \ \ K_{\textnormal{Ai}}(\lambda,\mu)=\int_0^{\infty}\textnormal{Ai}(\lambda+s)\textnormal{Ai}(\mu+s)\,\d s,
\end{equation}
and 
\begin{equation}\label{JME:23}
	K_{\textnormal{Bess}}^{\alpha}(\lambda,\mu)=\frac{1}{4}\int_0^1J_{\alpha}(\sqrt{\lambda s}\,)J_{\alpha}(\sqrt{\mu s}\,)\,\d s,
\end{equation}
where $J_{\alpha}(z)$ is the Bessel function of order $\alpha$. 
Based on these formul\ae\, one can now either use discretization techniques (e.g. representing $F_B(t;\gamma)$ as limit of a Toeplitz determinant \cite{D,K,DIKZ}, or $F_S(t;\gamma)$ as Hankel determinant limit \cite{DIK,K2}), or apply operator theoretical arguments \cite{BW,BuBu,E1,E2,W1}, or refer to the algebra of integrable operators \cite{IIKS,DIZ}.  With these tools at hand, it is possible to  improve \eqref{JME:20} and \eqref{JME:21} (see \cite{BB,DIK,BBD}),
\begin{equation*}
	\ln F_S(t;\gamma)=-\frac{2v}{3\pi}(-t)^{\frac{3}{2}}+\frac{v^2}{4\pi^2}\ln\big(8(-t)^{\frac{3}{2}}\big)+\ln\left(G\left(1+\frac{\im v}{2\pi}\right)G\left(1-\frac{\im v}{2\pi}\right)\right)+o(1),\ \ t\rightarrow-\infty,\ \gamma\in[0,1);
\end{equation*}
and
\begin{equation*}
	\ln F_S(t;1)=\frac{t^3}{12}-\frac{1}{8}\ln(-t)+\zeta'(-1)+\frac{1}{24}\ln 2+o(1),\ \ \ t\rightarrow-\infty
\end{equation*}
in terms of the Barnes $G$-function $G(z)$ and the Riemann zeta-function $\z(z)$.\smallskip

In this paper we present a new method for the derivation of tail expansions which does not rely on Fredholm, or Toeplitz, or Hankel determinant formul\ae\,but instead on the Hamiltonian system approach (see Theorem \ref{theo:1}) to the gap and distribution functions. In this approach the already available Painlev\'e asymptotic information (compare derivation of \eqref{JME:20} and \eqref{JME:21}) will be sufficient to obtain full leading order asymptotic information for $F_B(t;\gamma),F_S(t;\gamma)$ and $F_H(t,\alpha;\gamma)$ provided $\gamma\in[0,1)$ is fixed. Our method is based on the following action integral formul\ae\,which form our second result.
\begin{theo}\label{theo:2} Let $F_B(t;\gamma),F_S(t;\gamma)$ and $F_H(t,\alpha;\gamma)$ be defined as in \eqref{JME:15}, \eqref{JME:16} for $\gamma\in[0,1]$ and $(q,p)$ specified as in Theorem \ref{theo:1}. Then
\begin{equation}\label{JME:24}
	\ln F_B(t;\gamma)=t\,\mathcal{H}_{B}(q,p,t)-pq+I_{B}(t;\gamma),\ \ \ \ I_{B}(t;\gamma)=\int_0^t\big(pq_s-\mathcal{H}_{B}(q,p,s)\big)\,\d s,
\end{equation}
and
\begin{equation}\label{JME:25}
	\ln F_{S}(t;\gamma)=\frac{1}{3}\big(2t\,\mathcal{H}_{S}(q,p,t)-pq\big)+I_{S}(t;\gamma),\ \ \ \ I_{S}(t;\gamma)=-\int_t^{\infty}\big(pq_s-\mathcal{H}_{S}(q,p,s)\big)\,\d s
\end{equation}
where the integration path in the action integrals $I_B$ and $I_S$ is chosen on the real line. In addition,
\begin{equation}\label{JME:26}
	\ln F_{H}(t,\alpha;\gamma)=2t\,\mathcal{H}_{H}(q,p,t,\alpha)-L(t,\alpha;\gamma)+I_{H}(t,\alpha;\gamma),
\end{equation}
with
\begin{equation*}
	L(t,\alpha;\gamma)=\frac{\alpha^2}{2}\int_0^t\frac{q^2\,\d s}{s(q^2-1)},\ \ \ \ \ I_{H}(t,\alpha;\gamma)=\int_0^t\big(pq_s-\mathcal{H}_{H}(q,p,s,\alpha)\big)\,\d s,\ \ \ \ \alpha\geq 0;
\end{equation*}
and
\begin{equation*}
	L(t,\alpha;\gamma)=\frac{\alpha^2}{2}\int_0^t\frac{\d s}{s(q^2-1)},\ \ \ \ I_{H}(t,\alpha;\gamma)=\int_0^t\left(pq_s-\frac{\alpha^2}{2s}-\mathcal{H}_{H}(q,p,s,\alpha)\right)\,\d s,\ \ -1<\alpha<0.
\end{equation*}
The integration paths for $L$ and $I_H$ lie in the half-plane $\Re s>0$ and avoid the discrete set $\{s\in\mathbb{C}:\ q^2(s,\alpha;\gamma)=1\}$.
\end{theo}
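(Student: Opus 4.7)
My plan is to derive each identity by combining Theorem \ref{theo:1} with integration by parts and a quasi-homogeneity identity satisfied by the respective Hamiltonian along its flow. For \eqref{JME:24}, I would start from $\ln F_B(t;\gamma)=\int_0^t\mathcal{H}_B\,\d s$ and integrate by parts to obtain $\int_0^t\mathcal{H}_B\,\d s=t\mathcal{H}_B(t)-\int_0^t s\,\partial_s\mathcal{H}_B\,\d s$, using $\frac{\d\mathcal{H}_B}{\d s}=\partial_s\mathcal{H}_B$ along solutions of Hamilton's equations. Direct differentiation of $\mathcal{H}_B$ in \eqref{HH:1} yields the Euler-type identity $s\,\partial_s\mathcal{H}_B=\mathcal{H}_B-q\,\partial_q\mathcal{H}_B$, and Hamilton's equation $p_s=-\partial_q\mathcal{H}_B$ converts the second term into $qp_s=\frac{\d(pq)}{\d s}-pq_s$. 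Substituting and using $pq\to 0$ as $s\downarrow 0$ (by \eqref{JME:17}) produces $2\int_0^t\mathcal{H}_B\,\d s=t\mathcal{H}_B(t)-p(t)q(t)+\int_0^t pq_s\,\d s$, which rearranges into \eqref{JME:24}.

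The soft-edge identity \eqref{JME:25} follows analogously. The corresponding quasi-homogeneity identity is $2s\,\partial_s\mathcal{H}_S=4\mathcal{H}_S-2p\,\partial_p\mathcal{H}_S-q\,\partial_q\mathcal{H}_S$, immediate from \eqref{JME:8}; Hamilton's equations recast this as $2s\,\partial_s\mathcal{H}_S=4\mathcal{H}_S-3pq_s+\frac{\d(pq)}{\d s}$. Integration by parts on $(t,+\infty)$, with the Ablowitz-Segur decay \eqref{JME:18} justifying the vanishing of $s\mathcal{H}_S$ and $pq$ at $+\infty$, yields $3\int_t^\infty\mathcal{H}_S\,\d s=-t\mathcal{H}_S(t)+\frac{3}{2}\int_t^\infty pq_s\,\d s+\frac{1}{2}p(t)q(t)$. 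The prefactor $\frac{1}{3}$ on the right of \eqref{JME:25} is traced directly to the coefficient $4$ in this Euler identity, and rearrangement delivers the claim.

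The hard-edge identity \eqref{JME:26} follows the same blueprint, with $L$ playing a bookkeeping role for the $\alpha$-dependent pieces. Differentiating $\mathcal{H}_H$ from \eqref{HH:2} and using Hamilton's equation $q_s=\frac{q^2-1}{2s}p$ produces $2s\,\partial_s\mathcal{H}_H=-pq_s+\ell(s)$ with $\ell(s)=\frac{\alpha^2 q^2(s)}{2s(q^2(s)-1)}$, exactly the integrand of $L(t,\alpha;\gamma)$ for $\alpha\geq 0$; integration by parts on $(0,t)$ then delivers \eqref{JME:26} in this range. For $-1<\alpha<0$ the modified definitions of $L$ and $I_H$ differ from the $\alpha\geq 0$ case by equal and opposite amounts $\pm\frac{\alpha^2}{2s}$ in their integrands, which cancel in the differential identity so that the same argument applies. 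The main obstacle is the boundary analysis at $s\downarrow 0$ in this latter regime: since $q(s,\alpha;\gamma)$ itself diverges, one must combine \eqref{JME:19} with the smoothness established in Appendix \ref{AppC} to verify the asymptotic cancellation $pq_s=\frac{\alpha^2}{2s}+\mathcal{O}(1)$ as $s\downarrow 0$; only this fine cancellation renders the subtracted integrand of $I_H$ integrable near the origin. Path independence of $L$ and $I_H$ off the discrete set $\{q^2=1\}\subset\{\Re s>0\}$ then follows from Cauchy's theorem, once both integrands are seen to extend meromorphically away from this exceptional set.
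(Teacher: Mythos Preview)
Your argument is correct and takes a genuinely different route from the paper's. The paper proves Theorem~\ref{theo:2} by simply differentiating the right-hand sides of \eqref{JME:24}--\eqref{JME:26} with respect to $t$, using Hamilton's equations to show the derivative equals the relevant Hamiltonian, and then matching boundary behavior at $t\downarrow 0$ (or $t\to+\infty$) to fix the integration constant. You instead start from the integral formul\ae\ of Theorem~\ref{theo:1}, integrate by parts, and exploit the Euler-type scaling identities $s\,\partial_s\mathcal{H}_B=\mathcal{H}_B-q\,\partial_q\mathcal{H}_B$, $2s\,\partial_s\mathcal{H}_S=4\mathcal{H}_S-2p\,\partial_p\mathcal{H}_S-q\,\partial_q\mathcal{H}_S$, and $2s\,\partial_s\mathcal{H}_H=-pq_s+\frac{\alpha^2q^2}{2s(q^2-1)}$ to reduce everything to the action integrals. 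Your approach is more constructive: it explains \emph{why} the prefactors $1$, $\frac{1}{3}$, $2$ in front of $t\mathcal{H}$ arise (from the coefficients $1,4,0$ in the respective Euler identities) and why the $L$-term appears in the hard-edge case, whereas the paper's differentiation argument verifies the formul\ae\ but leaves their origin opaque---which is precisely why the paper supplies the alternative isomonodromic derivation in Section~\ref{sec:3_0}.

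One small imprecision: for $-1<\alpha<0$ you write $pq_s=\frac{\alpha^2}{2s}+\mathcal{O}(1)$ as $s\downarrow 0$. In fact, since $q^2-1\sim c_0^2s^{\alpha}$ in this regime, expanding $pq_s=\frac{2sq_s^2}{q^2-1}$ gives $pq_s=\frac{\alpha^2}{2s}+\mathcal{O}(s^{-1-\alpha})$, which is not bounded but is still integrable near the origin because $-1-\alpha\in(-1,0)$. This is exactly what is needed and your argument goes through unchanged; the same computation also confirms that the integrand $\frac{\alpha^2}{2s(q^2-1)}$ of $L$ is $\mathcal{O}(s^{-1-\alpha})$ and hence integrable.
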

In order to appreciate the usefulness of this theorem for the evaluation of constant factors in tail asymptotics, let us highlight 
the difficulties which one faces in the existing approaches to the problem. We will restrict ourselves to the sine -  kernel
distribution $F_B$ and  consider the asymptotic  scheme based on the theory of integrable Fredholm operators, cf. \cite{IIKS,DIZ}.\smallskip

We shall start with the classical differential identity which is the beginning of almost every study of Fredholm determinants, 
\begin{equation}\label{gammadif0}
\frac{\partial}{\partial\gamma} \ln F_{B}(t;\gamma) = \frac{\partial}{\partial\gamma} \ln
\det\bigl(1 -\gamma K_{\sin}\upharpoonright_{L^2(-t,t)}\bigr)= -\frac{1}{\gamma}\mbox{Trace}\, R_B
= -\frac{1}{\gamma} \int_{-t}^tR_{B}(\lambda, \lambda;t,\gamma)d\lambda.
\end{equation}
Here, $R_{B}$ is  the resolvent of the operator $K_{\sin}$ defined by the usual formula, 
$R_{B} = \gamma\bigl(1-\gamma K_{\sin}\bigr)^{-1}K_{\sin}$ and $R_{B}(\lambda, \mu;t,\gamma)$ is its kernel.
As the sine-kernel belongs to the class of integrable Fredholm operators (see, e.g., \cite{DIZ}),
the resolvent kernel $R_{B}(\lambda, \mu;t,\gamma)$ admits the following explicit representation in terms of the $2\times 2$ matrix
valued solution ${\bf Y}(\lambda) \equiv {\bf Y}(\lambda;t,\gamma)$ to a Riemann-Hilbert
problem (see RH problem (\ref{sineRHP}) in Section \ref{sec:4} for more detail),
\begin{equation}\label{resRHsine}
R(\lambda,\mu) = \frac{1}{2\pi i(\lambda-\mu)}\begin{bmatrix}\e^{-\im\mu}&-\e^{\im\mu}\end{bmatrix}
{\bf Y}^{-1}\left(\frac{\mu}{t}\right){\bf Y}\left(\frac{\lambda}{t}\right) \begin{bmatrix}\e^{\im\lambda}\\ \e^{-\im\lambda}\end{bmatrix}
\end{equation}
Therefore, if one knew the large $t$-asymptotics of the solution ${\bf{Y}}(\lambda;t,\gamma)$ to RHP \ref{sineRHP} which are uniform with respect 
to $\gamma$ and $\lambda$, the needed large $t$ asymptotics of $F_B$ {\it including the constant term}  could have been determined via 
double integration,
\begin{equation}\label{gammaint1}
 \ln F_{B}(t, \gamma) = -\int_{0}^{\gamma}\int_{-t}^tR_{B}(\lambda, \lambda;t;\gamma)\,\d\lambda \frac{\d\gamma}{\gamma}.
\end{equation}
In case $\gamma \in [0,1)$ all needed asymptotic information about ${\bf{Y}}(\lambda;t,\gamma)$ can indeed be extracted 
via the nonlinear steepest descent analysis of Riemann-Hilbert problem \ref{sineRHP} (\cite{BDIK2}; see 
also Section \ref{sec:4} where this analysis is reproduced). However, the relevant formul\ae, though explicit, are very complicated.
The asymptotic evaluation of the double integral in \eqref{gammaint1} becomes enormously difficult, and in fact has never been 
done. The main difficulty lies in the non-locality of the differential identity \eqref{gammadif0} and, as a consequence, of  the integral formula 
\eqref{gammaint1} in the variable $\lambda$. 
There is though a way to circumvent this double integration, and it uses the already mentioned  isomonodromy connection 
of the distribution function $F_B$. Indeed, with respect to $\lambda$, the matrix function  ${\bf{Y}}(\lambda)$ satisfies a linear differential
equation with rational coefficients (see \cite{DIZ}) which allows  one to evaluate the most challenging integral in (\ref{gammaint1}) -  namely, the integral 
in $\lambda$, and replace the differential identity (\ref{gammadif0}) by  the following  formula (cf. \cite{BI}),
$$
\frac{\partial}{\partial\gamma} \ln F_{B}(t;\gamma) = \frac{\partial}{\partial\gamma}\ln\det(1-\gamma K_{\sin}\upharpoonright_{L^2(-t,t)})
= -2\im t\frac{\partial}{\partial\gamma}Y_1^{11} +
\frac{\gamma}{2\pi\im}\textnormal{trace}\left\{\widehat{\bf Y}^{-1}(1)\frac{\partial}{\partial\gamma}\widehat{\bf Y}(1)\begin{bmatrix}
-1&1\cr-1&1\end{bmatrix}\right\}
$$
\begin{equation}\label{gammadif2}
- \frac{\gamma}{2\pi\im}\textnormal{trace}\left\{\widehat{\bf Y}^{-1}(-1)\frac{\partial}{\partial\gamma}\widehat{\bf Y}(-1)\begin{bmatrix}
-1&1\cr-1&1\end{bmatrix}\right\},
\end{equation}
where $Y_1^{11}$ and $\widehat{\bf Y}(\lambda)$ are defined at the beginning of Section \ref{sec:4} - see formulation of RHP \ref{sineRHP}, properties (3) and (4). This relation is still not very simple, 
but it involves only  {\it local}  characteristica of the solution ${\bf{Y}}(\lambda)$. This locality allows one to use \eqref{gammadif2} for evaluating  the
large $t$ asymptotics of $F_B$, although the calculations which one has to go through are still very tough.\smallskip

The value of Theorem \ref{theo:2} lies in the fact that it yields  an alternative to \eqref{gammadif2} local $\gamma$ - differential formula for 
$F_B$, as well as similar formul\ae\, for the other two distribution functions, which would simplify dramatically their asymptotic analysis. 
Indeed, identities \eqref{JME:24} - \eqref{JME:26} transform the original Hamiltonian integrals of Theorem \ref{theo:1}
to the {\it action integrals} $I_r$ plus explicit terms. The latter are either already localized, i.e. without any integrals, or integral terms as in \eqref{JME:26} that admit a straightforward Riemann-Hilbert representation (see Section \ref{sec:5} below). The great advantage of having the full classical action integral instead of its 
truncated form  lies in the  fundamental fact that the variational derivatives of the classical action and, in particular, the  $\gamma$ - derivatives of the classical actions $I_{B,S,H}$ are simple local functions of the canonical variables $p$ and $q$. For  instance, for the integral $I_B$, we would have,
$$
\frac{\partial I_B}{\partial\gamma} = \int_{0}^{t}\left(q_sp_{\gamma} + p (q_{\gamma})_s -
\frac{\partial \mathcal{H}_{B} }{\partial p}p_{\gamma} -\frac{\partial \mathcal{H}_{B} }{\partial q}q_{\gamma}\right)\d s; \quad 
\,\textnormal{where}\, f_{\gamma}=\frac{\partial f}{\partial\gamma},
$$
and integrating by parts the second term,
$$
\frac{\partial I_B}{\partial\gamma} = pq_{\gamma}\Big|_{s=0}^{t} +  \int_{0}^{t}\left(q_sp_{\gamma} - p_{s}q_{\gamma} -
\frac{\partial \mathcal{H}_{B} }{\partial p}p_{\gamma} -\frac{\partial \mathcal{H}_{B} }{\partial q}q_{\gamma}\right)\d s,
$$
$$
=  pq_{\gamma}\Big|_{s=0}^{t} +  \int_{0}^{t}\left(\Bigl(q_s -
\frac{\partial \mathcal{H}_{B} }{\partial p}\Bigr)p_{\gamma} -\Bigl(p_{s} +\frac{\partial \mathcal{H}_{B} }{\partial q}\Bigr)q_{\gamma}\right)\d s
= pq_{\gamma}
$$
where the remaining integral term vanishes due to the dynamical equations \eqref{HH:1} and $pq_{\gamma}|_{s=0} = 0$ because
of the boundary behavior of $q(t)$ and $p(t)$ at $t =0$. Similar calculations can be done for the other two action integrals and we arrive at
the following important formula\ae\,.
\begin{prop}\label{logic} With $(q,p)$ as in Theorem \ref{theo:1},
\begin{equation*}
	\frac{\partial I_r}{\partial\gamma}=pq_{\gamma}\ \ \ r=B,S;\ \ \ \ \ \ \ \ \ \ \ \frac{\partial I_{H}}{\partial\gamma}=\begin{cases}pq_{\gamma},&\alpha\geq 0\\ pq_{\gamma}-\frac{\alpha}{2\gamma},&-1<\alpha<0\end{cases};\ \ \ \ \ \ f_{\gamma}=\frac{\partial f}{\partial\gamma}.
\end{equation*}
Moreover,
\begin{equation*}
	\frac{\partial I_H}{\partial\alpha}=\begin{cases} pq_{\alpha}+\frac{1}{\alpha}L(t;\alpha,\gamma),&\alpha> 0\\
	pq_{\alpha}-\frac{\alpha}{2}\ln t+\alpha\frac{\d}{\d\alpha}\ln\big(2^{\alpha}\Gamma(1+\alpha)\big)+\frac{1}{\alpha}L(t,\alpha;\gamma),&-1<\alpha<0\end{cases};\ \ \ \ \ \ \ f_{\alpha}=\frac{\partial f}{\partial\alpha}.
\end{equation*}	
\end{prop}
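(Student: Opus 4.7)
The plan is to follow, case by case, the model computation for $\partial I_B/\partial\gamma$ displayed just before the statement. For each action integral I would differentiate under the integral sign with respect to the relevant parameter, integrate by parts on the $p(q_\bullet)_s$ term, and then use the Hamilton equations \eqref{HH:1}, \eqref{JME:8}, or \eqref{HH:2} to collapse the remaining integrand. What survives is a boundary contribution $[pq_\bullet]_0^t$ (with $s=+\infty$ replacing $s=0$ in the soft-edge case), plus, for the $\alpha$-derivative of $I_H$, an explicit integral of $\partial\mathcal{H}_H/\partial\alpha$. The answers are then read off from the asymptotic data for $(q,p)$ recorded in Theorem \ref{theo:1}.

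For the $\gamma$-derivative, the computation for $I_B$ is already done in the excerpt; for $I_S$ the same argument applies and the Airy-type decay in \eqref{JME:18} forces $pq_\gamma \to 0$ as $s \to +\infty$. The hard-edge case requires one extra input: from the defining relation $p=2sq_s/(q^2-1)$ and \eqref{JME:19}, one reads off $p \sim -\alpha q$ near $s=0$ when $\alpha\geq 0$ (since $q\to 0$) and $p\sim\alpha/q$ near $s=0$ when $-1<\alpha<0$ (since $q\to\infty$). Combined with $q_\gamma\sim q/(2\gamma)$ this gives $pq_\gamma|_{s=0^+}=0$ when $\alpha\geq 0$ and $pq_\gamma|_{s=0^+}=\alpha/(2\gamma)$ when $-1<\alpha<0$; subtracting these from $pq_\gamma(t)$ yields the two branches of the first formula.

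For the $\alpha$-derivative of $I_H$ the same integration by parts produces the additional contribution $-\int_0^t(\partial\mathcal{H}_H/\partial\alpha)\,\d s$ with $\partial\mathcal{H}_H/\partial\alpha = -\alpha q^2/(2s(q^2-1))$. When $\alpha>0$, this integral equals $-L(t,\alpha;\gamma)/\alpha$ by the very definition of $L$, the boundary $pq_\alpha|_{s=0^+}$ vanishes by the same mechanism as above (since $q\to 0$ like $s^{\alpha/2}$ kills any $\ln s$ factor coming from differentiating $q$ in $\alpha$), and the formula $\partial I_H/\partial\alpha = pq_\alpha+L/\alpha$ follows.

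The main obstacle is the range $-1<\alpha<0$, where three logarithmic divergences at $s=0^+$ have to be balanced against each other. Using $q\sim\sqrt{\gamma}\,c(\alpha)s^{\alpha/2}$ with $c(\alpha)=1/(2^\alpha\Gamma(1+\alpha))$ one has $q_\alpha/q\sim\tfrac12\ln s+c'(\alpha)/c(\alpha)$, and together with $p\sim\alpha/q$ this gives $pq_\alpha\sim\tfrac{\alpha}{2}\ln s+\alpha c'(\alpha)/c(\alpha)$ at the lower endpoint. The algebraic identity $q^2/(q^2-1)=1+1/(q^2-1)$ then splits $-\int_0^t[(\alpha/s)+\partial\mathcal{H}_H/\partial\alpha]\,\d s$ into $-(\alpha/2)\int_0^t\d s/s+L(t,\alpha;\gamma)/\alpha$ (using now the second definition of $L$), with the first piece logarithmically divergent at $s=0^+$ and the second convergent. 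That divergence is matched precisely by $-pq_\alpha|_{s=0^+}$, leaving the finite remainder $-(\alpha/2)\ln t-\alpha c'(\alpha)/c(\alpha)$, and the identity $-c'(\alpha)/c(\alpha)=(\d/\d\alpha)\ln(2^\alpha\Gamma(1+\alpha))$ puts this in the form announced. Keeping track of these three cancellations is the one genuinely delicate step; everything else reduces to Hamilton's equations and the boundary data of Theorem \ref{theo:1}.
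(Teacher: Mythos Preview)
Your approach is correct and is exactly the one the paper uses: differentiate under the integral, integrate $p(q_\bullet)_s$ by parts, kill the integrand with Hamilton's equations, and read off the boundary contributions from the asymptotics in Theorem \ref{theo:1}. The paper in fact only writes out the $I_B$ case in detail and asserts that ``similar calculations can be done for the other two action integrals''; your proposal supplies precisely those missing details, including the delicate cancellation of the three logarithmic singularities at $s=0^+$ in the $\partial I_H/\partial\alpha$ computation for $-1<\alpha<0$.

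One small correction: when $\alpha=0$ the boundary behavior \eqref{JME:19} gives $q\to\sqrt{\gamma}$, not $q\to 0$. Your conclusion $pq_\gamma|_{s=0^+}=0$ is nevertheless right, because in that case $p=2sq_s/(q^2-1)\to 0$ (the denominator tends to $\gamma-1\neq 0$ and the numerator to $0$) while $q_\gamma$ stays bounded. So the argument for the $\alpha\geq 0$ branch splits into $\alpha>0$ (where $q\to 0$ and $p\sim -\alpha q$) and $\alpha=0$ (where $q\to\sqrt{\gamma}$ and $p\to 0$), both yielding a vanishing boundary term.
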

One clearly notices how much simpler the differential formul\ae\, of this Proposition are than identity (\ref{gammadif2}).\smallskip

The formal proof of Theorem \ref{theo:2} is easy, and it can be obtained through $t$-differentiation of both sides of equations \eqref{JME:24}, \eqref{JME:25} and \eqref{JME:26}  with the simultaneous use of the respective Hamiltonian systems. This formal proof will be presented in Section \ref{sec:2}. The methodological deficiency of this proof is that it does not provide any clue on where equations \eqref{JME:24}, \eqref{JME:25} and \eqref{JME:26} came from. 
In Section \ref{sec:3_0} we outline an alternative proof of these formul\ae\, which, simultaneously, reveals their theoretical origin.  
This alternative proof is  based on the tau function interpretation of the gap/distribution functions $F_{B,S,H}$, and it uses the  extension of the Jimbo-Miwa-Ueno tau function differential
to a differential 1-form whose external derivative coincides with the corresponding symplectic form.
Hence  the connection of the distribution functions in question to the relevant action integrals is not an accident;
in fact, it is their  {\it intrinsic property}.

\subsection{Large gap expansions and phase transitions}\label{sec:16} As mentioned above, a direct application of \eqref{JME:24}, \eqref{JME:25} and \eqref{JME:26} is provided with the efficient and quick derivation of tail expansions for all three functions $F_B(t;\gamma), F_S(t;\gamma),F_H(t,\alpha;\gamma)$ in case $\gamma\in[0,1)$ and $\alpha>-1$ are fixed. Our third result is as follows.
\begin{theo}\label{theo:3} For any fixed $\gamma\in[0,1),\alpha>-1$ there exist positive constants $t_0=t_0(\gamma,\alpha)$ and $c_r=c_r(\alpha,\gamma),r=B,S,H$ such that
\begin{equation}\label{JME:31}
	\ln F_B(t;\gamma)=-\frac{2v}{\pi}t+\frac{v^2}{2\pi^2}\ln(4t)+2\ln\big(G\left(1+\frac{\im v}{2\pi}\right)G\left(1-\frac{\im v}{2\pi}\right)\big)+r_B(t;\gamma)\ \ \ \forall\,t\geq t_0,
\end{equation}
followed by
\begin{equation}\label{JME:32}
	\ln F_S(t;\gamma)=-\frac{2v}{3\pi}|t|^{\frac{3}{2}}+\frac{v^2}{4\pi^2}\ln\big(8|t|^{\frac{3}{2}}\big)+\ln\big(G\left(1+\frac{\im v}{2\pi}\right)G\left(1-\frac{\im v}{2\pi}\right)\big)+r_S(t;\gamma)\ \ \ \forall\,(-t)\geq t_0,
\end{equation}
and concluding with 
\begin{equation}\label{JME:33}
	\ln F_H(t,\alpha;\gamma)=-\frac{v}{\pi}\sqrt{t}+\frac{v^2}{8\pi^2}\ln(16t)+\frac{\alpha}{2}v+\ln\big(G\left(1+\frac{\im v}{2\pi}\right)G\left(1-\frac{\im v}{2\pi}\right)\big)+r_H(t,\alpha;\gamma)\ \ \forall\,t\geq t_0.
\end{equation}
Here, $v=-\ln(1-\gamma)\in[0,+\infty)$, $G(z)$ is the Barnes $G$-function and the $t$-differentiable error terms satisfy
\begin{equation*}
	\big|r_B(t;\gamma)\big|\leq\frac{c_B(\gamma)}{t}\ \ \forall\,t\geq t_0;\ \ \ \big|r_S(t;\gamma)\big|\leq\frac{c_S(\gamma)}{|t|^{\frac{3}{4}}}\ \ \forall\,(-t)\geq t_0;\ \ \ \ \big|r_H(t,\alpha;\gamma)\big|\leq\frac{c_H(\alpha;\gamma)}{\sqrt{t}}\ \ \forall\,t\geq t_0.
\end{equation*}
\end{theo}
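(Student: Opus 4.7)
The plan is to combine the action-integral decomposition of Theorem~\ref{theo:2} with the local $\gamma$-differential identities of Proposition~\ref{logic}, feeding in the classical large-$t$ Painlev\'e asymptotics of $(q,p)$. Schematically write $\ln F_r(t;\gamma) = A_r(t;\gamma) + I_r(t;\gamma)$ for $r=B,S,H$, where $A_r$ is the explicit boundary/localized part on the right-hand sides of \eqref{JME:24}, \eqref{JME:25}, \eqref{JME:26}: namely $t\mathcal{H}_B - pq$, $\tfrac{1}{3}(2t\mathcal{H}_S - pq)$, or $2t\mathcal{H}_H - L$, respectively. The structural point is that $A_r$ depends on the Painlev\'e transcendent and its first derivative only at the single endpoint $t$, while $I_r$ is a genuine action integral whose $\gamma$-derivative collapses, by Proposition~\ref{logic}, to the local product $pq_\gamma$ up to explicit $\alpha$-corrections in the hard-edge case.

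First I would substitute the Ablowitz--Segur asymptotics for $F_S$, and the transitional Painlev\'e~V asymptotics for $F_B$ and $F_H$ from \cite{S,MT1,MT2,AK}, with parameter $v=-\ln(1-\gamma)$, directly into $A_r$. For $F_B$ and $F_S$ this is a direct amplitude--phase computation that yields the algebraic terms $-2vt/\pi$ and $-\tfrac{2v}{3\pi}|t|^{3/2}$, together with the logarithmic corrections visible in \eqref{JME:31}--\eqref{JME:32}, up to an $o(1)$ error in $t$. For $F_H$ the analogous substitution in $2t\mathcal{H}_H$ works, but the integral $L(t,\alpha;\gamma)$ requires additional care: one splits the integration near $s=0$ (where the integrand is controlled by \eqref{JME:19}) and near $s=+\infty$ (where $q^2\to 1$ exponentially), subtracts the diverging piece, and regularizes so as to extract the term $\tfrac{\alpha}{2}v$ together with a $\gamma$-dependent residual constant.

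The second and main step is to evaluate $I_r(t;\gamma)$. By Proposition~\ref{logic} together with the normalization $I_r(t;0)\equiv 0$ (which follows from $F_r(t;0)\equiv 1$ and the trivialization of the boundary data at $\gamma=0$), one has $I_r(t;\gamma)=\int_0^\gamma pq_{\gamma'}\,\d\gamma'$ for $r=B,S$, and the analogous identity augmented by an explicit $\gamma$-logarithm for $r=H$ when $-1<\alpha<0$. Differentiating the Painlev\'e asymptotics of $(q,p)$ in $\gamma$ (permissible because the expansions are analytic in $v$, hence in $\gamma\in[0,1)$) produces a large-$t$ expansion of $pq_\gamma$; integrating in $\gamma'$ from $0$ to $\gamma$ and substituting $v'=-\ln(1-\gamma')$ then recovers the Barnes $G$-factor $\ln\bigl(G(1+\tfrac{\im v}{2\pi})G(1-\tfrac{\im v}{2\pi})\bigr)$ via the standard Malmsten-type representation of $\ln G$ as an integral of $\ln\Gamma$.

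Collecting $A_r$ and $I_r$ then reproduces \eqref{JME:31}--\eqref{JME:33}, with error terms $r_B,r_S,r_H$ inherited from the power-law remainders in the underlying Painlev\'e expansions, and their $t$-differentiability follows by differentiating those expansions termwise. The main obstacle is keeping the remainders in the Painlev\'e asymptotics of $(q,p,q_\gamma,p_\gamma)$ \emph{uniform} in $\gamma$ on compact subsets of $[0,1)$, with quantitative power-law control in $t$; this requires a mild strengthening of the results quoted from \cite{S,MT1,MT2,AK} to joint asymptotics of $(q,p)$ and their $\gamma$-derivatives, and is where the technical heart of the proof lies.
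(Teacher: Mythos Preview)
Your overall strategy is exactly the one the paper uses: decompose $\ln F_r$ via Theorem~\ref{theo:2}, compute the localized boundary term $A_r$ from the large-$t$ asymptotics of $(q,p,\mathcal H_r)$, and recover the action integral $I_r$ by integrating the local identity $\partial I_r/\partial\gamma=pq_\gamma$ of Proposition~\ref{logic} from $\gamma=0$ (where everything vanishes) to $\gamma$, picking up the Barnes $G$-factor through the integral representation \eqref{Barnesid}. The only structural difference for $r=B,S$ is that the paper does \emph{not} import the Painlev\'e asymptotics from \cite{S,MT1,MT2,AK}: it re-derives them via nonlinear steepest descent on the associated Riemann--Hilbert problems (RHPs~\ref{master} and~\ref{sineRHP}). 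This is not cosmetic. The RHP route automatically delivers the asymptotics of $q,p$ \emph{and} $\mathcal H_r$ simultaneously, with remainders that are uniform on compact $\gamma$-sets and smooth in $\gamma$, which is precisely the ``mild strengthening'' you flag as the technical heart. Citing the classical connection formul\ae\ and then differentiating in $\gamma$ is heuristically fine but would need a separate justification that those references do not provide.

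For $r=H$ there is a genuine gap in your treatment of $L(t,\alpha;\gamma)$. First, a factual error: as $t\to+\infty$ with $\gamma\in[0,1)$ one has $q^2(t,\alpha;\gamma)=O(t^{-1/2})\to 0$, not $q^2\to 1$ (see the paper's Corollary derived from the RHP, giving $q^2=\tfrac{2\im\nu}{\sqrt t}(1-\sin\eta)+O(t^{-1})$). Second, and more seriously, splitting the $s$-integral into neighborhoods of $0$ and $+\infty$ cannot determine $L$: the large-$s$ and small-$s$ asymptotics of the integrand control only the tails, leaving an undetermined contribution from the intermediate range where neither asymptotic regime applies. There is no mechanism in your outline to fix that constant. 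The paper circumvents this entirely by proving an \emph{exact local identity} for $L$ in terms of the RHP solution at $\lambda=0$ (equations \eqref{cool:3}--\eqref{cool:4}, derived in Appendix~\ref{AppA} from the Lax structure): $L(t,\alpha;\gamma)=-\tfrac{\alpha}{2}\ln\bigl(-\widehat X^{11}(0)(\widehat X^{12}(0))^{-1}s^{-\alpha}\bigr)\big|_{s=0}^{t}$ for $\alpha>0$, and an analogous formula for $-1<\alpha<0$. This reduces the evaluation of $L$ to endpoint values of the RHP solution, which are then extracted from the large-$t$ steepest-descent analysis (Corollary~\ref{Xo:1}) and a separate small-$t$ RHP analysis (Corollary~\ref{Xo:2}), yielding $L=-\tfrac{\alpha}{2}v+O(t^{-1/2})$ cleanly. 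Without such a localization, your approach to the hard-edge constant does not close.
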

\begin{rem} Expansion \eqref{JME:31} was first derived in \cite{BuBu} with the indicated error estimate given in \cite{BDIK2}. This expansion also follows from the general results of \cite{BW}. The first proof of \eqref{JME:32} appeared recently in \cite{BB}, resolving an earlier conjecture posed in \cite{BCI}. Expansion \eqref{JME:33}, to the best of our knowledge, is completely new. 
\end{rem}
Expansions \eqref{JME:31}, \eqref{JME:32} and \eqref{JME:33} are valid provided each particle $\mu_j^r,r=S,H,B$ has been removed with positive probability $1-\gamma\in(0,1]$. As such they are in sharp contrast to the following three expansions (compare e.g. \eqref{JME:21} above), \cite{E1,K,DIKZ}
\begin{equation}\label{DW}
	\ln F_B(t;1)=-\frac{t^2}{2}-\frac{1}{4}\ln t+3\z'(-1)+\frac{1}{12}\ln 2+o(1),\ \ \ t\rightarrow+\infty;
\end{equation}
and \cite{DIK,BBD}
\begin{equation}\label{DIKas}
	\ln F_S(t;1)=\frac{t^3}{12}-\frac{1}{8}\ln(-t)+\z'(-1)+\frac{1}{24}\ln 2+o(1),\ \ \ t\rightarrow-\infty;
\end{equation}
as well as \cite{E2,DKV}
\begin{equation}\label{TE}
	\ln F_H(t,\alpha;1)=-\frac{t}{4}+\alpha\sqrt{t}-\frac{\alpha^2}{4}\ln t+\ln\left(\frac{G(1+\alpha)}{(2\pi)^{\frac{\alpha}{2}}}\right)+o(1),\ \ \ t\rightarrow+\infty.
\end{equation}
In each case the gap or distribution function approaches zero (or one in case of $1-F_H$) faster in the complete setup opposed to thinned version thereof. Thus a non-trivial phase transition occurs as $\gamma\uparrow 1$. In fact, using Theorem \ref{theo:3} and our previous discussion in Subsection \ref{sec:14} we see that
\begin{equation}\label{JME:34}
	\lim_{\gamma\downarrow 0}F_B(t\gamma^{-1};\gamma)=\begin{cases}\e^{-\frac{2}{\pi}t},&t\geq 0\\ 0,&t<0\end{cases},\ \ \ \ \ \ \ \ \lim_{\gamma\downarrow 0}F_S(t\gamma^{-\frac{2}{3}};\gamma)=\begin{cases} 1,&t>0\\ \e^{-\frac{2}{3\pi}|t|^{\frac{3}{2}}},&t\leq 0\end{cases},
\end{equation}
and
\begin{equation}\label{JME:35}
	\lim_{\gamma\downarrow 0}\big(1-F_H(t\gamma^{-2},\alpha;\gamma)\big)=\begin{cases}1-\e^{-\frac{1}{\pi}\sqrt{t}},&t\geq 0\\ 0,&t<0\end{cases}.
\end{equation}
Hence by varying $\gamma\in[0,1]$ we interpolate between particle systems obeying random matrix theory statistics ($\gamma=1$) and systems that follow Poisson statistics (bulk) or Weibull statistics (soft-edge and hard-edge).
\begin{rem} The occurrence of (transformed) Weibull distribution functions in the limits of $F_S$ and $1-F_H$ in \eqref{JME:34} and \eqref{JME:35} is consistent with the thinning process. We are effectively dealing with a sequence of independent random variables once $\gamma\downarrow 0$, and for such a sequence its extreme values (which are described by $F_S$ and $1-F_H$) follow generically either Gumbel, Fr\'echet or Weibull statistics. For the same reason we shouldn't expect either of these three families in the limit $\gamma\downarrow 0$ for the bulk function $F_B$.
\end{rem}
\subsection{Numerical comparison} We offer a short comparison of the results given in Theorem \ref{theo:3} to the numerically computed values of $F_B(t;\gamma),F_S(t;\gamma)$ and $F_H(t,\alpha;\gamma)$. Those values were calculated by MATLAB implementing the algorithm given in \cite{Bor}, i.e. we discretize the relevant Fredholm determinants \eqref{Fred:1}, \eqref{Fred:2} by the Nystr\"om method using an $m$-point Gauss-Legendre quadrature rule. The results are shown in Figure \ref{num1} for the bulk gap function and in Figures \ref{num2}, respectively \ref{num3}, \ref{num4} and \ref{num5} for the extremal distribution functions.
\begin{center}
\begin{figure}[tbh]
\includegraphics[width=12.5cm,height=6.5cm]{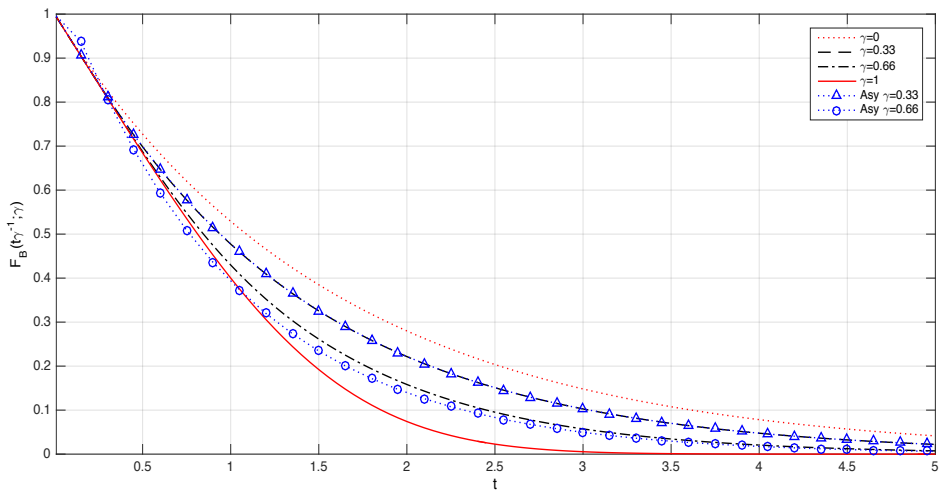}
\caption{Plot of the gap function $F_B(t\gamma^{-1};\gamma)$ for various values of $\gamma\in[0,1]$. The result is computed with $m=50$ quadrature points and checked against \eqref{JME:31} in blue.}
\label{num1}
\end{figure}
\end{center} 
\begin{center}
\begin{figure}[tbh]
\includegraphics[width=12.65cm, height=6.5cm]{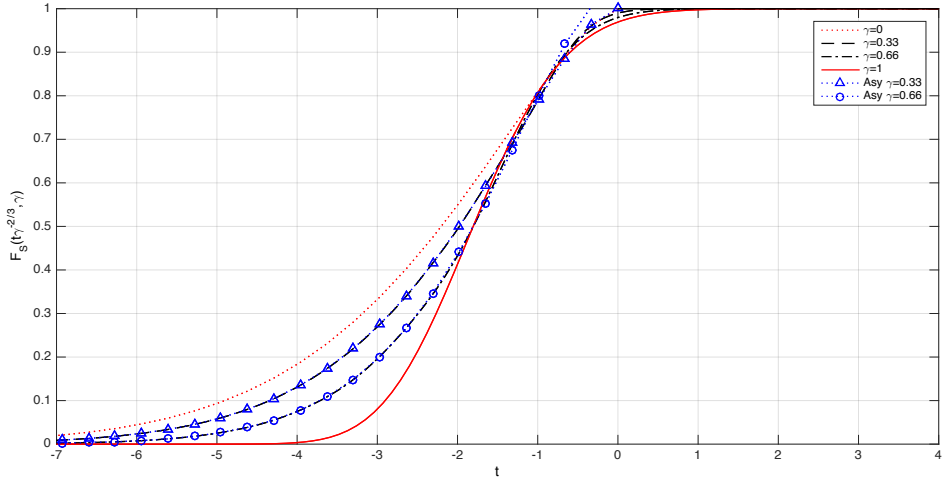}
\caption{Plot of the distribution function $F_S(t\gamma^{-\frac{2}{3}};\gamma)$ for various values of $\gamma\in[0,1]$. The result is computed with $m=50$ quadrature points and checked against \eqref{JME:32} in blue.}
\label{num2}
\end{figure}
\end{center} 
\subsection{The constant problem}

The exact evaluation of constant factors in asymptotic expansions of distribution, gap or correlation functions occurring in statistical mechanics or field theories is a long standing and challenging problem. The first rigorous solution of a constant problem for Painlev\'e equations (a special Painlev\'e~III transcendent appearing in the Ising model) has been obtained in the work of Tracy \cite{T}.  Other constant problems have been studied in the works \cite{BT,BB,K,E1,E2,DIKZ,DIK,DKV,Liso11} and \cite{BBD}.  
The tau functions that appear in all these papers 
correspond to very special families of Painlev\'e functions, and, as it has already been mentioned above, the success in their analysis  
was due to  the presence of operator theoretical structures (Fredholm-, Toeplitz-, Hankel-determinants). The first results concerning the general two-parameter families of solutions of Painlev\'e equations have been obtained
only recently in \cite{ILT13,ILT14}. These works  are based
on  {\it conformal block representations}  of isomonodromic tau functions --- see \cite{GIL12,GIL13,ILTe}.
 Although very powerful, the conformal block approach still has to be put on  rigorous ground. In the recent  papers \cite{IP} and  \cite{ILP}, it was shown that with the help of Riemann-Hilbert techniques the conjectural formul\ae\,of \cite{ILT14} and  \cite{ILT13} for the constant factor in the asymptotics of the  Painlev\'e III and Painlev\'e VI tau functions can be proven. 
Moreover, a new result - the formula for the constant factor in the asymptotics of a generic Painev\'e II tau function was established. Finally,
in the most recent work \cite{GL} in the case of the Painless\'e VI, a determinant formula for the generic Painlev\'e VI tau function has been obtained, which also, in particular, provides a  rigorous proof of the results of \cite{ILT13}.\smallskip

A central role in the constructions  of papers \cite{IP} and  \cite{ILP} is played by an extension of the Jimbo-Miwa-Ueno differential to the full
space of the extended monodromy data of the associated linear systems. This extension has been inspired by the works of Malgrange
\cite{Mal}  and Bertola \cite{Ber2, Ber1}, and, as a by-product, it  has established  a very interesting new fact
about the  Jimbo-Miwa-Ueno differential. It turns out that the original Jimbo-Miwa-Ueno differential form coincides
up to a total derivative with the classical action differential. This in turn lead to
Theorem \ref{theo:2} which, as we have already briefly explained, yields a new and much simpler way 
to derive the large gap asymptotics, as featured in this paper. In other words, a principal methodological message of our paper is that the original, very special, cases of tau functions have also benefited from the scheme that has been designed for the analysis of the general cases.

\subsection{Outline of paper} A short outline for the remaining sections is as follows. We derive Theorem \ref{theo:1} and Theorem \ref{theo:2} in Section \ref{sec:2} through the use of Fredholm determinant formul\ae\, for $F_B(t),F_S(t)$ and $F_H(t,\alpha)$ and straightforward differentiation. 
In Section \ref{sec:3_0} we outlined the above mentioned alternative proof of Theorem \ref{theo:2} based on the theory of isomonodromic tau functions. This alternative proof explains the origin and general theoretical meaning of the theorem's statement. Section \ref{sec:3} is devoted to the proof of \eqref{JME:32} where we rely on the well-known Riemann-Hilbert representation \cite{FIKN} of the Ablowitz-Segur solution to Painlev\'e-II. The underlying Riemann-Hilbert problem is solved in \cite{FIKN} and we only require   a minor extension of the known nonlinear steepest descent techniques (\cite{FIKN} focuses on $q(t;\gamma)$ only, but here we need $p(t;\gamma)$ and $\mathcal{H}_S(q,p,t)$ as well). Somewhat similar is our approach in the proof of \eqref{JME:31} given in Section \ref{sec:4}. The Riemann-Hilbert problem has been analyzed previously in \cite{DIZ,BDIK2} and we simply read off $p(t;\gamma)$ and $\mathcal{H}_B(q,p,t)$. This changes in Section \ref{sec:5} when we address \eqref{JME:33}. The Riemann-Hilbert problem for $q(t,\alpha;\gamma)$ and $p(t,\alpha;\gamma)$ is known from \cite{B2} but was not asymptotically solved in the required scaling regime when $t\rightarrow+\infty$ and $\gamma\in[0,1)$ is fixed. For this reason we provide all necessary details following the roadmap of \cite{DZ}: matrix factorizations, a $g$-function transformation, contour deformations, local model problems with Bessel and confluent hypergeometric functions and finally small norm estimates and iterations. The result of these steps is summarized in Theorem \ref{DZBess}. After that we simply extract all relevant expansions and combine them in \eqref{JME:26}, leading to the final expansion \eqref{JME:33}.
%
\begin{center}
\begin{figure}[tbh]
\includegraphics[width=12.5cm, height=6.5cm]{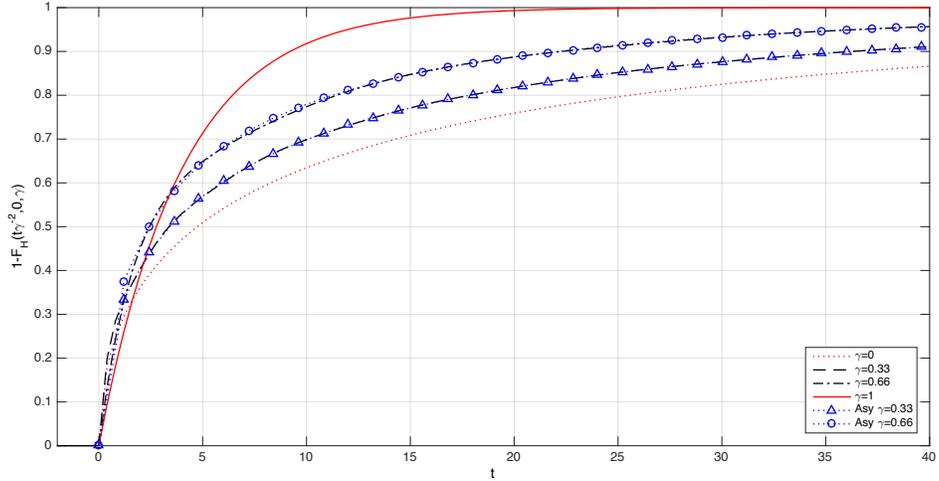}
\caption{Plot of the distribution function $1-F_H(t\gamma^{-2},0;\gamma)$ for various values of $\gamma\in[0,1]$. The result is computed with $m=50$ quadrature points and checked against \eqref{JME:33} for $\alpha=0$ in blue.}
\label{num3}
\end{figure}
\end{center} 
\begin{center}
\begin{figure}[tbh]
\includegraphics[width=12.5cm, height=6.5cm]{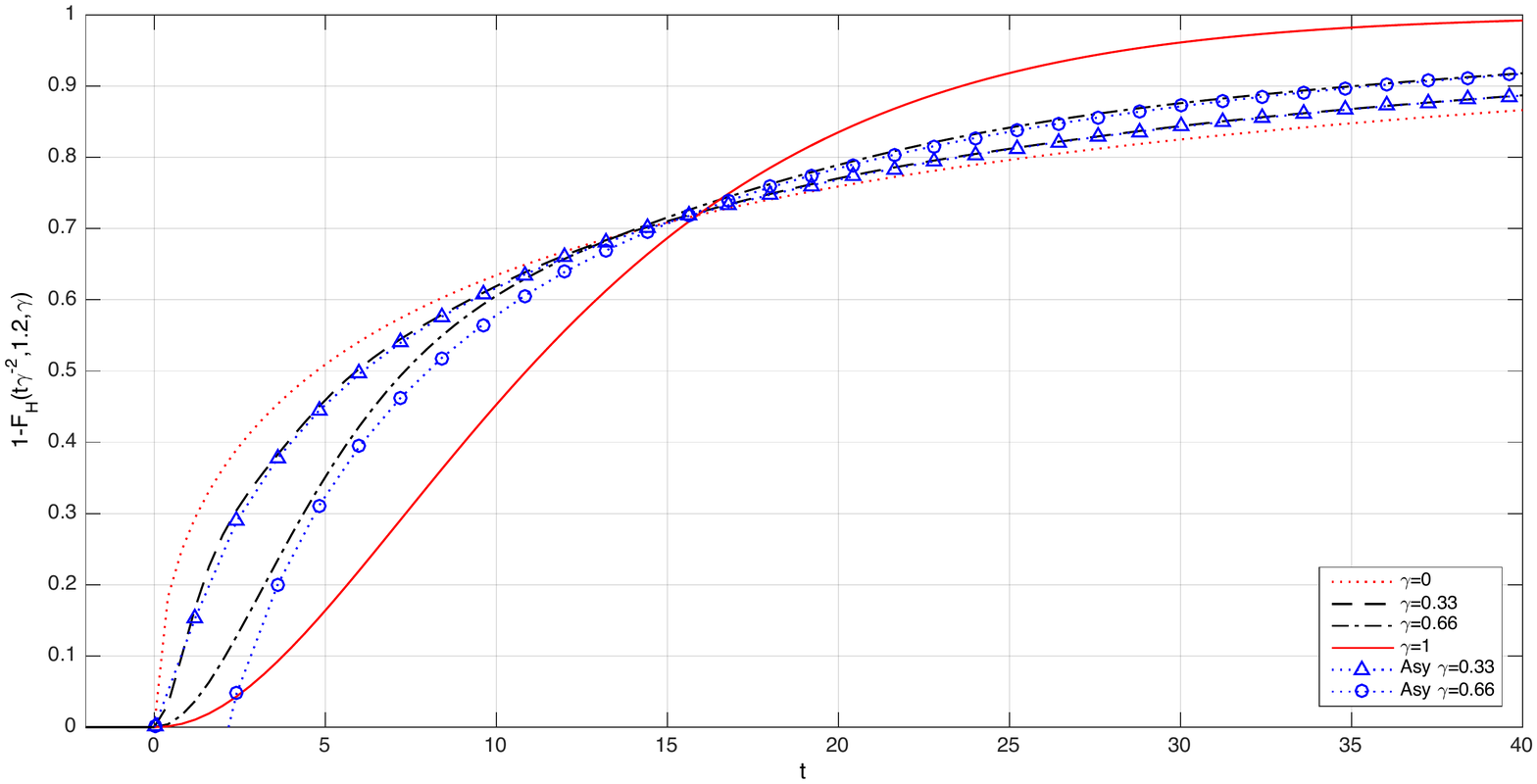}
\caption{Plot of the distribution function $1-F_H(t\gamma^{-2},1.2;\gamma)$ for various values of $\gamma\in[0,1]$. The result is computed with $m=50$ quadrature points and checked against \eqref{JME:33} for $\alpha=1.2$ in blue.}
\label{num4}
\end{figure}
\end{center} 
\begin{center}
\begin{figure}[tbh]
\includegraphics[width=12.5cm, height=6.6cm]{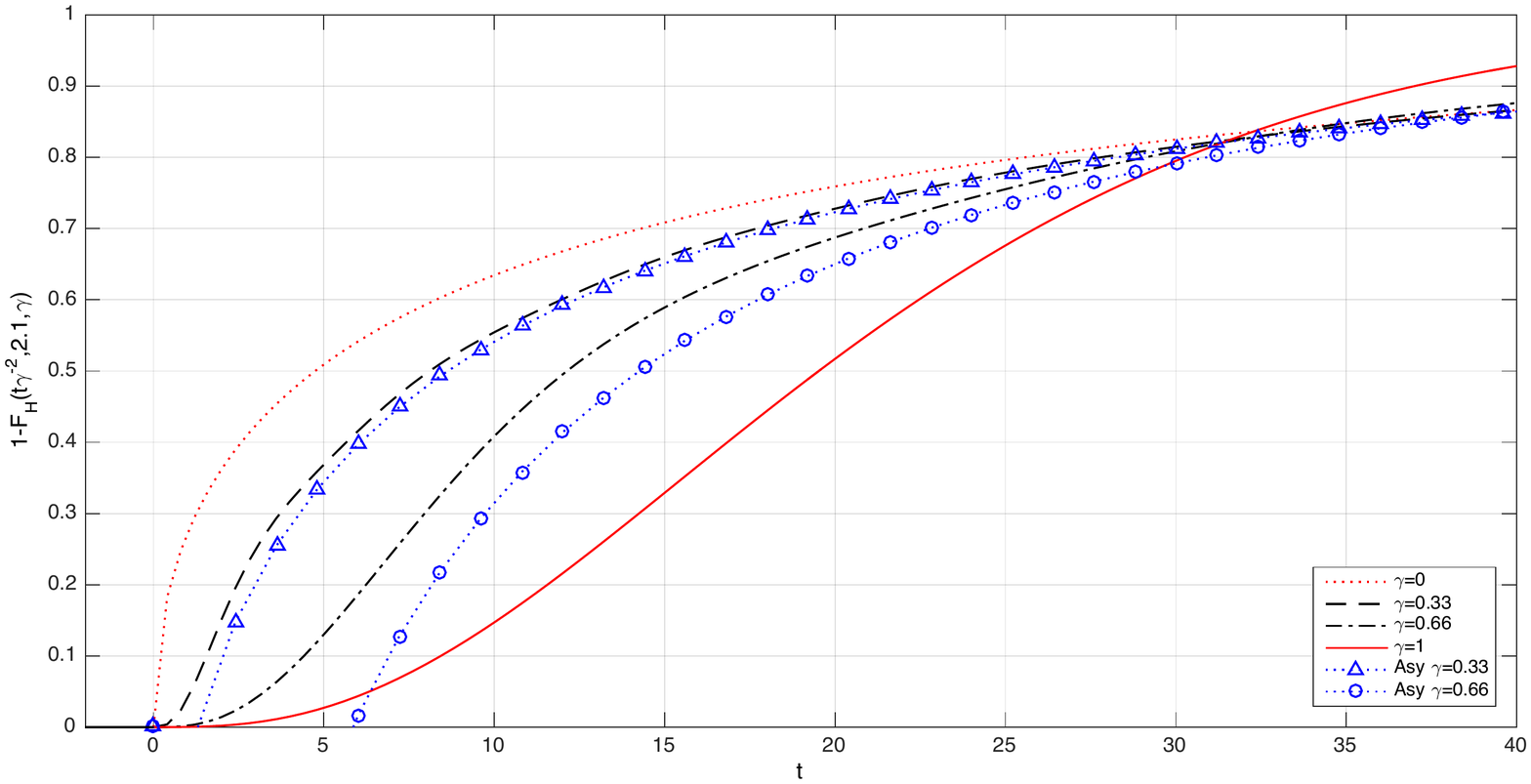}
\caption{Plot of the distribution function $1-F_H(t\gamma^{-2},2.1;\gamma)$ for various values of $\gamma\in[0,1]$. The result is computed with $m=50$ quadrature points and checked against \eqref{JME:33} for $\alpha=2.1$ in blue.}
\label{num5}
\end{figure}
\end{center} 

\section{Proof of Theorem \ref{theo:1} and Theorem \ref{theo:2}}\label{sec:2}
For the proof of Theorem \ref{theo:1} we shall rely on \cite{F}, Chapter $9$.
\begin{proof} Recall the well known \cite{TW1,TW2} Fredholm representations of the limiting distribution and gap functions in the complete Wishart ensemble,
\begin{equation*}
	F_B(t)=\det\big(1-K_{\sin}\upharpoonright_{L^2(-t,t)}\big),\ \ \ F_S(t)=\det\big(1-K_{\textnormal{Ai}}\upharpoonright_{L^2(t,\infty)}\big),\ \ \ F_H(t,\alpha)=\det\big(1-K_{\textnormal{Bess}}^{\alpha}\upharpoonright_{L^2(0,t)}\big),
\end{equation*}
using the kernels from \eqref{JME:22} and \eqref{JME:23}. Also, the limiting probability of having exactly $m\in\mathbb{Z}_{\geq 0}$ bulk, or soft-edge or hard-edge scaled eigenvalues $\mu_j^r, r=B,S,H$ in the interval $(-t,t)$, or $(t,\infty)$ or $(0,t)$ equals \cite{F},
\begin{equation*}
	E_B(m,(-t,t))=\frac{(-1)^m}{m!}\frac{\partial^m}{\partial\xi^m}E_B((-t,t),\xi)\Big|_{\xi=1},\ \ \ E_S(m,(t,\infty))=\frac{(-1)^m}{m!}\frac{\partial^m}{\partial\xi^m}E_S((t,\infty),\xi)\Big|_{\xi=1}
\end{equation*}
and
\begin{equation*}
	E_H(m,(0,t),\alpha)=\frac{(-1)^m}{m!}\frac{\partial^m}{\partial\xi^m}E_H((0,t),\xi,\alpha)\Big|_{\xi=1}
\end{equation*}
in terms of the generating functions
\begin{equation}\label{JME:36}
	E_B((-t,t),\xi)=\det\big(1-\xi K_{\sin}\upharpoonright_{L^2(-t,t)}\big),\ \ \ \ \ E_S((t,\infty),\xi)=\det\big(1-\xi K_{\textnormal{Ai}}\upharpoonright_{L^2(t,\infty)}\big)
\end{equation}
and
\begin{equation*}
	E_H((0,t),\xi,\alpha)=\det\big(1-\xi K_{\textnormal{Bess}}^{\alpha}\upharpoonright_{L^2(t,\infty)}\big).
\end{equation*}
Returning to \eqref{JME:16} (the case of the thinned extremal distributions is handled analogously) we have then
\begin{eqnarray*}
	F_B(t,\gamma)=\sum_{m=0}^{\infty}E_B(m,(-t,t))(1-\gamma)^m&=&\sum_{m=0}^{\infty}\frac{1}{m!}\frac{\partial^m}{\partial\xi^m}E_B((-t,t),\xi)\Big|_{\xi=1}(\gamma-1)^m\\
	&=&E_B((-t,t),\gamma)=\det\big(1-\gamma K_{\sin}\upharpoonright_{L^2(-t,t)}\big),
\end{eqnarray*}
where we used the definition of the incomplete Wishart ensemble in the first equality (each particle is removed independently with probability $1-\gamma$), identity \eqref{JME:36} in the second and Taylor's theorem in the third. Since we have now established \eqref{Fred:1} and \eqref{Fred:2}, the remainder of the proof (Hamiltonian representations and boundary conditions) follows at once from \cite{F}, Chapter $9$: indeed, for the limiting gap function, use \cite{F},(9.27) and Proposition 3.33. For the limiting distribution function of the largest eigenvalue after thinning, use \cite{F}, (9.26) and (9.43). Finally, for the limiting distribution function of the smallest eigenvalue after thinning, use \cite{F}, (9.62), (9.67) and (9.69).
\end{proof}
We now address Theorem \ref{theo:2}.
\begin{proof}
Take the $t$-derivative of the right hand side in \eqref{JME:24},
\begin{equation*}
	\frac{\partial}{\partial t}\ln F_B(t;\gamma)=t\big(\mathcal{H}_B\big)_t-p_tq,
\end{equation*}
and now use the Hamiltonian system \eqref{HH:1},
\begin{equation*}
	\frac{\partial}{\partial t}\ln F_B(t;\gamma)=t\big(\mathcal{H}_B\big)_t-p_tq=\mathcal{H}_B(q,p,t).
\end{equation*}
Thus, recall Proposition \ref{theo:1}, left and right hand side in \eqref{JME:24} can only differ by a $t$-independent constant. But from \eqref{JME:17} we find $\mathcal{H}_{\textnormal{B}}(q,p,t)\sim-\frac{2\gamma}{\pi},t\downarrow 0$, and thus
\begin{equation*}
	\ln F_B(t;\gamma)=\int_0^t\mathcal{H}_B\big(q(s;\gamma),p(s;\gamma),s\big)\,\d s=-\frac{2\gamma t}{\pi}+\mathcal{O}\big(t^2\big),\ \ t\downarrow 0,
\end{equation*}
which matches exactly the small $t$-behavior of the right hand side in \eqref{JME:24}, hence the aforementioned constant is zero. Next, take $t$-derivatives of the right hand side in \eqref{JME:25},
\begin{equation*}
	\frac{\partial}{\partial t}\ln F_S(t;\gamma)=-\frac{1}{3}\mathcal{H}_S+\frac{2t}{3}\big(\mathcal{H}_S\big)_t-\frac{1}{3}p_tq+\frac{2}{3}pq_t,
\end{equation*}
and use the system \eqref{JME:8} which leads to
\begin{equation*}
	\frac{\partial}{\partial t}\ln\textnormal{F}_S(t;\gamma)=\mathcal{H}_S(q,p,t).
\end{equation*}
But from \eqref{JME:18},
\begin{equation*}
	\ln F_S(t;\gamma)=-\int_t^{\infty}\mathcal{H}_S\big(q(s;\gamma),p(s;\gamma),s\big)\,\d s=-\frac{\gamma}{16\pi}t^{-\frac{3}{2}}\e^{-\frac{4}{3}t^{\frac{3}{2}}}\big(1+o(1)\big),\ \ \ t\rightarrow+\infty
\end{equation*}
which again matches the large positive $t$-behavior of the right hand side in \eqref{JME:25}, so the identity follows. Finally turn to \eqref{JME:26} and take $t$-derivatives of both sides,
\begin{equation*}
	\frac{\partial}{\partial t}\ln F_H(t,\alpha;\gamma)=\mathcal{H}_H+2t\big(\mathcal{H}_H\big)_t-\frac{\alpha^2}{2t}\frac{q^2}{q^2-1}+pq_t,\ \ \ \ \alpha>-1.
\end{equation*}
But with \eqref{HH:2} this implies that
\begin{equation*}
	\frac{\partial}{\partial t}\ln F_H(t,\alpha;\gamma)=\mathcal{H}_H,
\end{equation*}
and thus both sides in \eqref{JME:26} can only differ by a $t$-independent term. As mentioned in Remark \ref{poles}, the Hamiltonian is integrable on $(0,t)$ for $t>0$ and we find (compare Subsection \ref{sec:14}),
\begin{equation}\label{v:1}
	\int_0^t\mathcal{H}_H\big(q(s,\alpha;\gamma),p(s,\alpha;\gamma),s,\alpha)\,\d s=-\frac{\gamma}{\Gamma^2(2+\alpha)}\left(\frac{t}{4}\right)^{\alpha+1}\big(1+o(1)\big),\ \ \ t\downarrow 0.
\end{equation}
On the other hand the integrands of $L(t,\alpha;\gamma)$ and $I_H(t,\alpha;\gamma)$, see \eqref{JME:26}, are singular at all points $t_k\in(0,+\infty)$ where $q^2(t_k,\alpha;\gamma)=1$. In fact the differential equation \eqref{S:8} leads to the Taylor expansion
\begin{equation*}
	q(t,\alpha;\gamma)=\pm 1+d_k(t-t_k)+\mathcal{O}\big((t-t_k)^2\big),\ \ \ \ d_k^2t_k^2=\frac{1}{4}\alpha^2,\ \ \ \ |t-t_k|<r,
\end{equation*}
so that near $t_k$,
\begin{equation*}
	pq_t-\mathcal{H}_H(q,p,t,\alpha)=\pm\frac{\alpha^2}{4d_kt_k}\frac{1}{t-t_k}+\mathcal{O}(1),\ \ \ \ \frac{\alpha^2q^2}{2t(q^2-1)}=\pm\frac{\alpha^2}{4d_kt_k}\frac{1}{t-t_k}+\mathcal{O}(1),
\end{equation*}
and
\begin{equation*}
	\frac{\alpha^2}{2t(q^2-1)}=\pm\frac{\alpha^2}{4d_kt_k}\frac{1}{t-t_k}+\mathcal{O}(1).
\end{equation*}
For this reason we choose the path of integration for $L$ and $I_H$ in the right half-plane from $s=0$ to $s=t$ and we avoid all points $t_k$. With this choice, for $\alpha\geq 0$,
\begin{equation*}
	L(t,\alpha;\gamma)=-\frac{\alpha\gamma}{2}\left(\frac{t}{4}\right)^{\alpha}\frac{1}{\Gamma^2(1+\alpha)}\big(1+o(1)\big),\ \ \ t\downarrow 0,
\end{equation*}
as well as
\begin{equation*}
	I_H(t,\alpha;\gamma)=-\frac{\alpha\gamma}{2}\left(\frac{t}{4}\right)^{\alpha}\big(1+o(1)\big)+\frac{\gamma}{\Gamma^2(2+\alpha)}\left(\frac{t}{4}\right)^{\alpha+1}\big(1+o(1)\big),\ \ t\downarrow 0.
\end{equation*}
This implies that 
\begin{equation*}
	2t\,\mathcal{H}_H(q(t,\alpha;\gamma),p(t,\alpha;\gamma),t,\alpha)-L(t,\alpha;\gamma)+I_H(t,\alpha;\gamma)=\mathcal{O}\left(t^{\alpha+1}\right),\ \ t\downarrow 0;\ \ \ \alpha\geq 0
\end{equation*}
which matches in turn the vanishing order in \eqref{v:1}, i.e. the aforementioned $t$-independent term is identically zero for $\alpha\geq 0$. For $-1<\alpha<0$, we have instead, as $t\downarrow 0$,
\begin{equation*}
	L(t,\alpha;\gamma)=-\frac{\alpha}{2\gamma}\Gamma^2(1+\alpha)\left(\frac{t}{4}\right)^{-\alpha}\big(1+o(1)\big),\ \ \ 
	I_H(t,\alpha;\gamma)=\mathcal{O}\big(t^{-\alpha}\big)+\frac{\gamma}{\Gamma^2(2+\alpha)}\left(\frac{t}{4}\right)^{\alpha+1}\big(1+o(1)\big),
\end{equation*}
so again both sides in \eqref{JME:26} vanish as $t\downarrow 0$, i.e. also for $-1<\alpha<0$ the identity holds true.
\end{proof}

\section{Isomonodromic tau function and alternative proof of Theorem \ref{theo:2}}\label{sec:3_0}
As it has already been mentioned in our introduction, in this section we outline an alternative proof of Theorem \ref{theo:2} which
is based on the Jimbo-Miwa-Ueno theory of the isomonodromic tau function. We will restrict ourselves to the soft edge  case 
\eqref{JME:24} only. The bulk and hard edge cases can be done in a similar way.
\subsection{Lax system and classical Jimbo-Miwa-Ueno differential}\label{sec:12} 
Consider  the following $2\times 2$ system of ordinary differential equations in the complex $\lambda$-plane,
\begin{equation}\label{JME:10}
	\frac{\d{\bf X}}{\d\lambda}=\left\{-4\im\lambda^2\sigma_3+4\im\lambda\begin{bmatrix}0 & q\\ -q &0\end{bmatrix}+\begin{bmatrix}-\im t-2\im q^2 & -p\\ -p & \im t+2\im q^2\end{bmatrix}\right\}{\bf X}\equiv {\bf A}(\lambda,t){\bf X},\ \ \ \ \sigma_3=\begin{bmatrix}1 & 0\\ 0 & -1\end{bmatrix},
\end{equation}
where $t,q,p$ are viewed as external parameters. This system has an irregular singular point at $\lambda=\infty$ of Poincar\'e rank $3$ thus in turn (cf. \cite{FIKN}) seven canonical solutions $\{{\bf X}_j(\lambda),\lambda\in\Omega_j\}_{j=1}^7$ to \eqref{JME:10} exist which are uniquely specified by the asymptotic expansion 
\begin{equation}\label{JME:11}
	{\bf X}_j(\lambda)\sim\widehat{{\bf X}}(\lambda)\e^{-\im(\frac{4}{3}\lambda^3+t\lambda)\sigma_3},\ \ \lambda\rightarrow\infty,\ \lambda\in\Omega_j.
\end{equation}
Here
\begin{equation*}
	\Omega_j=\left\{\lambda\in\mathbb{C}:\ \textnormal{arg}\,\lambda\in\left(\frac{\pi}{3}(j-2),\frac{\pi}{3}j\right)\right\},\ \ \ j=1,2,\ldots,7,
\end{equation*}
denote the canonical sectors, and $\widehat{{\bf X}}(\lambda)$ is the formal series,
\begin{equation}\label{Xhat}
 \widehat{{\bf X}}(\lambda) = \mathbb{I} + \sum_{k=1}^{\infty}\frac{{\bf X}_k}{\lambda^k},
 \end{equation}
whose  matrix coefficients are explicitly expressed in terms of $q$ and $p$; for instance,
\begin{equation}\label{X1}
{\bf X}_1 = \begin{bmatrix}-\frac{\im}{2}(\frac{p^2}{4}-tq^2-q^4) & \frac{q}{2}\\ \frac{q}{2} & \frac{\im}{2}(\frac{p^2}{4}-tq^2-q^4)\end{bmatrix}.
\end{equation}
The  space $\mathcal{M}$ of monodromy data of system \eqref{JME:10} is generically  two dimensional over the field of complex numbers, cf. \cite{FIKN}, and it consists  of the non-trivial entries in the {\it Stokes matrices }
\begin{equation*}
	{\bf S}_j={\bf X}_j^{-1}(\lambda){\bf X}_{j+1}(\lambda)=\begin{cases}\begin{bmatrix}1 & 0\\ s_j & 1\end{bmatrix},&j\equiv 1\mod 2\smallskip\\ \begin{bmatrix} 1 & s_j\\ 0 & 1\end{bmatrix},&j\equiv 0\mod 2\end{cases},
\end{equation*}
which satisfy the following  cyclic and symmetry  constraints 
$$
{\bf S}_1{\bf S}_2{\bf S}_3{\bf S}_4{\bf S}_5{\bf S}_6=\mathbb{I}, \quad
\sigma_2{\bf S}_j\sigma_2= {\bf S}_{j+3}, \quad  \sigma_2=\begin{bmatrix}0 & -i\\ i& 0\end{bmatrix}
$$
That is, the space $\mathcal{M}$ can be identified with the following affine cubic in ${\Bbb C}^3$,
$$
\mathcal{M} = \left\{s \equiv (s_1, s_2, s_3) \in {\Bbb C}^3: s_1 -s_2 +s_3 + s_1s_2s_3 = 0\right\}.
$$
The remarkable fact of the modern theory of Painlev\'e equations is that the Stokes parameters $s_j\equiv s_j(q,p,t)$ are the first integrals of the 
second Painlev\'e equation \cite{FN},
\begin{equation}\label{P22}
\frac{\d^2q}{\d t^2}=tq+2q^3,\quad p=2\frac{\d q}{\d t}.
\end{equation}
Moreover, in terms of these integrals, the Ablowitz-Segur  solution of (\ref{P22}) which we need in our study of the incomplete Wishart ensemble is
characterized (see, e.g. \cite{FIKN}) by the equations,
$$
s_1 = -s_3 = -s_4 = s_6=  -\im\sqrt{\gamma}, \quad s_2 = s_5=0.
$$
This also means that, in the case of the Ablowitz-Segur family (\ref{JME:18}) for Painlev\'e-II, 
the space of monodromy data reduces to the complex plane ${\Bbb C}$,
$$
\mathcal{M} = \left\{\gamma \in {\Bbb C} \right\}.
$$
Another way to describe the relation of the linear system \eqref{JME:10} to the Painlev\'e equation (\ref{P22}) is to say that the
latter describes the {\it isomonodromic deformations} of the former. In fact, the dynamical system  (\ref{P22}) is equivalent to the
differential matrix equation,
\begin{equation}\label{JME:13}
	\frac{\partial{\bf A}}{\partial t}-\frac{\partial{\bf U}}{\partial\lambda}=[{\bf U},{\bf A}].
\end{equation}
where
$$
{\bf U}(\lambda, t) = -\im\lambda\sigma_3+\im\begin{bmatrix} 0 & q\\ -q & 0\end{bmatrix}.
$$
The nonlinear matrix equation (\ref{JME:13}) is usually called a zero curvature, or Lax equation, and it is a compatibility condition of two linear equations - system \eqref{JME:10} and the $t$-differential equation
\begin{equation}\label{JME:12}
	\frac{\partial {\bf X}}{\partial t}= {\bf U}(\lambda,t){\bf X}.
\end{equation}
The pair of linear systems,  \eqref{JME:10} and  \eqref{JME:12} constitutes a {\it Lax pair} of the second Painlev\'e equation which was discovered 
by Flashcka and Newell in 1980 \cite{FN}. We are now passing to the isomonodromic tau function associated  with this Lax pair.\smallskip

The notion of isomonodromic tau functions was introduced  by Jimbo, Miwa, and Ueno in 1980 in \cite{JMU} for an arbitrary
system of linear ordinary differential equations with rational coefficients. Their theory is based on a special $1$-form  defined on the 
space of the parameters of the system which is closed on the trajectories of the corresponding isomonodromy deformation equations. 
In the case of system \eqref{JME:10}, the Jimbo-Miwa-Ueno $1$-form is defined by the equation (see \cite{JMU}, equation (5.1))
\begin{equation}\label{JME:14}
	\omega_{\textnormal{JMU}}=-\res_{\lambda=\infty}\textnormal{Tr}\left\{\big(\widehat{{\bf X}}(\lambda)\big)^{-1}\partial_{\lambda}\widehat{{\bf X}}(\lambda)\d_{t}\Theta(\lambda)\right\}
\end{equation}
where
$$
\ \ \ \ \ \ \ \ \ \Theta(\lambda)=-\im\left(\frac{4}{3}\lambda^3+t\lambda\right)\sigma_3,\quad \mbox{and}\quad \d_tf := \frac{\partial f}{\partial t} \d t.
$$
Using \eqref{Xhat} and (\ref{X1}) one can easily transform (\ref{JME:14}) into
\begin{equation}\label{JMU:141}
	\omega_{\textnormal{JMU}}=\left(\frac{p^2}{4}-tq^2-q^4\right)\,\d t\equiv\mathcal{H}_S(q,p,t)\,\d t.
\end{equation}
Denote
$$
\omega_{\textnormal{JMU}}(t; \gamma) \equiv \mathcal{H}_S(q(t; \gamma),p(t; \gamma),t)\,\d t
$$
as the restriction of the form $\omega_{\textnormal{JMU}}$ on the Ablowits-Segur solution of the Painev\'e II equation (\ref{P22}). 
The tau function corresponding to the Ablowits-Segur solution of the Painev\'e II equation (\ref{P22}) is then defined by the relation,
\begin{equation}\label{Tau:1}
	\d_{t}\ln\tau=\omega_{\textnormal{JMU}}(t; \gamma).
\end{equation}
Comparing this with  the equations stated in Theorem \ref{theo:1} , we see that the soft edge distribution function $F_{S}(t; \gamma)$ can be identified with
the isomonodromic tau function corresponding to the Ablowitz-Segur Painlev\'e-II transcendent,
\begin{equation}\label{tayFS}
\tau(t;\gamma) \equiv F_{S}(t; \gamma).
\end{equation}
 In the next subsection we show how one can use \eqref{tayFS} in the derivation of \eqref{JME:25} in Theorem \ref{theo:2}. 
\subsection{Extended Jimbo-Miwa-Ueno differential}\label{sec:15} 
In \cite{JMU} it is also shown that the form $\omega_{\textnormal{JMU}}$ can be alternatively defined as
$$
\omega_{\textnormal{JMU}} =\res_{\lambda=\infty}\textnormal{Tr}
\left\{{\bf A}(\lambda)\big(\d_{t}\widehat{{\bf X}}(\lambda)\big)\big(\widehat{{\bf X}}(\lambda)\big)^{-1}\right\}.
$$
Following Section 4.2 of \cite{ILP}, where the generic two parameter 
family of the solutions of the second Painlev\'e equation is studied, we use this alternative definition and pass from the Jimbo-Miwa-Ueno form 
$\omega_{\textnormal{JMU}}(t; \gamma)$ to the following $1$-form,
\begin{equation}\label{extform}
	\omega_{\textnormal{ext}}=\res_{\lambda=\infty}\textnormal{Tr}\left\{{\bf A}(\lambda)\big(\d\widehat{{\bf X}}(\lambda)\big)\big(\widehat{{\bf X}}(\lambda)\big)^{-1}\right\},\ \ \ \d=\d_{t}+\d_{\gamma},
\end{equation}
 defined on the extended space, $\{t\} \times \{\gamma\}$. Similar to the derivation of \eqref{JMU:141}, we can substitute formula \eqref{Xhat} for $\widehat{{\bf X}}(\lambda)$ into equation (\ref{extform}) and compute $\omega_{\textnormal{ext}}$ explicitly  in terms of $p$ and $q$ (cf. \cite{ILP}, (4.39)),
\begin{equation}\label{JME:29}
	\omega_{\textnormal{ext}}=\mathcal{H}_S(q,p,t)\,\d t+\frac{2}{3}\left(pq_{\gamma}-\frac{1}{2}p_{\gamma}q-2tq_{\gamma}\big(tq+2q^3\big)+\frac{t}{2}pp_{\gamma}\right)\,\d\gamma.
\end{equation}
It should be noticed though that in order to arrive at this formula we now need, in addition to (\ref{X1}), the exact expressions for the matrix coefficients
${\bf X}_2$ and  ${\bf X}_3$ which can be found in \cite{ILP} - see equation $(4.38)$. Two important facts about the form  $\omega_{\textnormal{ext}}$ can be extracted from \eqref{JME:29}:
\begin{itemize}
\item On the trajectories of the second Painlev\'e equation the form $\omega_{\textnormal{ext}}$ coincides with the Jimbo-Miwa-Ueno
form $\omega_{\textnormal{JMU}}$, i.e.,
\begin{equation}\label{extprop10}
\omega_{\textnormal{ext}}(t; \gamma = \textnormal{const.}) = \omega_{\textnormal{JMU}}(t; \gamma)  \equiv d_{t}\ln F_{S}.
\end{equation}
\item The form  $\omega_{\textnormal{ext}}$ differs from the classical action differential, $p\d q-\mathcal{H}_S\,\d t$,
by a total differential. Indeed, one can check by a direct differentiation that 
\begin{equation}\label{extprop20}
	\omega_{\textnormal{ext}}=\frac{1}{3}\d\big(2t\,\mathcal{H}_S-pq\big)+p\d q-\mathcal{H}_S\,\d t.
\end{equation}
\end{itemize}
 Restricting equation (\ref{extprop20}) to the Ablowitz-Segur trajectory $q=q(t;\gamma)$, $p=p(t;\gamma)$, $\gamma \equiv$ const. and
 taking into account equation (\ref{extprop10}) we arrive at the differential version of \eqref{JME:25}.
 The remaining two action formul\ae\, \eqref{JME:24} and \eqref{JME:26} can be derived in a similar way using, instead of the Lax
pair \eqref{JME:10}, \eqref{JME:12}, the Lax pairs corresponding to the dynamical systems \eqref{HH:1} and \eqref{HH:2}, respectively.\smallskip

We complete our presentation of this alternative proof of Theorem \ref{theo:2} by  showing that
the transformation of equation  (\ref{JME:29}) into equation (\ref{extprop20}) is not an accident. In fact, there is a deep reason
why this transformation takes place. To this end, let us  consider the form  $\omega_{\textnormal{ext}}$ on the whole extended monodromy space, $\{t\}\times \mathcal{M}$, i.e. we pass from the one parameter Ablowitz-Segur family of solutions to Painlev\'e-II to the general two parameter set of solutions, 
$$
q \equiv q(t; s_1,s_2), \quad p \equiv p(t; s_1, s_2),
$$
(we chose $s_1$ and $s_2$ as the local coordinates on $\mathcal{M}$). This means, that the  differentiation $\d$ in \eqref{extform} now means
$\d = \d_t + \d_{s_{1}} + \d_{s{_2}}$ and equation \eqref{JME:29} is replaced by the whole equation  (4.39) of \cite{ILP},
$$
\omega_{\textnormal{ext}}=\mathcal{H}_S(q,p,t)\,\d t+\frac{2}{3}\left(pq_{s_{1}}-\frac{1}{2}p_{s_{1}}q-2tq_{s_{1}}\big(tq+2q^3\big)+\frac{t}{2}pp_{s_{1}}\right)\,\d s_1
$$
\begin{equation}\label{JME:291}
+ \frac{2}{3}\left(pq_{s_{2}}-\frac{1}{2}p_{s_{2}}q-2tq_{s_{2}}\big(tq+2q^3\big)+\frac{t}{2}pp_{s_{2}}\right)\,\d s_{2}.
\end{equation}
The general key fact about the extended form $\omega_{\textnormal{ext}}$ is that its external derivative is a 2-form on  $\mathcal{M}$
and it does not depend on $t$. In fact, one can check directly that (cf. (4.48) in \cite{ILP})
$$
\d\omega_{\textnormal{ext}} = \bigl(p_{s_1}q_{s_2} - p_{s_2}q_{s_1} \bigr)\d s_{1}\wedge\d s_{2} \equiv \Omega,
$$
where $\Omega$ is the canonical symplectic form on the phase space $\{(p,q)\}$. A classical fact of Hamiltonian mechanics
is that the external derivative of the classical action differential equals the same symplectic form,
$$
\d\bigl(p\d q - \mathcal{H}_S\,\d t\bigr) = \bigl(p_{s_1}q_{s_2} - p_{s_2}q_{s_1} \bigr)\d s_{1}\wedge\d s_{2} \equiv \Omega
$$
Therefore, 
$$
\omega_{\textnormal{ext}} - \bigl(p\d q - \mathcal{H}_S\bigr)\,\d t = \mbox{total differential}
$$
The fact that this total differential equals $\frac{1}{3}\d\big(2t\,\mathcal{H}_S-pq\big)$ is the result of a concrete calculation. We do not yet have a conceptual way to find this differential.

  
\section{Proof of Theorem \ref{theo:3}, expansion \eqref{JME:32}}\label{sec:3}
It is well known, cf. \cite{FIKN}, that we can characterize the functions $(q,p)$ in \eqref{JME:8}, \eqref{JME:18} through the solution of the following Riemann-Hilbert problem (RHP)
\begin{problem}\label{master} Let $t\in\mathbb{R}$ and $\gamma\in[0,1]$. Determine the piecewise analytic function ${\bf Y}={\bf Y}(\lambda;t,\gamma)\in\mathbb{C}^{2\times 2}$ such that
\begin{enumerate}
	\item[(1)] $Y(\lambda)$ is analytic for $\lambda\in\mathbb{C}\backslash(\Gamma_1\cup\Gamma_3\cup\Gamma_4\cup\Gamma_6)$ with the four rays
	\begin{equation*}
		\Gamma_1=\left\{\lambda\in\mathbb{C}:\ \ \textnormal{arg}\,\lambda=\frac{\pi}{6}\right\},\ \ \ \Gamma_3=\left\{\lambda\in\mathbb{C}:\ \ \textnormal{arg}\,\lambda=\frac{5\pi}{6}\right\}
	\end{equation*}
	\begin{equation*}
		\Gamma_4=\left\{\lambda\in\mathbb{C}:\ \ \textnormal{arg}\,\lambda=\frac{7\pi}{6}\right\},\ \ \ \Gamma_6=\left\{\lambda\in\mathbb{C}:\ \ \textnormal{arg}\,\lambda=\frac{11\pi}{6}\right\}
	\end{equation*}
	oriented from the origin $\lambda=0$ towards infinity, compare Figure \ref{figure1} below.
	\begin{figure}[tbh]
\begin{center}
\resizebox{0.6\textwidth}{!}{\includegraphics{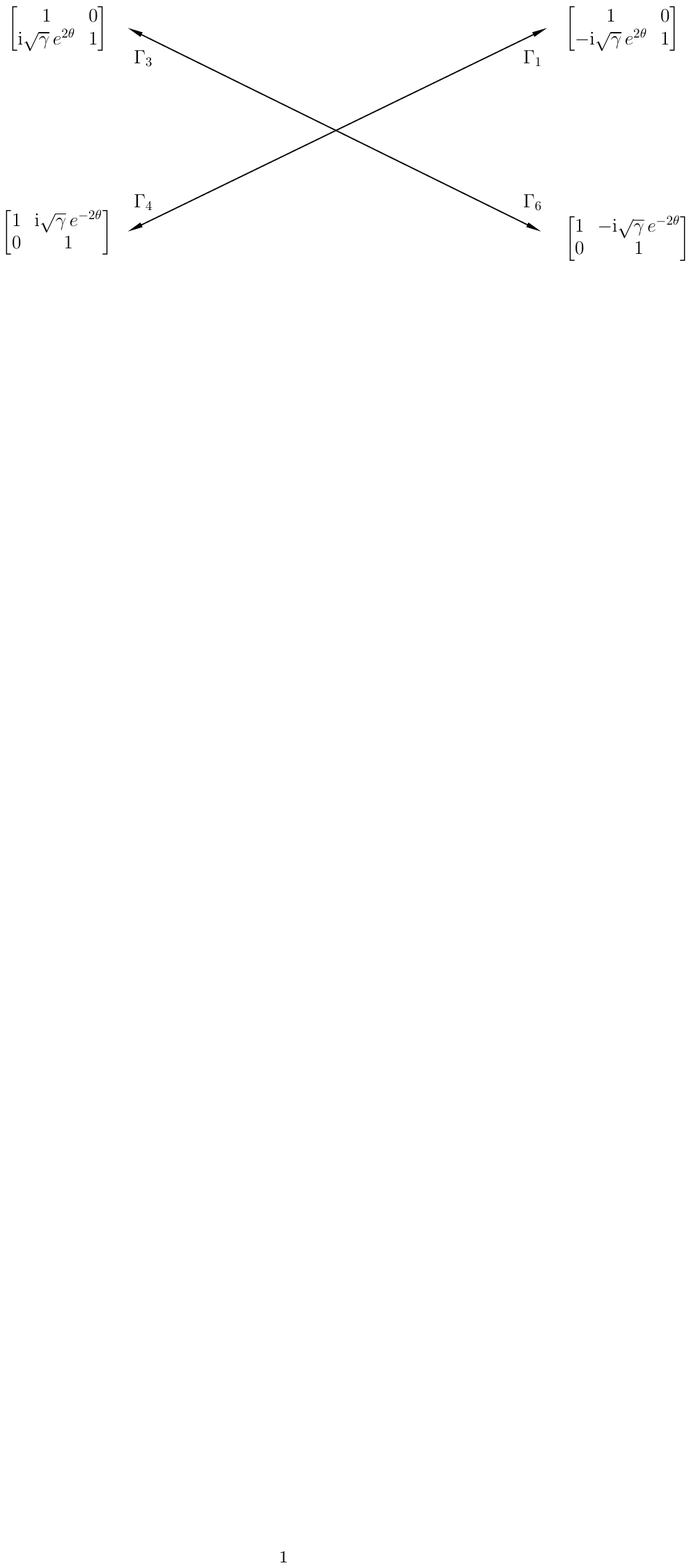}}
\caption{The oriented jump contours for the master function ${\bf Y}(\lambda;t,\gamma)$ of RHP \ref{master} in the complex $\lambda$-plane.}
\label{figure1}
\end{center}
\end{figure}
	\item[(2)] The boundary values ${\bf Y}_+(\lambda)$ (or ${\bf Y}_-(\lambda)$) from the left (or right) side of the oriented contour $\Gamma_k$ satisfy the jump relation
	\begin{equation*}
		{\bf Y}_+(\lambda)={\bf Y}_-(\lambda)\e^{-\theta(\lambda,t)\sigma_3}{\bf S}_k\e^{\theta(\lambda,t)\sigma_3},\ \ \lambda\in\Gamma_k,\ \ \ k=1,2,3,4
	\end{equation*}
	with
	\begin{equation*}
		\theta(\lambda,t)=\im\left(\frac{4}{3}\lambda^3+t\lambda\right),\ \ \ \ \ \sigma_3=\bigl[\begin{smallmatrix}1&0\\0&-1\end{smallmatrix}\bigr];
	\end{equation*}
	and the $\lambda$-independent matrices
	\begin{equation*}
		{\bf S}_1=\bigl[\begin{smallmatrix}1&0\\ -\im\sqrt{\gamma}&1\end{smallmatrix}\bigr],\ \ \ {\bf S}_3=\bigl[\begin{smallmatrix}1&0\\ \im\sqrt{\gamma}&1\end{smallmatrix}\bigr],\ \ \ {\bf S}_4=\bigl[\begin{smallmatrix}1&\im\sqrt{\gamma}\\ 0&1\end{smallmatrix}\bigr]\ \ \ {\bf S}_6=\bigl[\begin{smallmatrix}1&-\im\sqrt{\gamma}\\ 0&1\end{smallmatrix}\bigr].
	\end{equation*}
	\item[(3)] As $\lambda\rightarrow\infty$, ${\bf Y}(\lambda)$ is normalized in the following way
	\begin{equation*}
		{\bf Y}(\lambda)=\mathbb{I}+{\bf Y}_1\lambda^{-1}+{\bf Y}_2\lambda^{-2}+\mathcal{O}\left(\lambda^{-3}\right),\ \ \ \ {\bf Y}_{\ell}=\big(Y_{\ell}^{jk}\big)_{j,k=1}^2
	\end{equation*}
\end{enumerate}
\end{problem}
As proven in \cite{BIK}, the latter problem for ${\bf Y}(\lambda)$ is uniquely solvable for all $t\in\mathbb{R},\gamma\in[0,1]$ and its solution determines the Ablowitz-Segur transcendents via
\begin{equation}\label{e:3}
	q(t;\gamma)=2Y_1^{12},\ \ \ p(t;\gamma)=2q_t(t;\gamma)=-8\im\big(Y_2^{12}+Y_1^{12}Y_1^{11}\big);
\end{equation}
Moreover the Hamiltonian function $\mathcal{H}_S=\mathcal{H}_S(q,p,t)$ can be read off directly from RHP \ref{master} as well,
\begin{equation}\label{e:4}
	\mathcal{H}_S\big(q(t;\gamma),p(t;\gamma),t\big)=2\im Y_1^{11}.
\end{equation}
The Riemann-Hilbert representation \eqref{e:3} has been used numerous times in the literature to derive the leading asymptotic behavior of $q(t;\gamma)$ as $t\rightarrow\pm\infty$ and $\gamma\in[0,1]$ is kept fixed, cf. \cite{FIKN} for more on the history of this subject. For our purposes (i.e. the proof of Theorem \ref{theo:3}, expansion \eqref{JME:32}) the estimates given in \cite{FIKN} have to be slightly extended. With this goal in mind we shall not reproduce all steps carried out in \cite{FIKN}, instead we only provide references and jump immediately to the key estimates.
\subsection{Nonlinear steepest descent analysis for RHP \ref{master} (in a nutshell)} Our goal is to solve RHP \ref{master} for ${\bf Y}(\lambda;t,\gamma)\in\mathbb{C}^{2\times 2}$ for all values $(-t,v)\in\mathbb{R}_+^2$ such that
\begin{equation}\label{s:1}
	(-t)\geq t_0,\ \ \ \textnormal{and}\ \  0\leq v=-\ln(1-\gamma)<+\infty\ \  \textnormal{is fixed}.
\end{equation}
This is achieved by first rescaling the initial function with the large parameter, ${\bf X}(\lambda)={\bf Y}(\lambda\sqrt{-t}\,),\lambda\in\mathbb{C}\backslash(\Gamma_1\cup\Gamma_3\cup\Gamma_4\cup\Gamma_6)$. Secondly, contour deformations ${\bf X}(\lambda)\mapsto{\bf T}(\lambda)$, see \cite{FIKN}, Figure $9.4$ and thirdly, matrix factorizations and opening of lens transformations ${\bf T}(\lambda)\mapsto {\bf S}(\lambda)$, see \cite{FIKN}, (9.4.7) and Figures $9.5,9.6$. After those initial three transformations the RHP for ${\bf S}(\lambda)$ is already in a localized state since the underlying jump matrix ${\bf G}_{\bf S}(\lambda;t,v)$ obeys (see \cite{FIKN}, (9.4.30))
\begin{equation}\label{es:2}
	{\bf G}_{\bf S}(\lambda;t,v)=\mathbb{I}+\mathcal{O}\left(\e^{v-4(-t)^{\frac{3}{2}}|\lambda\mp\frac{1}{2}|^2}\right),\ \ \ \ (-t)\geq t_0,
\end{equation}
for $\lambda$ along the eight contours shown in \cite{FIKN}, Figure $9.6$ that extend to infinity. Hence one needs to focus only on the line segment $(-\frac{1}{2},\frac{1}{2})\subset\mathbb{R}$ and two small vicinities of the endpoints $\pm\frac{1}{2}$. But all parametrices are well known, e.g. for the segment (see \cite{FIKN}, (9.4.8)) we take
\begin{equation*}
	{\bf P}^{(\infty)}(\lambda)=\left(\frac{\lambda+\frac{1}{2}}{\lambda-\frac{1}{2}}\right)^{\nu\sigma_3},\ \ \ \ \lambda\in\mathbb{C}\Big\backslash\left[-\frac{1}{2},\frac{1}{2}\right];\ \ \ \ \nu=\frac{v}{2\pi\im}\in\im\mathbb{R},
\end{equation*}
and for the neighborhoods of $\lambda=\pm\frac{1}{2}$ standard parabolic cylinder functions \cite{NIST} come into play. We shall denote those parametrices by ${\bf P}^{(\frac{1}{2})}(\lambda)$, see \cite{FIKN}, (9.4.20) and ${\bf P}^{(-\frac{1}{2})}(\lambda)$, compare \cite{FIKN}, (9.4.24). The three explicit model functions ${\bf P}^{(\infty)}(\lambda),{\bf P}^{(\frac{1}{2})}(\lambda)$ and ${\bf P}^{(-\frac{1}{2})}(\lambda)$ are then compared locally to the unknown ${\bf S}(\lambda)$,
\begin{equation}\label{e:14}
	{\bf R}(\lambda)={\bf S}(\lambda)\begin{cases}\big({\bf P}^{(\frac{1}{2})}(\lambda)\big)^{-1},&\lambda\in\mathbb{D}_r(\frac{1}{2})\\ \big({\bf P}^{(-\frac{1}{2})}(\lambda)\big)^{-1},&\lambda\in\mathbb{D}_r(-\frac{1}{2})\\
	\big({\bf P}^{(\infty)}(\lambda)\big)^{-1},&\lambda\notin\big(\mathbb{D}_r(\frac{1}{2})\cup\mathbb{D}_r(-\frac{1}{2})\big)\end{cases},\ \ \ \ \ \mathbb{D}_r(\lambda_0)=\{\lambda\in\mathbb{C}:\ |\lambda-\lambda_0|<r\},
\end{equation}
with fixed radius $0<r<\frac{1}{8}$. Recalling the model function properties we obtain the following RHP for the ratio function ${\bf R}(\lambda)$.
\begin{problem}\label{ratio} Find ${\bf R}(\lambda)={\bf R}(\lambda;t,v)\in\mathbb{C}^{2\times 2}$ with $(-t,v)\in\mathbb{R}^2_+$ such that
\begin{enumerate}
	\item[(1)] ${\bf R}(\lambda)$ is analytic for $\lambda\in\mathbb{C}\backslash\Sigma_{\bf R}$ where $\Sigma_{\bf R}=\partial\mathbb{D}_r(-\frac{1}{2})\cup\partial\mathbb{D}_r(\frac{1}{2})\cup\Sigma_{\infty}$ is shown in Figure \ref{figure7} below.
	\begin{figure}[tbh]
\begin{center}
\resizebox{0.5\textwidth}{!}{\includegraphics{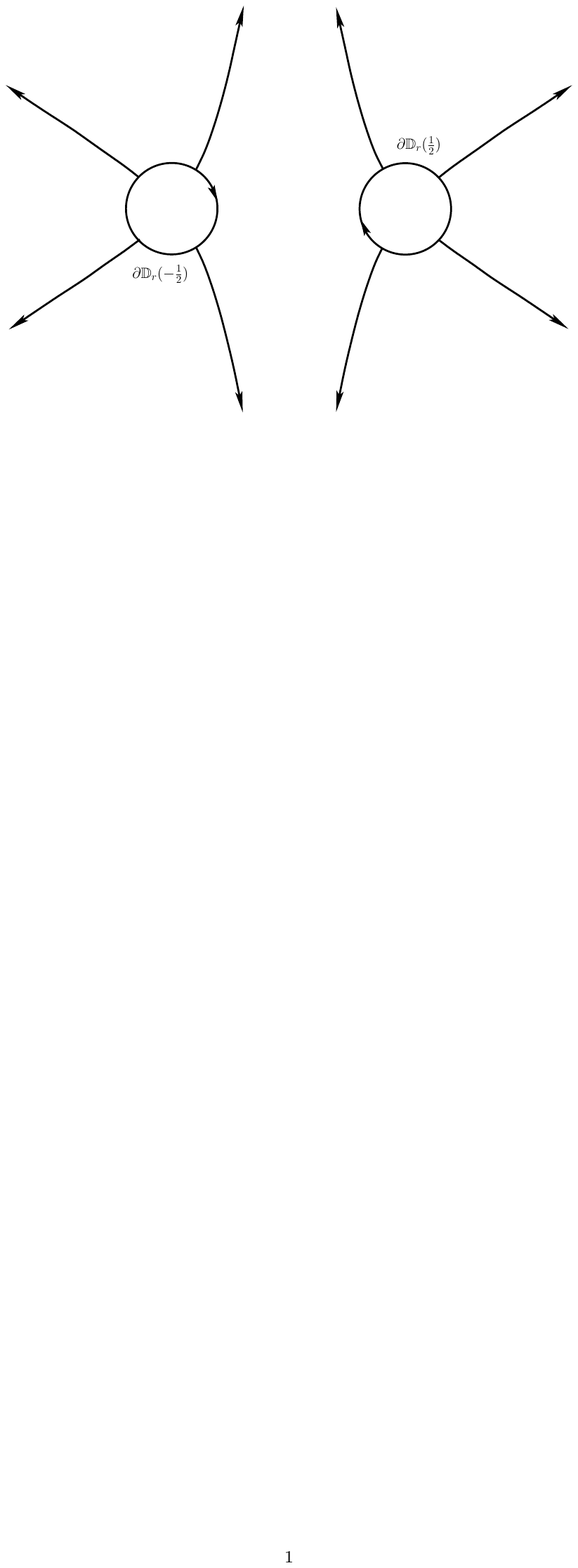}}
\caption{The oriented jump contours for the ratio function ${\bf R}(\lambda)$ in the complex $\lambda$-plane. The eight contours extending to infinity are summarized as $\Sigma_{\infty}$.}
\label{figure7}
\end{center}
\end{figure}
	\item[(2)] Along the contour $\Sigma_{\bf R}$ we have jumps ${\bf R}_+(\lambda)={\bf R}_-(\lambda){\bf G}_{\bf R}(\lambda;t,\gamma),\lambda\in\Sigma_{\bf R}$ with
	\begin{equation*}
		{\bf G}_{\bf R}(\lambda;t,v)={\bf P}^{(\infty)}(\lambda){\bf G}_{\bf S}(\lambda;t,v)\big({\bf P}^{(\infty)}(\lambda)\big)^{-1},\ \ \lambda\in\Sigma_{\infty}
	\end{equation*}
	and
	\begin{equation*}
		{\bf G}_{\bf R}(\lambda;t,v)={\bf P}^{(\pm\frac{1}{2})}(\lambda)\big({\bf P}^{(\infty)}(\lambda)\big)^{-1},\ \ \lambda\in\partial\mathbb{D}_r\left(\pm\frac{1}{2}\right).
	\end{equation*}
	By construction, there are no jumps in the interior of $\mathbb{D}_r(\pm\frac{1}{2})$ and along $[-\frac{1}{2},\frac{1}{2}]$.
	\item[(3)] As $\lambda\rightarrow\infty$, 
	\begin{equation*}
		{\bf R}(\lambda)=\mathbb{I}+\mathcal{O}\left(\lambda^{-1}\right).
	\end{equation*}
\end{enumerate}
\end{problem}
We now see how the constraint \eqref{s:1} guarantees that all jump matrices in RHP \ref{ratio} are close to the identity in the same scaling regime. First turn towards $\Sigma_{\infty}$: from \eqref{es:2} and the fact that $\nu\in\im\mathbb{R}$ we obtain at once,
\begin{prop}\label{DZ:1} There exist constants $t_0>0$ and $c>0$ such that
\begin{equation*}
	\|{\bf G}_{\bf R}(\cdot;t,v)-\mathbb{I}\|_{L^2\cap L^{\infty}(\Sigma_{\infty})}\leq c\,\e^{2v-4(-t)^{\frac{3}{2}}r^2},\ \ \ \ \ \forall\, (-t)\geq t_0,\ \ v\geq 0.
\end{equation*}
The parameter $0<r<\frac{1}{8}$ has been introduced previously in \eqref{e:14}.
\end{prop}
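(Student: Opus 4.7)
The plan is to obtain a pointwise bound for $\|{\bf G}_{\bf R}(\lambda;t,v) - \mathbb{I}\|$ on $\Sigma_\infty$ and then convert it to the $L^2 \cap L^\infty$ estimate. Since the jump on $\Sigma_\infty$ is given by the conjugation
$${\bf G}_{\bf R}(\lambda;t,v) - \mathbb{I} = {\bf P}^{(\infty)}(\lambda)\bigl({\bf G}_{\bf S}(\lambda;t,v) - \mathbb{I}\bigr)\bigl({\bf P}^{(\infty)}(\lambda)\bigr)^{-1},$$
the task decomposes into the small-norm estimate \eqref{es:2} for ${\bf G}_{\bf S} - \mathbb{I}$ and a uniform control of the conjugation factor. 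For the latter I would exploit the fact that $\nu = v/(2\pi\im) \in \im\mathbb{R}$, which makes the diagonal model function ${\bf P}^{(\infty)}(\lambda) = \bigl(\frac{\lambda+1/2}{\lambda-1/2}\bigr)^{\nu\sigma_3}$ satisfy
$$\bigl\|{\bf P}^{(\infty)}(\lambda)^{\pm 1}\bigr\| = \exp\Bigl(\tfrac{v}{2\pi}\bigl|\arg\tfrac{\lambda+1/2}{\lambda-1/2}\bigr|\Bigr),$$
with the principal branch chosen so that the cut is $[-1/2,1/2]$. Because $\Sigma_\infty$ stays at positive distance from the cut, the argument is bounded by $\pi$ in absolute value, so the conjugation inflates matrix norms by no more than $\e^{v}$, which is easily absorbed into the claimed $\e^{2v}$.

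Combining the two ingredients yields the pointwise estimate
$$\|{\bf G}_{\bf R}(\lambda;t,v) - \mathbb{I}\| \leq c_1\,\exp\bigl(2v - 4(-t)^{3/2}|\lambda \mp \tfrac{1}{2}|^2\bigr), \qquad \lambda \in \Sigma_\infty.$$
Since $|\lambda \mp 1/2| \geq r$ on $\Sigma_\infty$, the $L^\infty$ estimate is immediate. For the $L^2$ estimate I would parametrize each of the eight rays of $\Sigma_\infty$ by arclength $s \geq 0$ measured from the corresponding circle $\partial\mathbb{D}_r(\pm 1/2)$; along each such ray $|\lambda \mp 1/2|^2 \geq (r+s)^2 \geq r^2 + rs$, so that
$$\int_0^\infty \e^{-8(-t)^{3/2}(r^2+rs)}\,\d s = \frac{\e^{-8(-t)^{3/2}r^2}}{8(-t)^{3/2}r}.$$
Summing over the eight rays, taking the square root, and extracting the $\e^{2v-4(-t)^{3/2}r^2}$ factor leaves a harmless prefactor of order $(-t)^{-3/4}$, which for $(-t) \geq t_0$ is absorbed into the final constant $c$.

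The only technical point requiring care is the branch-tracking estimate $|\arg \frac{\lambda+1/2}{\lambda-1/2}| \leq \pi$, which is automatic because $\Sigma_\infty$ stays disjoint from the branch cut $[-1/2,1/2]$ by construction of the steepest-descent deformation. All remaining steps are routine: the strong Gaussian decay in $(-t)^{3/2}$ inherited from \eqref{es:2} dominates the at-worst-exponential growth in $v$ from the conjugation, which is precisely the message of the proposition.
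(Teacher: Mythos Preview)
Your proof is correct and follows the same approach as the paper, which simply notes that the estimate follows ``at once'' from \eqref{es:2} together with the fact that $\nu\in\im\mathbb{R}$. You have merely supplied the details the paper omits: the conjugation bound via $\bigl|\arg\tfrac{\lambda+1/2}{\lambda-1/2}\bigr|\le\pi$ on $\Sigma_\infty$ and the explicit Gaussian $L^2$ integral along the rays.
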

Second, for $\partial\mathbb{D}_r(\pm\frac{1}{2})$ we recall \cite{FIKN}, (9.4.23) and (9.4.33), 
\begin{prop}\label{DZ:11} For any fixed $v\in[0,+\infty)$ there exist positive constants $t_0=t_0(v)$ and $c=c(v)$ such that
\begin{equation*}
	\|{\bf G}_{\bf R}(\cdot;t,v)-\mathbb{I}\|_{L^2\cap L^{\infty}(\partial\mathbb{D}_r(\pm\frac{1}{2}))}\leq c\,(-t)^{-\frac{3}{4}},\ \ \ \ \forall\, (-t)\geq t_0.
\end{equation*}
\end{prop}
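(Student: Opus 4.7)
The plan is to exploit the explicit construction of the local parabolic-cylinder parametrices ${\bf P}^{(\pm 1/2)}(\lambda)$ from \cite{FIKN}, equations (9.4.20) and (9.4.24), and to read off the asymptotic matching against ${\bf P}^{(\infty)}(\lambda)$ on the small circles $\partial\mathbb{D}_r(\pm\tfrac12)$ directly. This is a standard Deift--Zhou step: the local model is designed precisely so that no $\mathcal{O}(1)$ mismatch remains on the disk boundaries, and the leading correction is controlled by the large parameter of the parabolic-cylinder rescaling.

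First I would recall that both local parametrices are obtained by introducing a local variable of the form $\zeta=2(-t)^{3/4}(\lambda\mp\tfrac12)$ together with an appropriate conjugation (see \cite{FIKN}, Section~9.4), followed by building the model out of the parabolic cylinder function $D_\nu(\zeta)$ with index $\nu=v/(2\pi\im)$. Because $r$ is fixed and independent of $t$, one has $|\zeta|\asymp (-t)^{3/4}$ uniformly on $\partial\mathbb{D}_r(\pm\tfrac12)$. Inserting the classical large-$|\zeta|$ expansion of $D_\nu(\zeta)$ from \cite{NIST}, Chapter~12, into the definition of ${\bf P}^{(\pm 1/2)}$ and using the explicit form of ${\bf P}^{(\infty)}(\lambda)=(\tfrac{\lambda+1/2}{\lambda-1/2})^{\nu\sigma_3}$, the matching calculation yields the asymptotic series
\begin{equation*}
    {\bf P}^{(\pm 1/2)}(\lambda)\bigl({\bf P}^{(\infty)}(\lambda)\bigr)^{-1}=\mathbb{I}+\sum_{k=1}^{\infty}\frac{{\bf C}_k^{\pm}(\lambda;v)}{(-t)^{3k/4}},\qquad \lambda\in\partial\mathbb{D}_r\!\left(\pm\tfrac12\right),
\end{equation*}
where each matrix coefficient ${\bf C}_k^{\pm}(\lambda;v)$ is a rational function of $\lambda$ with no singularities on $\partial\mathbb{D}_r(\pm\tfrac12)$ and depends on $v$ only through polynomial and Gamma-function factors in $\nu$. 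From this expansion the pointwise bound
\begin{equation*}
    \sup_{\lambda\in\partial\mathbb{D}_r(\pm 1/2)}\bigl|{\bf G}_{\bf R}(\lambda;t,v)-\mathbb{I}\bigr|\leq c(v)\,(-t)^{-3/4},\qquad (-t)\geq t_0(v),
\end{equation*}
follows at once, and the $L^2$ bound is then inherited from the $L^\infty$ bound since $\partial\mathbb{D}_r(\pm\tfrac12)$ has finite total length $2\pi r$.

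The content is almost purely that of a standard small-norm argument; the only step that requires any attention is checking that the exponential factors $\e^{\pm\theta(\lambda,t)\sigma_3}$ which appear inside the conjugation defining ${\bf P}^{(\pm 1/2)}$ are bounded on $\partial\mathbb{D}_r(\pm\tfrac12)$. This is guaranteed by the same steepest-descent geometry that produced the $L^2\cap L^\infty$ decay on $\Sigma_{\infty}$ used in Proposition~\ref{DZ:1}: on the small circles the phase $\theta(\lambda,t)$ is purely oscillatory (to leading order) and the corresponding Stokes factors are moved into the local model, so all remaining exponentials are $\mathcal{O}(1)$. The main potential obstacle, namely uniformity of the constants in $v$, is harmless here because the statement only claims pointwise convergence in $v$: the coefficients ${\bf C}_k^{\pm}(\cdot;v)$ are analytic in $\nu$ on a strip containing the imaginary axis, and $\nu=v/(2\pi\im)$ stays in a bounded imaginary segment for $v\in[0,+\infty)$ fixed. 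This completes the plan.
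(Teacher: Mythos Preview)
Your proposal is correct and follows exactly the same route as the paper, which simply cites \cite{FIKN}, (9.4.23) and (9.4.33) for this matching estimate; you have essentially unpacked what those equations say. Two small corrections that do not affect the argument: (i) the local variable in \cite{FIKN} is not literally linear but is defined so that the rescaled phase becomes exactly $\tfrac12\zeta^2$; what matters, and what you use, is only $|\zeta|\asymp(-t)^{3/4}$ on $\partial\mathbb{D}_r(\pm\tfrac12)$; (ii) your closing remark about $\nu$ lying in a ``bounded imaginary segment'' is moot here since $v$ is a single fixed value, so there is nothing to check.
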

By general theory, cf. \cite{DZ}, the last two estimates ensures unique solvability of the ratio problem \ref{ratio} in the scaling regime \eqref{s:1}, in fact 
\begin{theo}\label{DZend} For any fixed $v\in[0,+\infty)$ there exist $t_0=t_0(v)>0$ and $c=c(v)>0$ such that the ratio problem \ref{ratio} is uniquely solvable in $L^2(\Sigma_{\bf R})$ for all $(-t)\geq t_0$. We can compute its solution iteratively from the integral equation
\begin{equation*}
	{\bf R}(\lambda)=\mathbb{I}+\frac{1}{2\pi\im}\int_{\Sigma_{\bf R}}{\bf R}_-(w)\big({\bf G}_{\bf R}(w)-\mathbb{I}\big)\frac{\d w}{w-\lambda},\ \ \ \lambda\in\mathbb{C}\backslash\Sigma_{\bf R},
\end{equation*}
using that
\begin{equation*}
	\|{\bf R}_-(\cdot;s,v)-\mathbb{I}\|_{L^2(\Sigma_{\bf R})}\leq c\,(-t)^{-\frac{3}{4}},\ \ \forall\, (-t)\geq t_0.
\end{equation*}
\end{theo}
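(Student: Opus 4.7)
The plan is a standard small-norm Riemann-Hilbert argument à la Deift-Zhou \cite{DZ}, with the two uniform estimates in Propositions \ref{DZ:1} and \ref{DZ:11} as the sole input. First I would combine those two estimates to obtain a single bound on the full jump contour. Since $\Sigma_{\bf R}=\Sigma_{\infty}\cup\partial\mathbb{D}_r(\frac{1}{2})\cup\partial\mathbb{D}_r(-\frac{1}{2})$ is a disjoint union, for any fixed $v\in[0,+\infty)$ there are $t_0=t_0(v)>0$ and $c=c(v)>0$ so that
\begin{equation*}
\|{\bf G}_{\bf R}(\cdot;t,v)-\mathbb{I}\|_{L^2\cap L^{\infty}(\Sigma_{\bf R})}\leq c\,(-t)^{-\frac{3}{4}},\qquad (-t)\geq t_0,
\end{equation*}
because the exponentially small contribution from $\Sigma_{\infty}$ in Proposition \ref{DZ:1} is dominated by the algebraic $(-t)^{-3/4}$ contribution from the two small circles in Proposition \ref{DZ:11}, once $t_0$ is chosen large enough.

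Next I would recast RHP \ref{ratio} as a singular integral equation on $\Sigma_{\bf R}$. Let $C_\pm$ denote the boundary values of the Cauchy operator on $\Sigma_{\bf R}$, which are bounded on $L^2(\Sigma_{\bf R})$; for any $h\in L^2(\Sigma_{\bf R})$ define the operator
\begin{equation*}
C_{{\bf G}_{\bf R}}h := C_{-}\bigl(h\cdot({\bf G}_{\bf R}-\mathbb{I})\bigr).
\end{equation*}
Since $\Sigma_{\bf R}$ consists of a finite union of smooth arcs meeting transversally at finitely many self-intersection points, $C_-$ is a bounded operator on $L^2(\Sigma_{\bf R})$ with some operator norm $\|C_-\|_{L^2\to L^2}=:K<\infty$ independent of $t$ and $v$. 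Then
\begin{equation*}
\|C_{{\bf G}_{\bf R}}\|_{L^2\to L^2}\leq K\,\|{\bf G}_{\bf R}-\mathbb{I}\|_{L^{\infty}(\Sigma_{\bf R})}\leq Kc\,(-t)^{-\frac{3}{4}}<\tfrac{1}{2},
\end{equation*}
provided $t_0$ is enlarged if necessary. Hence $I-C_{{\bf G}_{\bf R}}$ is invertible on $L^2(\Sigma_{\bf R})$ by Neumann series, and the unique solution of ${\bf R}_--\mathbb{I}=C_{{\bf G}_{\bf R}}({\bf R}_--\mathbb{I})+C_-({\bf G}_{\bf R}-\mathbb{I})$ is
\begin{equation*}
{\bf R}_-(\cdot;t,v)-\mathbb{I}=(I-C_{{\bf G}_{\bf R}})^{-1}C_-({\bf G}_{\bf R}-\mathbb{I})\in L^2(\Sigma_{\bf R}).
\end{equation*}
The resulting norm bound
\begin{equation*}
\|{\bf R}_--\mathbb{I}\|_{L^2(\Sigma_{\bf R})}\leq \frac{K}{1-Kc(-t)^{-3/4}}\,\|{\bf G}_{\bf R}-\mathbb{I}\|_{L^2(\Sigma_{\bf R})}\leq c'(v)\,(-t)^{-\frac{3}{4}}
\end{equation*}
is exactly the desired estimate.

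Finally, I would reconstruct ${\bf R}(\lambda)$ off the contour by the Cauchy integral
\begin{equation*}
{\bf R}(\lambda)=\mathbb{I}+\frac{1}{2\pi\im}\int_{\Sigma_{\bf R}}{\bf R}_-(w)\bigl({\bf G}_{\bf R}(w)-\mathbb{I}\bigr)\frac{\d w}{w-\lambda},\qquad \lambda\in\mathbb{C}\setminus\Sigma_{\bf R},
\end{equation*}
and verify via the Sokhotski-Plemelj formulas that this ${\bf R}$ has boundary values satisfying ${\bf R}_+={\bf R}_-{\bf G}_{\bf R}$ on $\Sigma_{\bf R}$, is analytic off $\Sigma_{\bf R}$, and tends to $\mathbb{I}$ at infinity; uniqueness follows from the standard vanishing-lemma argument together with the fact that $\det{\bf G}_{\bf R}\equiv 1$ so that $\det{\bf R}\equiv 1$. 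The only genuine obstacle is the first step, namely ensuring that the $L^\infty$ bound on ${\bf G}_{\bf R}-\mathbb{I}$ holds uniformly near the self-intersection points of $\Sigma_{\bf R}$ (where the Cauchy operator requires a slightly more careful bound); this is handled by noting that ${\bf G}_{\bf R}$ extends smoothly across the arcs approaching each node, so the $L^2\cap L^\infty$ estimates from Propositions \ref{DZ:1} and \ref{DZ:11} remain valid in a neighborhood of the nodes. Everything else is purely a consequence of Neumann series and the boundedness of $C_-$ on $L^2$ of a piecewise-smooth contour.
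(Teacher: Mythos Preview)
Your proposal is correct and follows exactly the approach the paper takes: the paper simply invokes the general small-norm theory of \cite{DZ} with Propositions \ref{DZ:1} and \ref{DZ:11} as input, and you have spelled out that standard argument (Cauchy operator $C_-$, Neumann series for $(I-C_{{\bf G}_{\bf R}})^{-1}$, and reconstruction via the Cauchy integral). Nothing is missing or different.
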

At this point we can extract all relevant asymptotic information via \eqref{e:3} and \eqref{e:4}.
\subsection{Extraction of asymptotics and proof of expansion \eqref{JME:32}} Tracing back all explicit and invertible transformations, i.e. the sequence
\begin{equation*}
	{\bf Y}(\lambda)\mapsto {\bf X}(\lambda)\mapsto {\bf T}(\lambda)\mapsto {\bf S}(\lambda)\mapsto {\bf R}(\lambda),
\end{equation*}
we obtain the following formul\ae,
\begin{equation*}
	{\bf Y}_1=\sqrt{-t}\left(\nu\sigma_3+\frac{\im}{2\pi}\int_{\Sigma_{\bf R}}{\bf R}_-(w)\big({\bf G}_{\bf R}(w)-\mathbb{I}\big)\d w\right),
\end{equation*}
and
\begin{equation*}
	{\bf Y}_2=-t\left(\frac{\nu^2}{2}\mathbb{I}+\frac{\im\nu}{2\pi}\int_{\Sigma_{\bf R}}{\bf R}_-(w)\big({\bf G}_{\bf R}(w)-\mathbb{I}\big)\d w\,\sigma_3+\frac{\im}{2\pi}\int_{\Sigma_{\bf R}}{\bf R}_-(w)\big({\bf G}_{\bf R}(w)-\mathbb{I}\big)w\,\d w\right).
\end{equation*}
We begin with the asymptotic estimation of the integrals
\begin{equation*}
	{\bf J}=\frac{\im}{2\pi}\int_{\Sigma_{\bf R}}\big({\bf G}_{\bf R}(w)-\mathbb{I}\big)\,\d w,\ \ {\bf J}=\big(J^{jk}\big)_{j,k=1}^2;\hspace{1cm}{\bf K}=\frac{\im}{2\pi}\int_{\Sigma_{\bf R}}\big({\bf G}_{\bf R}(w)-\mathbb{I}\big)w\,\d w,\ \ \ \ \ {\bf K}=\big(K^{jk}\big)_{j,k=1}^2
\end{equation*}
using \cite{FIKN}, (9.4.31), (9.4.32) and the residue theorem.
\begin{lem}
As $t\rightarrow-\infty$, with $s=(-t)^{\frac{3}{2}}$,
\begin{equation*}
	J^{11}=-\frac{\im\nu^2}{12s}+\mathcal{O}\left(s^{-2}\right),\ \ \ J^{12}=\frac{1}{2}\sqrt{\frac{v}{\pi s}}\cos\big(\phi(s,v)\big)+\mathcal{O}\big(s^{-\frac{3}{2}}\big),
\end{equation*}
and
\begin{equation*}
	K^{12}=\frac{\im}{4}\sqrt{\frac{v}{\pi s}}\,\sin\big(\phi(s,v)\big)+\mathcal{O}\big(s^{-\frac{3}{2}}\big);\hspace{1cm}\phi(s,v)=\frac{2}{3}s-\frac{v}{2\pi}\ln(8s)+\frac{\pi}{4}-\textnormal{arg}\,\Gamma\left(\frac{v}{2\pi\im}\right).
\end{equation*}
All error terms are uniform with respect to $v$ chosen from compact subsets of $[0,+\infty)$.
\end{lem}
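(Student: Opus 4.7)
The plan is to reduce both integrals to residue calculations on the circles $\partial\mathbb{D}_r(\pm 1/2)$, after showing that the contribution from $\Sigma_\infty$ is exponentially small.

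First, by Proposition \ref{DZ:1} the integrand ${\bf G}_{\bf R}(w)-\mathbb{I}$ is $O(e^{2v-4sr^2})$ along $\Sigma_\infty$ (with $s=(-t)^{3/2}$), so both $\int_{\Sigma_\infty}({\bf G}_{\bf R}-\mathbb{I})\,dw$ and $\int_{\Sigma_\infty}({\bf G}_{\bf R}-\mathbb{I})w\,dw$ are $O(e^{-cs})$ for some $c>0$, uniformly in $v$ on compact subsets of $[0,\infty)$; such terms are absorbed in the stated error. On $\partial\mathbb{D}_r(\pm 1/2)$ the jump matrix equals ${\bf G}_{\bf R}(\lambda)={\bf P}^{(\pm 1/2)}(\lambda)\bigl({\bf P}^{(\infty)}(\lambda)\bigr)^{-1}$, and the parabolic cylinder matching of FIKN, equations (9.4.31)--(9.4.32), yields a Laurent expansion
\begin{equation*}
{\bf G}_{\bf R}(\lambda)-\mathbb{I}=\frac{{\bf C}^{(\pm)}}{\sqrt{s}\,(\lambda\mp\tfrac12)}+\frac{{\bf D}^{(\pm)}}{s\,(\lambda\mp\tfrac12)}+\frac{{\bf E}^{(\pm)}}{s\,(\lambda\mp\tfrac12)^2}+O(s^{-3/2}),
\end{equation*}
where ${\bf C}^{(\pm)}$ is strictly off-diagonal and ${\bf D}^{(\pm)}$ is diagonal. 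The explicit entries are built from $\Gamma(\pm\nu)$ with $\nu=v/(2\pi \im)$, the exponential factors $e^{\mp 2\im s/3}$ stemming from $\theta(\pm\sqrt{-t}/2,t)$, and the powers $s^{\mp\nu}$ arising from the local scaling $\zeta=2\sqrt{s}(\lambda\mp 1/2)$ together with the parabolic cylinder normalization constant $2^{\mp 2\nu}$.

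Second, I invoke the residue theorem on each circle (clockwise, as dictated by the orientation of $\Sigma_{\bf R}$). Higher-order pole terms contribute nothing to $\oint({\bf G}_{\bf R}-\mathbb{I})\,dw$, so
\begin{equation*}
{\bf J}=\frac{\im}{2\pi}\sum_{\pm}\oint_{\partial\mathbb{D}_r(\pm 1/2)}({\bf G}_{\bf R}-\mathbb{I})\,dw+O(e^{-cs})=\frac{{\bf C}^{(+)}+{\bf C}^{(-)}}{\sqrt{s}}+\frac{{\bf D}^{(+)}+{\bf D}^{(-)}}{s}+O(s^{-3/2}),
\end{equation*}
up to signs absorbed in the definitions. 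For ${\bf K}$, I write $w=(w\mp 1/2)\pm 1/2$ inside each disk; the $(w\mp 1/2)$-piece picks up the double-pole coefficient ${\bf E}^{(\pm)}$ and the constant piece reproduces $\pm\tfrac12$ times the residues just computed. Taking the $(1,2)$-entry uses only ${\bf C}^{(\pm)}$ at leading order and the signs differ between the two disks, turning the cosine combination for ${\bf J}^{12}$ into a sine combination for ${\bf K}^{12}$.

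Third, I extract the explicit entries. For $J^{12}$ and $K^{12}$, the two off-diagonal contributions combine as $e^{2\im s/3}s^{-\nu}2^{2\nu}\Gamma(-\nu)+e^{-2\im s/3}s^{\nu}2^{-2\nu}\Gamma(\nu)$ (up to a common factor). Using the reflection identity $\Gamma(\nu)\Gamma(-\nu)=-\pi/(\nu\sin\pi\nu)$ together with $\sin(\pi\nu)=-\im\sinh(v/2)$ and $v=-\ln(1-\gamma)$ extracts the amplitude $\sqrt{v/\pi}$, while the phases collapse to $\cos\phi(s,v)$, respectively $\sin\phi(s,v)$, with
\begin{equation*}
\phi(s,v)=\frac{2s}{3}-\frac{v}{2\pi}\ln(8s)+\frac{\pi}{4}-\arg\Gamma\!\left(\frac{v}{2\pi\im}\right),
\end{equation*}
the $\ln 8$ coming from the $2^{2\nu}$ factor and the $\pi/4$ from the branch phase of $\Gamma(\nu)-\overline{\Gamma(\nu)}$. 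For $J^{11}$, the diagonal pair ${\bf D}^{(\pm),11}$ is $s$-independent at leading order and its sum evaluates to $-\nu^2/12$ after the standard parabolic cylinder calculation, yielding $J^{11}=-\im\nu^2/(12s)+O(s^{-2})$.

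The main obstacle will be the precise bookkeeping of the phase $\phi(s,v)$: one has to track the cancellations between $e^{\mp 2\im s/3}$ from the $\theta$-exponential, $s^{\mp\nu}$ from the local rescaling, the $2^{\mp 2\nu}$ factor from the normalization of the parabolic cylinder parametrix, and the $\Gamma(\pm\nu)$ factors, and then show that the resulting real-analytic combination matches the stated $\arg\Gamma$ and $\pi/4$. This reduces essentially to matching notation with FIKN, equations (9.4.31)--(9.4.33); no substantially new analytic input is required beyond the residue theorem and the reflection formula for $\Gamma$.
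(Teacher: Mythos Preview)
Your proposal is correct and follows essentially the same approach as the paper, which dispatches the lemma in one line by citing \cite{FIKN}, (9.4.31)--(9.4.32) together with the residue theorem; you have simply unpacked that reference into a detailed roadmap (exponential smallness on $\Sigma_\infty$, Laurent expansion of the parabolic-cylinder matching on the two circles, residue evaluation, and the Gamma-function bookkeeping for the phase $\phi(s,v)$). The only caveat is that your ``up to signs absorbed in the definitions'' and the claim that ${\bf D}^{(+),11}+{\bf D}^{(-),11}=-\nu^2/12$ must be checked against the explicit coefficients in \cite{FIKN}, (9.4.31)--(9.4.32), but that is precisely the computation the paper is pointing to.
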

Next we obtain from Theorem \ref{DZend} that for $w\in\Sigma_R$,
\begin{equation*}
	{\bf R}_-(w)-\mathbb{I}=\frac{1}{2\pi\im}\int_{\Sigma_{\bf R}}\big({\bf G}_{\bf R}(\mu)-\mathbb{I}\big)\frac{\d\mu}{\mu-w_-}+\mathcal{O}\big((-t)^{-\frac{3}{2}}\big),
\end{equation*}
and thus, iterating once, where
\begin{equation*}
	{\bf L}=\frac{\im}{2\pi}\int_{\Sigma_{\bf R}}\big({\bf R}_-(w)-\mathbb{I}\big)\big({\bf G}_{\bf R}(w)-\mathbb{I}\big)\,\d w,\ \ \ \ {\bf L}=\big(L^{jk}\big)_{j,k=1}^2,
\end{equation*}
\begin{lem} As $t\rightarrow-\infty$ with $s=(-t)^{\frac{3}{2}}$,
\begin{equation*}
	L^{11}=-\frac{2\im\nu^2}{3s}+\frac{\nu}{4s}\sin\big(2\phi(s,v)\big)+\mathcal{O}\big(s^{-\frac{3}{2}}\big),
\end{equation*}
and the error term is uniform with respect to $v$ chosen from compact subsets of $[0,+\infty)$.
\end{lem}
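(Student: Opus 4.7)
The plan is to treat this as a second-order Deift--Zhou small-norm computation: one more iteration of the singular integral equation for ${\bf R}_-$, then explicit residue/steepest-descent evaluation of the resulting double contour integral. Substituting the one-step iteration formula for ${\bf R}_-(w)-\mathbb{I}$ into the definition of $\bf L$, and absorbing the tail of the iteration into the error (using $\|{\bf R}_- - \mathbb{I}\|_{L^2}\cdot\|{\bf G}_{\bf R}-\mathbb{I}\|_{L^2}=\mathcal{O}(s^{-3/2})$ from Theorem \ref{DZend} together with Proposition \ref{DZ:11}), I write
\begin{equation*}
{\bf L} = \frac{1}{4\pi^2}\int_{\Sigma_{\bf R}}\int_{\Sigma_{\bf R}}\frac{({\bf G}_{\bf R}(\mu)-\mathbb{I})({\bf G}_{\bf R}(w)-\mathbb{I})}{\mu - w_-}\,\d\mu\,\d w + \mathcal{O}(s^{-3/2}).
\end{equation*}
Projecting onto the $(1,1)$ entry then gives a scalar double integral whose integrand is a sum of two products of matrix entries of $({\bf G}_{\bf R}-\mathbb{I})$.

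Next I would localize. Proposition \ref{DZ:1} says ${\bf G}_{\bf R}-\mathbb{I}$ is exponentially small on $\Sigma_\infty$, so up to an $\mathcal{O}(\e^{-c s})$ contribution, the integration is restricted to $\partial\mathbb{D}_r(\tfrac12)\cup\partial\mathbb{D}_r(-\tfrac12)$. On these two circles I would use the explicit parametrix expansion from \cite{FIKN}, (9.4.31)--(9.4.32), which has the structure
\begin{equation*}
{\bf G}_{\bf R}(w)-\mathbb{I} = \frac{1}{\sqrt{-t}}\,\frac{{\bf A}_{\pm}(s,v)}{w\mp \tfrac12} + \mathcal{O}(s^{-1}),\qquad w\in\partial\mathbb{D}_r(\pm\tfrac12),
\end{equation*}
with $\lambda$-independent matrices ${\bf A}_\pm(s,v)$ whose entries are explicit in the parabolic-cylinder data (carrying the oscillatory phase $\e^{\pm 2\im\phi(s,v)}$ and the Barnes-type factor $\Gamma(\nu)$, $\Gamma(-\nu)$). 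Plugging this into the double integral reduces each contribution to an elementary residue at $\mu=\pm\tfrac12$ and $w=\pm\tfrac12$.

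The four cases split naturally. The two ``diagonal'' pairs ($\mu,w$ both around $+\tfrac12$, and both around $-\tfrac12$) give non-oscillatory contributions proportional to $\nu^2/s$; using the explicit form of ${\bf A}_\pm$ and the constraint $\textnormal{tr}\,{\bf A}_\pm = 0$ inherited from the parametrix, these combine to produce the coefficient $-\tfrac{2\im\nu^2}{3s}$ (the precise constant is fixed by the same residue computation that already produced Lemma 4 for ${\bf J}$, $\bf K$). The two ``cross'' pairs ($\mu$ near $+\tfrac12$, $w$ near $-\tfrac12$ and vice versa) produce terms where the residue picks up an oscillatory factor $\e^{\pm 2\im\phi(s,v)}$ from the product ${\bf A}_+\cdot {\bf A}_-$; symmetrizing the two cross contributions converts this into the $\frac{\nu}{4s}\sin(2\phi(s,v))$ term. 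All unaccounted-for contributions are $\mathcal{O}(s^{-3/2})$, uniform in $v$ on compact subsets of $[0,+\infty)$, exactly as in the proofs of the previous two lemmas.

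The main technical obstacle is bookkeeping: one must track the four pairings $(\pm,\pm)$ through the matrix multiplication in the $(1,1)$-entry, correctly incorporate the orientation and the principal value $w_-$ in the kernel $1/(\mu - w_-)$ (which shifts the residue by $\pm\pi\im$ in the coinciding-circle case), and verify that the sub-leading $\mathcal{O}(s^{-1})$ remainders in the parametrix expansion -- when convolved with the $\mathcal{O}(s^{-1/2})$ leading part on the other circle -- contribute only $\mathcal{O}(s^{-3/2})$. Beyond this, the computation is purely algebraic and closely parallels the derivation of Lemma 4.
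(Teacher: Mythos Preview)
Your approach is essentially the same as the paper's: iterate once in the integral equation for ${\bf R}_-$, reduce ${\bf L}$ to a double contour integral, localize to the two circles, and evaluate via residues using the explicit parametrix expansion (\cite{FIKN}, (9.4.31)--(9.4.32)). The paper's own argument is the terse sentence ``and thus, iterating once,'' so your outline in fact supplies the missing details.

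One bookkeeping slip: the bound you quote, $\|{\bf R}_- - \mathbb{I}\|_{L^2}\cdot\|{\bf G}_{\bf R}-\mathbb{I}\|_{L^2}$, equals $\mathcal{O}\big((-t)^{-3/4}\big)\cdot\mathcal{O}\big((-t)^{-3/4}\big)=\mathcal{O}\big((-t)^{-3/2}\big)=\mathcal{O}(s^{-1})$, not $\mathcal{O}(s^{-3/2})$. To reach the claimed $\mathcal{O}(s^{-3/2})$ you need a \emph{triple} product: the tail of the iteration for ${\bf R}_- -\mathbb{I}$ is $C_-\big[({\bf R}_--\mathbb{I})({\bf G}_{\bf R}-\mathbb{I})\big]=\mathcal{O}(s^{-1})$ in $L^2$, and only after pairing this against one further factor of $({\bf G}_{\bf R}-\mathbb{I})=\mathcal{O}(s^{-1/2})$ in the integral defining ${\bf L}$ do you get $\mathcal{O}(s^{-3/2})$. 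This is exactly the mechanism behind the paper's displayed estimate ${\bf R}_-(w)-\mathbb{I}=\tfrac{1}{2\pi\im}\int_{\Sigma_{\bf R}}({\bf G}_{\bf R}(\mu)-\mathbb{I})\tfrac{\d\mu}{\mu-w_-}+\mathcal{O}\big((-t)^{-3/2}\big)$. With that correction, your sketch is complete.
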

Now we go back to \eqref{e:3} and \eqref{e:4},
\begin{cor} As $t\rightarrow-\infty$, with $s=(-t)^{\frac{3}{2}}$ and fixed $v\in[0,+\infty)$,
\begin{equation}\label{cor:1}
	q(t;\gamma)=(-t)^{-\frac{1}{4}}\sqrt{\frac{v}{\pi}}\,\cos\big(\phi(s,v)\big)+\mathcal{O}\left((-t)^{-\frac{7}{4}}\right),\ \ \ p(t;\gamma)=2(-t)^{\frac{1}{4}}\sqrt{\frac{v}{\pi}}\,\sin\big(\phi(s,v)\big)+\mathcal{O}\big((-t)^{-\frac{1}{2}}\big),
\end{equation}
and
\begin{equation*}
	\mathcal{H}_S\big(q(t;\gamma),p(t;\gamma),t\big)=\frac{v}{\pi}\sqrt{-t}+\frac{3v^2}{8\pi^2 t}-\frac{v}{4\pi t}\sin\big(2\phi(s,v)\big)+\mathcal{O}\big((-t)^{-\frac{7}{4}}\big).
\end{equation*}
\end{cor}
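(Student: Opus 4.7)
The proof is a direct substitution. Plug the explicit asymptotics of $\mathbf J$, $\mathbf K$, and $\mathbf L$ from the two preceding lemmas into the integral formul\ae\ for ${\bf Y}_1$ and ${\bf Y}_2$ displayed just above, namely
\begin{equation*}
{\bf Y}_1=\sqrt{-t}\bigl(\nu\sigma_3+{\bf J}+{\bf L}\bigr),\qquad {\bf Y}_2=-t\Bigl(\tfrac{\nu^2}{2}\mathbb{I}+\im\nu({\bf J}+{\bf L})\sigma_3+{\bf K}+{\bf M}\Bigr),
\end{equation*}
where $\mathbf{M}$ is the second-iteration Cauchy--Schwarz remainder, and then apply the identities \eqref{e:3}--\eqref{e:4}. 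Each of the three claimed expansions reduces to a short algebraic check.

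\textbf{Execution.} For $q(t;\gamma)=2Y_1^{12}=2\sqrt{-t}(J^{12}+L^{12})$, the first lemma yields the stated oscillatory leading term $(-t)^{-1/4}\sqrt{v/\pi}\cos\phi(s,v)$ together with a next-order piece of size $O((-t)^{-7/4})$ from the $O(s^{-3/2})$ remainder in $J^{12}$; the contribution $2\sqrt{-t}\,L^{12}$ is handled by pairing the bound $\|{\bf R}_--\mathbb{I}\|_{L^2}=O((-t)^{-3/4})$ from Theorem \ref{DZend} against the off-diagonal $L^2$-decay of ${\bf G}_{\bf R}-\mathbb{I}$ from Propositions \ref{DZ:1}--\ref{DZ:11}. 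For $\mathcal{H}_S=2\im Y_1^{11}=2\im\sqrt{-t}(\nu+J^{11}+L^{11})$ with $\nu=-\im v/(2\pi)$, the $\nu$-piece delivers the leading $\frac{v}{\pi}\sqrt{-t}$, the first lemma contributes $2\im\sqrt{-t}\,J^{11}=\frac{v^2}{24\pi^2 t}+O((-t)^{-5/2})$, and the second lemma contributes $2\im\sqrt{-t}\,L^{11}=\frac{v^2}{3\pi^2 t}-\frac{v\sin(2\phi)}{4\pi t}+O((-t)^{-7/4})$; adding the two $v^2$-pieces via $\tfrac{1}{24}+\tfrac{1}{3}=\tfrac{3}{8}$ reproduces the announced $\frac{3v^2}{8\pi^2 t}-\frac{v\sin(2\phi)}{4\pi t}$. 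For $p(t;\gamma)$ the cleanest route is the identity $p=2q_t$ built into \eqref{e:3}: differentiate the just-proved expansion of $q$, using $\phi_t=-\sqrt{-t}+O((-t)^{-1})$ from the chain rule in $s=(-t)^{3/2}$, to obtain $p=2(-t)^{1/4}\sqrt{v/\pi}\sin\phi+O((-t)^{-1/2})$ with room to spare. A parallel direct expansion of $-8\im(Y_2^{12}+Y_1^{12}Y_1^{11})$ via the formula for ${\bf Y}_2$ and the value $K^{12}$ from the first lemma yields the same answer once the cross-term $-\im\nu(J^{12}+L^{12})$ from $(Y_2)^{12}$ is collected with the $\nu J^{12}$ contribution from $Y_1^{12}Y_1^{11}$.

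\textbf{Main difficulty.} Uniformity in $v$ on compact subsets of $[0,+\infty)$ propagates automatically from the uniform bounds in Propositions \ref{DZ:1}--\ref{DZ:11}, Theorem \ref{DZend}, and the two preceding lemmas. The only point requiring genuine care is the $\mathcal{H}_S$ computation: the $1/t$-correction is \emph{not} produced by a single term, but emerges as the sum of a first-iteration contribution ($J^{11}$) and a second-iteration contribution ($L^{11}$) of comparable magnitude, and the correct prefactor $\frac{3}{8\pi^2}$ appears only after tracking both cleanly through the substitution $\nu=-\im v/(2\pi)$. No analytic estimates beyond those already established are needed.
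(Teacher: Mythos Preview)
Your approach is correct and identical to the paper's: substitute the asymptotics of $J^{jk}$, $K^{12}$, $L^{11}$ from the two preceding lemmas directly into \eqref{e:3}--\eqref{e:4}, and your arithmetic for $\mathcal H_S$ (the $\tfrac{1}{24}+\tfrac{1}{3}=\tfrac{3}{8}$ combination of first- and second-iteration diagonal pieces through $\nu=-\im v/(2\pi)$) is exactly what the computation requires.

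One technical refinement is needed. Your stated bound for $2\sqrt{-t}\,L^{12}$ via ``pairing $\|{\bf R}_--\mathbb I\|_{L^2}=O((-t)^{-3/4})$ against the off-diagonal $L^2$-decay of ${\bf G}_{\bf R}-\mathbb I$'' only yields $L^{12}=O(s^{-1})$ by Cauchy--Schwarz, hence a contribution $O((-t)^{-1})$ to $q$, which misses the claimed $O((-t)^{-7/4})$. To reach $L^{12}=O(s^{-3/2})$ you must use the structural fact (implicit in \cite{FIKN}, (9.4.31)--(9.4.32)) that the leading $O(s^{-1/2})$ part of ${\bf G}_{\bf R}-\mathbb I$ on each circle $\partial\mathbb D_r(\pm\tfrac12)$ is strictly off-diagonal: then the diagonal entries of ${\bf G}_{\bf R}-\mathbb I$ are $O(s^{-1})$ pointwise, and after one iteration so are the diagonal entries of ${\bf R}_--\mathbb I$, forcing both cross terms in $[({\bf R}_--\mathbb I)({\bf G}_{\bf R}-\mathbb I)]^{12}$ to be $O(s^{-3/2})$. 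A harmless slip: in your displayed formula for ${\bf Y}_2$ the middle term should read $\nu({\bf J}+{\bf L})\sigma_3$, without the extra $\im$, since $\frac{\im\nu}{2\pi}\int(\cdot)=\nu\cdot\frac{\im}{2\pi}\int(\cdot)=\nu({\bf J}+{\bf L})$.
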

The last result allows us to determine all $t$-dependent leading terms, compare Proposition \ref{theo:1}
\begin{cor} As $t\rightarrow-\infty$,
\begin{equation*}
	\ln F_S(t;\gamma)=-\frac{2v}{3\pi}|t|^{\frac{3}{2}}+\frac{v^2}{4\pi^2}\ln\big(|t|^{\frac{3}{2}}\big)+D(v)+\mathcal{O}\big(|t|^{-\frac{3}{4}}\big),
\end{equation*}
where $D(v)$ is $t$-independent and the error is uniform with respect to $v$ chosen from compact subset of $[0,+\infty)$.
\end{cor}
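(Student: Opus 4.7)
The plan is to prove the corollary by direct integration of the Hamiltonian formula
\begin{equation*}
\ln F_S(t;\gamma) = -\int_t^{\infty}\mathcal{H}_S\bigl(q(s;\gamma),p(s;\gamma),s\bigr)\,\d s
\end{equation*}
from Theorem \ref{theo:1}, combined with the asymptotic expansion for $\mathcal{H}_S$ just established. I fix a convenient auxiliary point $s_0<0$ and split
\begin{equation*}
\ln F_S(t;\gamma) = -\int_{s_0}^{\infty}\mathcal{H}_S\,\d s \;-\; \int_{t}^{s_0}\mathcal{H}_S\,\d s.
\end{equation*}
The first integral on the right is manifestly independent of $t$ and contributes its value to the constant $D(v)$; no further analysis is required for it.

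On the second integral I substitute
\begin{equation*}
\mathcal{H}_S = \frac{v}{\pi}\sqrt{-s}+\frac{3v^2}{8\pi^2 s}-\frac{v}{4\pi s}\sin\bigl(2\phi(s,v)\bigr)+O\bigl(|s|^{-7/4}\bigr)
\end{equation*}
from the previous corollary and integrate term by term. The leading term $\frac{v}{\pi}\sqrt{-s}$ integrates explicitly to $-\frac{2v}{3\pi}|t|^{3/2}$ modulo a constant; the term $\frac{3v^2}{8\pi^2 s}$ integrates to $\frac{3v^2}{8\pi^2}\ln|t|=\frac{v^2}{4\pi^2}\ln(|t|^{3/2})$ modulo a constant; and the $O(|s|^{-7/4})$ tail contributes at most $O(|t|^{-3/4})$ plus a constant, which is compatible with the claimed error.

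The main delicate point is the oscillatory integral
\begin{equation*}
J(t):=-\int_t^{s_0}\frac{v}{4\pi s}\sin\bigl(2\phi(s,v)\bigr)\,\d s,
\end{equation*}
for which a trivial estimate would produce a spurious logarithmic contribution. I would integrate by parts using $\d\cos(2\phi)=-2\phi_s'\sin(2\phi)\,\d s$ together with $\phi_s'=-\sqrt{-s}+O(|s|^{-1})$. Since $(s\phi_s')^{-1}=O(|s|^{-3/2})$ and its $s$-derivative is $O(|s|^{-5/2})$, the boundary term at $s=t$ is $O(|t|^{-3/2})$ and the new integral converges absolutely with remainder $O(|t|^{-3/2})$, while the boundary contribution at $s=s_0$ is merely another $t$-independent constant. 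Both surviving $t$-dependent remainders are dominated by the announced error $O(|t|^{-3/4})$.

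Assembling all $t$-independent constants into $D(v)$ produces exactly the claimed expansion. Uniformity in $v$ on compact subsets of $[0,+\infty)$ is inherited directly from the analogous uniformity of the input expansions from the previous corollary, since every step above respects it.
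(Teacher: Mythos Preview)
Your proof is correct and follows the same route as the paper, which simply remarks that the corollary follows by integrating the Hamiltonian formula of Theorem~\ref{theo:1} against the expansion of $\mathcal{H}_S$ just obtained. You supply the one detail the paper leaves implicit: the integration by parts showing that the oscillatory contribution $\frac{v}{4\pi s}\sin(2\phi)$ yields only a constant plus $\mathcal{O}(|t|^{-3/2})$ rather than a spurious logarithm.
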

As for $D(v)$, we now use \eqref{JME:25} and Corollary \ref{logic}. First, as $t\rightarrow-\infty$,
\begin{equation*}
	\frac{1}{3}(2t\,\mathcal{H}_S-pq)=-\frac{2v}{3\pi}|t|^{\frac{3}{2}}+\frac{v^2}{4\pi^2}-\frac{v}{2\pi}\sin\big(2\phi(s,v)\big)+\mathcal{O}\big(|t|^{-\frac{3}{4}}\big).
\end{equation*}
And second, with \eqref{cor:1},
\begin{equation*}
	\frac{\partial I_S}{\partial\gamma}=pq_{\gamma}=\frac{\d}{\d\gamma}\left(\frac{v}{2\pi}\sin\big(2\phi(s,v)\big)+\frac{v^2}{4\pi^2}\ln(8s)\right)-\frac{v}{\pi}\frac{\d}{\d\gamma}\textnormal{arg}\,\Gamma\left(\frac{\im v}{2\pi}\right)+\mathcal{O}\left(|t|^{-\frac{3}{2}}\ln|t|\right)
\end{equation*}
so that all together (since $F_S(t;0)=1$),
\begin{prop} As $t\rightarrow-\infty$,
\begin{equation*}
	\ln F_S(t;\gamma)=-\frac{2v}{3\pi}|t|^{\frac{3}{2}}+\frac{v^2}{4\pi^2}\ln\big(8|t|^{\frac{3}{2}}\big)+\frac{v^2}{4\pi^2}-\frac{1}{\pi}\int_0^{\gamma}v(\gamma')\frac{\d}{\d\gamma'}\textnormal{arg}\,\Gamma\left(\frac{\im v(\gamma')}{2\pi}\right)\,\d\gamma'+\mathcal{O}\big(|t|^{-\frac{3}{4}}\big)
\end{equation*}
uniformly for $\gamma\in[0,1)$ chosen from compact subsets.
\end{prop}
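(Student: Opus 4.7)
The strategy is to combine the action integral formula \eqref{JME:25} from Theorem \ref{theo:2} with the local $\gamma$-differential identity $\partial I_S/\partial\gamma = pq_\gamma$ from Proposition \ref{logic}. The asymptotics of the ``boundary term'' $\frac{1}{3}(2t\mathcal{H}_S-pq)$ have already been established in the preceding corollary, so the only remaining task is to determine the asymptotics of $I_S(t;\gamma)$. Since $F_S(t;0)=1$ (a thinned system with $\gamma=0$ has no particles), one has $\ln F_S(t;0)=0$ and, combined with the known large-$(-t)$ behavior of $\frac{1}{3}(2t\mathcal{H}_S-pq)$ at $\gamma=0$ (which vanishes), this forces $I_S(t;0)=0$. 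Thus
\[
I_S(t;\gamma)=\int_0^\gamma pq_{\gamma'}(t;\gamma')\,\d\gamma',
\]
and the problem reduces to computing the integrand asymptotically, uniformly in $\gamma'$ on compact subsets of $[0,1)$.

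The key computation is the asymptotics of $pq_\gamma$. Substituting the expansions \eqref{cor:1} of $p$ and $q$ from the corollary and differentiating with respect to $\gamma$ (via the chain rule, using $v=-\ln(1-\gamma)$ so $v_\gamma=1/(1-\gamma)$) gives
\[
pq=\frac{v}{\pi}\sin\bigl(2\phi(s,v)\bigr)+\mathcal{O}\bigl(|t|^{-3/2}\bigr),
\]
and hence $pq_\gamma$ naturally splits into a derivative-of-product piece and a piece coming from $\phi_v$. Since
\[
\phi(s,v)=\frac{2}{3}s-\frac{v}{2\pi}\ln(8s)+\frac{\pi}{4}-\textnormal{arg}\,\Gamma\Bigl(\frac{v}{2\pi\im}\Bigr),
\]
one finds $\phi_v=-\frac{1}{2\pi}\ln(8s)-\frac{\d}{\d v}\textnormal{arg}\,\Gamma(\frac{v}{2\pi\im})$. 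After organizing the resulting expression, $pq_\gamma$ should take the form
\[
pq_\gamma=\frac{\d}{\d\gamma}\!\left(\frac{v}{2\pi}\sin(2\phi)+\frac{v^2}{4\pi^2}\ln(8s)\right)-\frac{v}{\pi}\frac{\d}{\d\gamma}\textnormal{arg}\,\Gamma\!\left(\frac{\im v}{2\pi}\right)+\mathcal{O}\bigl(|t|^{-3/2}\ln|t|\bigr),
\]
which is exactly the identity already anticipated in the excerpt preceding the proposition.

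Integrating this from $\gamma'=0$ to $\gamma'=\gamma$ yields
\[
I_S(t;\gamma)=\frac{v}{2\pi}\sin(2\phi)+\frac{v^2}{4\pi^2}\ln(8s)-\frac{1}{\pi}\int_0^\gamma v(\gamma')\frac{\d}{\d\gamma'}\textnormal{arg}\,\Gamma\!\left(\frac{\im v(\gamma')}{2\pi}\right)\d\gamma'+\mathcal{O}\bigl(|t|^{-3/4}\bigr),
\]
where the endpoint contribution at $\gamma'=0$ vanishes (since $v(0)=0$). Adding this to the previously computed
\[
\frac{1}{3}(2t\mathcal{H}_S-pq)=-\frac{2v}{3\pi}|t|^{3/2}+\frac{v^2}{4\pi^2}-\frac{v}{2\pi}\sin(2\phi)+\mathcal{O}\bigl(|t|^{-3/4}\bigr),
\]
the oscillatory $\sin(2\phi)$ contributions cancel identically, and one is left precisely with the stated expansion.

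The main obstacle will be the bookkeeping of the error terms in the $\gamma'$-integration: one must verify that the $\mathcal{O}(|t|^{-3/2}\ln|t|)$ remainder in $pq_\gamma$ really is uniform in $\gamma'$ on compact subsets of $[0,1)$, so that after integrating in $\gamma'$ the contribution remains subdominant to $|t|^{-3/4}$. This uniformity is inherited from the uniform small-norm bound in Theorem \ref{DZend} and the explicit parametrix formulas, but the cancellation of the oscillating exponentials in $\sin(2\phi)$ between the two halves of the identity must be tracked exactly (not merely to leading order) to ensure that nothing unbounded in $\gamma'$ is accidentally left behind.
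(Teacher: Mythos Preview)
Your proposal is correct and follows essentially the same route as the paper: split $\ln F_S$ via \eqref{JME:25} into the boundary term $\frac{1}{3}(2t\mathcal{H}_S-pq)$ and the action integral $I_S$, use $\partial I_S/\partial\gamma=pq_\gamma$ from Proposition~\ref{logic} together with $I_S(t;0)=0$, compute $pq_\gamma$ from \eqref{cor:1}, and observe that the oscillatory $\sin(2\phi)$ pieces cancel between the two halves. One minor bookkeeping point: your intermediate claim $pq=\frac{v}{\pi}\sin(2\phi)+\mathcal{O}(|t|^{-3/2})$ overstates the precision (the cross term $q_{\mathrm{lead}}\cdot p_{\mathrm{err}}$ is only $\mathcal{O}(|t|^{-3/4})$), but since you work directly with $pq_\gamma$ rather than $(pq)_\gamma$ this does not affect the argument.
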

We now only have to recall the following standard property of the Barnes $G$-function, cf. \cite{NIST} ,
\begin{equation}\label{Barnesid}
	\int_0^z\ln\Gamma(1+t)\,\d t=\frac{z}{2}\ln(2\pi)-\frac{z}{2}(z+1)+z\ln\Gamma(1+z)-\ln G(1+z),\ \ \ \ z\in\mathbb{C}:\ \ \Re z>-1
\end{equation}
and Theorem \ref{theo:3}, expansion \eqref{JME:32} follows at once.
\section{Proof of Theorem \ref{theo:3}, expansion \eqref{JME:31}}\label{sec:4}
It is known from \cite{DIZ}, Section $4$ that we can characterize the functions $(q,p)$ in \eqref{HH:1}, \eqref{JME:17} through the solution of the following Riemann-Hilbert problem.
\begin{problem}\label{sineRHP} Let $t\in\mathbb{R}_{\geq 0}$ and $\gamma\in[0,1]$. Determine the piecewise analytic function ${\bf Y}={\bf Y}(\lambda;t,\gamma)\in\mathbb{C}^{2\times 2}$ such that
\begin{enumerate}
	\item[(1)] ${\bf Y}(\lambda)$ is analytic for $\lambda\in\mathbb{C}\backslash[-1,1]$ with the line segment $[-1,1]$ oriented from left to right as shown in Figure \ref{figure8} below.
	\item[(2)] The boundary values ${\bf Y}_+(\lambda)$ (or ${\bf Y}_-(\lambda)$) from the left (or right) side of the oriented contour $(-1,1)$ obey the jump relation
	\begin{equation}\label{sinejump}
		{\bf Y}_+(\lambda)={\bf Y}_-(\lambda)\e^{\im t\lambda\sigma_3}\begin{bmatrix}1-\gamma & \gamma\\ -\gamma & 1+\gamma\end{bmatrix}\e^{-\im t\lambda\sigma_3},\ \ \lambda\in(-1,1).
	\end{equation}
	\begin{figure}[tbh]
	\begin{center}
\resizebox{0.3\textwidth}{!}{\includegraphics{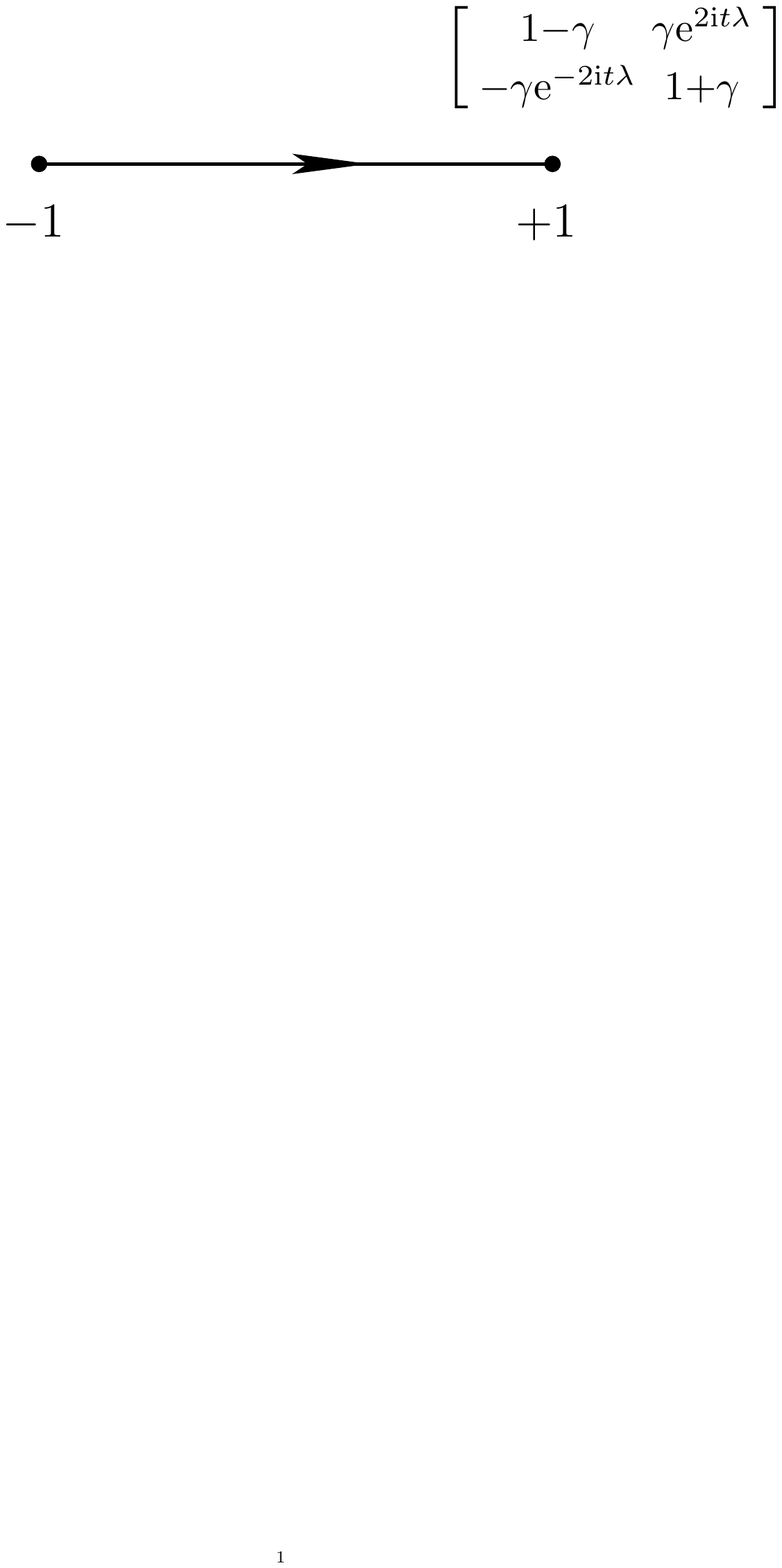}}
\caption{The oriented jump contours for the master function ${\bf Y}(\lambda;t,\gamma)$ of RHP \ref{sineRHP} in the complex $\lambda$-plane.}
\label{figure8}
\end{center}
\end{figure}
	\item[(3)] Near the endpoints $\lambda=\pm 1$, we have the singular behavior
	\begin{equation*}
		{\bf Y}(\lambda)=\widehat{{\bf Y}}(\lambda)\left\{I+\frac{\gamma}{2\pi\im}\begin{bmatrix}-1 & 1\\ -1 & 1\end{bmatrix}\ln\left(\frac{\lambda-1}{\lambda+1}\right)\right\}\e^{-\im t\lambda\sigma_3},\ \ \lambda\rightarrow\pm 1
	\end{equation*}
	where $\ln$ denotes the principal branch of the logarithm and $\widehat{{\bf Y}}(\lambda)$ is analytic at $\lambda=\pm 1$.
	\item[(4)] As $\lambda\rightarrow\infty$, ${\bf Y}(\lambda)$ is normalized as
	\begin{equation*}
		{\bf Y}(\lambda)=\mathbb{I}+{\bf Y}_1\lambda^{-1}+\mathcal{O}\left(\lambda^{-2}\right),\ \ \ {\bf Y}_{\ell}=\big(Y_{\ell}^{jk}\big)_{j,k=1}^2.
	\end{equation*}
\end{enumerate}
\end{problem}
As proven in \cite{IIKS,JMMS}, the latter problem for ${\bf Y}(\lambda)$ is uniquely solvable for all $t\geq 0,\gamma\in[0,1]$ and its solution determines the Jimbo-Miwa-Mori-Sato transcendents via
\begin{equation}\label{J:1}
	q(t;\gamma)=\frac{1}{2}Y_1^{11}+\im t\big(Y_1^{12}\big)^2,\ \ \ \ \ \sinh\left(\frac{1}{2}p(t;\gamma)\right)=-\frac{\im t Y_1^{12}}{q(t;\gamma)}.
\end{equation}
We have in addition for the Hamiltonian function
\begin{equation}\label{J:2}
	\mathcal{H}_B\big(q(t;\gamma),p(t;\gamma),t\big)=-2\im Y_1^{11},
\end{equation}
and equations \eqref{J:1}, \eqref{J:2} are the starting point for our asymptotic analysis. The Riemann-Hilbert problem \ref{sineRHP} was solved asymptotically for $t\rightarrow+\infty$ and $\gamma\in[0,1)$ (fixed and for certain moving values of $\gamma\uparrow 1$) in \cite{BDIK2}\footnote{The reference \cite{BDIK2} uses $v=-\frac{1}{2}\ln(1-\gamma)$ instead of $v=-\ln(1-\gamma)$. This has to be remembered in Subsection \ref{sec41}.}. Similar to Section \ref{sec:3} this allows us to save time and space. 
\subsection{Nonlinear steepest descent analysis for RHP \ref{sineRHP} (in a nutshell)}\label{sec41}
Our goal is to solve RHP \ref{sineRHP} for ${\bf Y}(\lambda;t,\gamma)\in\mathbb{C}^{2\times 2}$ for all values $(t,v)\in\mathbb{R}_+^2$ such that
\begin{equation}\label{sinescale}
	t\geq t_0,\ \ \ \ \textnormal{and}\ \ \ 0\leq v=-\ln(1-\gamma)<+\infty\ \ \ \textnormal{is fixed}.
\end{equation}
To achieve this we first use matrix factorizations and an opening of lens transformation, ${\bf Y}(\lambda)\mapsto {\bf S}(\lambda)$, see \cite{BDIK2}, Figure $1$ and RHP $2.2$. After this step the problem is already localized since the jump matrix ${\bf G}_{{\bf S}}(\lambda;t,v)$ obeys (see \cite{BDIK2}, page $218$ top)
\begin{equation}\label{es:3}
	{\bf G}_{{\bf S}}(\lambda;t,v)=\mathbb{I}+\mathcal{O}\left(\e^{v-2t|\Im\lambda|}\right),\ \ \ \ \ t\geq t_0
\end{equation}
for $\lambda$ on the contours in the upper and lower half-plane, see \cite{BDIK2}, Figure $1$. Hence we address the local problems on $(-1,1)\subset\mathbb{R}$ and in the vicinities of the endpoints $\pm 1$. The parametrices are again standard,
\begin{equation*}
	{\bf P}^{(\infty)}(\lambda)=\left(\frac{\lambda+1}{\lambda-1}\right)^{\nu\sigma_3},\ \ \lambda\in\mathbb{C}\backslash[-1,1];\ \ \ \ \nu=\frac{v}{2\pi\im}\in\im\mathbb{R}
\end{equation*}
is chosen for the segment (see \cite{BDIK2}, (2.1)) and near $\lambda=\pm 1$ confluent hypergeometric functions come into play. Let ${\bf P}^{(1)}(\lambda)$ and ${\bf P}^{(-1)}(\lambda)$ denote the required model functions, see \cite{BDIK2}, (2.4) and (2.6). These functions are unimodular and can be compared to the above ${\bf S}(\lambda)$,
\begin{equation*}
	{\bf R}(\lambda)={\bf S}(\lambda)\begin{cases}\big({\bf P}^{(1)}(\lambda)\big)^{-1},&\lambda\in\mathbb{D}_r(1)\\ \big({\bf P}^{(-1)}(\lambda)\big)^{-1},&\lambda\in\mathbb{D}_r(-1)\\ \big({\bf P}^{(\infty)}(\lambda)\big)^{-1},&\lambda\notin\big(\mathbb{D}_r(1)\cup\mathbb{D}_r(-1)\big)\end{cases},
\end{equation*}
where $0<r<\frac{1}{4}$ is kept fixed, see \cite{BDIK2}, (2.8). In turn we find the problem outlined below.
\begin{problem}\label{J:ratio} The function ${\bf R}(\lambda)={\bf R}(\lambda;t,v)\in\mathbb{C}^{2\times 2}$ has the following properties.
\begin{enumerate}
	\item[(1)] ${\bf R}(\lambda)$ is analytic for $\lambda\in\mathbb{C}\backslash\Sigma_{\bf R}$ where $\Sigma_{\bf R}=\partial\mathbb{D}_r(-1)\cup\partial\mathbb{D}_r(1)\cup\widehat{\gamma}_U\cup\widehat{\gamma}_L$ is displayed in Figure \ref{figure10}.
	\item[(2)] The limiting values ${\bf R}_{\pm}(\lambda),\lambda\in\Sigma_{\bf R}$ obey
	\begin{eqnarray*}
		{\bf R}_+(\lambda)&=&{\bf R}_-(\lambda)\begin{bmatrix}1 & \gamma\e^v\e^{2\im t\lambda}(\frac{\lambda+1}{\lambda-1})^{2\nu}\\ 0&1\end{bmatrix},\ \lambda\in\widehat{\gamma}_U;\\
		{\bf R}_+(\lambda)&=&{\bf R}_-(\lambda)\begin{bmatrix}1 &0\\ -\gamma\e^v\e^{-2\im t\lambda}(\frac{\lambda+1}{\lambda-1})^{-2\nu} & 1\end{bmatrix},\ \lambda\in\widehat{\gamma}_L.
	\end{eqnarray*}
	on the lens boundaries and
	\begin{equation*}
		{\bf R}_+(\lambda)={\bf R}_-(\lambda){\bf P}^{(\pm 1)}(\lambda)\big({\bf P}^{(\infty)}(\lambda)\big)^{-1},\ \lambda\in\partial\mathbb{D}_r(\pm1)		\end{equation*}
	on the circles.
	\item[(3)] As $\lambda\rightarrow\infty$, we have ${\bf R}(\lambda)\rightarrow\mathbb{I}$.
\end{enumerate}
\end{problem}
\begin{figure}[tbh]
	\begin{center}
\resizebox{0.45\textwidth}{!}{\includegraphics{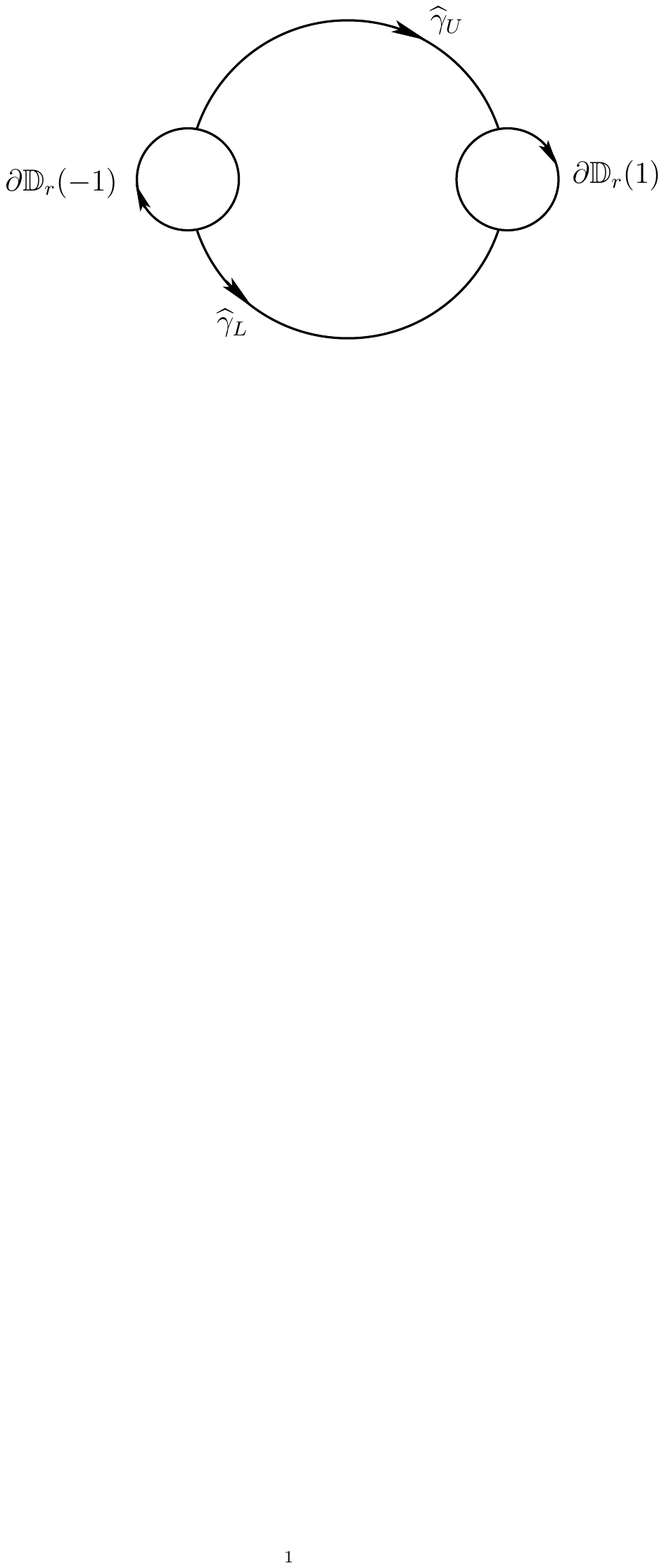}}
\caption{The oriented jump contours for the ratio function ${\bf R}(\lambda;t,v)$ in the complex $\lambda$-plane.}
\label{figure10}
\end{center}
\end{figure}
We now argue that the last RHP is asymptotically solvable in the scaling region \eqref{sinescale} by deriving small norm estimates for the underlying jump matrix ${\bf G}_{\bf R}(\lambda;t,v)$ and using the general theory of \cite{DZ}. First, from property (2) in RHP \ref{J:ratio} and \eqref{es:3},
\begin{prop} There exists $t_0>0$ such that
\begin{equation*}
	\|{\bf G}_{\bf R}(\cdot;t,v)-\mathbb{I}\|_{L^2\cap L^{\infty}(\widehat{\gamma}_U\cup\widehat{\gamma}_L)}\leq \e^{v-2tr-v\alpha(r)},\ \ \ \ \forall\,t\geq t_0,\ \ v\in[0,+\infty)
\end{equation*}
where $\alpha(r)=\frac{1}{2}(1-\frac{2}{\pi}\arctan(\frac{r}{2}))$ and $0<r<\frac{1}{4}$ is fixed.
\end{prop}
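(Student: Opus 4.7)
The plan is a straightforward pointwise modulus estimate on the single nonzero entry of the strictly triangular perturbation ${\bf G}_{\bf R}-\mathbb{I}$. First I would reduce to a pointwise bound on $\widehat{\gamma}_U$: since $\widehat{\gamma}_U\cup\widehat{\gamma}_L$ has finite arc length, the $L^2$ estimate follows from the $L^\infty$ estimate up to an absorbable multiplicative constant (which can be hidden in $t_0$), and the Schwarz symmetry $\lambda\mapsto\bar\lambda$ built into RHP \ref{sineRHP} reflects $\widehat{\gamma}_U$ onto $\widehat{\gamma}_L$ and swaps the upper- and lower-triangular jumps in a modulus-preserving way, so the lower lens contour case is automatic.

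On $\widehat{\gamma}_U$ the lone nontrivial entry of ${\bf G}_{\bf R}-\mathbb{I}$ equals $\gamma\,e^v\,e^{2\im t\lambda}\bigl(\tfrac{\lambda+1}{\lambda-1}\bigr)^{2\nu}$, and I would factor its modulus into four pieces. Two are immediate: $\gamma\leq 1$ and $|e^v|=e^v$. The third uses the geometry of the lens: $\widehat{\gamma}_U$ has been chosen (see Figure \ref{figure10}) so that $\Im\lambda\geq r$ outside the two local disks, whence $|e^{2\im t\lambda}|=e^{-2t\Im\lambda}\leq e^{-2tr}$. For the fourth factor, the fact that $\nu\in\im\mathbb{R}$ yields
\begin{equation*}
\Bigl|\bigl(\tfrac{\lambda+1}{\lambda-1}\bigr)^{2\nu}\Bigr| = \exp\Bigl(\tfrac{v}{\pi}\arg\bigl(\tfrac{\lambda+1}{\lambda-1}\bigr)\Bigr),
\end{equation*}
and this argument equals $-\theta(\lambda)$, where $\theta(\lambda)\in(0,\pi)$ is the angle subtended at $\lambda$ by the segment $[-1,1]$.

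The only nontrivial ingredient is a planar-geometry bound: along $\widehat{\gamma}_U$ away from $\mathbb{D}_r(\pm 1)$ one needs $\theta(\lambda)\geq \tfrac{\pi}{2}-\arctan(r/2)$. This I would extract from the inscribed-angle theorem, which says that the level sets $\{\theta=\mathrm{const.}\}$ are circular arcs through $\pm 1$, combined with the explicit position of the lens contour chosen in \cite{BDIK2}, which can be arranged to lie below the arc of subtended angle $\pi/2-\arctan(r/2)$ (the tangency of this arc with the two circles $\partial\mathbb{D}_r(\pm 1)$ is precisely what forces the appearance of $\arctan(r/2)$). Assembling the four factors then gives
\begin{equation*}
\bigl\|{\bf G}_{\bf R}(\cdot;t,v)-\mathbb{I}\bigr\|_{L^\infty(\widehat{\gamma}_U)}\leq e^v\cdot e^{-2tr}\cdot e^{-(v/\pi)(\pi/2-\arctan(r/2))}=e^{v-2tr-v\alpha(r)},
\end{equation*}
which is the asserted estimate. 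The hard step is really just the geometric inequality on $\theta$; everything else is one line of algebra once the four factors are laid out.
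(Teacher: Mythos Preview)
Your argument is correct and is exactly the natural elaboration of what the paper does: the paper records this proposition with only the one-line justification ``from property (2) in RHP \ref{J:ratio} and \eqref{es:3}'' and gives no further details, so you have supplied the computation the authors suppress. Your four-factor decomposition is the right one, and the key geometric identification---that at the exit point $\lambda=1+\im r$ (the top of the disk $\partial\mathbb{D}_r(1)$) the subtended angle equals $\tfrac{\pi}{2}-\arctan(\tfrac{r}{2})$, which is precisely $\pi\alpha(r)$---is what pins down the specific constant. One small wording issue: the level arc $\{\theta=\tfrac{\pi}{2}-\arctan(\tfrac{r}{2})\}$ meets $\partial\mathbb{D}_r(1)$ transversally at $1+\im r$, not tangentially, so ``tangency'' should be replaced by ``intersection at the top of the disk''; this does not affect the estimate, since the lens contour in \cite{BDIK2} can be (and is) chosen to stay on or below that arc and at height at least $r$, so both of your pointwise bounds $\Im\lambda\geq r$ and $\theta(\lambda)\geq\tfrac{\pi}{2}-\arctan(\tfrac{r}{2})$ are simultaneously attained at the endpoints of $\widehat{\gamma}_U$ and hold along the rest.
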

Second, using property (2) again and the matching relations (2.5), (2.7) in \cite{BDIK2},
\begin{prop} For any fixed $v\in[0,+\infty)$ there exist positive constants $t_0=t_0(v)$ and $c=c(v)$ such that
\begin{equation*}
	\|{\bf G}_{\bf R}(\cdot;t,v)-\mathbb{I}\|_{L^2\cap L^{\infty}(\partial\mathbb{D}_r(\pm 1))}\leq\frac{c}{t},\ \ \ \ \forall\,t\geq t_0
\end{equation*}
where $0<r<\frac{1}{4}$ is fixed throughout.
\end{prop}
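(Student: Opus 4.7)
The plan is to reduce the estimate to the large-$t$ matching of the local parametrix against the outer parametrix on the boundary of the disks, exploiting the fact that ${\bf P}^{(\pm 1)}(\lambda)$ is built from confluent hypergeometric functions whose large-argument asymptotic expansion can be read off explicitly.

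First, I would recall that on $\partial\mathbb{D}_r(\pm 1)$ the ratio jump is exactly
\begin{equation*}
{\bf G}_{\bf R}(\lambda;t,v)={\bf P}^{(\pm 1)}(\lambda)\bigl({\bf P}^{(\infty)}(\lambda)\bigr)^{-1},
\end{equation*}
so I only need to compare the two parametrices on a compact circle of fixed radius $0<r<\frac{1}{4}$. The construction of ${\bf P}^{(\pm 1)}(\lambda)$ in \cite{BDIK2} realizes it as a conjugation of a confluent hypergeometric model by an analytic prefactor, with the local variable $\zeta= 2t(\lambda\mp 1)$ large (of order $t$) uniformly on $\partial\mathbb{D}_r(\pm 1)$. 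The whole construction is designed precisely so that the outer parametrix appears as the leading term of the large-$\zeta$ asymptotics of the model.

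Second, I would invoke the matching identities (2.5) and (2.7) of \cite{BDIK2}, which, after using the Poincar\'e asymptotic expansion of the confluent hypergeometric functions (uniform for bounded $\nu=\frac{v}{2\pi\mathrm{i}}$), give
\begin{equation*}
{\bf P}^{(\pm 1)}(\lambda)\bigl({\bf P}^{(\infty)}(\lambda)\bigr)^{-1}=\mathbb{I}+\frac{1}{t}\,{\bf A}_{\pm}(\lambda;v)+\mathcal{O}\bigl(t^{-2}\bigr),\qquad \lambda\in\partial\mathbb{D}_r(\pm 1),
\end{equation*}
where ${\bf A}_{\pm}(\cdot;v)$ is analytic on a neighborhood of the circle and depends continuously on $v$. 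The matrices ${\bf A}_{\pm}(\lambda;v)$ are built from the first subleading coefficients of the Tricomi $\Psi$- (or Whittaker-) function expansion, hence are bounded on the compact set $\partial\mathbb{D}_r(\pm 1)$ by a constant $c(v)$ depending on $v$ only through a polynomial in $\nu$ and $\Gamma$-function values at $\nu$.

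Third, the $L^\infty$ bound is then immediate,
\begin{equation*}
\|{\bf G}_{\bf R}(\cdot;t,v)-\mathbb{I}\|_{L^{\infty}(\partial\mathbb{D}_r(\pm 1))}\leq\frac{c(v)}{t}\quad\forall\,t\geq t_0(v),
\end{equation*}
and the $L^2$ bound follows by integration against the uniform arc-length measure on the circles of total length $4\pi r$, giving the same $\frac{c(v)}{t}$ rate up to a fixed multiplicative constant. This yields the stated estimate.

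The only mild obstacle in this scheme is uniformity in $v$: the coefficient ${\bf A}_{\pm}$ contains factors like $\Gamma(\pm\nu)$ and $\nu$, so letting $v\uparrow\infty$ would require tracking their growth. However, the proposition fixes $v\in[0,+\infty)$, so one may choose $c=c(v)$ without any uniformity, and the argument goes through directly. All other ingredients (analyticity of the prefactor near $\pm 1$, uniform validity of the confluent hypergeometric asymptotics for $\zeta$ of order $t$ on the circle) are standard and already invoked in \cite{BDIK2}.
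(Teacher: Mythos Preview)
Your proposal is correct and follows essentially the same approach as the paper, which itself simply invokes property (2) of RHP \ref{J:ratio} together with the matching relations (2.5), (2.7) from \cite{BDIK2}. Your write-up is in fact a faithful expansion of what those matching relations encode: the confluent-hypergeometric local parametrix matched against ${\bf P}^{(\infty)}$ to order $t^{-1}$ on the fixed circles, followed by the trivial passage from $L^\infty$ to $L^2$ over a compact contour.
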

Now combining these two estimates and using \cite{DZ}, we arrive at
\begin{theo}\label{sinefinish} For any fixed $v\in[0,+\infty)$ there exist $t_0=t_0(v)>0$ and $c=c(v)>0$ such that the ratio RHP \ref{J:ratio} is uniquely solvable in $L^2(\Sigma_{\bf R})$ for all $t\geq t_0$. The solution can be computed iteratively through the integral equation
\begin{equation*}
	{\bf R}(\lambda)=\mathbb{I}+\frac{1}{2\pi\im}\int_{\Sigma_{\bf R}}{\bf R}_-(w)\big({\bf G}_{\bf R}(w)-\mathbb{I}\big)\frac{\d w}{w-\lambda},\ \ \lambda\in\mathbb{C}\backslash\Sigma_{\bf R}
\end{equation*}
with the help of
\begin{equation*}
	\|{\bf R}_-(\cdot;t,v)-\mathbb{I}\|_{L^2(\Sigma_{\bf R})}\leq c\,t^{-1},\ \ \ \ \forall\,t\geq t_0.
\end{equation*}
\end{theo}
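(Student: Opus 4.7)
\textbf{Proof proposal for Theorem \ref{sinefinish}.} The plan is to apply the standard Deift--Zhou small-norm framework for singular integral equations on oriented contours, using the two a priori jump estimates as the only non-trivial input. First I would recast RHP \ref{J:ratio} as a singular integral equation on $\Sigma_{\bf R}$: let $C_\pm$ denote the boundary values of the Cauchy operator on $L^2(\Sigma_{\bf R})$, which are bounded since $\Sigma_{\bf R}$ is a finite union of smooth arcs and circles, and define the operator
\begin{equation*}
C_{{\bf G}_{\bf R}-\mathbb{I}}\colon f \mapsto C_-\bigl(f({\bf G}_{\bf R}-\mathbb{I})\bigr),\qquad f\in L^2(\Sigma_{\bf R}).
\end{equation*}
Then by the Sokhotski--Plemelj relations, any $L^2$-solution of RHP \ref{J:ratio} admits the representation
\begin{equation*}
{\bf R}(\lambda)=\mathbb{I}+\frac{1}{2\pi\im}\int_{\Sigma_{\bf R}}\mu(w)\bigl({\bf G}_{\bf R}(w)-\mathbb{I}\bigr)\frac{\d w}{w-\lambda},\qquad \lambda\in\mathbb{C}\setminus\Sigma_{\bf R},
\end{equation*}
with $\mu={\bf R}_-$ satisfying $(1-C_{{\bf G}_{\bf R}-\mathbb{I}})\mu=\mathbb{I}$; conversely, any $L^2$-solution of this integral equation produces via the Cauchy transform an $L^2$-solution of the RHP.

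Second, I would estimate the operator norm of $C_{{\bf G}_{\bf R}-\mathbb{I}}$. The basic bound
\begin{equation*}
\|C_{{\bf G}_{\bf R}-\mathbb{I}}\|_{L^2\to L^2}\leq \|C_-\|_{L^2\to L^2}\cdot\|{\bf G}_{\bf R}-\mathbb{I}\|_{L^\infty(\Sigma_{\bf R})}
\end{equation*}
combined with the two previous Propositions gives, for any fixed $v\in[0,+\infty)$ and all $t\geq t_0(v)$,
\begin{equation*}
\|C_{{\bf G}_{\bf R}-\mathbb{I}}\|_{L^2\to L^2}\leq c(v)\,t^{-1},
\end{equation*}
the $t^{-1}$ contribution coming from the circles $\partial\mathbb{D}_r(\pm 1)$ (which dominates the exponentially small contribution from $\widehat{\gamma}_U\cup\widehat{\gamma}_L$). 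Choosing $t_0(v)$ large enough that $c(v)t^{-1}<\tfrac{1}{2}$, the operator $1-C_{{\bf G}_{\bf R}-\mathbb{I}}$ is invertible on $L^2(\Sigma_{\bf R})$ by a Neumann series, so the integral equation has a unique $L^2$ solution $\mu={\bf R}_-$, establishing unique solvability of RHP \ref{J:ratio}.

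Third, the claimed $L^2$ bound on ${\bf R}_--\mathbb{I}$ follows from one iteration: writing $\mu=\mathbb{I}+(1-C_{{\bf G}_{\bf R}-\mathbb{I}})^{-1}C_-({\bf G}_{\bf R}-\mathbb{I})$, the geometric-series estimate yields
\begin{equation*}
\|{\bf R}_--\mathbb{I}\|_{L^2(\Sigma_{\bf R})}\leq \frac{\|C_-\|_{L^2\to L^2}\,\|{\bf G}_{\bf R}-\mathbb{I}\|_{L^2(\Sigma_{\bf R})}}{1-\|C_{{\bf G}_{\bf R}-\mathbb{I}}\|_{L^2\to L^2}}\leq c(v)\,t^{-1},
\end{equation*}
again because the two Propositions guarantee $\|{\bf G}_{\bf R}-\mathbb{I}\|_{L^2(\Sigma_{\bf R})}=O(t^{-1})$. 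There is no real obstacle here---the work has already been done in the derivation of the $L^2\cap L^\infty$ estimates on ${\bf G}_{\bf R}-\mathbb{I}$; the only mild subtlety is verifying that the Cauchy operator is bounded on $L^2$ of the particular contour $\Sigma_{\bf R}$ (a finite union of circles and smooth arcs joining them transversally), which is classical. Accordingly, the proof is a direct invocation of the Deift--Zhou small-norm theorem, with the two preceding Propositions supplying precisely the quantitative input needed.
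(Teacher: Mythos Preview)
Your proposal is correct and follows exactly the approach the paper takes: the paper simply states ``Now combining these two estimates and using \cite{DZ}, we arrive at'' Theorem \ref{sinefinish}, i.e., it invokes the standard Deift--Zhou small-norm theory with the two preceding Propositions as the quantitative input. Your write-up merely unpacks that citation into the singular-integral-equation/Neumann-series argument, which is precisely what \cite{DZ} provides.
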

It is now time to extract the relevant asymptotic expansions and substitute the information back into \eqref{J:1} and \eqref{J:2}.
\subsection{Extraction of asymptotics and proof of expansion \eqref{JME:31}}
From the transformation sequence
\begin{equation*}
	{\bf Y}(\lambda)\mapsto {\bf X}(\lambda)\mapsto {\bf R}(\lambda),
\end{equation*}
we obtain at once the exact identity
\begin{equation*}
	{\bf Y}_1=2\nu\sigma_3+\frac{\im}{2\pi}\int_{\Sigma_{\bf R}}{\bf R}_-(w)\big({\bf G}_{\bf R}(w)-\mathbb{I}\big)\,\d w.
\end{equation*}
Now let
\begin{equation*}
	{\bf M}=\frac{\im}{2\pi}\int_{\Sigma_{\bf R}}\big({\bf G}_{\bf R}(w)-\mathbb{I}\big)\,\d w,\ \ \ \ {\bf M}=\big(M^{jk}\big)_{j,k=1}^2,
\end{equation*}
so that from an explicit residue computation (using Theorem \ref{sinefinish} and \cite{BDIK2}, (2.5), (2.7)),
\begin{prop} As $t\rightarrow+\infty$,
\begin{equation*}
	M^{11}=-\frac{\im\nu^2}{t}+\mathcal{O}\left(t^{-2}\right),\ \ \ M^{12}=\frac{\nu}{t}\sin\big(\varphi(t,v)\big)+\mathcal{O}\left(t^{-2}\right),
\end{equation*}
where
\begin{equation*}
	\varphi(t,v)=2t-\frac{v}{\pi}\ln(4t)+2\,\textnormal{arg}\,\Gamma\left(\frac{\im v}{2\pi}\right).
\end{equation*}
All error terms are uniform with respect to $v$ chosen from compact subsets of $[0,+\infty)$.
\end{prop}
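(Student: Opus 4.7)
The plan is to exploit the fact that $\mathbf G_{\mathbf R}-\mathbb I$ has two qualitatively different behaviours on the components of $\Sigma_{\mathbf R}$: on the lens boundaries $\widehat\gamma_U\cup\widehat\gamma_L$ it is exponentially small in $t$ by the first small-norm estimate of Subsection \ref{sec41}, while on the circles $\partial\mathbb D_r(\pm 1)$ it admits a controlled meromorphic expansion in $1/t$ inherited from the confluent hypergeometric local parametrix. The contributions from the lens boundaries to $\mathbf M$ are of order $\e^{-2tr}$ (times polynomial factors in $v$) and are absorbed into $O(t^{-2})$, so the task reduces to evaluating the two circle integrals.

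On each circle one uses the explicit confluent hypergeometric construction of $\mathbf P^{(\pm 1)}$ together with the matching relations (2.5) and (2.7) of \cite{BDIK2} to produce a uniform asymptotic expansion of the form
\begin{equation*}
\mathbf P^{(\pm 1)}(\lambda)\bigl(\mathbf P^{(\infty)}(\lambda)\bigr)^{-1} = \mathbb I + \frac{1}{t}\cdot\frac{\mathbf A_\pm(t,v)}{\lambda\mp 1} + O(t^{-2}),\qquad \lambda\in\partial\mathbb D_r(\pm 1).
\end{equation*}
The residue matrices $\mathbf A_\pm(t,v)$ are given in closed form in terms of $\nu=v/(2\pi\im)$: the diagonal entries are proportional to $\nu^2$ and therefore constant in $t$, while the off-diagonal entries carry the oscillatory product of $\e^{\pm 2\im t}$, $(4t)^{\pm 2\nu}$ and a ratio $\Gamma(\mp\nu)/\Gamma(\pm\nu)$ inherited from the monodromy constants of the model. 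Using $\overline{\Gamma(\nu)}=\Gamma(-\nu)$ for $\nu\in\im\mathbb R$, this product is identified with $\e^{\pm\im\varphi(t,v)}$ where $\varphi(t,v)=2t-(v/\pi)\ln(4t)+2\,\textnormal{arg}\,\Gamma(\im v/(2\pi))$.

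Applying Cauchy's residue theorem on each circle (with the orientation fixed by Figure \ref{figure10}) reduces the circle integrals to the residues at $\lambda=\pm 1$. Summing the contributions from the two endpoints, the diagonal parts combine to produce $M^{11}=-\im\nu^2/t+O(t^{-2})$, and the off-diagonal $(1,2)$ parts combine as $\e^{-\im\varphi}-\e^{\im\varphi}=-2\im\sin\varphi$, yielding $M^{12}=(\nu/t)\sin\varphi(t,v)+O(t^{-2})$ after the conventional normalization. The uniformity of the remainder for $v$ in compact subsets of $[0,\infty)$ is immediate once one notes that the $\Gamma$-factor and its first $v$-derivatives are bounded on such sets.

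The principal obstacle is the careful bookkeeping required to collect the three pieces of the phase $\varphi(t,v)$: the $\e^{\pm 2\im t}$ inherited from the jump $\e^{\im t\lambda\sigma_3}$ across $(-1,1)$; the power $(4t)^{\pm 2\nu}$ arising from matching the $(\lambda\mp 1)^{\mp\nu\sigma_3}$ behaviour of $\mathbf P^{(\infty)}$ at $\pm 1$ against the natural conformal variable $\mp 2\im t(\lambda\mp 1)$ of the confluent hypergeometric model; and the $\Gamma$-function ratio built into the monodromy constants. Once these branches are tracked consistently through the transformation chain $\mathbf Y\mapsto\mathbf S\mapsto\mathbf R$, the identification of $\varphi(t,v)$ and the two stated asymptotic formulas follow without further analytic input.
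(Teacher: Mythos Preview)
Your proposal is correct and follows essentially the same route as the paper's own (very terse) proof: the paper simply invokes ``an explicit residue computation (using Theorem \ref{sinefinish} and \cite{BDIK2}, (2.5), (2.7))'', and what you have written is precisely an unpacking of that sentence --- discarding the exponentially small lens contributions, reading off the $\frac{1}{t}\cdot\frac{\mathbf A_\pm}{\lambda\mp 1}$ structure of $\mathbf P^{(\pm 1)}(\mathbf P^{(\infty)})^{-1}$ from the matching relations, and collecting residues at $\lambda=\pm 1$ to produce the stated $M^{11}$, $M^{12}$ and the phase $\varphi(t,v)$.
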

Thus in turn with \eqref{J:1} and \eqref{J:2},
\begin{cor}\label{cor:J} As $t\rightarrow+\infty$, with $k\in\mathbb{Z}$ and fixed $v\in[0,+\infty)$,
\begin{equation*}
	q(t;\gamma)=-\frac{\im v}{2\pi}\left(1-\frac{v}{4\pi t}+\frac{v}{2\pi t}\sin^2\big(\varphi(t,v)\big)+\mathcal{O}\left(t^{-2}\right)\right);\ \ \ \ \ p(t;\gamma)=2\im\varphi(t,v)+2\pi\im(1+2k)+\mathcal{O}\left(t^{-1}\right)
\end{equation*}
and
\begin{equation*}
	\mathcal{H}_B\big(q(t;\gamma),p(t;\gamma),t\big)=-\frac{2v}{\pi}+\frac{v^2}{2\pi^2t}+\mathcal{O}\left(t^{-2}\right).
\end{equation*}
\end{cor}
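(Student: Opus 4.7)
The plan is to substitute the asymptotics of $\mathbf{M}$ from the preceding proposition into the exact identities \eqref{J:1} and \eqref{J:2}, after first upgrading them from $\mathbf{M}$ itself to the full integral $\frac{\im}{2\pi}\int_{\Sigma_{\bf R}}{\bf R}_-(w)({\bf G}_{\bf R}(w)-\mathbb{I})\,\d w$ that appears in the formula for ${\bf Y}_1$. Writing ${\bf R}_- = \mathbb{I} + ({\bf R}_- - \mathbb{I})$ and applying Cauchy-Schwarz, the estimate $\|{\bf R}_- - \mathbb{I}\|_{L^2(\Sigma_{\bf R})} = O(t^{-1})$ from Theorem \ref{sinefinish} combined with $\|{\bf G}_{\bf R} - \mathbb{I}\|_{L^2(\Sigma_{\bf R})} = O(t^{-1})$ (exponentially small on the lens boundary, $O(t^{-1})$ on the circles $\partial\mathbb{D}_r(\pm 1)$) gives
\begin{equation*}
Y_1^{11} = 2\nu + M^{11} + O(t^{-2}),\qquad Y_1^{12} = M^{12} + O(t^{-2}),
\end{equation*}
uniformly in $v$ chosen from compact subsets of $[0,+\infty)$.

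With these in hand, the Hamiltonian expansion follows immediately from \eqref{J:2}: the identity $\nu = v/(2\pi\im)$ gives $\nu^2 = -v^2/(4\pi^2)$, so $\mathcal{H}_B = -2\im Y_1^{11}$ produces $-2v/\pi + v^2/(2\pi^2 t) + O(t^{-2})$ by direct substitution. For $q$, I would plug the above expansions into \eqref{J:1}. The diagonal contribution is $\tfrac{1}{2}Y_1^{11} = -\im v/(2\pi) + \im v^2/(8\pi^2 t) + O(t^{-2})$. The off-diagonal contribution is $\im t(Y_1^{12})^2 = \im t \cdot \nu^2 t^{-2}\sin^2(\varphi) + O(t^{-2}) = -\im v^2\sin^2(\varphi)/(4\pi^2 t) + O(t^{-2})$. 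Factoring out the common leading factor $-\im v/(2\pi)$ from the sum reproduces the displayed expansion for $q(t;\gamma)$ exactly.

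For $p$ the situation requires more care. The exact identity $\sinh(p/2) = -\im t Y_1^{12}/q$, together with the expansions just derived, gives $\sinh(p/2) = -\im\sin(\varphi) + O(t^{-1})$. The solutions of $\sinh(z) = -\im\sin(\varphi)$ split modulo $2\pi\im$ into two families, $z = -\im\varphi + 2\pi\im k$ and $z = \im(\varphi + \pi) + 2\pi\im k$ for $k\in\mathbb{Z}$. Continuity of $p(t;\gamma)$ in $t$ combined with the boundary behavior $p(t;\gamma) \sim 4\im t$ as $t\downarrow 0$ from Theorem \ref{theo:1} (and the fact that $\varphi(t,v) \approx 2t$ for large $t$) selects the second family, yielding $p = 2\im\varphi + 2\pi\im(1+2k) + O(t^{-1})$ for some integer $k$.

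The main obstacle will be propagating the $O(t^{-1})$ error through the inversion of $\sinh$. Away from the discrete set where $\sin(\varphi) = \pm 1$, $\cosh(p/2)$ stays bounded below and the inversion is a smooth local diffeomorphism, so the bound is automatic. Near the critical subsequence $\sin(\varphi) = \pm 1$, the factor $\cosh(p/2)$ degenerates and the naive implicit-function bound would only give $O(t^{-1/2})$; this gap is closed by carrying one extra order of the steepest-descent expansion of $Y_1^{12}$ and $q$ through the quotient $-\im t Y_1^{12}/q$, which produces the required second-order term in $\sinh(p/2)$ at the critical points. This is a technical book-keeping issue rather than a genuine analytical obstruction, and the final formula holds modulo $2\pi\im$ as stated.
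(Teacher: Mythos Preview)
Your approach is essentially the same as the paper's: the paper simply writes ``Thus in turn with \eqref{J:1} and \eqref{J:2}'' and states the corollary, leaving the substitution implicit. You supply the details that the paper suppresses --- the Cauchy--Schwarz step controlling $\int({\bf R}_--\mathbb{I})({\bf G}_{\bf R}-\mathbb{I})$ by $O(t^{-2})$, the algebra for $q$ and $\mathcal{H}_B$, and a continuity argument for selecting the branch of $p$ from $\sinh(p/2)=-\im\sin\varphi+O(t^{-1})$ --- and you correctly flag the degeneracy of the $\sinh$ inversion when $\cos\varphi$ vanishes, which the paper does not mention. Your proposed remedy (carry one further order through the quotient) is the right way to close that gap; for the downstream computations in the paper only $pq$ and $pq_\gamma$ are needed, where the possible loss to $O(t^{-1/2})$ would in any case be harmless.
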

The last result, together with Proposition \ref{theo:1}, leads us to
\begin{cor}\label{pe:1} As $t\rightarrow+\infty$,
\begin{equation*}
	\ln F_B(t;\gamma)=-\frac{2v}{\pi}t+\frac{v^2}{2\pi^2}\ln t+E(v)+\mathcal{O}\left(t^{-1}\right),
\end{equation*}
where $E(v)$ is $t$-independent and the error term uniform for $v$ chosen from compact subsets of $[0,+\infty)$.
\end{cor}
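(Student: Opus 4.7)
The plan is to obtain the expansion by direct integration of the Hamiltonian representation provided by Theorem \ref{theo:1}, namely
\begin{equation*}
	\ln F_B(t;\gamma)=\int_0^t\mathcal{H}_B\bigl(q(s;\gamma),p(s;\gamma),s\bigr)\,\d s,
\end{equation*}
combined with the pointwise asymptotics of $\mathcal{H}_B$ supplied by Corollary \ref{cor:J}. Concretely, I would fix $v\in[0,+\infty)$, choose the threshold $t_0=t_0(v)$ from Corollary \ref{cor:J}, and split
\begin{equation*}
	\int_0^t\mathcal{H}_B(s)\,\d s=\int_0^{t_0}\mathcal{H}_B(s)\,\d s+\int_{t_0}^t\mathcal{H}_B(s)\,\d s.
\end{equation*}
The first integral is a finite constant in $t$; its finiteness follows from the boundary behavior \eqref{JME:17}, which implies $\mathcal{H}_B(s)\to-\frac{2\gamma}{\pi}$ as $s\downarrow 0$, so the integrand is bounded on $[0,t_0]$.

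For the second integral, insert the expansion from Corollary \ref{cor:J},
\begin{equation*}
	\mathcal{H}_B(s)=-\frac{2v}{\pi}+\frac{v^2}{2\pi^2 s}+r(s;v),\qquad r(s;v)=\mathcal{O}\bigl(s^{-2}\bigr)\ \ \text{as}\ s\to+\infty,
\end{equation*}
where the remainder estimate is uniform for $v$ in compact subsets of $[0,+\infty)$. Integrating term by term,
\begin{equation*}
	\int_{t_0}^t\mathcal{H}_B(s)\,\d s=-\frac{2v}{\pi}(t-t_0)+\frac{v^2}{2\pi^2}\ln\bigl(t/t_0\bigr)+\int_{t_0}^t r(s;v)\,\d s.
\end{equation*}
Because $|r(s;v)|\lesssim s^{-2}$ uniformly on compact $v$-sets, the integral $\int_{t_0}^\infty r(s;v)\,\d s$ converges absolutely, and a tail estimate yields
\begin{equation*}
	\int_{t_0}^t r(s;v)\,\d s=\int_{t_0}^\infty r(s;v)\,\d s+\mathcal{O}\bigl(t^{-1}\bigr),
\end{equation*}
with the $\mathcal{O}(t^{-1})$ uniform on compact $v$-subsets.

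Collecting $t$-independent terms into
\begin{equation*}
	E(v):=\int_0^{t_0}\mathcal{H}_B(s)\,\d s+\frac{2v}{\pi}t_0-\frac{v^2}{2\pi^2}\ln t_0+\int_{t_0}^\infty r(s;v)\,\d s
\end{equation*}
produces the claimed expansion. There is essentially no obstacle here: the only subtlety is to verify that the remainder $r(s;v)$ inherits sufficient uniformity in $v$ from the small-norm analysis behind Theorem \ref{sinefinish}, which in turn propagates through the algebraic identities \eqref{J:1}--\eqref{J:2} used to derive Corollary \ref{cor:J}. The value of $E(v)$ itself is not determined at this stage; its explicit evaluation in terms of Barnes $G$-functions, as featured in \eqref{JME:31}, will be achieved separately via Proposition \ref{logic} and the $\gamma$-differential identity $\partial_\gamma I_B=pq_\gamma$, mirroring the strategy already carried out in Section \ref{sec:3} for the soft-edge case.
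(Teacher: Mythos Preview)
Your proposal is correct and follows essentially the same approach as the paper, which simply states that Corollary \ref{cor:J} together with the Hamiltonian integral representation of Theorem \ref{theo:1} yields the result. You have made the implicit integration argument explicit by splitting at a threshold $t_0$ and estimating the tail of the $\mathcal{O}(s^{-2})$ remainder, which is exactly what is needed.
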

Similar to the last section we now determine $E(v)$ through \eqref{JME:24} and Corollary \ref{logic}. First, as $t\rightarrow+\infty$,
\begin{equation*}
	t\,\mathcal{H}_B(q,p,t)-pq=-\frac{4vt}{\pi}+\frac{v^2}{\pi^2}-\frac{v^2}{\pi^2}\sin^2\big(\varphi(t,v)\big)+\frac{v^2}{\pi^2}\ln(4t)-\frac{2v}{\pi}\textnormal{arg}\,\Gamma\left(\frac{\im v}{2\pi}\right)-v(1+2k)+\mathcal{O}\left(t^{-1}\right),
\end{equation*}
with $k\in\mathbb{Z}$ as in Corollary \ref{cor:J}. But using Corollary \ref{cor:J} again it is also easy to see that
\begin{equation*}
	pq_{\gamma}=\frac{\d}{\d\gamma}\left(\frac{2vt}{\pi}-\frac{v^2}{2\pi^2}+\frac{v^2}{\pi^2}\sin^2\big(\varphi(t,v)\big)-\frac{v^2}{2\pi^2}\ln(4t)+ v(1+2k)\right)+\frac{2v_{\gamma}}{\pi}\textnormal{arg}\,\Gamma\left(\frac{\im v}{2\pi}\right)+\mathcal{O}\left(t^{-1}\ln t\right),
\end{equation*}
and therefore together with Corollary \ref{pe:1}, after one final integration by parts (and the fact $F_B(t;0)=1$),
\begin{prop} As $t\rightarrow+\infty$,
\begin{equation*}
	\ln F_B(t;\gamma)=-\frac{2v}{\pi}t+\frac{v^2}{2\pi^2}\ln(4t)+\frac{v^2}{2\pi^2}-\frac{2}{\pi}\int_0^{\gamma}v(\gamma')\frac{\d}{\d\gamma'}\textnormal{arg}\,\Gamma\left(\frac{\im v}{2\pi}\right)\,\d\gamma'+\mathcal{O}\left(t^{-1}\right)
\end{equation*}
uniformly for $\gamma\in[0,1)$ chosen from compact subsets.
\end{prop}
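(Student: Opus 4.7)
The plan is to combine the action integral identity \eqref{JME:24} from Theorem \ref{theo:2}, $\ln F_B(t;\gamma) = t\,\mathcal{H}_B(q,p,t) - pq + I_B(t;\gamma)$, with the Painlev\'e asymptotics already supplied by Corollary \ref{cor:J}. The local piece $t\,\mathcal{H}_B-pq$ requires only direct substitution of the expansions of $q$, $p$, and $\mathcal{H}_B$, and produces, modulo $\mathcal{O}(t^{-1})$, the terms $-\frac{4vt}{\pi} + \frac{v^2}{\pi^2}\ln(4t) + \frac{v^2}{\pi^2}\bigl(1-\sin^2\varphi\bigr) - \frac{2v}{\pi}\textnormal{arg}\,\Gamma(\frac{\im v}{2\pi}) - v(1+2k)$, where $\varphi=\varphi(t,v)$ and $k\in\mathbb{Z}$ is the branch index from $\sinh^{-1}$ in \eqref{J:1}. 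No further analysis of this piece is needed.

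For $I_B$, the key tool is the purely local differential identity $\partial_\gamma I_B = p q_\gamma$ from Proposition \ref{logic}. Together with the boundary condition $I_B(t;0)=0$ (which holds because $F_B(t;0)=1$ forces $\ln F_B(t;0)$ to vanish while every individual term on the right of \eqref{JME:24} also vanishes at $\gamma=0$), this yields $I_B(t;\gamma) = \int_0^\gamma p(t;\gamma')\,q_{\gamma'}(t;\gamma')\,\d\gamma'$. Substituting the expansions from Corollary \ref{cor:J} and differentiating $q$ in $\gamma$ through $v(\gamma)=-\ln(1-\gamma)$ and through the $\textnormal{arg}\,\Gamma$ shift hidden in $\varphi(t,v)$, the product $pq_\gamma$ is expected to decompose, modulo $\mathcal{O}(t^{-1}\ln t)$, as a total $\gamma$-derivative of a closed-form expression plus the non-exact remainder $\frac{2v_\gamma}{\pi}\textnormal{arg}\,\Gamma(\frac{\im v}{2\pi})$. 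Integrating the exact part in $\gamma'$ cancels precisely the $\gamma$-dependent pieces of $t\,\mathcal{H}_B-pq$ --- the oscillatory $\sin^2\varphi$ term, half of the $\ln(4t)$ coefficient, and the branch term $v(1+2k)$ --- assembling the advertised $-\frac{2v}{\pi}t + \frac{v^2}{2\pi^2}\ln(4t) + \frac{v^2}{2\pi^2}$. Integrating the non-exact remainder and shifting the $\gamma'$-derivative onto $\textnormal{arg}\,\Gamma$ by parts yields a boundary term at $\gamma'=\gamma$ that exactly cancels the $-\frac{2v}{\pi}\textnormal{arg}\,\Gamma$ produced by the local piece, and leaves the advertised integral $-\frac{2}{\pi}\int_0^\gamma v(\gamma')\frac{\d}{\d\gamma'}\textnormal{arg}\,\Gamma(\frac{\im v(\gamma')}{2\pi})\,\d\gamma'$.

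The main obstacle is the bookkeeping check that $pq_\gamma$ really decomposes as claimed --- i.e. that every term other than the explicit $\frac{2v_\gamma}{\pi}\textnormal{arg}\,\Gamma(\frac{\im v}{2\pi})$ assembles into an exact $\gamma$-form at leading order, and that the cancellations between the local piece and the exact part indeed happen coefficient by coefficient (including the harmless branch term $v(1+2k)$, which is a total $\gamma$-derivative and hence cancels). A mildly delicate point is the boundary behavior at $\gamma'=0$ after integration by parts: since $v(0)=0$ while $\textnormal{arg}\,\Gamma(\frac{\im v}{2\pi})$ grows at most logarithmically as $v\to 0^+$, the product $v\,\textnormal{arg}\,\Gamma(\frac{\im v}{2\pi})$ vanishes at $\gamma'=0$ and no boundary contribution is picked up at the left endpoint. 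The stated uniformity over compact subsets of $[0,1)$ is inherited from the uniformity already present in Corollary \ref{cor:J}.
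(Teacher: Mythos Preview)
Your approach is essentially the paper's own: compute $t\,\mathcal{H}_B-pq$ from Corollary \ref{cor:J}, compute $I_B$ via $\partial_\gamma I_B = pq_\gamma$ together with $I_B(t;0)=0$, decompose $pq_\gamma$ into an exact $\gamma$-differential plus the $\tfrac{2v_\gamma}{\pi}\textnormal{arg}\,\Gamma$ remainder, integrate and cancel. The expected cancellations you describe are exactly those the paper carries out.

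There is one small gap. The asymptotics for $pq_\gamma$ carry an $\mathcal{O}(t^{-1}\ln t)$ error (the $\ln t$ coming from $\partial_\gamma\varphi$), so integrating in $\gamma$ gives $I_B$ only up to $\mathcal{O}(t^{-1}\ln t)$, and your scheme as written produces $\ln F_B$ with error $\mathcal{O}(t^{-1}\ln t)$ rather than the stated $\mathcal{O}(t^{-1})$. The paper closes this by invoking Corollary \ref{pe:1}, obtained directly from the $\mathcal{H}_B$-asymptotics in Corollary \ref{cor:J} by antidifferentiation in $t$: since $\mathcal{H}_B = -\tfrac{2v}{\pi}+\tfrac{v^2}{2\pi^2 t}+\mathcal{O}(t^{-2})$, one already knows $\ln F_B = -\tfrac{2v}{\pi}t+\tfrac{v^2}{2\pi^2}\ln t + E(v)+\mathcal{O}(t^{-1})$ with the sharp error, and the action-integral computation is then used only to pin down the $t$-independent constant $E(v)$. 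Adding this one-line observation to your argument recovers the claimed $\mathcal{O}(t^{-1})$.
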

By standard properties of the Barnes G-function, see \eqref{Barnesid}, this results proves Theorem \ref{theo:3}, expansion \eqref{JME:31}.
\section{Proof of Theorem \ref{theo:3}, expansion \eqref{JME:33}}\label{sec:5}
As shown in Appendix \ref{AppA} below, we can characterize the functions $(q,p)$ in \eqref{HH:2}, \eqref{JME:19} through the solution of the following Riemann-Hilbert problem.
\begin{problem}\label{BessRHP} Let $t>0,\alpha>-1$ and $\gamma\in[0,1]$. Determine the piecewise analytic function ${\bf Y}={\bf Y}(\lambda;t,\alpha,\gamma)\in\mathbb{C}^{2\times 2}$ such that
\begin{enumerate}
	\item[(1)] ${\bf Y}(\lambda)$ is analytic for $\lambda\in\mathbb{C}\backslash[0,1]$ with the line segment $[0,1]\subset\mathbb{R}$ oriented from left to right.
	\item[(2)] The limiting values ${\bf Y}_{\pm}(\lambda)=\lim_{\epsilon\downarrow 0}{\bf Y}(\lambda\pm\im\epsilon)$ along $\lambda\in(0,1)$ obey the jump relation
	\begin{equation*}
		{\bf Y}_+(\lambda)={\bf Y}_-(\lambda)\begin{bmatrix}1-\im\pi\gamma\sqrt{\lambda t}J_{\alpha}(\sqrt{\lambda t})J_{\alpha}'(\sqrt{\lambda t})&\im\pi\gamma\big(J_{\alpha}(\sqrt{\lambda t})\big)^2\\ -\im\pi\gamma\big(\sqrt{\lambda t}J_{\alpha}'(\sqrt{\lambda t})\big)^2 & 1+\im\pi\gamma\sqrt{\lambda t}J_{\alpha}(\sqrt{\lambda t})J_{\alpha}'(\sqrt{\lambda t})\end{bmatrix},\ \ \ \lambda\in(0,1).
	\end{equation*}
	\begin{figure}[tbh]
	\begin{center}
\resizebox{0.42\textwidth}{!}{\includegraphics{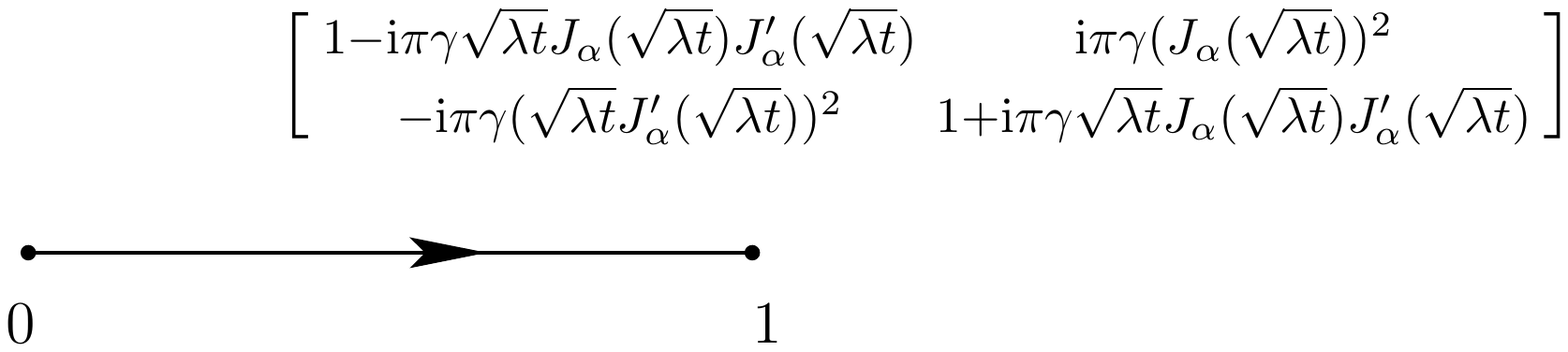}}
\caption{The oriented jump contour for the master function ${\bf Y}(\lambda;t,\alpha,\gamma)$ of RHP \ref{BessRHP} in the complex $\lambda$-plane.}
\label{figure11}
\end{center}
\end{figure}
	\item[(3)] ${\bf Y}(\lambda)$ is square integrable on $[0,1]\subset\mathbb{R}$.
	\item[(4)] As $\lambda\rightarrow\infty$,
	\begin{equation}\label{BessAsy}
		{\bf Y}(\lambda)=\mathbb{I}+{\bf Y}_1\lambda^{-1}+{\bf Y}_2\lambda^{-2}+\mathcal{O}\left(\lambda^{-3}\right),\ \ \ \ {\bf Y}_{\ell}=\big(Y_{\ell}^{jk}\big)_{j,k=1}^2.
	\end{equation}
\end{enumerate}
\end{problem}
We will prove below that the above problem for ${\bf Y}(\lambda)$ is uniquely solvable for all $t\geq t_0$ and $\gamma\in[0,1),\alpha>-1$ fixed. In turn we have the representation formul\ae\,(see Appendix \ref{AppA} below)
\begin{equation}\label{Bess:f1}
	q^2(t,\alpha;\gamma)=t\big(Y_1^{12}\big)^2+2\big(Y_1^{11}-Y_1^{12}\big),\ \ \ \ \ p^2(t,\alpha;\gamma)=\frac{\alpha^2q^2}{(q^2-1)^2}+\frac{2t}{q^2-1}\left(Y_1^{12}+\frac{q^2}{2}\right),
\end{equation}
and
\begin{equation}\label{Bess:f2}
	\mathcal{H}_H\big(q(t,\alpha;\gamma),p(t,\alpha,\gamma),t,\alpha)=\frac{1}{2}Y_1^{12},
	%
\end{equation}
through RHP \ref{BessRHP}. Moreover,
\begin{eqnarray}
	L(t,\alpha;\gamma)&=&-\frac{\alpha}{2}\ln\Big(-\widehat{X}^{11}(0)\big(\widehat{X}^{12}(0)\big)^{-1}\Big)\bigg|_{s=0}^t,\ \ \ \ \ -1<\alpha<0,\label{cool:3}\\
	L(t,\alpha;\gamma)&=&-\frac{\alpha}{2}\ln\Big(-\widehat{X}^{11}(0)\big(\widehat{X}^{12}(0)\big)^{-1} s^{-\alpha}\Big)\bigg|_{s=0}^t,\ \ \ \ \ \ \alpha\geq 0.\label{cool:4}
\end{eqnarray}
in terms of RHP \ref{Bess:X} below.
Formul\ae\,\eqref{Bess:f1}, \eqref{Bess:f2} and \eqref{cool:3}, \eqref{cool:4} are the starting point for our asymptotic analysis, but the necessary nonlinear steepest descent techniques (for $\gamma\in[0,1)$ that is, in case $\gamma=1$ see \cite{B2}, Section $6$) have not appeared in the literature yet, thus we provide the details below.
\subsection{Nonlinear steepest descent analysis for RHP \ref{BessRHP}}
Our goal is to solve RHP \ref{BessRHP} for ${\bf Y}(\lambda;t,\alpha,\gamma)\in\mathbb{C}^{2\times 2}$ for all values $(t,\alpha,v)\in\mathbb{R}_+\times\mathbb{R}_{>-1}\times\mathbb{R}_+$ such that
\begin{equation*}
	t\geq t_0,\ \ \ \textnormal{and}\ \ \ \alpha>-1,\ \ 0\leq v=-\ln(1-\gamma)<+\infty\ \ \ \textnormal{are fixed}.
\end{equation*}
To this end we shall first recall a few key steps from \cite{B2}, Section $6.1$. Let $\Psi_{\alpha}(\z)$ denote the function defined in \eqref{BessUndress} below. It allows us to factorize the jump matrix in RHP \ref{BessRHP} as follows, for $\lambda>0$,
\begin{equation}\label{facto}
	\big(\Psi_{\alpha}(\lambda t)\big)^{-1}_+\begin{bmatrix}1-\im\pi\gamma\sqrt{\lambda t}J_{\alpha}(\sqrt{\lambda t})J_{\alpha}'(\sqrt{\lambda t})&\im\pi\gamma\big(J_{\alpha}(\sqrt{\lambda t})\big)^2\\ -\im\pi\gamma\big(\sqrt{\lambda t}J_{\alpha}'(\sqrt{\lambda t})\big)^2 & 1+\im\pi\gamma\sqrt{\lambda t}J_{\alpha}(\sqrt{\lambda t})J_{\alpha}'(\sqrt{\lambda t})\end{bmatrix}\big(\Psi_{\alpha}(\lambda t)\big)_+=\begin{bmatrix} 1&-\gamma \\ 0 & 1\end{bmatrix},
\end{equation}
and thus motivates an undressing transformation. In more detail, using the model function $\Psi(\z;\alpha)$ from \eqref{Bessundress} and RHP \ref{Bessbetter} from Appendix \ref{AppB}, we define
\begin{equation}\label{BessT:1}
	{\bf X}(\lambda)=\begin{bmatrix} 1 & 0\\ b_1(\alpha)&1\end{bmatrix}{\bf Y}(\lambda)\Psi(\lambda t;\alpha)\begin{cases}\bigl[\begin{smallmatrix} 1 & 0\\ \e^{-\im\pi\alpha} & 1\end{smallmatrix}\bigr],&\lambda\in\widehat{\Omega}_1\smallskip\\ \bigl[\begin{smallmatrix} 1 & 0\\ -\e^{\im\pi\alpha} & 1\end{smallmatrix}\bigr],&\lambda\in\widehat{\Omega}_2\\  \mathbb{I},&\textnormal{else}\end{cases}.
\end{equation}
In view of Figure \ref{figure12}, this step leads us to the following transformed RHP.
\begin{figure}[tbh]
	\begin{center}
\resizebox{0.35\textwidth}{!}{\includegraphics{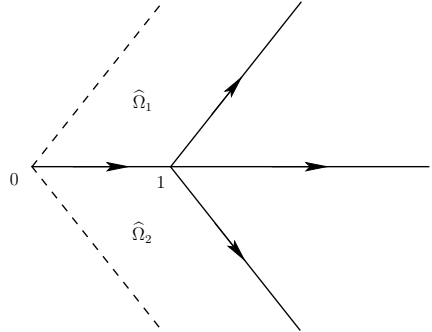}}
\caption{The oriented jump contours as solid black lines for the function ${\bf X}(\lambda)$ defined in \eqref{BessT:1} in the complex $\lambda$-plane.}
\label{figure12}
\end{center}
\end{figure}
\begin{problem}\label{Bess:X} Find ${\bf X}(\lambda)={\bf X}(\lambda;t,\alpha,v)\in\mathbb{C}^{2\times 2}$ with $(t,\alpha,v)\in\mathbb{R}_{>0}\times\mathbb{R}_{>-1}\times\mathbb{R}_{\geq 0}$ such that
\begin{enumerate}
	\item[(1)] ${\bf X}(\lambda)$ is analytic for $\lambda\in\mathbb{C}\backslash\Sigma_{\bf X}$ where the oriented contour $\Sigma_{\bf X}$ is shown in Figure \ref{figure12} as union of solid black lines.
	\item[(2)] The jumps on $\Sigma_{\bf X}$ read as
	\begin{equation*}
		{\bf X}_+(\lambda)={\bf X}_-(\lambda)\begin{bmatrix}\e^{-\im\pi\alpha} & \e^{-v}\\ 0 & \e^{\im\pi\alpha}\end{bmatrix},\ \ \lambda\in(0,1);\ \ \ \ \ {\bf X}_+(\lambda)={\bf X}_-(\lambda)\begin{bmatrix}0 & 1\\ -1 & 0\end{bmatrix},\ \ \lambda\in(1,+\infty)
	\end{equation*}
	on the positive real axis and
	\begin{eqnarray*}
		{\bf X}_+(\lambda)&=&{\bf X}_-(\lambda)\begin{bmatrix}1&0\\ \e^{-\im\pi\alpha} & 1\end{bmatrix},\ \ \textnormal{arg}(\lambda-1)=\frac{\pi}{3};\\
		{\bf X}_+(\lambda)&=&{\bf X}_-(\lambda)\begin{bmatrix} 1 & 0\\ \e^{\im\pi\alpha} & 1\end{bmatrix},\ \ \ \ \textnormal{arg}(\lambda-1)=\frac{5\pi}{3}.
	\end{eqnarray*}
	\item[(3)] In a vicinity of $\lambda=0$,
	\begin{equation*}
		{\bf X}(\lambda)=\widehat{{\bf X}}(\lambda)(-\lambda)^{\frac{\alpha}{2}\sigma_3}\begin{cases}\begin{bmatrix}1 & \frac{\im}{2}\frac{1-\gamma}{\sin\pi\alpha} \\ 0&1\end{bmatrix},&\ \alpha\notin\mathbb{Z}\smallskip\\
		 \begin{bmatrix}1 & -\frac{\e^{\im\pi\alpha}}{2\pi\im}\e^{-v}\ln(-\lambda)\\ 0 & 1\end{bmatrix},&\ \alpha\in\mathbb{Z}\end{cases}
	\end{equation*}
	where $\ln:\mathbb{C}\backslash(-\infty,0]\rightarrow\mathbb{C}$ and $z^{\alpha}:\mathbb{C}\backslash(-\infty,0]\rightarrow\mathbb{C}$ are defined with principal branches and $\widehat{{\bf X}}(\lambda)=\big(\widehat{X}^{jk}(\lambda)\big)_{j,k=1}^2$ is analytic at $\lambda=0$.
	\item[(4)] In a vicinity of $\lambda=1$,
	\begin{align*}
		{\bf X}(\lambda)=&\,\widehat{{\bf X}}(\lambda)\left\{\mathbb{I}+\frac{\gamma}{2\pi\im}\begin{bmatrix}-1 & -\e^{-\im\pi\alpha}\\ \e^{\im\pi\alpha} & 1\end{bmatrix}\ln(\lambda-1)\right\}\begin{cases}\bigl[\begin{smallmatrix}0 & 1\\-1 & 0\end{smallmatrix}\bigr],&\Im\lambda>0\\ \mathbb{I},&\Im\lambda<0\end{cases}\\
		&\,\times\,\begin{cases}\bigl[\begin{smallmatrix}1&0\\ \e^{-\im\pi\alpha} & 1\end{smallmatrix}\bigr],&\textnormal{arg}(\lambda-1)\in(\frac{\pi}{3},\pi)\smallskip\\ \bigl[\begin{smallmatrix}1 & 0\\ -\e^{\im\pi\alpha} & 1\end{smallmatrix}\bigr],&\textnormal{arg}(\lambda-1)\in(\pi,\frac{5\pi}{3})\smallskip\\ \mathbb{I},&\textnormal{else}\end{cases},\ \ \ \ 
	\end{align*}
	with $\widehat{{\bf X}}(\lambda)$ analytic at $\lambda=1$ and the principal branch for $\ln(\lambda-1)$, i.e. $-\pi<\textnormal{arg}(\lambda-1)<\pi$.
	\item[(5)] Using RHP \ref{Bessbetter} we find that as $\lambda\rightarrow\infty, \lambda\notin[0,+\infty)$,
	\begin{equation*}
		{\bf X}(\lambda)=\Big\{\mathbb{I}+{\bf X}_1\lambda^{-1}+\mathcal{O}\left(\lambda^{-2}\right)\Big\}(-\lambda t)^{-\frac{1}{4}\sigma_3}\frac{1}{\sqrt{2}}\begin{bmatrix}1 & -1\\ 1 & 1\end{bmatrix}\e^{-\im\frac{\pi}{4}\sigma_3}\e^{\sqrt{t}(-\lambda)^{\frac{1}{2}}\sigma_3}
	\end{equation*}
	with
	\begin{equation*}
		{\bf X}_1=\begin{bmatrix}1&0\\b_1(\alpha) & 1\end{bmatrix}{\bf Y}_1\begin{bmatrix}1 & 0\\ -b_1(\alpha) & 1\end{bmatrix}-\frac{1}{t}\begin{bmatrix}a_2(\alpha) & -a_1(\alpha)\\ b_1(\alpha)a_2(\alpha)-b_3(\alpha) & b_2(\alpha)-b_1(\alpha)a_1(\alpha)\end{bmatrix}.
	\end{equation*}
\end{enumerate}
\end{problem}
Our next step is the $g$-function transformation given by
\begin{equation}\label{Bess:T2}
	{\bf T}(\lambda)={\bf X}(\lambda)\e^{-\sqrt{t}g(\lambda)\sigma_3},\ \ \ \lambda\in\mathbb{C}\backslash\Sigma_{\bf X};\ \ \ \ \ \ \ g(\lambda)=(-\lambda)^{\frac{1}{2}},\ \ \ \lambda\in\mathbb{C}\backslash[0,\infty)
\end{equation}
where $g(\lambda)$ is defined and analytic for $\lambda\in\mathbb{C}\backslash[0,+\infty)$ such that $(-\lambda)^{\frac{1}{2}}>0$ for $\lambda<0$.
\begin{prop}\label{Bess:gfunc} We have
\begin{equation*}
	g_{\pm}(\lambda)=\sqrt{|\lambda|},\ \ \ \lambda<0;\ \ \ \ \ \ \ \ g_{\pm}(\lambda)=\mp\im\sqrt{\lambda},\ \ \ \ \lambda>0;\ \ \ \ \ \ \ \ \Re\big(g(\lambda)\big)>0,\ \ \ \textnormal{arg}(\lambda-1)=\frac{\pi}{3},\frac{5\pi}{3}
\end{equation*}
and $\Pi(\lambda)=2\im\sqrt{\lambda},\lambda>0$ admits analytic continuation into a small vicinity of $(0,1)$ into the lower and upper half plane. In fact with
\begin{equation*}
	\phi(\lambda)=-2(-\lambda)^{\frac{1}{2}},\ \ \ \ \lambda\in\mathbb{C}\backslash[0,\infty)
\end{equation*}
we observe that
\begin{equation*}
	\phi_+(\lambda)=\Pi(\lambda)=-\phi_-(\lambda),\ \ \lambda>0;\ \ \ \ \ \ \Re\big(\phi(\lambda)\big)<0\ \ \ \textnormal{for}\ \Im\lambda\gtrless 0,\ \ \ \Re\lambda\in(0,1).
\end{equation*}
\end{prop}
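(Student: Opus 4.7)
The proposition is a routine verification from the definition $g(\lambda)=(-\lambda)^{1/2}$, with the principal branch of the square root normalized by $g(\lambda)>0$ for $\lambda<0$. My plan is to handle each claim via an explicit argument-tracking computation. For $\lambda<0$, $g$ is real analytic in a strip about the negative real axis and equals $\sqrt{|\lambda|}$ there, so $g_\pm(\lambda)=\sqrt{|\lambda|}$ follows immediately. For $\lambda>0$, I would take the boundary limits $g(\lambda\pm\im\epsilon)$ as $\epsilon\downarrow 0$: the point $-\lambda\mp\im\epsilon$ lies in the third (resp.\ second) quadrant with principal argument tending to $\mp\pi$, so the principal square root yields $\sqrt{\lambda}\,e^{\mp\im\pi/2}=\mp\im\sqrt{\lambda}$, matching $g_\pm(\lambda)=\mp\im\sqrt{\lambda}$.

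For the positivity statement on the rays $\textnormal{arg}(\lambda-1)=\pi/3,5\pi/3$: these rays miss $[0,\infty)$, hence $-\lambda\notin(-\infty,0]$ along them, so $\textnormal{arg}(-\lambda)\in(-\pi,\pi)$ strictly and consequently $\textnormal{arg}\,g(\lambda)\in(-\pi/2,\pi/2)$, forcing $\Re g(\lambda)>0$. The analytic continuation of $\Pi(\lambda)=2\im\sqrt{\lambda}$ will then drop out of the boundary identities already established: since $\phi(\lambda)=-2g(\lambda)$ is analytic on $\mathbb{C}\setminus[0,\infty)$, and the previous paragraph gives $\phi_+(\lambda)=-2(-\im\sqrt{\lambda})=\Pi(\lambda)$ and $\phi_-(\lambda)=-\Pi(\lambda)$ for $\lambda>0$, the function $\phi$ itself supplies the analytic extensions of $\pm\Pi$ from $(0,1)$ into the upper and lower half-planes.

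The last ingredient is the sign of $\Re\phi(\lambda)$ on two-sided neighborhoods of $(0,1)$. For $\lambda=x+\im y$ with $x\in(0,1)$ and $y>0$, the point $-\lambda$ sits in the open third quadrant with $\textnormal{arg}(-\lambda)\in(-\pi,-\pi/2)$, so $\textnormal{arg}(-\lambda)^{1/2}\in(-\pi/2,-\pi/4)$ and $\Re(-\lambda)^{1/2}>0$, yielding $\Re\phi(\lambda)<0$. The case $y<0$ is entirely symmetric with $\textnormal{arg}(-\lambda)\in(\pi/2,\pi)$. I do not anticipate a genuine obstacle here; the only delicate point is keeping the principal branch conventions for $(-\lambda)^{1/2}$ and $\sqrt{\lambda}$ consistent, so that the $\mp$ signs in $g_\pm$ and the assignments $\phi_+=\Pi$, $\phi_-=-\Pi$ line up correctly.
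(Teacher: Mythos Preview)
Your proposal is correct and matches the paper's treatment: the paper states this proposition without proof, treating it as a routine verification of branch behavior, which is exactly what your argument-tracking computation supplies.
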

Recalling RHP \ref{Bess:X} the transformation \eqref{Bess:T2} leads to the following problem
\begin{problem}\label{Bess:T} Find ${\bf T}(\lambda)={\bf T}(\lambda;t,\alpha,v)\in\mathbb{C}^{2\times 2}$ with $(t,\alpha,v)\in\mathbb{R}_{>0}\times\mathbb{R}_{>-1}\times\mathbb{R}_{\geq 0}$ such that
\begin{enumerate}
	\item ${\bf T}(\lambda)$ is analytic for $\lambda\in\mathbb{C}\backslash\Sigma_{\bf X}$ with $\Sigma_{\bf X}$ shown in Figure \ref{figure12}.
	\item On the contour $\Sigma_{\bf X}$,
	\begin{equation*}
		{\bf T}_+(\lambda)={\bf T}_-(\lambda)\begin{bmatrix}\e^{-\im\pi\alpha} \e^{\sqrt{t}\,\Pi(\lambda)} & 1-\gamma\\ 0 & \e^{\im\pi\alpha}\e^{-\sqrt{t}\,\Pi(\lambda)} \end{bmatrix},\ \ \lambda\in(0,1);\ \ \ \ {\bf T}_+(\lambda)={\bf T}_-(\lambda)\begin{bmatrix}0&1\\ -1 & 0\end{bmatrix},\ \ \lambda\in(1,+\infty)
	\end{equation*}
	followed by
	\begin{equation*}
		{\bf T}_+(\lambda)={\bf T}_-(\lambda)\begin{bmatrix}1 & 0\\ \e^{-\im\pi\alpha}\e^{-2\sqrt{t}\,g(\lambda)} & 1\end{bmatrix},\ \ \textnormal{arg}(\lambda-1)=\frac{\pi}{3},
	\end{equation*}
	\begin{equation*}
		{\bf T}_+(\lambda)={\bf T}_-(\lambda)\begin{bmatrix}1 & 0\\ \e^{\im\pi\alpha}\e^{-2\sqrt{t}\,g(\lambda)} & 1\end{bmatrix},\ \ \textnormal{arg}(\lambda-1)=\frac{5\pi}{3}.
	\end{equation*}
	\item The singular behavior near $\lambda=0$ and $\lambda=1$ is, modulo the right multiplication with the $g$-function, see \eqref{Bess:T2}, unchanged from the corresponding behavior stated in RHP \ref{Bess:X}, compare conditions (3) and (4).
	\item As $\lambda\rightarrow\infty,\lambda\notin[0,+\infty)$, we have that
	\begin{equation*}
		{\bf T}(\lambda)=\Big\{\mathbb{I}+{\bf X}_1\lambda^{-1}+\mathcal{O}\left(\lambda^{-2}\right)\Big\}(-\lambda t)^{-\frac{1}{4}\sigma_3}\frac{1}{\sqrt{2}}\begin{bmatrix} 1 & -1\\ 1 & 1\end{bmatrix}\e^{-\im\frac{\pi}{4}\sigma_3}.
	\end{equation*}
\end{enumerate}
\end{problem}
Since
\begin{equation*}
	\begin{bmatrix}\e^{-\im\pi\alpha}\e^{\sqrt{t}\,\Pi(\lambda)} & 1-\gamma\\ 0 & \e^{\im\pi\alpha}\e^{-\sqrt{t}\,\Pi(\lambda)} \end{bmatrix}=\begin{bmatrix}1 & 0\\ \e^{\sqrt{t}\,\phi_-(\lambda)}\e^{v+\im\pi\alpha} & 1\end{bmatrix}\begin{bmatrix}0 & \e^{-v}\\ -\e^{v} & 0\end{bmatrix}\begin{bmatrix}1 & 0\\ \e^{\sqrt{t}\,\phi_+(\lambda)}\e^{v-\im\pi\alpha} & 1\end{bmatrix},
\end{equation*}
we can use Proposition \ref{Bess:gfunc} and perform our next transformation. Define with the help of Figure \ref{figure13}
\begin{equation}\label{Bess:T3}
	{\bf S}(\lambda)={\bf T}(\lambda)\begin{cases}\Bigl[\begin{smallmatrix}1 & 0\\ -\e^{\sqrt{t}\,\phi(\lambda)}\e^{v-\im\pi\alpha} & 1\end{smallmatrix}\Bigr],&\lambda\in\Omega_U\smallskip\\ \Bigl[\begin{smallmatrix} 1 & 0\\ \e^{\sqrt{t}\,\phi(\lambda)}\e^{v+\im\pi\alpha} & 1\end{smallmatrix}\Bigr],&\lambda\in\Omega_L\smallskip\\ \mathbb{I},&\textnormal{else}\end{cases}
\end{equation}
and obtain the following problem.
\begin{figure}[tbh]
	\begin{center}
\resizebox{0.4\textwidth}{!}{\includegraphics{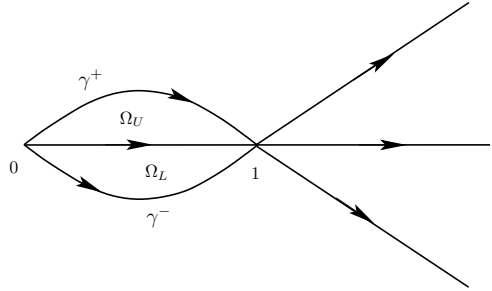}}
\caption{The domains used in Definition \eqref{Bess:T3}. The union of the solid black contours equals $\Sigma_{\bf S}$.}
\label{figure13}
\end{center}
\end{figure}
\begin{problem}\label{Bess:S} Find ${\bf S}(\lambda)={\bf S}(\lambda;t,\alpha,v)\in\mathbb{C}^{2\times 2}$ such that
\begin{enumerate}
	\item ${\bf S}(\lambda)$ is analytic for $\lambda\in\mathbb{C}\backslash\Sigma_{\bf S}$ and $\Sigma_{\bf S}$ is shown in Figure \ref{figure13}.
	\item The jumps are as follows,
	\begin{equation*}
		{\bf S}_+(\lambda)={\bf S}_-(\lambda)\e^{-\frac{v}{2}\sigma_3}\begin{bmatrix}0 & 1\\ -1 & 0\end{bmatrix}\e^{\frac{v}{2}\sigma_3},\ \ \ \lambda\in(0,1);\ \ \ \ \ \ {\bf S}_+(\lambda)={\bf S}_-(\lambda)\begin{bmatrix}0 & 1\\ -1 & 0\end{bmatrix},\ \ \lambda\in(1,+\infty);
	\end{equation*}
	\begin{equation*}
		{\bf S}_+(\lambda)={\bf S}_-(\lambda)\begin{bmatrix}1 & 0\\ \e^{\sqrt{t}\,\phi(\lambda)}\e^{v-\im\pi\alpha} & 1\end{bmatrix},\ \ \lambda\in\gamma^+;\ \ \ {\bf S}_+(\lambda)={\bf S}_-(\lambda)\begin{bmatrix}1 & 0\\ \e^{\sqrt{t}\,\phi(\lambda)}\e^{v+\im\pi\alpha} & 1\end{bmatrix},\ \ \lambda\in\gamma^-;
	\end{equation*}
	\begin{equation*}
		{\bf S}_+(\lambda)={\bf S}_-(\lambda)\begin{bmatrix}1 & 0\\ \e^{\sqrt{t}\,\phi(\lambda)-\im\pi\alpha} & 1\end{bmatrix},\ \ \textnormal{arg}(\lambda-1)=\frac{\pi}{3};
	\end{equation*}
	\begin{equation*}
		{\bf S}_+(\lambda)={\bf S}_-(\lambda)\begin{bmatrix}1 & 0\\ \e^{\sqrt{t}\,\phi(\lambda)+\im\pi\alpha} & 1\end{bmatrix},\ \ \textnormal{arg}(\lambda-1)=\frac{5\pi}{3}.
	\end{equation*}
	\item The singular behavior of ${\bf T}(\lambda)$ near $\lambda=0$ and $\lambda=1$ has to be adjusted according to \eqref{Bess:T3}, i.e. we have to multiply the local expansions by the appropriate right multipliers.
	\item The behavior near $\lambda=\infty$ remains unchanged from RHP \ref{Bess:T}, i.e.
	\begin{equation}\label{Bess:T4}
		{\bf S}(\lambda)=\Big\{\mathbb{I}+{\bf X}_1\lambda^{-1}+\mathcal{O}\left(\lambda^{-2}\right)\Big\}(-\lambda t)^{-\frac{1}{4}\sigma_3}\frac{1}{\sqrt{2}}\begin{bmatrix} 1 & -1\\ 1 & 1\end{bmatrix}\e^{-\im\frac{\pi}{4}\sigma_3},\ \ \ \ \lambda\rightarrow\infty,\ \ \lambda\notin[0,+\infty).
	\end{equation}
\end{enumerate}
\end{problem}
We have now reached the point where the problem is localized. Indeed, in view of Proposition \ref{Bess:gfunc}, we have for the jump matrix ${\bf G}_{{\bf S}}(\lambda;t,\alpha,v)$ away from $\lambda=0,1$, 
\begin{equation}\label{Bess:inf1}
	{\bf G}_{\bf S}(\lambda;t,\alpha,v)=\mathbb{I}+\mathcal{O}\left(\e^{v-c\sqrt{t|\lambda|}}\,\right),\ \ \ c>0,\ \ \ \lambda\in\gamma^+\cup\gamma^-
\end{equation}
and
\begin{equation}\label{Bess:inf2}
	{\bf G}_{\bf S}(\lambda;t,\alpha,v)=\mathbb{I}+\mathcal{O}\left(\e^{-d\sqrt{t}}\,\right),\ \ \ d>0,\ \ \ \ \textnormal{arg}(\lambda-1)=\frac{\pi}{3},\frac{5\pi}{3}.
\end{equation}
For this reason we now focus on the local analysis on $(0,+\infty)\subset\mathbb{R}$ and near $\lambda=0,1$. First, the parametrix ${\bf P}^{(\infty)}(\lambda)$ for the line segment $(0,\infty)\subset\mathbb{R}$ will obey the following conditions:
\begin{problem}\label{problem:1} Determine ${\bf P}^{(\infty)}(\lambda)={\bf P}^{(\infty)}(\lambda;t,v)\in\mathbb{C}^{2\times 2}$ such that
\begin{enumerate}
	\item ${\bf P}^{(\infty)}(\lambda)$ is analytic for $\lambda\in\mathbb{C}\backslash[0,\infty)$
	\item We require that ${\bf P}^{(\infty)}(\lambda)$ assumes square integrable boundary values on $[0,\infty)$ which satisfy the jump conditions
	\begin{equation*}
		{\bf P}^{(\infty)}_+(\lambda)={\bf P}^{(\infty)}_-(\lambda)\e^{-\frac{v}{2}\sigma_3}\begin{bmatrix}0 &1\\ -1 &0\end{bmatrix}\e^{\frac{v}{2}\sigma_3},\ \ \ \lambda\in(0,1);
	\end{equation*}
	and
	\begin{equation*}
		{\bf P}^{(\infty)}_+(\lambda)={\bf P}^{(\infty)}_-(\lambda)\begin{bmatrix}0 &1\\ -1 &0\end{bmatrix},\ \ \ \lambda\in(1,\infty).
	\end{equation*}
	\item As $\lambda\rightarrow\infty$ with $\lambda\notin[0,\infty)\subset\mathbb{R}$, see \eqref{Bess:T4},
	\begin{equation*}
		{\bf P}^{(\infty)}(\lambda)=\left\{\mathbb{I}+\frac{1}{\lambda}\begin{bmatrix}2\nu^2 & -\frac{2\im\nu}{\sqrt{t}}\\ \frac{2\im}{3}\nu(1-4\nu^2)\sqrt{t} & -2\nu^2\end{bmatrix}+\mathcal{O}\left(\lambda^{-2}\right)\right\}(-\lambda t)^{-\frac{1}{4}\sigma_3}\frac{1}{\sqrt{2}}\begin{bmatrix}1 & -1\\ 1 & 1\end{bmatrix}\e^{-\im\frac{\pi}{4}\sigma_3}.
	\end{equation*}
\end{enumerate}
\end{problem}
It is easy to check that
\begin{equation}\label{OP:1}
	{\bf P}^{(\infty)}(\lambda)=\begin{bmatrix}1 & 0\\ -2\im\nu\sqrt{t} & 1\end{bmatrix}(-\lambda t)^{-\frac{1}{4}\sigma_3}\frac{1}{\sqrt{2}}\begin{bmatrix}1 & -1\\ 1 & 1\end{bmatrix}\e^{-\im\frac{\pi}{4}\sigma_3}\big(\mathcal{D}(\lambda)\big)^{-\sigma_3},\ \ \ \ \lambda\in\mathbb{C}\backslash[0,\infty)
\end{equation}
with
\begin{equation}\label{OP:2}
	\mathcal{D}(\lambda)=\exp\left[(-\lambda)^{\frac{1}{2}}\frac{v}{2\pi}\int_0^1\frac{\d w}{\sqrt{w}\,(w-\lambda)}\right]=\left(\frac{(-\lambda)^{\frac{1}{2}}-\im}{(-\lambda)^{\frac{1}{2}}+\im}\right)^{\nu},\ \ \ \ \ \nu=\frac{v}{2\pi\im}\in\im\mathbb{R}
\end{equation}
solves the above problem, provided we choose principal branches for all fractional exponents in \eqref{OP:1} and \eqref{OP:2}. Next, for the parametrix ${\bf P}^{(0)}(\lambda)$ in a vicinity of $\lambda=0$, we require the following properties:
\begin{problem}\label{problem:2} Determine ${\bf P}^{(0)}(\lambda)={\bf P}^{(0)}(\lambda;t,\alpha,v)\in\mathbb{C}^{2\times 2}$ such that
\begin{enumerate}
	\item ${\bf P}^{(0)}(\lambda)$ is analytic for $\lambda\in\mathbb{D}_{\frac{1}{4}}(0)\backslash(\Sigma_{\bf S}\cup\{0\})$ with $\mathbb{D}_r(\lambda_0)=\{\lambda\in\mathbb{C}:\ |\lambda-\lambda_0|<r\}$.
	\item On the three contours near $\lambda=0$ (compare Figure \ref{figure13}), the function ${\bf P}^{(0)}(\lambda)$ behaves as follows,
	\begin{eqnarray*}
		{\bf P}^{(0)}_+(\lambda)&=&{\bf P}^{(0)}_-(\lambda)\e^{-\frac{v}{2}\sigma_3}\begin{bmatrix}0 & 1\\ -1 & 0\end{bmatrix}\e^{\frac{v}{2}\sigma_3},\ \ \ \ \lambda\in(0,1)\cap\mathbb{D}_{\frac{1}{4}}(0);\\
		{\bf P}^{(0)}_+(\lambda)&=&{\bf P}^{(0)}_-(\lambda)\e^{-\frac{v}{2}\sigma_3}\begin{bmatrix}1 & 0\\ \e^{\sqrt{t}\,\phi(\lambda)-\im\pi\alpha} & 1\end{bmatrix}\e^{\frac{v}{2}\sigma_3},\ \ \ \ \lambda\in\gamma^+\cap\mathbb{D}_{\frac{1}{4}}(0);\\
		{\bf P}^{(0)}_+(\lambda)&=&{\bf P}^{(0)}_-(\lambda)\e^{-\frac{v}{2}\sigma_3}\begin{bmatrix}1 & 0\\ \e^{\sqrt{t}\,\phi(\lambda)+\im\pi\alpha} & 1\end{bmatrix}\e^{\frac{v}{2}\sigma_3},\ \ \ \ \lambda\in\gamma^-\cap\mathbb{D}_{\frac{1}{4}}(0).
	\end{eqnarray*}
	\item As $\lambda\rightarrow 0$, the singular behavior of ${\bf P}^{(0)}(\lambda)$ matches the behavior of the function ${\bf S}(\lambda)$ as described in RHP \ref{Bess:S}, condition (3).
	\item As $t\rightarrow+\infty$ with $\gamma\in[0,1)$ fixed, the local functions ${\bf P}^{(\infty)}(\lambda)$ and ${\bf P}^{(0)}(\lambda)$ obey the matching
	\begin{equation*}
		{\bf P}^{(0)}(\lambda)\sim\left\{\mathbb{I}+\sum_{m=1}^{\infty}{\bf E}^{(0)}(\lambda)\mathcal{S}_m(\alpha)\big({\bf E}^{(0)}(\lambda)\big)^{-1}(-\lambda t)^{-m}\right\}{\bf P}^{(\infty)}(\lambda),
	\end{equation*}
	with (compare RHP \ref{Bessbetter} in Appendix \ref{AppB})
	\begin{equation*}
		\mathcal{S}_m(\alpha)=\begin{bmatrix}a_{2m}(\alpha) & -a_{2m-1}(\alpha)\\ b_1(\alpha)a_{2m}(\alpha)-b_{2m+1}(\alpha) & b_{2m}(\alpha)-b_1(\alpha)a_{2m-1}(\alpha)\end{bmatrix}
	\end{equation*}
	which holds uniformly for $0<r_1\leq|\lambda|\leq r_2<\frac{1}{4}$ where $r_1$ and $r_2$ are fixed. The multiplier ${\bf E}^{(0)}(\lambda)$ is defined in \eqref{OP:4} below.
\end{enumerate}
\end{problem}
A solution to this problem is most easily constructed by recalling RHP \ref{Bessbetter}, or equivalently \eqref{Bessundress}: we define
\begin{equation}\label{OP:3}
	{\bf P}^{(0)}(\lambda)={\bf E}^{(0)}(\lambda)\begin{bmatrix}1 & 0\\ b_1(\alpha) & 1\end{bmatrix}\Psi(\lambda t;\alpha)\e^{-\sqrt{t}\,g(\lambda)\sigma_3}\e^{\frac{v}{2}\sigma_3},\ \ \ \ \lambda\in\mathbb{D}_{\frac{1}{4}}(0)\backslash(\Sigma_{\bf S}\cup\{0\})
\end{equation}
using the locally analytic multiplier
\begin{equation}\label{OP:4}
	{\bf E}^{(0)}(\lambda)={\bf P}^{(\infty)}(\lambda)\e^{-\frac{v}{2}\sigma_3}\e^{\im\frac{\pi}{4}\sigma_3}\frac{1}{\sqrt{2}}\begin{bmatrix}1 & 1\\ -1 & 1\end{bmatrix}(-\lambda t)^{\frac{1}{4}\sigma_3},\ \ \ \lambda\in\mathbb{D}_{\frac{1}{4}}(0).
\end{equation}
\begin{rem} As $\lambda\rightarrow 0$ we have
\begin{equation*}
	{\bf E}^{(0)}(\lambda)=t^{-\frac{1}{4}\sigma_3}\begin{bmatrix}1 & -2\im\nu\\ -2\im\nu & 1-4\nu^2\end{bmatrix}\left\{\mathbb{I}+\lambda\begin{bmatrix}-2\nu^2 & -\frac{2\im}{3}\nu(1-4\nu^2)\\ 2\im\nu & 2\nu^2\end{bmatrix}+\mathcal{O}\left(\lambda^2\right)\right\}t^{\frac{1}{4}\sigma_3}
\end{equation*}
and from \eqref{OP:1} we see that ${\bf E}^{(0)}(\lambda)=t^{-\frac{1}{4}\sigma_3}\widehat{{\bf E}}^{(0)}(\lambda)t^{\frac{1}{4}\sigma_3}$ with $\widehat{{\bf E}}^{(0)}(\lambda)$ independent of $t$.
\end{rem}
Our final parametrix near $\lambda=1$ obeys the following conditions:
\begin{problem}\label{problem:3} Find ${\bf P}^{(1)}(\lambda)={\bf P}^{(1)}(\lambda;t,\alpha,v)\in\mathbb{C}^{2\times 2}$ such that
\begin{enumerate}
	\item ${\bf P}^{(1)}(\lambda)$ is analytic for $\lambda\in\mathbb{D}_{\frac{1}{4}}(1)\backslash(\Sigma_{\bf S}\cup\{1\})$, see Figure \ref{figure13} for orientations.
	\item Along $\Sigma_S$, the limiting values ${\bf P}^{(1)}_{\pm}(\lambda)$ are square integrable and
	\begin{align*}
		{\bf P}^{(1)}_+(\lambda)=&\,{\bf P}^{(1)}_-(\lambda)\begin{bmatrix}1 & 0\\ \e^{\sqrt{t}\,\phi(\lambda)-\im\pi\alpha} & 1\end{bmatrix},\ \ \ \lambda\in\left\{\lambda\in\mathbb{C}:\ \textnormal{arg}(\lambda-1)=\frac{\pi}{3}\right\}\cap\mathbb{D}_{\frac{1}{4}}(1);\\
		{\bf P}^{(1)}_+(\lambda)=&\,{\bf P}^{(1)}_-(\lambda)\begin{bmatrix}0 & 1\\ -1 & 0\end{bmatrix},\ \ \ \lambda\in\left\{\lambda\in\mathbb{C}:\ \textnormal{arg}(\lambda-1)=0\right\}\cap\mathbb{D}_{\frac{1}{4}}(1);\\
		{\bf P}^{(1)}_+(\lambda)=&\,{\bf P}^{(1)}_-(\lambda)\begin{bmatrix}1 & 0\\ \e^{\sqrt{t}\,\phi(\lambda)+\im\pi\alpha} & 1\end{bmatrix},\ \ \ \lambda\in\left\{\lambda\in\mathbb{C}:\ \textnormal{arg}(\lambda-1)=\frac{5\pi}{3}\right\}\cap\mathbb{D}_{\frac{1}{4}}(1).
	\end{align*}
	Moreover,
	\begin{align*}
		{\bf P}^{(1)}_+(\lambda)=&{\bf P}^{(1)}_-(\lambda)\e^{-\frac{v}{2}\sigma_3}\begin{bmatrix}1 & 0\\ \e^{\sqrt{t}\,\phi(\lambda)\mp\im\pi\alpha} & 1\end{bmatrix}\e^{\frac{v}{2}\sigma_3},\ \ \ \lambda\in\gamma^{\pm}\cap\mathbb{D}_{\frac{1}{4}}(1);\\
		{\bf P}^{(1)}_+(\lambda)=&\,{\bf P}^{(1)}_-(\lambda)\e^{-\frac{v}{2}\sigma_3}\begin{bmatrix}0 & 1\\ -1 & 0\end{bmatrix}\e^{\frac{v}{2}\sigma_3},\ \ \ \lambda\in\left\{\lambda\in\mathbb{C}:\ \textnormal{arg}(\lambda-1)=\pi\right\}\cap\mathbb{D}_{\frac{1}{4}}(1).
	\end{align*}
	\item Near $\lambda=1$, the singular behavior of ${\bf P}^{(1)}(\lambda)$ is exactly of the form given in RHP \ref{Bess:S}, condition (3).
	\item As $t\rightarrow\infty$ with $\gamma\in[0,1)$ fixed, the two model functions ${\bf P}^{(\infty)}(\lambda)$ and ${\bf P}^{(1)}(\lambda)$ are related via
	\begin{equation*}
		{\bf P}^{(1)}(\lambda)\sim\left\{\mathbb{I}+\sum_{m=1}^{\infty}{\bf E}^{(1)}(\lambda)\mathcal{U}_m(\nu)\big({\bf E}^{(1)}(\lambda)\big)^{-1}\big(\im\z(\lambda)\big)^{-m}\right\}{\bf P}^{(\infty)}(\lambda),
	\end{equation*}
	with (compare RHP \ref{confluprob} below)
	\begin{equation*}
		\mathcal{U}_m(\nu)=\e^{\im\frac{\pi}{2}\nu\sigma_3}\begin{bmatrix}((\nu)_m)^2 & (-1)^mm((1-\nu)_{m-1})^2\frac{\Gamma(1-\nu)}{\Gamma(\nu)}\\ m\,((1+\nu)_{m-1})^2\frac{\Gamma(1+\nu)}{\Gamma(-\nu)} & (-1)^m((-\nu)_m)^2\end{bmatrix}\e^{-\im\frac{\pi}{2}\nu\sigma_3}\frac{1}{m!}
	\end{equation*}
	which holds uniformly for $0<r_1\leq|\lambda-1|\leq r_2<\frac{1}{4}$ with fixed $r_1$ and $r_2$. The multiplier ${\bf E}^{(1)}(\lambda)$ is defined in \eqref{OP:9} below and the local change of coordinates $\z=\z(\lambda)$ in \eqref{OP:8}.
\end{enumerate}
\end{problem}
The last problem is solved in terms of the confluent hypergeometric function $U(a,\z)\equiv U(a,1,\z)$, see \cite{NIST}, and our construction makes use of the model function $\Phi(\z)$ described in \eqref{OP:6} below, see again Appendix \ref{AppB}.
In more detail, define
\begin{equation}\label{OP:7}
	{\bf P}^{(1)}(\lambda)={\bf E}^{(1)}(\lambda)\Phi\big(\z(\lambda)\big)\begin{cases}\e^{-\im\frac{\pi}{2}\alpha\sigma_3}\e^{\frac{\im}{2}(\z(\lambda)+2\sqrt{t})\sigma_3},&\lambda\in\mathbb{D}_{\frac{1}{4}}(1)\backslash\Sigma_{\bf S}:\ \ \textnormal{arg}\,\lambda\in(0,\pi)\smallskip\\ \e^{\im\frac{\pi}{2}\alpha\sigma_3}\e^{-\frac{\im}{2}(\z(\lambda)+2\sqrt{t})\sigma_3},&\lambda\in\mathbb{D}_{\frac{1}{4}}(1)\backslash\Sigma_{\bf S}:\ \ \textnormal{arg}\,\lambda\in(\pi,2\pi)\end{cases}\,\times\e^{\im\frac{\pi}{2}(\nu+1)\sigma_3},
\end{equation}
where
\begin{equation}\label{OP:8}
	\lambda\in\mathbb{D}_{\frac{1}{4}}(1):\ \ \ \z(\lambda)=2\sqrt{t}\big(\im\,\textnormal{sgn}(\Im\lambda)g(\lambda)-1\big)=\sqrt{t}\,(\lambda-1)\left(1-\frac{1}{4}(\lambda-1)+\mathcal{O}\big((\lambda-1)^2\big)\right),\ \ \lambda\rightarrow 1,
\end{equation}
and
\begin{equation}\label{OP:9}
	{\bf E}^{(1)}(\lambda)={\bf P}^{(\infty)}(\lambda)\e^{-\im\frac{\pi}{2}(\nu+1)\sigma_3}\begin{cases}
	\e^{-\im\sqrt{t}\,\sigma_3}\e^{\im\frac{\pi}{2}\alpha\sigma_3}\Bigl[\begin{smallmatrix}0 & \e^{\im\frac{\pi}{2}\nu}\\ -\e^{-\im\frac{3\pi}{2}\nu} & 0\end{smallmatrix}\Bigr]\big(\z(\lambda)\big)^{\nu\sigma_3},&\lambda\in\mathbb{D}_{\frac{1}{4}}(1):\ \textnormal{arg}\,(\lambda-1)\in(\frac{\pi}{2},\pi)\smallskip\\
	\e^{-\im\sqrt{t}\,\sigma_3}\e^{\im\frac{\pi}{2}\alpha\sigma_3}\Bigl[\begin{smallmatrix} 0 & \e^{\im\frac{5\pi}{2}\nu}\\ -\e^{-\im\frac{7\pi}{2}\nu} & 0\end{smallmatrix}\Bigr]\big(\z(\lambda)\big)^{\nu\sigma_3},&\lambda\in\mathbb{D}_{\frac{1}{4}}(1):\ \textnormal{arg}\,(\lambda-1)\in(2\pi,\frac{5\pi}{2})\smallskip\\
	\e^{\im\sqrt{t}\,\sigma_3}\e^{-\im\frac{\pi}{2}\alpha\sigma_3}\Bigl[\begin{smallmatrix}\e^{-\im\frac{5\pi}{2}\nu} & 0\\ 0 & \e^{\im\frac{3\pi}{2}\nu}\end{smallmatrix}\Bigr]\big(\z(\lambda)\big)^{\nu\sigma_3},&\lambda\in\mathbb{D}_{\frac{1}{4}}(1):\ \textnormal{arg}\,(\lambda-1)\in(\pi,2\pi)\end{cases}
\end{equation}
are both analytic at $\lambda=1$ (note that $\z^{\nu}$ is defined with a cut on the positive imaginary $\z$-axis, see \eqref{OP:5} below). Once we recall RHP \ref{confluprob} it is easy to verify that \eqref{OP:7} has all the properties required in RHP \ref{problem:3}.
\begin{rem} Using \eqref{OP:9} we derive the following Taylor expansion of ${\bf E}^{(1)}(\lambda)$ near $\lambda=1$,
\begin{align*}
	{\bf E}^{(1)}(\lambda)=&\,t^{-\frac{1}{4}\sigma_3}\begin{bmatrix}1 & 0\\ -2\im\nu & 1\end{bmatrix}\frac{1}{\sqrt{2}}\begin{bmatrix}-\im & -1\\ 1 & \im\end{bmatrix}4^{\nu\sigma_3}\e^{-\im\frac{\pi}{2}(\nu+1)\sigma_3}\e^{\im\sqrt{t}\,\sigma_3}\e^{-\im\frac{\pi}{2}\alpha\sigma_3}t^{\frac{\nu}{2}\sigma_3}\e^{-\im\frac{\pi}{2}\nu}\\
	&\,\times\left\{\mathbb{I}+\frac{1}{4}(\lambda-1)\begin{bmatrix}\nu & -\im\,16^{-\nu}\e^{\im\pi(\nu+\alpha)}t^{-\nu}\e^{-2\im\sqrt{t}}\\ \im\,16^{\nu}\e^{-\im\pi(\nu+\alpha)}t^{\nu}\e^{2\im\sqrt{t}} & -\nu\end{bmatrix}+\mathcal{O}\big((\lambda-1)^2\big)\right\},\ \ \lambda\rightarrow 1.
\end{align*}
\end{rem}
This concludes the local analysis and we now compare \eqref{OP:1}, \eqref{OP:3} and \eqref{OP:7} to the function ${\bf S}(\lambda)$. Introduce
\begin{equation}\label{OP:10}
	{\bf R}(\lambda)={\bf S}(\lambda)\begin{cases}\big({\bf P}^{(0)}(\lambda)\big)^{-1},&\lambda\in\mathbb{D}_r(0)\\ \big({\bf P}^{(1)}(\lambda)\big)^{-1},&\lambda\in\mathbb{D}_r(1)\\ \big({\bf P}^{(\infty)}(\lambda)\big)^{-1},&\lambda\notin\big(\mathbb{D}_r(0)\cup\mathbb{D}_r(1)\big)\end{cases},
\end{equation}
where $0<r<\frac{1}{4}$ is kept fixed. In view of RHP \ref{Bess:S}, \ref{problem:1}, \ref{problem:2} and \ref{problem:3}, we derive the following RHP for the ratio function \eqref{OP:10}.
\begin{problem}\label{Bess:ratioRHP} Find ${\bf R}(\lambda)={\bf R}(\lambda;t,\alpha,v)\in\mathbb{C}^{2\times 2}$ such that
\begin{enumerate}
	\item ${\bf R}(\lambda)$ is analytic for $\lambda\in\mathbb{C}\backslash\Sigma_{\bf R}$ and assumes square-integrable boundary values ${\bf R}_{\pm}(\lambda)$ on the oriented contour
	\begin{equation*}
		\Sigma_{\bf R}=\partial\mathbb{D}_r(0)\cup\partial\mathbb{D}_r(1)\cup\bigg(\Big(\widehat{\gamma}^+\cup\widehat{\gamma}^-\cup\left\{\textnormal{arg}(\lambda-1)=\frac{\pi}{3},\frac{5\pi}{3}\right\}\Big)\cap\big\{\lambda\in\mathbb{C}:\ |\lambda|>r,\ |\lambda-1|>r\big\}\bigg)
	\end{equation*}
	shown in Figure \ref{figure14} below.
	\item We have ${\bf R}_+(\lambda)={\bf R}_-(\lambda){\bf G}_{\bf R}(\lambda),\lambda\in\Sigma_{\bf R}$ where
	\begin{equation*}
		{\bf G}_{\bf R}(\lambda)={\bf P}^{(0)}(\lambda)\big({\bf P}^{(\infty)}(\lambda)\big)^{-1},\ \ \lambda\in\partial\mathbb{D}_r(0);\ \ \ {\bf G}_{\bf R}(\lambda)={\bf P}^{(1)}(\lambda)\big({\bf P}^{(\infty)}(\lambda)\big)^{-1},\ \ \lambda\in\partial\mathbb{D}_r(1),
	\end{equation*}
	followed by
	\begin{equation*}
		{\bf G}_{\bf R}(\lambda)={\bf P}^{(\infty)}(\lambda)\begin{bmatrix}1 & 0\\ \e^{\sqrt{t}\,\phi(\lambda)}\e^{v\mp\im\pi\alpha} & 1\end{bmatrix}\big({\bf P}^{(\infty)}(\lambda)\big)^{-1},\ \ \ \lambda\in\widehat{\gamma}^{\pm},
	\end{equation*}
	and concluding with
	\begin{equation*}
		{\bf G}_{\bf R}(\lambda)={\bf P}^{(\infty)}(\lambda)\begin{bmatrix}1 & 0\\ \e^{\sqrt{t}\,\phi(\lambda)\mp\im\pi\alpha}&1\end{bmatrix}\big({\bf P}^{(\infty)}(\lambda)\big)^{-1},\ \ \lambda\in\left\{\textnormal{arg}(\lambda-1)=\frac{\pi}{3},\frac{5\pi}{3}\right\}\bigg\backslash\big(\mathbb{D}_r(0)\cup\mathbb{D}_r(1)\big).
	\end{equation*}
	\begin{figure}[tbh]
	\begin{center}
\resizebox{0.45\textwidth}{!}{\includegraphics{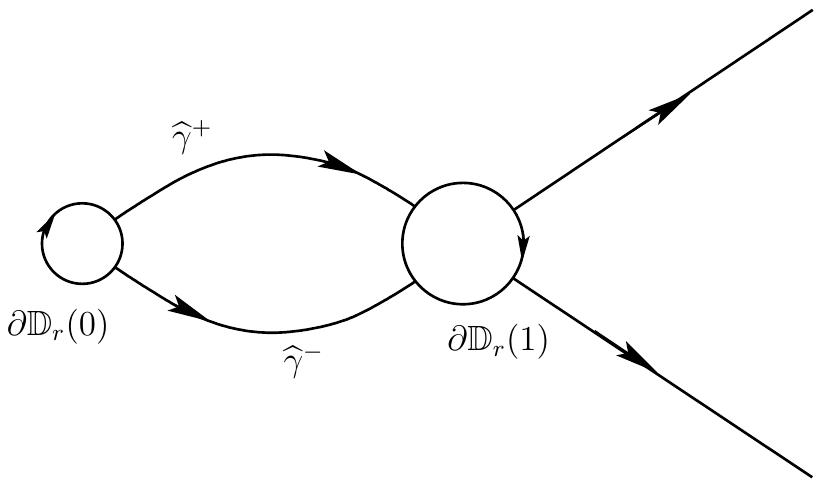}}
\caption{The oriented jump contours for the ratio function ${\bf R}(\lambda;t,\alpha,v)$ in the complex $\lambda$-plane.}
\label{figure14}
\end{center}
\end{figure}
	\item As $\lambda\rightarrow\infty$, we have ${\bf R}(\lambda)\rightarrow\mathbb{I}$.
\end{enumerate}
\end{problem}
Upon return to RHP \ref{problem:2} and RHP \ref{problem:3} we see that
\begin{equation*}
	\widehat{{\bf G}}_{\bf R}(\lambda;t,\alpha,v)=t^{\frac{1}{4}\sigma_3}{\bf G}_{\bf R}(\lambda;t,\alpha,v)t^{-\frac{1}{4}\sigma_3},\ \ \lambda\in\Sigma_{\bf R}
\end{equation*}
satisfies the following small norm estimate
\begin{prop} For any fixed $\alpha>-1,v\in[0,+\infty)$ there exist $t_0=t_0(\alpha,v)>0$ and $c=c(v)>0$ such that
\begin{equation*}
	\|\widehat{{\bf G}}_{\bf R}(\cdot;t,\alpha,v)-\mathbb{I}\|_{L^2\cap L^{\infty}(\partial\mathbb{D}_r(0)\cup\partial\mathbb{D}_r(1))}\leq\frac{c}{\sqrt{t}},\ \ \ \ \ \forall\,t\geq t_0.
\end{equation*}
\end{prop}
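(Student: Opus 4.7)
The plan is to derive the bound directly from the matching conditions stated in RHP \ref{problem:2}(4) and RHP \ref{problem:3}(4), carefully tracking how the conjugation by $t^{\frac{1}{4}\sigma_3}$ interacts with the $t$-scaling already built into the multipliers ${\bf E}^{(0)}(\lambda)$ and ${\bf E}^{(1)}(\lambda)$. Nothing beyond these two matchings and the explicit structure of ${\bf P}^{(\infty)}$ in \eqref{OP:1} is needed.

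On $\partial\mathbb{D}_r(0)$ I would use the factorization ${\bf E}^{(0)}(\lambda)=t^{-\frac{1}{4}\sigma_3}\widehat{{\bf E}}^{(0)}(\lambda)t^{\frac{1}{4}\sigma_3}$ recorded in the remark after \eqref{OP:4}, with $\widehat{{\bf E}}^{(0)}$ independent of $t$ and uniformly bounded with uniformly bounded inverse on the circle. Substituting into the matching series in RHP \ref{problem:2}(4), the conjugation by $t^{\frac{1}{4}\sigma_3}$ reduces to computing $t^{\frac{1}{4}\sigma_3}\mathcal{S}_m(\alpha)t^{-\frac{1}{4}\sigma_3}$, which preserves the diagonal of $\mathcal{S}_m(\alpha)$ but multiplies its $(1,2)$ and $(2,1)$ entries by $t^{+\frac{1}{2}}$ and $t^{-\frac{1}{2}}$ respectively. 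The $m=1$ term, combined with the decay $(-\lambda t)^{-1}$ and $|\lambda|=r$, then yields entries of order $t^{-1/2}$ (worst case: the $(1,2)$ position), $t^{-1}$ (diagonal), and $t^{-3/2}$ (the $(2,1)$ position), while all higher-$m$ terms gain an additional $t^{-1}$ each. This gives the $\mathcal{O}(t^{-1/2})$ bound in $L^\infty$ along $\partial\mathbb{D}_r(0)$, and the $L^2$ bound follows by multiplying by the finite arc length.

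On $\partial\mathbb{D}_r(1)$ the key observation is that $|\z(\lambda)|\geq c_0 r\sqrt{t}$ uniformly on the circle once $t$ is large, hence $(\im\z(\lambda))^{-m}=\mathcal{O}(t^{-m/2})$. Using \eqref{OP:9} together with the explicit expression \eqref{OP:1} for ${\bf P}^{(\infty)}$, one verifies that $t^{\frac{1}{4}\sigma_3}{\bf E}^{(1)}(\lambda)t^{-\frac{1}{4}\sigma_3}$ stays uniformly bounded on $\partial\mathbb{D}_r(1)$ with uniformly bounded inverse: the $\sqrt{t}$-factor appearing in the $(2,1)$-entry of the prefactor of ${\bf P}^{(\infty)}$ is compensated exactly by the conjugation, while the remaining $t$-dependence resides only in the unimodular oscillating factors $\e^{\pm\im\sqrt{t}\,\sigma_3}$ and $t^{\frac{\nu}{2}\sigma_3}$ (bounded since $\nu\in\im\mathbb{R}$). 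With this bound in hand, the matching series in RHP \ref{problem:3}(4) delivers the uniform estimate $\mathcal{O}(t^{-1/2})$ on $\partial\mathbb{D}_r(1)$, the $L^2$ bound again following from the finite contour length.

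The main obstacle, and the reason the rate is $t^{-1/2}$ rather than $t^{-1}$, is the bookkeeping on $\partial\mathbb{D}_r(0)$: the matching series would naively yield an $\mathcal{O}(t^{-1})$ bound on ${\bf G}_{\bf R}-\mathbb{I}$, but the $\sqrt{t}$-dilation of the $(1,2)$-entry of $\mathcal{S}_1(\alpha)$ under conjugation by $t^{\frac{1}{4}\sigma_3}$ degrades the estimate to $\mathcal{O}(t^{-1/2})$. This loss is precisely matched by the $\mathcal{O}(t^{-1/2})$ contribution inherent to $\partial\mathbb{D}_r(1)$ through the change of variable $\z(\lambda)\sim\sqrt{t}(\lambda-1)$, producing the common rate claimed in the proposition.
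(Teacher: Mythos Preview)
Your approach matches the paper's: the proposition is stated there as a direct consequence of the matching conditions in RHP~\ref{problem:2}(4) and RHP~\ref{problem:3}(4), without further detail, and you have correctly identified these as the essential inputs. Your analysis on $\partial\mathbb{D}_r(0)$ is clean and correct.

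There is, however, a concrete slip in your treatment of $\partial\mathbb{D}_r(1)$. The assertion that $t^{\frac{1}{4}\sigma_3}{\bf E}^{(1)}(\lambda)t^{-\frac{1}{4}\sigma_3}$ stays uniformly bounded is false. From the Taylor expansion in the remark after \eqref{OP:9} (or directly from \eqref{OP:1} and \eqref{OP:9}) one has ${\bf E}^{(1)}(\lambda)=t^{-\frac{1}{4}\sigma_3}\widehat{{\bf E}}^{(1)}(\lambda)$ with $\widehat{{\bf E}}^{(1)}$ and its inverse bounded uniformly in $t$ on the circle; consequently
\[
t^{\frac{1}{4}\sigma_3}{\bf E}^{(1)}(\lambda)t^{-\frac{1}{4}\sigma_3}=\widehat{{\bf E}}^{(1)}(\lambda)\,t^{-\frac{1}{4}\sigma_3},
\]
whose second column carries an unbounded factor $t^{1/4}$. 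What \emph{is} true, and what you actually need, is the one–sided statement that $t^{\frac{1}{4}\sigma_3}{\bf E}^{(1)}(\lambda)=\widehat{{\bf E}}^{(1)}(\lambda)$ is bounded with bounded inverse. Then
\[
t^{\frac{1}{4}\sigma_3}\Big({\bf E}^{(1)}\mathcal{U}_m(\nu)\big({\bf E}^{(1)}\big)^{-1}\Big)t^{-\frac{1}{4}\sigma_3}
=\widehat{{\bf E}}^{(1)}\,\mathcal{U}_m(\nu)\,\big(\widehat{{\bf E}}^{(1)}\big)^{-1},
\]
which is bounded because $\mathcal{U}_m(\nu)$ is $t$-independent. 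Combined with $\big(\im\z(\lambda)\big)^{-m}=\mathcal{O}(t^{-m/2})$ this yields the desired $\mathcal{O}(t^{-1/2})$ bound. Had your stated route been followed literally, you would have been left to control $t^{\frac{1}{4}\sigma_3}\mathcal{U}_m(\nu)t^{-\frac{1}{4}\sigma_3}$, whose $(1,2)$ entry grows like $t^{1/2}$ and would cancel the decay from $\z^{-1}$, giving only $\mathcal{O}(1)$. So the correction is not cosmetic: replace the two-sided conjugation claim by the one-sided factorisation ${\bf E}^{(1)}=t^{-\frac{1}{4}\sigma_3}\widehat{{\bf E}}^{(1)}$, exactly parallel to what you did at the origin.
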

Moreover, recalling \eqref{Bess:inf1} and \eqref{Bess:inf2} together with \eqref{OP:1} we see that $\widehat{{\bf G}}_{\bf R}(\lambda;t,\alpha,v)$ is exponentially close to the identity matrix (as $t\rightarrow+\infty$ and $v\in[0,+\infty),\alpha\in(-1,+\infty)$ are fixed) on $\widehat{\gamma}^{\pm}$ and the two contours extending to infinity, see Figure \ref{figure14}. Thus all together, cf. \cite{DZ}, we have 
\begin{theo}\label{DZBess} Given $\alpha>-1,v\in[0,+\infty)$ there exist $t_0=t_0(\alpha,v)$ and $c=c(\alpha,v)$ positive such that RHP \ref{Bess:ratioRHP} is uniquely solvable in $L^2(\Sigma_{\bf R})$ for all $t\geq t_0$. Its solution ${\bf R}(\lambda)={\bf R}(\lambda;t,\alpha,v)$ can be computed iteratively via the integral equation
\begin{equation*}
	\widehat{{\bf R}}(\lambda)=\mathbb{I}+\frac{1}{2\pi\im}\int_{\Sigma_{\bf R}}\widehat{{\bf R}}_-(w)\big(\widehat{{\bf G}}_{\bf R}(w)-\mathbb{I}\big)\frac{\d w}{w-\lambda},\ \ \lambda\in\mathbb{C}\backslash\Sigma_{\bf R};\ \ \ \ \widehat{{\bf R}}(\lambda;t,\alpha,v)=t^{\frac{1}{4}\sigma_3}{\bf R}(\lambda;t,\alpha,v)t^{-\frac{1}{4}\sigma_3}
\end{equation*}
using the estimate
\begin{equation*}
	\|\widehat{{\bf R}}_-(\cdot;t,\alpha,v)-\mathbb{I}\|_{L^2(\Sigma_{\bf R})}\leq \frac{c}{\sqrt{t}},\ \ \ \ \forall\,t\geq t_0.
\end{equation*}
\end{theo}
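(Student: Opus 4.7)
The plan is to reduce Riemann-Hilbert problem \ref{Bess:ratioRHP} to a singular integral equation on $\Sigma_{\bf R}$ and then invert it by a Neumann series, which is the standard Deift-Zhou small-norm scheme. The crucial preparatory step is the conjugation $\widehat{\bf R}(\lambda)=t^{\frac{1}{4}\sigma_3}{\bf R}(\lambda)t^{-\frac{1}{4}\sigma_3}$: the point is that the jump matrix ${\bf G}_{\bf R}$ itself is \emph{not} uniformly close to $\mathbb{I}$ because ${\bf P}^{(\infty)}(\lambda)$ from \eqref{OP:1} contains a factor $t^{-\frac{1}{4}\sigma_3}$, which produces unbounded entries when compared with ${\bf P}^{(0)},{\bf P}^{(1)}$. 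The conjugation by $t^{\frac{1}{4}\sigma_3}$ precisely neutralizes this scaling.

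First I would combine the already established small-norm estimate
\begin{equation*}
\|\widehat{{\bf G}}_{\bf R}(\cdot;t,\alpha,v)-\mathbb{I}\|_{L^{2}\cap L^{\infty}(\partial\mathbb{D}_r(0)\cup\partial\mathbb{D}_r(1))}\leq \frac{c}{\sqrt{t}},
\end{equation*}
with the exponential decay bounds \eqref{Bess:inf1}--\eqref{Bess:inf2} (transferred to $\widehat{{\bf G}}_{\bf R}$ using the explicit conjugated form of ${\bf P}^{(\infty)}$, which is polynomially bounded after the $t^{\frac{1}{4}\sigma_3}$ twist) to conclude that
\begin{equation*}
\|\widehat{{\bf G}}_{\bf R}(\cdot;t,\alpha,v)-\mathbb{I}\|_{L^{2}\cap L^{\infty}(\Sigma_{\bf R})}\leq \frac{c(\alpha,v)}{\sqrt{t}},\qquad t\geq t_0(\alpha,v),
\end{equation*}
with the contribution from the lens boundaries $\widehat{\gamma}^{\pm}$ and the rays $\operatorname{arg}(\lambda-1)=\pi/3,5\pi/3$ being, in fact, exponentially small.

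Second, I would set up the operator-theoretic framework. Denote by $C_{\pm}$ the boundary values of the Cauchy operator on $L^{2}(\Sigma_{\bf R})$, which are bounded because $\Sigma_{\bf R}$ is a union of smooth oriented arcs meeting transversally at finitely many self-intersection points. Define
\begin{equation*}
C_{\widehat{\bf G}}f:=C_{-}\bigl(f\,(\widehat{{\bf G}}_{\bf R}-\mathbb{I})\bigr),\qquad f\in L^{2}(\Sigma_{\bf R}).
\end{equation*}
The estimate above yields $\|C_{\widehat{\bf G}}\|_{L^{2}\to L^{2}}\leq c'(\alpha,v)/\sqrt{t}$, so after possibly enlarging $t_0$ the operator $\mathbb{I}-C_{\widehat{\bf G}}$ is invertible by a Neumann series with $\|(\mathbb{I}-C_{\widehat{\bf G}})^{-1}\|\leq 2$. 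The $L^{2}$ solution of the jump problem for $\widehat{\bf R}$ is then
\begin{equation*}
\widehat{\bf R}_{-}-\mathbb{I}=(\mathbb{I}-C_{\widehat{\bf G}})^{-1}C_{-}(\widehat{{\bf G}}_{\bf R}-\mathbb{I}),
\end{equation*}
which immediately gives $\|\widehat{\bf R}_{-}-\mathbb{I}\|_{L^{2}(\Sigma_{\bf R})}\leq c/\sqrt{t}$; the asserted Cauchy representation for $\widehat{\bf R}(\lambda)$ off the contour, as well as that for ${\bf R}(\lambda)$ after undoing the $t^{\frac{1}{4}\sigma_3}$-conjugation, follows. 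Uniqueness in $L^{2}$ is obtained from the standard vanishing lemma: any other solution would yield, by division, a bounded entire function vanishing at infinity, hence identically zero.

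The main technical obstacle is \emph{not} the inversion itself, which is mechanical once the small-norm estimate is in place, but rather ensuring that $C_{\pm}$ are $L^{2}$-bounded on $\Sigma_{\bf R}$ despite the presence of the three triple junctions located on $\partial\mathbb{D}_r(0)$ and $\partial\mathbb{D}_r(1)$ where the lens boundaries and the rays emanating from $1$ meet the circles. One must verify that the jump matrices satisfy the usual cyclicity/consistency condition at these intersection points (i.e.\ the product of jumps around a full loop equals $\mathbb{I}$), which is guaranteed here because $\widehat{\bf R}$ has no jump in the interior of either disk by construction of the parametrices ${\bf P}^{(0)}$, ${\bf P}^{(1)}$ in RHPs \ref{problem:2} and \ref{problem:3}. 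With this checked, the framework of \cite{DZ} applies verbatim and yields the statement of Theorem \ref{DZBess}.
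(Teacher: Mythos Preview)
Your proposal is correct and follows essentially the same approach as the paper: the paper simply states the small-norm estimate for $\widehat{\bf G}_{\bf R}$ on the two circles, notes the exponential decay on the remaining contours via \eqref{Bess:inf1}, \eqref{Bess:inf2} and \eqref{OP:1}, and then cites \cite{DZ} for the entire operator-theoretic machinery that you have spelled out in detail. Your discussion of the Cauchy operator, the Neumann series, and the consistency at junction points is exactly what is hidden behind the paper's ``cf.~\cite{DZ}''.
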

At this point we return to \eqref{Bess:f1} and \eqref{Bess:f2}.
\subsection{Extraction of asymptotics and proof of expansion \eqref{JME:33}} We split this subsection into several parts.
\subsubsection{Preliminary expansions} Recall the explicit and invertible transformation sequence
\begin{equation*}
	{\bf Y}(\lambda)\mapsto {\bf X}(\lambda)\mapsto{\bf T}(\lambda)\mapsto{\bf S}(\lambda)\mapsto{\bf R}(\lambda)
\end{equation*}
which leads us to the exact identity
\begin{align}
	{\bf Y}_1=\begin{bmatrix}1 & 0\\ -b_1(\alpha)& 1\end{bmatrix}&\,\Bigg\{\begin{bmatrix}2\nu^2&-\frac{2\im\nu}{\sqrt{t}}\\ \frac{2\im}{3}\nu(1-4\nu^2)\sqrt{t} & -2\nu^2\end{bmatrix}+\frac{1}{t}\begin{bmatrix}a_2(\alpha)&-a_1(\alpha)\\ b_1(\alpha)a_2(\alpha)-b_3(\alpha) & -a_2(\alpha)\end{bmatrix}\nonumber\\
	&+\frac{\im}{2\pi}\int_{\Sigma_{\bf R}}{\bf R}_-(w)\big({\bf G}_{\bf R}(w)-\mathbb{I}\big)\,\d w\Bigg\}\begin{bmatrix}1 & 0\\ b_1(\alpha)&1\end{bmatrix},\label{Y1exact}
\end{align}
where the coefficients $a_k(\alpha),b_k(\alpha)$ are defined in RHP \ref{Bessbetter} below. Define
\begin{equation*}
	{\bf N}=\frac{\im}{2\pi}\int_{\Sigma_{\bf R}}\big({\bf G}_{\bf R}(w)-\mathbb{I}\big)\,\d w,\ \ \ \ {\bf N}=\big(N^{jk}\big)_{j,k=1}^2
\end{equation*}
so that from a residue computation (using RHP \ref{problem:2} and \ref{problem:3})
\begin{prop}\label{P:1} As $t\rightarrow+\infty$,
\begin{eqnarray*}
	N^{11}&=&\frac{2\im\nu}{\sqrt{t}}a_1(\alpha)-\frac{\im\nu}{\sqrt{t}}\sin\big(\eta(t,\alpha,v)\big)+\frac{2\nu^2}{\sqrt{t}}\big(\cos\big(\eta(t,\alpha,v)\big)-\im\nu)+\mathcal{O}\big(t^{-1}\big)\\
	N^{12}&=&\frac{1}{t}a_1(\alpha)-\frac{\nu^2}{t}-\frac{\im\nu}{t}\cos\big(\eta(t,\alpha,v)\big)+\mathcal{O}\big(t^{-\frac{3}{2}}\big).
\end{eqnarray*}
All error terms are uniform with respect to $(\alpha,v)$ chosen from compact subsets of $(-1,+\infty)\times[0,+\infty)$ and
\begin{equation*}
	\eta(t,\alpha,v)=2\sqrt{t}-\frac{v}{2\pi}\ln(16t)-\pi\alpha+2\,\textnormal{arg}\,\Gamma\left(\frac{\im v}{2\pi}\right).
\end{equation*}
\end{prop}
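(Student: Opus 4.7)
The plan is a direct residue computation on the deformed contour $\Sigma_{\bf R}$. First I would decompose $\Sigma_{\bf R}=\partial\mathbb{D}_r(0)\cup\partial\mathbb{D}_r(1)\cup\Sigma_\infty$, where $\Sigma_\infty$ collects the lens arcs $\widehat{\gamma}^{\pm}$ and the rays $\{\textnormal{arg}(\lambda-1)=\pi/3,\,5\pi/3\}$ lying outside the two discs. The local estimates \eqref{Bess:inf1}--\eqref{Bess:inf2}, together with the analyticity and boundedness of ${\bf P}^{(\infty)}$ on $\Sigma_\infty$, give $\|{\bf G}_{\bf R}-\mathbb{I}\|_{L^1(\Sigma_\infty)}=\mathcal{O}(\e^{-c\sqrt{t}})$, so by Theorem \ref{DZBess} the $\Sigma_\infty$-piece of ${\bf N}$ is exponentially negligible and may be absorbed into the error.

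The two remaining circle integrals I would evaluate by Cauchy's theorem. The matching expansions in RHP \ref{problem:2}(4) and RHP \ref{problem:3}(4) continue ${\bf G}_{\bf R}-\mathbb{I}$ meromorphically across $\mathbb{D}_r(0)$ and $\mathbb{D}_r(1)$, with singularities only at $\lambda=0$ (from $(-\lambda t)^{-m}$) and at $\lambda=1$ (from $(\im\z(\lambda))^{-m}$, using $\z(1)=0$, $\z'(1)=\sqrt{t}$), since ${\bf E}^{(0)}$ and ${\bf E}^{(1)}$ are locally analytic. Counting orders, the $m=1$ pole at $\lambda=1$ produces $\mathcal{O}(t^{-1/2})$ in the $(1,1)$-entry and $\mathcal{O}(t^{-1})$ in the $(1,2)$-entry, and the analogous $m=1$ pole at $\lambda=0$ does the same, because ${\bf E}^{(0)}(0)$ and ${\bf E}^{(1)}(1)$ both carry the conjugation structure $t^{-\frac{1}{4}\sigma_3}(\cdot)t^{\frac{1}{4}\sigma_3}$ that promotes the off-diagonal entries of $\mathcal{S}_1(\alpha)$ and $\mathcal{U}_1(\nu)$ by a factor $\sqrt{t}$ before integration. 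All remaining contributions (from $m\geq 2$ and from Taylor tails of $(\im\z)^{-1}$ at $\lambda=1$) are one order smaller, yielding exactly the error rates $\mathcal{O}(t^{-1})$ and $\mathcal{O}(t^{-3/2})$ stated in the proposition.

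Substituting the Taylor expansions of ${\bf E}^{(0)}(\lambda)$ at $\lambda=0$ (remark after \eqref{OP:4}) and of ${\bf E}^{(1)}(\lambda)$ at $\lambda=1$ (remark after \eqref{OP:9}) into the residue formul\ae\ and performing the conjugations gives the explicit constants. The $\lambda=0$ residue, through $\mathcal{S}_1(\alpha)$ whose $(1,2)$-entry is $-a_1(\alpha)$, produces exactly the $\frac{2\im\nu}{\sqrt{t}}a_1(\alpha)$ contribution to $N^{11}$ and the $\frac{a_1(\alpha)}{t}$ contribution to $N^{12}$, together with subleading pieces proportional to $\nu^2$, $a_2(\alpha)$ and $b_k(\alpha)$ that fall inside the error. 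The $\lambda=1$ residue, whose scalar prefactors in ${\bf E}^{(1)}(1)$ are $4^{\nu\sigma_3}t^{\nu\sigma_3}\e^{\im\sqrt{t}\sigma_3}\e^{-\im\pi\alpha\sigma_3/2}$ and whose off-diagonal entries of $\mathcal{U}_1(\nu)$ carry the Gamma ratios $\Gamma(1\pm\nu)/\Gamma(\mp\nu)$, assembles into the trigonometric pieces with argument $\eta(t,\alpha,v)=2\sqrt{t}-\tfrac{v}{2\pi}\ln(16t)-\pi\alpha+2\,\textnormal{arg}\,\Gamma(\im v/(2\pi))$.

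The principal obstacle is phase bookkeeping. Since $\nu=v/(2\pi\im)\in\im\mathbb{R}$, the scalars $4^\nu$, $t^\nu$, $\e^{\im\pi\nu/2}$ and $\e^{\im\sqrt{t}}$ are all of modulus one and must be combined with the phases of $\Gamma(\pm\nu)$ and $\Gamma(1\mp\nu)$ in such a way that the off-diagonal residue entries reduce to real amplitudes times $\e^{\pm\im\eta}$. Using the reflection formula $|\Gamma(1+\im s)|^2=\pi s/\sinh(\pi s)$ eliminates the Gamma moduli and leaves a single phase $2\,\textnormal{arg}\,\Gamma(\im v/(2\pi))$, which is precisely what enters $\eta(t,\alpha,v)$. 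Once the phases are consistently tracked, uniformity of the remainders on compact $(\alpha,v)$-subsets of $(-1,+\infty)\times[0,+\infty)$ is inherited directly from that of Theorem \ref{DZBess} and of the parametrix matchings.
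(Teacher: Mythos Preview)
Your proposal is correct and follows exactly the approach the paper indicates (the paper's own proof is the single phrase ``from a residue computation (using RHP \ref{problem:2} and \ref{problem:3})''), and you have filled in the details appropriately. Two small wording issues that do not affect the argument: ${\bf P}^{(\infty)}$ is not bounded on $\Sigma_\infty$ (it has $|\lambda|^{1/4}$ growth), but the exponential decay of the jump still dominates; and ${\bf E}^{(1)}(1)$ carries only the left factor $t^{-\frac{1}{4}\sigma_3}$, not the full conjugation structure, though the order counting you state remains correct since the missing $\sqrt t$ is supplied instead by the residue of $(\im\zeta)^{-1}$.
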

Next we also require that from Theorem \ref{DZBess}, RHP \ref{problem:2} and \ref{problem:3}, as $t\rightarrow+\infty$,
\begin{equation*}
	\int_{\Sigma_{\bf R}}\big(\widehat{{\bf R}}_-(w)-\mathbb{I}\big)\big(\widehat{{\bf G}}_{\bf R}(w)-\mathbb{I}\big)\,\d w=\mathcal{O}\big(t^{-1}\big),
\end{equation*}
i.e. for
\begin{equation*}
	{\bf Q}=\frac{\im}{2\pi}\int_{\Sigma_{\bf R}}\big({\bf R}_-(w)-\mathbb{I}\big)\big({\bf G}_{\bf R}(w)-\mathbb{I}\big)\,\d w,\ \ \ \ {\bf Q}=\big(Q^{jk}\big)_{j,k=1}^2
\end{equation*}
we find in turn
\begin{prop}\label{P:2} As $t\rightarrow+\infty$ with fixed $(\alpha,v)\in(-1,+\infty)\times[0,+\infty)$,
\begin{equation*}
	Q^{11}=\mathcal{O}\big(t^{-1}\big),\ \ \ Q^{12}=\mathcal{O}\big(t^{-\frac{3}{2}}\big).
\end{equation*}
\end{prop}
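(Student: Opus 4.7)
The plan is to reduce Proposition~\ref{P:2} to the hat-conjugated estimate quoted immediately above its statement, namely
\[
\widehat{\bf Q}=\frac{\im}{2\pi}\int_{\Sigma_{\bf R}}\bigl(\widehat{\bf R}_-(w)-\mathbb{I}\bigr)\bigl(\widehat{\bf G}_{\bf R}(w)-\mathbb{I}\bigr)\,\d w=\mathcal{O}(t^{-1})\ \ \textnormal{entrywise,}
\]
and then to unravel the rescaling $\widehat{\bf M}=t^{\sigma_3/4}{\bf M}\,t^{-\sigma_3/4}$ built into Theorem~\ref{DZBess}. Because $\mathbf{R}_-(w)-\mathbb{I}$ and $\mathbf{G}_{\bf R}(w)-\mathbb{I}$ are each the $t^{-\sigma_3/4}(\cdot)t^{\sigma_3/4}$-conjugate of their hatted counterparts, so is their product, whence ${\bf Q}=t^{-\sigma_3/4}\widehat{\bf Q}\,t^{\sigma_3/4}$. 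Multiplying out the diagonal factors gives $Q^{11}=\widehat{Q}^{11}$ and $Q^{12}=t^{-1/2}\widehat{Q}^{12}$, so the two claimed rates $Q^{11}=\mathcal{O}(t^{-1})$ and $Q^{12}=\mathcal{O}(t^{-3/2})$ follow immediately from the entrywise control $\widehat{Q}^{jk}=\mathcal{O}(t^{-1})$.

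To obtain this entrywise control I would apply Cauchy-Schwarz on each matrix entry,
\[
|\widehat{Q}^{jk}|\leq\frac{1}{2\pi}\sum_{\ell=1}^{2}\bigl\|(\widehat{\bf R}_--\mathbb{I})^{j\ell}\bigr\|_{L^2(\Sigma_{\bf R})}\,\bigl\|(\widehat{\bf G}_{\bf R}-\mathbb{I})^{\ell k}\bigr\|_{L^2(\Sigma_{\bf R})}.
\]
For the first factor, Theorem~\ref{DZBess} yields $\mathcal{O}(t^{-1/2})$. For the second factor, on the two shrinking circles $\partial\mathbb{D}_r(0)\cup\partial\mathbb{D}_r(1)$ the small-norm proposition preceding Theorem~\ref{DZBess} gives $\|\widehat{\bf G}_{\bf R}-\mathbb{I}\|_{L^2\cap L^\infty}\leq c/\sqrt{t}$; on the remaining pieces of $\Sigma_{\bf R}$ (the lens contours $\widehat\gamma^\pm$ and the two rays $\textnormal{arg}(\lambda-1)=\pi/3,5\pi/3$ truncated outside $\mathbb{D}_r(0)\cup\mathbb{D}_r(1)$) estimates \eqref{Bess:inf1}--\eqref{Bess:inf2} produce exponential decay in $t$, which is preserved after the conjugation by $\mathbf{P}^{(\infty)}$ and the extra rescaling $t^{\pm\sigma_3/4}$, because the $\mathcal{D}^{\pm\sigma_3}$-factor in \eqref{OP:1} has unit modulus (since $\nu\in\im\mathbb{R}$) and the remaining polynomial factors grow at most algebraically in $t$. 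Summing all contributions gives $\|\widehat{\bf G}_{\bf R}-\mathbb{I}\|_{L^2(\Sigma_{\bf R})}=\mathcal{O}(t^{-1/2})$ and hence $\widehat{Q}^{jk}=\mathcal{O}(t^{-1})$.

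The only genuine obstacle I foresee is bookkeeping: verifying that the various $t^{\pm 1/4}$ powers coming from the hat-rescaling in Theorem~\ref{DZBess}, from the explicit factor $(-\lambda t)^{-\sigma_3/4}$ in \eqref{OP:1}, and from the conjugation inside $\widehat{\bf G}_{\bf R}$ all match up, so that the conjugation identity ${\bf Q}=t^{-\sigma_3/4}\widehat{\bf Q}\,t^{\sigma_3/4}$ holds exactly rather than up to extra powers of $t$. Once this bookkeeping is cleared, Proposition~\ref{P:2} is a matter of reading off entries $(1,1)$ and $(1,2)$ and collecting the appropriate power of $t$; no further asymptotic input is required beyond the $L^2$-bounds already built into Theorem~\ref{DZBess} and the preceding jump estimates.
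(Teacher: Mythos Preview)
Your proposal is correct and follows essentially the same approach as the paper. The paper's argument is precisely the one-line observation preceding the proposition (that the hatted integral is $\mathcal{O}(t^{-1})$ by Cauchy--Schwarz via Theorem~\ref{DZBess} and the jump estimates from RHPs~\ref{problem:2}--\ref{problem:3}), together with the conjugation ${\bf Q}=t^{-\sigma_3/4}\widehat{\bf Q}\,t^{\sigma_3/4}$ that you have correctly unwound entrywise; you have simply made the Cauchy--Schwarz step and the bookkeeping of $t^{\pm 1/4}$ factors explicit.
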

At this point we combine \eqref{Y1exact} and Propositions \ref{P:1}, \ref{P:2},
\begin{cor} As $t\rightarrow+\infty$,
\begin{eqnarray*}
	Y_1^{11}&=&2\nu^2-\frac{\im\nu}{\sqrt{t}}\big(1+\sin\big(\eta(t,\alpha,v)\big)\big)+\frac{2\nu^2}{\sqrt{t}}\big(\cos\big(\eta(t,\alpha,v)\big)-\im\nu\big)+\mathcal{O}\left(t^{-1}\right)\\
	Y_1^{12}&=&-\frac{2\im\nu}{\sqrt{t}}-\frac{\nu^2}{t}-\frac{\im\nu}{t}\cos\big(\eta(t,\alpha,v)\big)+\mathcal{O}\big(t^{-\frac{3}{2}}\big);
\end{eqnarray*}
and all error terms are uniform with respect to $(\alpha,v)$ chosen from compact subsets of $(-1,+\infty)\times[0,+\infty)$.
\end{cor}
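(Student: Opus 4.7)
The plan is to substitute the asymptotic expansions from Proposition~\ref{P:1} and Proposition~\ref{P:2} into the exact identity \eqref{Y1exact} and carry out the $2\times 2$ matrix product, keeping terms to order $\mathcal{O}(t^{-1})$ for $Y_1^{11}$ and to order $\mathcal{O}(t^{-3/2})$ for $Y_1^{12}$. Writing \eqref{Y1exact} as $\mathbf{Y}_1 = L\,M\,R$ with
$L = \bigl[\begin{smallmatrix}1 & 0\\ -b_1(\alpha)&1\end{smallmatrix}\bigr]$,
$R = \bigl[\begin{smallmatrix}1 & 0\\ b_1(\alpha)&1\end{smallmatrix}\bigr]$,
and $M=(m_{jk})_{j,k=1}^2$ the bracketed matrix, a direct expansion of the conjugation gives
$Y_1^{11} = m_{11} + b_1(\alpha)\,m_{12}$ and $Y_1^{12} = m_{12}$,
so it suffices to compute $m_{11}$ and $m_{12}$ to the stated precision.

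Combining \eqref{Y1exact} with Propositions~\ref{P:1} and~\ref{P:2}, one reads
\begin{align*}
m_{11} &= 2\nu^2 + \frac{2\im\nu\,a_1(\alpha)}{\sqrt t} - \frac{\im\nu\,\sin(\eta(t,\alpha,v))}{\sqrt t} + \frac{2\nu^2}{\sqrt t}\bigl(\cos(\eta(t,\alpha,v)) - \im\nu\bigr) + \mathcal{O}(t^{-1}),\\
m_{12} &= -\frac{2\im\nu}{\sqrt t} - \frac{a_1(\alpha)}{t} + N^{12} + Q^{12},
\end{align*}
where the $a_2(\alpha)/t$ and $Q^{11}$ terms are absorbed in the $\mathcal{O}(t^{-1})$ error of $m_{11}$. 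Two cancellations then drive the result. In $m_{12}$, the explicit term $-a_1(\alpha)/t$ coming from the constant matrix in \eqref{Y1exact} cancels the $a_1(\alpha)/t$ summand inside $N^{12}$ from Proposition~\ref{P:1}, producing exactly the asserted expansion for $Y_1^{12}$.

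For $Y_1^{11}$, forming $m_{11} + b_1(\alpha)\,m_{12}$ collapses the $t^{-1/2}$-order contributions to a single coefficient $2\im\nu\bigl(a_1(\alpha) - b_1(\alpha)\bigr)$. The claimed formula requires this to equal $-\im\nu$, that is, $b_1(\alpha) - a_1(\alpha) = \tfrac12$. This identity is not accidental: it is read off from the first two matrix coefficients in the large-$\zeta$ asymptotic series of the Bessel undressing function $\Psi(\zeta;\alpha)$ recorded in RHP~\ref{Bessbetter} of Appendix~\ref{AppB}. With this algebraic input, the surviving $t^{-1/2}$ terms regroup precisely into $-\tfrac{\im\nu}{\sqrt t}\bigl(1+\sin(\eta(t,\alpha,v))\bigr) + \tfrac{2\nu^2}{\sqrt t}\bigl(\cos(\eta(t,\alpha,v)) - \im\nu\bigr)$, matching the corollary.

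Uniformity of the $\mathcal{O}$-remainders in $(\alpha,v)$ over compact subsets of $(-1,+\infty)\times[0,+\infty)$ is automatic: Propositions~\ref{P:1} and~\ref{P:2} already guarantee such uniformity for $N^{jk}$ and $Q^{jk}$, the coefficients $a_k(\alpha), b_k(\alpha)$ are continuous in $\alpha$ and hence bounded on compacta, and $\nu = v/(2\pi\im)$ is likewise controlled. The only genuine input beyond bookkeeping is therefore the algebraic relation $b_1(\alpha) - a_1(\alpha) = \tfrac12$ between the first two Bessel parametrix coefficients; once that is in hand, the remainder of the proof is a direct substitution.
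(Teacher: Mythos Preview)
Your argument is correct and follows exactly the route the paper intends: substitute Propositions~\ref{P:1} and~\ref{P:2} into the exact identity \eqref{Y1exact}, read off $Y_1^{12}=m_{12}$ and $Y_1^{11}=m_{11}+b_1(\alpha)m_{12}$ from the lower-triangular conjugation, and observe the two cancellations you describe. The only detail the paper leaves implicit and which you make explicit is the identity $b_1(\alpha)-a_1(\alpha)=\tfrac12$, which indeed follows immediately from the formulas $a_1(\alpha)=\tfrac18(4\alpha^2-1)$ and $b_1(\alpha)=\tfrac18(4\alpha^2+3)$ in RHP~\ref{Bessbetter}.
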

Now back in \eqref{Bess:f1} and \eqref{Bess:f2},
\begin{cor}\label{pqcor} As $t\rightarrow+\infty$ with fixed $(v,\alpha)\in[0,+\infty)\times(-1,+\infty)$,
\begin{equation*}
	q^2(t,\alpha;\gamma)=\frac{2\im\nu}{\sqrt{t}}\Big(1-\sin\big(\eta(t,\alpha,v)\big)\Big)+\mathcal{O}\left(t^{-1}\right),\ \ \ p^2(t,\alpha;\gamma)=2\im\nu\sqrt{t}\,\Big(1+\sin\big(\eta(t,\alpha,v)\big)\Big)+\mathcal{O}(1)
\end{equation*}
and
\begin{equation*}
	\mathcal{H}_H\big(q(t,\alpha,\gamma),p(t,\alpha,\gamma),t,\alpha\big)=-\frac{v}{2\pi}\frac{1}{\sqrt{t}}+\frac{v^2}{8\pi^2}\frac{1}{t}-\frac{v}{4\pi t}\cos\big(\eta(t,\alpha,v)\big)+\mathcal{O}\big(t^{-\frac{3}{2}}\big)
\end{equation*}
\end{cor}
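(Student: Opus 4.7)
My plan is to derive Corollary \ref{pqcor} purely algebraically from the preceding corollary, by substituting the $Y_1^{11},Y_1^{12}$ expansions into the exact representation formul\ae\,\eqref{Bess:f1} and \eqref{Bess:f2}. For the Hamiltonian there is nothing to do beyond direct substitution: $\mathcal{H}_H=\tfrac12 Y_1^{12}$, and using $\nu=v/(2\pi\im)=-\im v/(2\pi)$ (so $-\im\nu=-v/(2\pi)$ and $\nu^2=-v^2/(4\pi^2)$) the required expansion follows term by term from the expansion of $Y_1^{12}$ up to order $t^{-1}$.

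The expansion of $q^2$ requires a little more care because of a leading-order cancellation. I would write $Y_1^{12}=A_1 t^{-1/2}+A_2 t^{-1}+\mathcal{O}(t^{-3/2})$ with $A_1=-2\im\nu$ and $A_2=-\nu^2-\im\nu\cos\eta$, so that
\begin{equation*}
t\bigl(Y_1^{12}\bigr)^2=A_1^2+\frac{2A_1A_2}{\sqrt t}+\mathcal{O}(t^{-1})=-4\nu^2+\frac{4\im\nu^3-4\nu^2\cos\eta}{\sqrt t}+\mathcal{O}(t^{-1}).
\end{equation*}
Meanwhile a direct subtraction gives $2(Y_1^{11}-Y_1^{12})=4\nu^2+\tfrac{2\im\nu}{\sqrt t}(1-\sin\eta)+\tfrac{4\nu^2\cos\eta-4\im\nu^3}{\sqrt t}+\mathcal{O}(t^{-1})$. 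The $O(1)$ pieces $\pm 4\nu^2$ cancel, the $\cos\eta$ and $\im\nu^3$ contributions at order $t^{-1/2}$ also cancel, leaving exactly $q^2=\tfrac{2\im\nu}{\sqrt t}\bigl(1-\sin\eta\bigr)+\mathcal{O}(t^{-1})$, as claimed. This cancellation is really the only delicate point: one must retain the subleading term in $Y_1^{12}$ (not merely the leading $-2\im\nu/\sqrt t$) in order for the coefficient of $t^{-1/2}$ in $q^2$ to come out correctly.

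For $p^2$ I would first observe that since $q^2=\mathcal{O}(t^{-1/2})$, we have $(q^2-1)^{-1}=-1+\mathcal{O}(t^{-1/2})$ and hence $\alpha^2q^2/(q^2-1)^2=\mathcal{O}(t^{-1/2})$, which is absorbed in the stated error. The leading behavior therefore comes from $\tfrac{2t}{q^2-1}\bigl(Y_1^{12}+\tfrac12 q^2\bigr)$. A short computation shows
\begin{equation*}
Y_1^{12}+\tfrac12 q^2=-\frac{2\im\nu}{\sqrt t}+\frac{\im\nu}{\sqrt t}\bigl(1-\sin\eta\bigr)+\mathcal{O}(t^{-1})=-\frac{\im\nu}{\sqrt t}\bigl(1+\sin\eta\bigr)+\mathcal{O}(t^{-1}),
\end{equation*}
and multiplication by $2t(q^2-1)^{-1}=-2t+\mathcal{O}(t^{1/2})$ produces exactly $2\im\nu\sqrt t(1+\sin\eta)+\mathcal{O}(1)$. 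The sine/cosine trade between the $q^2$ and $p^2$ formul\ae\, reflects precisely the expected phase shift between the conjugate canonical variables.

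The only real obstacle is bookkeeping: the main asymptotic formula for $q^2$ sits at order $t^{-1/2}$, but both $t(Y_1^{12})^2$ and $2(Y_1^{11}-Y_1^{12})$ are individually of order $1$, so one must carry both $O(1)$ and $O(t^{-1/2})$ contributions in $Y_1^{jk}$ and check that the $O(1)$ parts and the spurious oscillatory $O(t^{-1/2})$ parts cancel. Once this bookkeeping is done, everything else is mechanical substitution, and the uniformity of the error in $(\alpha,v)$ over compact subsets of $(-1,+\infty)\times[0,+\infty)$ is inherited directly from the uniformity statements in the previous corollary.
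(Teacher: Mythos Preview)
Your proof is correct and follows exactly the approach the paper takes: the paper simply writes ``Now back in \eqref{Bess:f1} and \eqref{Bess:f2}'' and states the corollary, leaving the substitution implicit. Your write-up makes explicit the one nontrivial point---the cancellation of the $O(1)$ terms $\pm 4\nu^2$ and of the spurious $\cos\eta$ and $\im\nu^3$ contributions at order $t^{-1/2}$ in $q^2$---which is precisely the bookkeeping the paper suppresses.
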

The last result allows us already to determine all $t$-dependent terms in Theorem \ref{theo:3}, expansion \eqref{JME:33}, indeed through Proposition \ref{Di:1} we find
\begin{cor}\label{1stresult} As $t\rightarrow+\infty$,
\begin{equation*}
	\ln F_H(t,\alpha;\gamma)=-\frac{v}{\pi}\sqrt{t}+\frac{v^2}{8\pi^2}\ln t +D(\alpha,v)+\mathcal{O}\big(t^{-\frac{1}{2}}\big),
\end{equation*}
where $D(v,\alpha)$ is $t$-independent and the error term is uniform with respect to $(\alpha,v)$ chosen from compact subsets of $(-1,+\infty)\times[0,+\infty)$.
\end{cor}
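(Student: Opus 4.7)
My plan is to start from the Hamiltonian representation
$\ln F_H(t,\alpha;\gamma)=\int_0^t\mathcal{H}_H\bigl(q(s,\alpha;\gamma),p(s,\alpha;\gamma),s,\alpha\bigr)\,\d s$
supplied by Theorem \ref{theo:1} and substitute into the integrand the large-$s$ asymptotic expansion provided by Corollary \ref{pqcor}, namely
\begin{equation*}
\mathcal{H}_H\big(q(s,\alpha;\gamma),p(s,\alpha;\gamma),s,\alpha\big)=-\frac{v}{2\pi\sqrt{s}}+\frac{v^2}{8\pi^2 s}-\frac{v}{4\pi s}\cos\bigl(\eta(s,\alpha,v)\bigr)+\mathcal{O}\bigl(s^{-\frac{3}{2}}\bigr),
\end{equation*}
uniformly in $(\alpha,v)$ on compact subsets of $(-1,+\infty)\times[0,+\infty)$. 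I would fix a threshold $t_1=t_1(\alpha,v)$ beyond which this expansion is valid and split the Hamiltonian integral as $\int_0^t=\int_0^{t_1}+\int_{t_1}^t$, letting the first (finite, $t$-independent) piece contribute to the constant $D(\alpha,v)$.

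On the tail $[t_1,t]$ the non-oscillatory terms integrate elementarily and produce exactly the expected contributions: $\int_{t_1}^{t}(-\frac{v}{2\pi\sqrt{s}})\,\d s=-\frac{v}{\pi}\sqrt{t}+\frac{v}{\pi}\sqrt{t_1}$, $\int_{t_1}^{t}\frac{v^2}{8\pi^2 s}\,\d s=\frac{v^2}{8\pi^2}\ln t-\frac{v^2}{8\pi^2}\ln t_1$, and the error tail integrates to an $\mathcal{O}(t_1^{-1/2})-\mathcal{O}(t^{-1/2})$ term, the first half being absorbed into $D(\alpha,v)$ and the second half being part of the stated remainder. The real content of the proof lies in controlling the oscillatory integral
\begin{equation*}
\mathcal{I}(t):=-\frac{v}{4\pi}\int_{t_1}^{t}\frac{\cos\eta(s,\alpha,v)}{s}\,\d s,\qquad \eta(s,\alpha,v)=2\sqrt{s}-\frac{v}{2\pi}\ln(16 s)-\pi\alpha+2\,\textnormal{arg}\,\Gamma\!\left(\frac{\im v}{2\pi}\right),
\end{equation*}
and here I would perform a single integration by parts. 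Since $\eta_s(s,\alpha,v)=s^{-1/2}-\frac{v}{2\pi s}=s^{-1/2}(1+\mathcal{O}(s^{-1/2}))$ is uniformly bounded away from zero on $[t_1,\infty)$ for $t_1$ sufficiently large, writing $\cos\eta=\eta_s^{-1}(\sin\eta)_s$ yields
\begin{equation*}
\mathcal{I}(t)=-\frac{v}{4\pi}\left[\frac{\sin\eta(s,\alpha,v)}{s\,\eta_s(s,\alpha,v)}\right]_{s=t_1}^{s=t}+\frac{v}{4\pi}\int_{t_1}^{t}\sin\eta(s,\alpha,v)\,\frac{\d}{\d s}\!\left(\frac{1}{s\,\eta_s(s,\alpha,v)}\right)\d s.
\end{equation*}
Because $(s\eta_s)^{-1}=\mathcal{O}(s^{-1/2})$ and its derivative is $\mathcal{O}(s^{-3/2})$, both the boundary term at $s=t$ and the remaining integral are $\mathcal{O}(t^{-1/2})$, uniformly in $(\alpha,v)$ over compact subsets, while the boundary contribution at $s=t_1$ is $t$-independent and gets absorbed into $D(\alpha,v)$.

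Combining all pieces, I would set
\begin{equation*}
D(\alpha,v):=\int_0^{t_1}\mathcal{H}_H\big(q(s,\alpha;\gamma),p(s,\alpha;\gamma),s,\alpha\big)\,\d s+\frac{v}{\pi}\sqrt{t_1}-\frac{v^2}{8\pi^2}\ln t_1+\frac{v\sin\eta(t_1,\alpha,v)}{4\pi\,t_1\,\eta_s(t_1,\alpha,v)}+C_1(\alpha,v),
\end{equation*}
where $C_1(\alpha,v)$ is the $t$-independent remainder coming from the $\mathcal{O}(s^{-3/2})$ tail, and conclude the stated expansion. The main obstacle is genuinely the oscillatory term, and more precisely, arranging that the integration-by-parts estimate is uniform in $(\alpha,v)$; this reduces to verifying that $t_1=t_1(\alpha,v)$ can be chosen to depend continuously on $(\alpha,v)$ so that $\eta_s$ stays bounded away from zero on $[t_1,\infty)$ uniformly across compact parameter sets, a property inherited from the explicit form of $\eta_s$ and the uniformity statement already embedded in Corollary \ref{pqcor}.
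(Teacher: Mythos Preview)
Your approach is essentially the same as the paper's: both start from the Hamiltonian/differential identity $\partial_t\ln F_H=\mathcal{H}_H=\tfrac{1}{2}Y_1^{12}$ (Theorem \ref{theo:1} / Proposition \ref{Di:1}) and integrate the large-$s$ expansion of $\mathcal{H}_H$ from Corollary \ref{pqcor}. The paper's proof is in fact just the one-line remark ``through Proposition \ref{Di:1},'' so your write-up is more detailed than the original; in particular, your integration-by-parts treatment of the oscillatory term $\int_{t_1}^t s^{-1}\cos\eta(s)\,\d s$ makes explicit a step the paper leaves to the reader. One small wording issue: the remaining integral $\int_{t_1}^{t}\sin\eta\,\tfrac{\d}{\d s}(s\eta_s)^{-1}\,\d s$ is not itself $\mathcal{O}(t^{-1/2})$ but rather a $t$-independent constant plus $\mathcal{O}(t^{-1/2})$; you clearly intend this (it is exactly what your $C_1(\alpha,v)$ absorbs), so just tighten that sentence.
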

In order to determine $D(\alpha,v)$ we use \eqref{JME:26} and first derive the following exact identity (recall RHP \ref{Bess:X}, \eqref{Bess:T2}, \eqref{Bess:T3}, \eqref{OP:3}, \eqref{OP:10} and \eqref{Bessundress}), for $\lambda\in\mathbb{D}_r(0)$,
\begin{eqnarray*}
	\widehat{{\bf X}}(\lambda)&=&{\bf R}(\lambda){\bf E}^{(0)}(\lambda)\begin{bmatrix}1 & 0\\ b_1(\alpha) & 1\end{bmatrix}\Psi_{\alpha}(\lambda t)\e^{-\im\frac{\pi}{2}\alpha\sigma_3}\begin{bmatrix}1 & -\frac{\im}{2}\frac{1}{\sin\pi\alpha}\smallskip \\ 0 & 1\end{bmatrix}(-\lambda)^{-\frac{\alpha}{2}\sigma_3}\e^{\frac{v}{2}\sigma_3},\ \ \alpha\notin\mathbb{Z};\\
	\widehat{{\bf X}}(\lambda)&=&{\bf R}(\lambda){\bf E}^{(0)}(\lambda)\begin{bmatrix}1 & 0\\ b_1(\alpha) & 1\end{bmatrix}\Psi_{\alpha}(\lambda t)\e^{-\im\frac{\pi}{2}\alpha\sigma_3}\begin{bmatrix}1 & \frac{\e^{\im\pi\alpha}}{2\pi\im}\ln(-\lambda)\\ 0 & 1\end{bmatrix}(-\lambda)^{-\frac{\alpha}{2}\sigma_3}\e^{\frac{v}{2}\sigma_3},\ \ \alpha\in\mathbb{Z}.
\end{eqnarray*}
However, keeping in mind Remark \ref{locrem} below, we have for $\alpha>-1$,
\begin{eqnarray}
	\widehat{{\bf X}}(0)&=&{\bf R}(0){\bf E}^{(0)}(0)\begin{bmatrix}1 & 0\\ b_1(\alpha) & 1\end{bmatrix}\widehat{\Psi}(0;\alpha)t^{\frac{\alpha}{2}\sigma_3}\e^{\frac{v}{2}\sigma_3},\ \ \ \alpha\neq 0;\label{ev:1}\\
	\widehat{{\bf X}}(0)&=&{\bf R}(0){\bf E}^{(0)}(0)\begin{bmatrix}1 & 0\\ b_1(0) & 1\end{bmatrix}\widehat{\Psi}(0;0)\begin{bmatrix}1 & -\frac{\ln t}{2\pi\im}\smallskip\\ 0 & 1\end{bmatrix}\e^{\frac{v}{2}\sigma_3},\ \ \ \ \alpha=0.\label{ev:2}
\end{eqnarray}
But with Theorem \ref{DZBess},
\begin{equation*}
	\widehat{{\bf R}}(0)=\mathbb{I}+\frac{1}{2\pi\im}\int_{\Sigma_{\bf R}}\big(\widehat{{\bf G}}_{\bf R}(w)-\mathbb{I}\big)\frac{\d w}{w}+\mathcal{O}\big(t^{-1}\big),
\end{equation*}
and by residue computation,
\begin{prop} As $t\rightarrow+\infty$, for any fixed $\alpha>-1$ and $v\in[0,+\infty)$,
\begin{eqnarray*}
	\widehat{R}^{11}(0)&=&1+\frac{2\im\nu}{\sqrt{t}}a_1(1-4\nu^2)+\frac{\im\nu}{\sqrt{t}}\big(\sin\eta(t,\alpha,v)+2\im\nu(\cos\eta(t,\alpha,v)-\im\nu)\big)+\mathcal{O}\big(t^{-1}\big),\\
	 \widehat{R}^{12}(0)&=&\frac{\nu^2}{\sqrt{t}}(1-4a_1)+\frac{\im\nu}{\sqrt{t}}\cos\big(\eta(t,\alpha,v)\big)+\mathcal{O}\big(t^{-1}\big)
\end{eqnarray*}
\end{prop}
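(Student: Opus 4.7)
The starting point is the integral representation for $\widehat{\bf R}(0)$ displayed just above. I would decompose $\Sigma_{\bf R}$ into the two circles $\partial\mathbb{D}_r(0)\cup\partial\mathbb{D}_r(1)$ together with the lens contours $\widehat{\gamma}^\pm$ and the two rays $\textnormal{arg}(\lambda-1)=\pi/3,5\pi/3$. By \eqref{Bess:inf1}--\eqref{Bess:inf2}---noting that conjugation by $t^{\sigma_3/4}$ multiplies off-diagonal entries by at most $t^{\pm 1/2}$ and does not disturb the exponential decay---the integrand $(\widehat{\bf G}_{\bf R}(w)-\mathbb{I})/w$ is $\mathcal{O}(\e^{-c\sqrt{t}})$ on these contours, so they contribute to the error. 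On each circle I would substitute the $m=1$ matching term from RHP \ref{problem:2} (resp.\ RHP \ref{problem:3}); the $m\geq 2$ tail is $\mathcal{O}(1/t)$ uniformly and also enters the error. What remains is to compute two residues.

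For the $\partial\mathbb{D}_r(0)$ contribution, the conjugated matrix $M:=t^{\sigma_3/4}\mathcal{S}_1(\alpha)t^{-\sigma_3/4}$ has $(1,2)$-entry $-a_1(\alpha)\sqrt{t}$, which is the only entry surviving at order $1/\sqrt{t}$ once divided by the $(-wt)$ from the matching. The integrand has a double pole at $w=0$, hence the residue involves the first-order Taylor coefficient of $\widehat{\bf E}^{(0)}(\lambda)\,M\,\widehat{\bf E}^{(0)}(\lambda)^{-1}$ at $\lambda=0$. Using the explicit form $\widehat{\bf E}^{(0)}(0)=\bigl[\begin{smallmatrix}1 & -2\im\nu \\ -2\im\nu & 1-4\nu^2\end{smallmatrix}\bigr]$ and the next-order correction supplied in the remark following \eqref{OP:4}, this derivative equals a commutator that evaluates---via a short matrix calculation using $\det\widehat{\bf E}^{(0)}(0)=1$---to a $2\times 2$ matrix whose $(1,1)$- and $(1,2)$-entries are proportional to $a_1(1-4\nu^2)/\sqrt{t}$ and $a_1\nu^2/\sqrt{t}$, matching the $a_1$-dependent pieces of the statement.

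For the $\partial\mathbb{D}_r(1)$ contribution the reduction is simpler: the factor $1/(\im\zeta(w))=1/\bigl(\im\sqrt{t}(w-1)(1+\mathcal{O}(w-1))\bigr)$ supplies a simple pole at $w=1$ carrying the intrinsic $1/\sqrt{t}$, and the residue equals $\widehat{\bf E}^{(1)}(1)\,\mathcal{U}_1(\nu)\,\widehat{\bf E}^{(1)}(1)^{-1}/(\im\sqrt{t})$. The Taylor expansion of $\widehat{\bf E}^{(1)}$ at $\lambda=1$ in the remark following \eqref{OP:9} supplies the oscillatory prefactors $\e^{\pm 2\im\sqrt{t}},16^{\pm\nu},t^{\pm\nu},\e^{\mp\im\pi\alpha}$, and the off-diagonal entries of $\mathcal{U}_1(\nu)$ are the quotients $\pm\Gamma(1\pm\nu)/\Gamma(\mp\nu)$; for the pure imaginary $\nu=v/(2\pi\im)$ these carry phase $\pm 2\arg\Gamma(\im v/(2\pi))$. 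Collecting every exponent via $t^{\pm\nu}=\e^{\mp\im v\ln t/(2\pi)}$ and $16^{\pm\nu}=\e^{\mp 2\im v\ln 2/\pi}$ condenses them into the single angle $\eta(t,\alpha,v)=2\sqrt{t}-(v/2\pi)\ln(16t)-\pi\alpha+2\arg\Gamma(\im v/(2\pi))$; separating real and imaginary parts of the rank-one residue produces the $\sin\eta,\cos\eta$ contributions, plus a non-oscillatory $\nu^2/\sqrt{t}$ in the $(1,2)$-slot that combines with the $-4a_1\nu^2/\sqrt{t}$ piece from $\partial\mathbb{D}_r(0)$ to form $(1-4a_1)\nu^2/\sqrt{t}$.

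The main obstacle is the meticulous sign-and-phase bookkeeping on $\partial\mathbb{D}_r(1)$: $\widehat{\bf E}^{(1)}(\lambda)$ is defined piecewise across three angular sectors with distinct scalar prefactors, and although it is single-valued at $\lambda=1$ by construction, verifying that the numerous $\pm$-phases from $\e^{\pm 2\im\sqrt{t}}$, $t^{\pm\nu}$, $16^{\pm\nu}$, $\e^{\mp\im\pi\alpha}$, and $\arg\Gamma(\pm\nu)$ collapse into precisely the single angle $\eta(t,\alpha,v)$ takes patience. Once every prefactor is accounted for and the two circle contributions added, the asserted asymptotics for $\widehat R^{11}(0)$ and $\widehat R^{12}(0)$ follow by reading off the relevant matrix entries; the $\mathcal{O}(t^{-1})$ error is uniform in $(\alpha,v)$ on compact subsets of $(-1,+\infty)\times[0,+\infty)$ by the $L^2$-bounds recorded in Theorem \ref{DZBess}.
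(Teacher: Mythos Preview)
Your proposal is correct and follows precisely the approach the paper indicates: the paper's entire proof is the phrase ``by residue computation,'' applied to the integral representation $\widehat{\bf R}(0)=\mathbb{I}+\frac{1}{2\pi\im}\int_{\Sigma_{\bf R}}(\widehat{\bf G}_{\bf R}(w)-\mathbb{I})\,\frac{\d w}{w}+\mathcal{O}(t^{-1})$ together with the $m=1$ matching terms from RHPs \ref{problem:2} and \ref{problem:3}. You have simply (and accurately) unpacked what that residue computation entails---the double pole at $w=0$ picking up the $a_1$-contribution via the Taylor expansion of $\widehat{\bf E}^{(0)}$, and the simple pole at $w=1$ producing the oscillatory $\eta(t,\alpha,v)$-terms through the explicit form of ${\bf E}^{(1)}(1)$ and $\mathcal{U}_1(\nu)$.
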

Thus back in \eqref{ev:1} and \eqref{ev:2}
\begin{cor}\label{Xo:1} As $t\rightarrow+\infty$ with fixed $\alpha>-1,v\in[0,+\infty)$, 
\begin{equation*}
	\widehat{X}^{11}(0)=\sqrt{\pi}\,\e^{-\im\frac{\pi}{4}}\left(1-\frac{\im\nu}{\sqrt{t}}\Big(1-\sin\big(\eta(t,\alpha,v)\big)\Big)-\frac{2\im\nu}{\sqrt{t}}\alpha+\mathcal{O}\big(t^{-1}\big)\right)\frac{2^{-\alpha}t^{\frac{\alpha}{2}}\e^{\frac{v}{2}}}{\Gamma(1+\alpha)},
\end{equation*}
and for fixed $\alpha>-1,\alpha\neq 0,v\in[0,+\infty)$,
\begin{equation*}
	\widehat{X}^{12}(0)=\sqrt{\pi}\,\e^{-\im\frac{\pi}{4}}\left(1-\frac{\im\nu}{\sqrt{t}}\Big(1-\sin\big(\eta(t,\alpha,v)\big)\Big)+\frac{2\im\nu}{\sqrt{t}}\alpha+\mathcal{O}\big(t^{-1}\big)\right)\frac{\Gamma(\alpha)}{2\pi\im}\,2^{\alpha}t^{-\frac{\alpha}{2}}\e^{-\frac{v}{2}}.
	%
\end{equation*}
\end{cor}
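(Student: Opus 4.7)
The approach is to chain the preceding results into the exact identities \eqref{ev:1}--\eqref{ev:2}. I would first absorb all powers of $t$ by the conjugation trick already built into the paper: since ${\bf E}^{(0)}(\lambda) = t^{-\frac{1}{4}\sigma_3}\widehat{{\bf E}}^{(0)}(\lambda) t^{\frac{1}{4}\sigma_3}$ (Remark after \eqref{OP:4}) and $\widehat{{\bf R}}(\lambda) = t^{\frac{1}{4}\sigma_3}{\bf R}(\lambda)t^{-\frac{1}{4}\sigma_3}$ (Theorem \ref{DZBess}), the product rearranges as ${\bf R}(0){\bf E}^{(0)}(0) = t^{-\frac{1}{4}\sigma_3}\widehat{{\bf R}}(0)\widehat{{\bf E}}^{(0)}(0) t^{\frac{1}{4}\sigma_3}$. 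For the first row of $\widehat{{\bf X}}(0)$ this peels off an overall factor of $t^{-\frac{1}{4}}$ on the left, while the $t^{\frac{1}{4}}$ on the right combines with the $t^{\frac{\alpha}{2}\sigma_3}$ dressing from \eqref{ev:1} and the standard $t^{\pm\frac{1}{4}}$ factors hidden in $\widehat{\Psi}(0;\alpha)$ (from \eqref{Bessundress}) to yield precisely the claimed $t^{\pm\alpha/2}\e^{\pm v/2}$ prefactors.

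Next, I plug in the explicit constant matrices. From the Remark after \eqref{OP:4} the relevant constant value is $\widehat{{\bf E}}^{(0)}(0) = \bigl[\begin{smallmatrix} 1 & -2\im\nu \\ -2\im\nu & 1-4\nu^2 \end{smallmatrix}\bigr]$, which I postmultiply by $\bigl[\begin{smallmatrix} 1 & 0 \\ b_1(\alpha) & 1 \end{smallmatrix}\bigr]\widehat{\Psi}(0;\alpha)$. The two columns of $\widehat{\Psi}(0;\alpha)$ are evaluated from the known small--$\zeta$ behavior of the Bessel parametrix of Appendix \ref{AppB}: the first column produces the leading constant $\sqrt{\pi}\,\e^{-\im\pi/4}/(2^{\alpha}\Gamma(1+\alpha))$ and the second produces $\sqrt{\pi}\,\e^{-\im\pi/4}\,2^{\alpha-1}\Gamma(\alpha)/(\pi\im)$ (with the analogous limit handling the $\alpha=0$ case via \eqref{ev:2} and the logarithmic entry). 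At this level the statement is already proved at leading order.

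For the $t^{-1/2}$ correction I substitute the Proposition giving $\widehat{R}^{11}(0)$ and $\widehat{R}^{12}(0)$. The first row of $\widehat{{\bf R}}(0)\widehat{{\bf E}}^{(0)}(0)$ reduces to the two combinations
\begin{equation*}
\widehat{R}^{11}(0) - 2\im\nu\,\widehat{R}^{12}(0), \qquad -2\im\nu\,\widehat{R}^{11}(0) + (1-4\nu^2)\widehat{R}^{12}(0),
\end{equation*}
into which the expansion of the Proposition is inserted. The terms proportional to $a_1(\alpha)$ (which encodes the $\alpha$-dependence of the Bessel model) combine with the linear-in-$\alpha$ shift produced by $b_1(\alpha)$ acting on the second row of $\widehat{\Psi}(0;\alpha)$; by the standard identities $a_1(\alpha) = \frac{4\alpha^2-1}{8}$ and $b_1(\alpha) = -\frac{4\alpha^2-1}{8}$ from Appendix \ref{AppB}, the quadratic $\alpha^2$-pieces cancel and the residual contribution is exactly $\mp\frac{2\im\nu}{\sqrt t}\alpha$ in the $(1,1)$ and $(1,2)$ entry respectively. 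Simultaneously the $\sin\eta$ and $\cos\eta$ contributions in $\widehat{R}^{11}(0)$ and $\widehat{R}^{12}(0)$ collapse, after factoring the overall $\nu$ and grouping, to the single factor $1-\sin\eta(t,\alpha,v)$, yielding the desired form.

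The bulk of the work, and the principal obstacle, is this last trigonometric/algebraic collapse: one must verify that, modulo $\mathcal{O}(t^{-1})$, all purely $\nu^k$ terms fit into the factored expression displayed in the corollary and that the asymmetric $\pm\alpha$ dependence on the $(1,1)$ versus $(1,2)$ sides indeed arises from the interplay between $b_1(\alpha)$ and the two columns of $\widehat{\Psi}(0;\alpha)$. The $\alpha=0$ case follows either by continuity in $\alpha$ or by a parallel computation starting from \eqref{ev:2}, where the logarithmic entry is cancelled by the corresponding $\log t$ inside $\widehat{\Psi}(0;0)$; in either route, uniformity of the error for $(\alpha,v)$ in compact subsets of $(-1,+\infty)\times[0,+\infty)$ is inherited directly from Theorem \ref{DZBess} and the Proposition controlling $\widehat{R}(0)$.
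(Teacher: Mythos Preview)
Your overall strategy is correct and is exactly what the paper does: plug the Proposition for $\widehat{{\bf R}}(0)$ into the identity \eqref{ev:1}, using the conjugation ${\bf R}(0){\bf E}^{(0)}(0)=t^{-\frac14\sigma_3}\widehat{{\bf R}}(0)\widehat{{\bf E}}^{(0)}(0)t^{\frac14\sigma_3}$ and the explicit value of $\widehat{\Psi}(0;\alpha)$ from Remark~\ref{locrem}. However, two details in your write-up are wrong and would derail the computation if taken literally.

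First, $\widehat{\Psi}(0;\alpha)$ is a $t$-independent constant matrix (see Remark~\ref{locrem}); there are no ``hidden $t^{\pm 1/4}$ factors'' in it. The correct bookkeeping is that the right $t^{\frac14\sigma_3}$, pushed through $\bigl[\begin{smallmatrix}1&0\\ b_1&1\end{smallmatrix}\bigr]\widehat{\Psi}(0;\alpha)$, weights the second row of $\widehat{\Psi}(0;\alpha)$ by $t^{-1/2}$ relative to the first; combined with the left $t^{-1/4}$ on the first row this gives
\[
\widehat{X}^{11}(0)=\Big[A+Bt^{-1/2}\big(b_1(\alpha)+\alpha\big)\Big]\sqrt{\pi}\,\e^{-\im\pi/4}\,\frac{2^{-\alpha}t^{\alpha/2}\e^{v/2}}{\Gamma(1+\alpha)}+\mathcal{O}(t^{-1}),
\]
with $(A,B)=(\widehat R^{11}-2\im\nu\widehat R^{12},\,-2\im\nu\widehat R^{11}+(1-4\nu^2)\widehat R^{12})$ as you wrote.

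Second, the value you quote for $b_1(\alpha)$ is wrong: Appendix~\ref{AppB} gives $a_1(\alpha)=\tfrac{1}{8}(4\alpha^2-1)$ and $b_1(\alpha)=\tfrac{1}{8}(4\alpha^2+3)$, not $b_1=-a_1$. With your values one gets $a_1-b_1=2a_1=\tfrac{4\alpha^2-1}{4}$, so the $\alpha^2$ pieces would \emph{not} cancel. With the correct values, $a_1-b_1=-\tfrac12$ is $\alpha$-independent; the $a_1$-contribution in $A$ is exactly $\tfrac{2\im\nu}{\sqrt t}a_1$ (the $(1-4\nu^2)$ and $8\nu^2$ pieces combine), the $b_1$-contribution from $Bt^{-1/2}b_1$ is $-\tfrac{2\im\nu}{\sqrt t}b_1$, and their sum $-\tfrac{\im\nu}{\sqrt t}$ supplies precisely the ``$1$'' in $1-\sin\eta$. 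The remaining $Bt^{-1/2}\alpha$ gives the $-\tfrac{2\im\nu}{\sqrt t}\alpha$. The $(1,2)$ entry works the same way with $\widehat{\Psi}^{21}(0;\alpha)$ replaced by $-\alpha$ times the $(1,2)$ column constant, producing the opposite sign $+\tfrac{2\im\nu}{\sqrt t}\alpha$. Once these two points are fixed, your argument goes through verbatim.
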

The last Corollary allows us to compute parts of $L(t,\alpha;\gamma)$, see \eqref{cool:3} and \eqref{cool:4}. For the remaining parts we evidently require the small $t$ behavior of $\widehat{{\bf X}}(0)$.
\subsubsection{Small $t$-behavior of ${\bf Y}(\lambda;t,\alpha,\gamma)$ for $\alpha>0$.} We return to RHP \ref{BessRHP} and use the power series expansion 
\begin{equation}\label{Besssmall}
	J_{\alpha}(z)=\left(\frac{z}{2}\right)^{\alpha}\left\{\frac{1}{\Gamma(1+\alpha)}+\mathcal{O}\big(z^2\big)\right\},\ \ \ z\rightarrow 0,\ \ z\notin(-\infty,0]
\end{equation}
to obtain the following small norm estimate.
\begin{prop} For any fixed $\alpha>0$ there exist $t_0=t_0(\alpha)>0$ and $c=c(\alpha)>0$ such that
\begin{equation*}
	\|{\bf G}_{\bf Y}(\cdot;t,\alpha,v)-\mathbb{I}\|_{L^2\cap L^{\infty}(0,1)}\leq c\,t^{\alpha},\ \ \ \ \forall\,t\leq t_0,\ \ \ \gamma\in[0,1].
\end{equation*}
\end{prop}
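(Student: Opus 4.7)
The plan is to apply the small-argument power series expansion \eqref{Besssmall} for $J_\alpha(z)$ together with the term-by-term differentiated version for $J_\alpha'(z)$ to the three independent entries of ${\bf G}_{\bf Y}(\lambda;t,\alpha,v)-\mathbb{I}$ on $\lambda\in(0,1)$, whose argument is $z=\sqrt{\lambda t}\in(0,\sqrt{t})$. Write $J_\alpha(z)=(z/2)^\alpha f_\alpha(z)$ with
\begin{equation*}
	f_\alpha(z)=\sum_{k=0}^\infty\frac{(-1)^k}{k!\,\Gamma(k+\alpha+1)}\left(\frac{z}{2}\right)^{2k},
\end{equation*}
entire with $f_\alpha(0)=1/\Gamma(1+\alpha)$, and similarly $J_\alpha'(z)=(z/2)^{\alpha-1}g_\alpha(z)$, where $g_\alpha$ is entire with $g_\alpha(0)=1/(2\Gamma(\alpha))$. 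The three combinations that enter the jump matrix of RHP \ref{BessRHP} then factor as
\begin{equation*}
	\big(J_\alpha(\sqrt{\lambda t}\,)\big)^2=\left(\frac{\sqrt{\lambda t}}{2}\right)^{2\alpha}f_\alpha^2(\sqrt{\lambda t}\,),\quad\big(\sqrt{\lambda t}\,J_\alpha'(\sqrt{\lambda t}\,)\big)^2=4\left(\frac{\sqrt{\lambda t}}{2}\right)^{2\alpha}g_\alpha^2(\sqrt{\lambda t}\,),
\end{equation*}
\begin{equation*}
	\sqrt{\lambda t}\,J_\alpha(\sqrt{\lambda t}\,)J_\alpha'(\sqrt{\lambda t}\,)=2\left(\frac{\sqrt{\lambda t}}{2}\right)^{2\alpha}f_\alpha(\sqrt{\lambda t}\,)g_\alpha(\sqrt{\lambda t}\,).
\end{equation*}

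Since $f_\alpha$ and $g_\alpha$ are entire, they are uniformly bounded on the compact interval $[0,\sqrt{t_0}]$ for any fixed $t_0$, say by some $M=M(\alpha,t_0)$. Substituting the three identities back into the jump matrix and using $0<\gamma\leq 1$, this gives the entry-wise bound
\begin{equation*}
	\big|\big({\bf G}_{\bf Y}(\lambda;t,\alpha,v)-\mathbb{I}\big)_{jk}\big|\leq c(\alpha)\,(\lambda t)^{\alpha}\leq c(\alpha)\,t^\alpha,\qquad \lambda\in(0,1),\ t\leq t_0,\ \gamma\in[0,1],
\end{equation*}
with $c(\alpha)$ absorbing the $\pi$, the $4^{-\alpha}$ and the bound $M$. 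This is the required $L^\infty$ estimate, and the $L^2$ estimate is immediate since the interval $(0,1)$ has Lebesgue measure one.

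The argument is routine and presents no real obstacle; the only point worth flagging is that although $J_\alpha'(z)\sim(z/2)^{\alpha-1}/(2\Gamma(\alpha))$ is singular at $z=0$ whenever $\alpha<1$, this singularity is systematically compensated by a matching factor of $\sqrt{\lambda t}$ present in every occurrence of $J_\alpha'$ inside the jump matrix of RHP \ref{BessRHP}. The hypothesis $\alpha>0$ enters essentially only to make the resulting total power $2\alpha$ of $\sqrt{\lambda t}$ strictly positive, which is exactly what makes the $t^\alpha$ upper bound decay to zero as $t\downarrow 0$. In the complementary regime $-1<\alpha\leq 0$ the same factorization remains valid but produces a bound of order $(\lambda t)^\alpha$ which fails to be uniformly small on $(0,1)$ and would instead require a separate local model at $\lambda=0$.
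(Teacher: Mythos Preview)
Your proof is correct and follows exactly the approach the paper indicates: the paper simply points to the power series expansion \eqref{Besssmall} and states the small-norm estimate as an immediate consequence, and you have filled in the routine details of that argument accurately. Your observation about the compensating factor of $\sqrt{\lambda t}$ in front of every $J_\alpha'$ and the remark on why the regime $-1<\alpha\leq 0$ requires a different treatment are both on point and consistent with what the paper does next.
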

Hence, by general theory \cite{DZ}, the initial RHP \ref{BessRHP} is solvable as $t\downarrow 0$ and $\alpha>0$
\begin{theo} Given $\alpha>0$ there exist $t_0=t_0(\alpha)$ and $c=c(\alpha)$ positive such that RHP \ref{BessRHP} is solvable in $L^2(0,1)$ for all $t\leq t_0$. Its solution ${\bf Y}(\lambda)={\bf Y}(\lambda;t,\alpha,v)$ can be computed iteratively via the integral equation
\begin{equation*}
	{\bf Y}(\lambda)=\mathbb{I}+\frac{1}{2\pi\im}\int_0^1{\bf Y}_-(w)\big({\bf G}_{\bf Y}(w)-\mathbb{I}\big)\frac{\d w}{w-\lambda},\ \ \lambda\in\mathbb{C}\backslash[0,1]
\end{equation*}
using the estimate
\begin{equation*}
	\|{\bf Y}_-(\cdot;t,\alpha,v)-\mathbb{I}\|_{L^2(0,1)}\leq c\,t^{\alpha},\ \ \ \forall\,\ t\leq t_0,\ \ \gamma\in[0,1].
\end{equation*}
\end{theo}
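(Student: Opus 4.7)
The plan is to invoke the standard small-norm Riemann-Hilbert theory of Deift-Zhou \cite{DZ}, driven by the just-established estimate $\|{\bf G}_{\bf Y}(\cdot;t,\alpha,v)-\mathbb{I}\|_{L^2\cap L^{\infty}(0,1)}\leq c\,t^{\alpha}$ which is uniform in $\gamma\in[0,1]$. The argument mirrors the derivation of Theorem \ref{DZBess} almost verbatim and requires no new analytic input beyond this smallness bound.

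First, I would recast RHP \ref{BessRHP} as a singular integral equation on $L^2(0,1)$. Setting $\Delta:={\bf G}_{\bf Y}-\mathbb{I}$ and letting $C_\pm$ denote the Cauchy projection operators for the oriented segment $(0,1)$, the jump relation ${\bf Y}_+={\bf Y}_-{\bf G}_{\bf Y}$ together with the normalization ${\bf Y}(\lambda)\to\mathbb{I}$ as $\lambda\to\infty$ and the Sokhotski-Plemelj theorem yield
\begin{equation*}
{\bf Y}_-(\lambda)=\mathbb{I}+C_-\bigl({\bf Y}_-\Delta\bigr)(\lambda),\qquad\lambda\in(0,1),
\end{equation*}
i.e., $(\mathbb{I}-\mathcal{C}_\Delta){\bf Y}_-=\mathbb{I}$, where $\mathcal{C}_\Delta f:=C_-(f\Delta)$ is bounded on $L^2(0,1)$.

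Second, since $C_-:L^2(0,1)\to L^2(0,1)$ has norm independent of $t$ and $\alpha$, we obtain
\begin{equation*}
\|\mathcal{C}_\Delta\|_{L^2\to L^2}\leq \|C_-\|_{L^2\to L^2}\,\|\Delta\|_{L^\infty(0,1)}\leq c'\,t^{\alpha}.
\end{equation*}
Choosing $t_0=t_0(\alpha)>0$ with $c'\,t_0^\alpha<\tfrac{1}{2}$ makes $\mathbb{I}-\mathcal{C}_\Delta$ invertible by Neumann series for every $t\leq t_0$ and every $\gamma\in[0,1]$, which both produces a unique solution and gives
\begin{equation*}
\|{\bf Y}_--\mathbb{I}\|_{L^2(0,1)}=\|(\mathbb{I}-\mathcal{C}_\Delta)^{-1}\mathcal{C}_\Delta(\mathbb{I})\|_{L^2}\leq 2\|C_-\|\,\|\Delta\|_{L^2(0,1)}\leq c\,t^{\alpha},
\end{equation*}
which is the advertised bound.

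Third, once ${\bf Y}_-\in L^2(0,1)$ is in hand, the Cauchy integral
\begin{equation*}
{\bf Y}(\lambda)=\mathbb{I}+\frac{1}{2\pi\im}\int_0^1{\bf Y}_-(w)\,\Delta(w)\,\frac{\d w}{w-\lambda},\qquad\lambda\in\mathbb{C}\backslash[0,1],
\end{equation*}
defines an analytic function on $\mathbb{C}\setminus[0,1]$ with the correct normalization at infinity, and Sokhotski-Plemelj confirms the jump on $(0,1)$. Uniqueness is the usual Liouville argument, noting $\det{\bf G}_{\bf Y}\equiv 1$. The only thing to monitor is endpoint behavior, but for $\alpha>0$ the expansion \eqref{Besssmall} shows $\Delta$ is bounded and smooth on $(0,1)$ with no singularity at either endpoint, so no obstruction arises. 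In short, there is no real hard step: the entire theorem is the standard Neumann-series closure, and once the pointwise estimate on $\Delta$ is written down the result follows by inspection.
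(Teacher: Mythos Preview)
Your proposal is correct and is precisely the standard small-norm argument that the paper invokes by citing \cite{DZ}; the paper itself gives no further detail beyond the preceding Proposition and the sentence ``Hence, by general theory \cite{DZ}, the initial RHP \ref{BessRHP} is solvable as $t\downarrow 0$ and $\alpha>0$.'' You have simply unpacked that reference, so the approaches coincide.
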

Combining this last result with RHPs \ref{Bess:X} and \ref{Bessbetter} we find in turn
\begin{cor}\label{Xo:2} As $t\downarrow 0$ with fixed $\alpha>0$,
\begin{equation*}
	\widehat{X}^{11}(0)=\sqrt{\pi}\,\e^{-\im\frac{\pi}{4}}\left(1+\mathcal{O}\left(t^{\min\{\alpha,1\}}\right)\right)\frac{2^{-\alpha}t^{\frac{\alpha}{2}}}{\Gamma(1+\alpha)},\ \ 
	\widehat{X}^{12}(0)=\sqrt{\pi}\,\e^{-\im\frac{\pi}{4}}\left(1+\mathcal{O}\left(t^{\min\{\alpha,1\}}\right)\right)\frac{\Gamma(\alpha)}{2\pi\im}\,2^{\alpha}t^{-\frac{\alpha}{2}},
\end{equation*}
uniformly for any $\gamma\in[0,1]$.
\end{cor}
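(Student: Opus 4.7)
The plan is to combine the small-$t$ small norm theorem for ${\bf Y}$ (stated immediately above the corollary) with the undressing formula \eqref{BessT:1} and the local structure of the Bessel parametrix $\Psi(\zeta;\alpha)$ near $\zeta=0$ from Appendix \ref{AppB}; in particular, no large-$t$ model machinery (${\bf R},{\bf E}^{(0)}, {\bf P}^{(0)},{\bf P}^{(1)}$) is needed here, which explains why the $e^{v/2}$ factor present in Corollary \ref{Xo:1} is absent.

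First, I would upgrade the $L^2$ bound $\|{\bf Y}_-(\cdot;t,\alpha,v)-\mathbb{I}\|_{L^2(0,1)}\leq c\,t^{\alpha}$ to a pointwise estimate ${\bf Y}(\lambda)=\mathbb{I}+\mathcal{O}(t^\alpha)$ uniformly on any fixed compact neighbourhood of $\lambda=0$ disjoint from the open segment $(0,1)$, uniformly in $\gamma\in[0,1]$. This comes from the Cauchy integral representation of ${\bf Y}$ combined with the pointwise estimate $\|{\bf G}_{\bf Y}(w)-\mathbb{I}\|\lesssim t^\alpha w^\alpha$ on $(0,1)$ (obtained by inserting the Bessel series \eqref{Besssmall} into the explicit form of the jump in RHP \ref{BessRHP}). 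The bound is $\gamma$-uniform since the prefactor $\gamma\in[0,1]$ does not affect the order of decay.

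Second, plugging ${\bf Y}=\mathbb{I}+\mathcal{O}(t^\alpha)$ into \eqref{BessT:1} in, say, the sector $\widehat{\Omega}_1$, I obtain
\begin{equation*}
{\bf X}(\lambda)=\begin{bmatrix}1&0\\ b_1(\alpha)&1\end{bmatrix}\Psi(\lambda t;\alpha)\begin{bmatrix}1&0\\ e^{-\im\pi\alpha}&1\end{bmatrix}\bigl(\mathbb{I}+\mathcal{O}(t^\alpha)\bigr).
\end{equation*}
The model $\Psi(\zeta;\alpha)$ admits in a punctured neighbourhood of $\zeta=0$ a decomposition $\Psi(\zeta;\alpha)=\widetilde{\Psi}(\zeta;\alpha)\,\zeta^{(\alpha/2)\sigma_3}\mathcal{N}_\alpha$, where $\widetilde{\Psi}(\zeta;\alpha)$ is analytic at $\zeta=0$ (with value $\widetilde{\Psi}(0;\alpha)$ read off from \eqref{Besssmall} together with the corresponding expansions of $Y_\alpha$ or the logarithmic companion when $\alpha\in\mathbb{Z}$), and $\mathcal{N}_\alpha$ is a constant triangular factor. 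Matching the product $\mathcal{N}_\alpha\bigl[\begin{smallmatrix}1&0\\ e^{-\im\pi\alpha}&1\end{smallmatrix}\bigr]$ against the local multiplier in RHP \ref{Bess:X}(3) cancels the endpoint-specific piece, and isolates
\begin{equation*}
\widehat{{\bf X}}(0)=\begin{bmatrix}1&0\\ b_1(\alpha)&1\end{bmatrix}\widetilde{\Psi}(0;\alpha)\,t^{(\alpha/2)\sigma_3}\bigl(\mathbb{I}+\mathcal{O}(t^{\min\{\alpha,1\}})\bigr).
\end{equation*}
The two error sources are the $t^\alpha$ from Step one and the $t^1$ coming from the next Taylor coefficient $\widetilde{\Psi}(\zeta;\alpha)=\widetilde{\Psi}(0;\alpha)+\mathcal{O}(\zeta)$ evaluated on a fixed contour around $\lambda=0$ used to identify $\widehat{{\bf X}}(0)$; their sum is controlled by $t^{\min\{\alpha,1\}}$. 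Reading off entries $(1,1)$ and $(1,2)$ and using the standard Bessel normalisations $\Gamma(1+\alpha)$ and $\Gamma(\alpha)$ produces the stated constants $\sqrt{\pi}\,\e^{-\im\pi/4}2^{-\alpha}/\Gamma(1+\alpha)$ and $\sqrt{\pi}\,\e^{-\im\pi/4}\Gamma(\alpha)2^\alpha/(2\pi\im)$.

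The main obstacle is bookkeeping rather than analysis: the Bessel parametrix has genuinely different local decompositions for $\alpha\in\mathbb{Z}$ (with a logarithm) and $\alpha\notin\mathbb{Z}$ (rational), so one must check that the constant triangular multiplier $\mathcal{N}_\alpha$ together with $\bigl[\begin{smallmatrix}1&0\\ e^{-\im\pi\alpha}&1\end{smallmatrix}\bigr]$ exactly reproduces the corresponding endpoint structure in both alternatives of RHP \ref{Bess:X}(3). A small technical point is that $\lambda=0$ is on the boundary of the jump contour, so one should verify that the identification of $\widehat{{\bf X}}(0)$ is independent of the chosen approach sector; this follows from the fact that the three sector-dependent multipliers in \eqref{BessT:1} are related by the Stokes relations for $\Psi(\zeta;\alpha)$, which are compatible with RHP \ref{Bess:X}(3).
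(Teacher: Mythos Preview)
Your approach is correct and is exactly what the paper does: it combines the small-$t$ small-norm solvability of RHP \ref{BessRHP} (so that ${\bf Y}(\lambda)=\mathbb{I}+\mathcal{O}(t^\alpha)$ near $\lambda=0$) with the undressing \eqref{BessT:1} and the local expansion of $\Psi(\zeta;\alpha)$ from RHP \ref{Bessbetter}(3) and Remark \ref{locrem} to read off $\widehat{{\bf X}}(0)$. One cosmetic point: in your displayed formula the error factor $(\mathbb{I}+\mathcal{O}(t^\alpha))$ arising from ${\bf Y}$ naturally sits \emph{between} the constant lower-triangular matrix and $\widehat\Psi(0;\alpha)\,t^{(\alpha/2)\sigma_3}$ (not to the right of $t^{(\alpha/2)\sigma_3}$), and with that placement the direct evaluation at $\lambda=0$ already yields the multiplicative $\mathcal{O}(t^\alpha)$ error on the $(1,1)$ and $(1,2)$ entries---the contour detour you invoke to produce the extra $\mathcal{O}(t)$ is unnecessary, though it does match the (not sharp) $\mathcal{O}(t^{\min\{\alpha,1\}})$ stated in the paper.
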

\subsubsection{Small $t$-behavior for $-1<\alpha<0$.} For this parameter regime we again return to RHP \ref{BessRHP} but apply first the following transformation
\begin{equation*}
	{\bf W}(\lambda)=t^{-\frac{\alpha}{2}\sigma_3}\big(\widehat{\Psi}(0;\alpha)\big)^{-1}{\bf Y}(\lambda)\begin{cases}\widehat{\Psi}(\lambda t;\alpha)t^{\frac{\alpha}{2}\sigma_3}{\bf M}(\lambda),&\lambda\in\mathbb{D}_r(\frac{1}{2})\\ \widehat{\Psi}(0;\alpha)t^{\frac{\alpha}{2}\sigma_3}{\bf M}(\lambda),&\lambda\notin\mathbb{D}_r(\frac{1}{2})\end{cases},\ \ \ \ \ \ \ \frac{1}{2}<r<1\ \ \textnormal{fixed}.
\end{equation*}
The entire function $\widehat{\Psi}(\z;\alpha)$ is defined in RHP \ref{Bessbetter} below and we have introduced
\begin{equation}\label{m:1}
	{\bf M}(\lambda)=\begin{bmatrix}1 & \frac{\gamma}{2\pi\im}\int_0^1\frac{w^{\alpha}}{w-\lambda}\,\d w\smallskip\\ 0 & 1\end{bmatrix},\ \ \ \ \lambda\in\mathbb{C}\backslash[0,1],\ \ \ \ \gamma\in[0,1],\ \ -1<\alpha<0.
\end{equation}
\begin{problem}\label{Mfunction} The function ${\bf M}(\lambda)={\bf M}(\lambda;\alpha,\gamma)\in\mathbb{C}^{2\times 2}$ defined in \eqref{m:1} has the following properties
\begin{enumerate}
	\item ${\bf M}(\lambda)$ is analytic for $\lambda\in\mathbb{C}\backslash[0,1]$.
	\item Orienting the interval $[0,1]\subset\mathbb{R}$ from left to right we have
	\begin{equation*}
		{\bf M}_+(\lambda)={\bf M}_-(\lambda)\begin{bmatrix} 1 & \gamma \lambda^{\alpha} \\ 0 & 1\end{bmatrix}=\begin{bmatrix} 1 & \gamma\lambda^{\alpha}\\ 0 & 1\end{bmatrix}{\bf M}_-(\lambda),\ \ \ \ \lambda\in(0,1).
	\end{equation*}
	\item ${\bf M}(\lambda)$ is square integrable on $[0,1]\subset\mathbb{R}$, in more detail for $\alpha\in(-1,0)$,
	\begin{equation*}
		{\bf M}(\lambda)=\widehat{{\bf M}}(\lambda)\begin{bmatrix}1 & \frac{\im\gamma}{2}\frac{(-\lambda)^{\alpha}}{\sin\pi\alpha}\\ 0 & 1\end{bmatrix},\ \ \ \lambda\rightarrow 0,\ \lambda\notin[0,+\infty),
	\end{equation*}
	where $z^{\alpha}$ is defined with its principal branch. Here, $\widehat{{\bf M}}(\lambda)$ is analytic at $\lambda=0$,
	\begin{equation*}
		\widehat{{\bf M}}(\lambda)=\mathbb{I}+\frac{\gamma}{2\pi\im\alpha}\sigma_++\mathcal{O}\big(\lambda\big),\ \ \ \lambda\rightarrow 0.	\end{equation*}
	\item As $\lambda\rightarrow\infty$,
	\begin{equation*}
		{\bf M}(\lambda)=\mathbb{I}+\frac{\im\gamma}{2\pi}\frac{\sigma_+}{\alpha+1}\frac{1}{\lambda}+\mathcal{O}\left(\lambda^{-2}\right).
	\end{equation*}
\end{enumerate}
\end{problem}
At this point we recall RHP \ref{BessRHP} and make use of \eqref{facto} in order to derive the following RHP for ${\bf Z}(\lambda)$
\begin{problem}\label{ZRHP} Find ${\bf W}(\lambda)={\bf W}(\lambda;t,\alpha,v)\in\mathbb{C}^{2\times 2}$ such that
\begin{enumerate}
	\item ${\bf W}(\lambda)$ is analytic for $\lambda\in\mathbb{C}\backslash\partial\mathbb{D}_r(\frac{1}{2})$ and we orient $\partial\mathbb{D}_r(\frac{1}{2})$ clockwise.
	\item By construction, compare \eqref{facto}, ${\bf W}(\lambda)$ has no jump on the interval $(0,1)\subset\mathbb{R}$. Instead we observe that
	\begin{equation*}
		{\bf W}_+(\lambda)={\bf W}_-(\lambda){\bf M}^{-1}(\lambda)t^{-\frac{\alpha}{2}\sigma_3}\big(\widehat{{\Psi}}(\lambda t;\alpha)\big)^{-1}\widehat{\Psi}(0;\alpha)t^{\frac{\alpha}{2}\sigma_3}{\bf M}(\lambda),\ \ \ \lambda\in\partial\mathbb{D}_r\left(\frac{1}{2}\right).
	\end{equation*}
	\item ${\bf W}(\lambda)$ is bounded at $\lambda=0$ and $\lambda=1$.
	\item As $\lambda\rightarrow\infty$ we have
	\begin{equation*}
		{\bf W}(\lambda)=\mathbb{I}+\mathcal{O}\big(\lambda^{-1}\big).
	\end{equation*}
\end{enumerate}
\end{problem}
Since ${\bf M}(\lambda)$ is $t$-independent and $\widehat{\Psi}(\lambda t;\alpha)$ analytic at $\lambda=0$, we obtain at once
\begin{prop} For any fixed $\alpha\in(-1,0)$ there exist $t_0=t_0(\alpha)>0$ and $c=c(\alpha)>0$ such that
\begin{equation*}
	\|{\bf G}_{{\bf W}}(\cdot;t,\alpha,v)-\mathbb{I}\|_{L^2\cap L^{\infty}(\partial\mathbb{D}_r(\frac{1}{2}))}\leq c\,t^{1+\alpha},\ \ \ \ \ \forall\, t\leq t_0,\ \ \ \gamma\in[0,1].
\end{equation*}
\end{prop}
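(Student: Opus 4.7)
The plan is to Taylor-expand the middle factor of $\mathbf{G}_{\mathbf{W}}$ around $t=0$ and track carefully how conjugation by $t^{\frac{\alpha}{2}\sigma_3}$ rescales its off-diagonal entries. Since $\widehat{\Psi}(z;\alpha)$ is entire with $\widehat{\Psi}(0;\alpha)$ invertible (cf.\ RHP \ref{Bessbetter}), I would first write
\begin{equation*}
(\widehat{\Psi}(\lambda t;\alpha))^{-1}\widehat{\Psi}(0;\alpha)=\mathbb{I}+t\,H(\lambda;t,\alpha),
\end{equation*}
with $H$ analytic in $\lambda$ on a neighborhood of $\partial\mathbb{D}_r(\frac{1}{2})$, analytic in $t$ near zero, and uniformly bounded for $\alpha$ in any compact subset of $(-1,0)$. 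The entire expression is $\gamma$-independent at this stage, so no $\gamma$-uniformity issues appear yet.

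Next I would compute the conjugation
\begin{equation*}
t^{-\frac{\alpha}{2}\sigma_3}\bigl(\mathbb{I}+tH\bigr)t^{\frac{\alpha}{2}\sigma_3}=\mathbb{I}+t\begin{bmatrix}H^{11} & H^{12}t^{-\alpha}\\ H^{21}t^{\alpha} & H^{22}\end{bmatrix},
\end{equation*}
so the diagonal entries are $\mathcal{O}(t)$, the upper-right entry is $\mathcal{O}(t^{1-\alpha})$, and the lower-left entry is $\mathcal{O}(t^{1+\alpha})$. For $-1<\alpha<0$ one has $1-\alpha>1>1+\alpha>0$, so as $t\downarrow 0$ all three orders are dominated by $t^{1+\alpha}$, giving the intermediate bound $\mathbb{I}+\mathcal{O}(t^{1+\alpha})$ uniformly on $\partial\mathbb{D}_r(\frac{1}{2})$.

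Finally I would conjugate by $\mathbf{M}(\lambda)$. Choosing $\frac{1}{2}<r<1$, the contour $\partial\mathbb{D}_r(\frac{1}{2})$ has positive distance $r-\frac{1}{2}$ from the segment $[0,1]$, hence by RHP \ref{Mfunction}, $\mathbf{M}(\lambda)$ and $\mathbf{M}^{-1}(\lambda)$ are analytic and uniformly bounded there by a constant depending only on $r$ and $\alpha$; since $\gamma$ enters $\mathbf{M}$ only linearly, this bound is uniform in $\gamma\in[0,1]$. Conjugation by such a bounded multiplier preserves the $\mathcal{O}(t^{1+\alpha})$ estimate entrywise, and compactness of the contour upgrades the pointwise bound to the claimed $L^2\cap L^{\infty}$ bound. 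There is no real obstacle here: the only point to watch is that the $t^{-\alpha}$ factor produced in the upper-right entry by the conjugation \emph{could} have been dangerous, but it is tamed by the multiplicative $t$ from the Taylor remainder, and moreover produces the subdominant order $t^{1-\alpha}$ rather than the dominant $t^{1+\alpha}$, so nothing breaks.
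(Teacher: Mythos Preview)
Your argument is correct and is exactly the expanded version of what the paper does: the paper's proof is the single sentence ``Since ${\bf M}(\lambda)$ is $t$-independent and $\widehat{\Psi}(\lambda t;\alpha)$ analytic at $\lambda=0$, we obtain at once,'' and you have simply spelled out the Taylor expansion, the $t^{\pm\alpha}$ rescaling of the off-diagonal entries under conjugation by $t^{-\frac{\alpha}{2}\sigma_3}$, and the harmless outer conjugation by the bounded $t$-independent ${\bf M}(\lambda)$. One cosmetic remark: in your closing sentence you describe the $t^{-\alpha}$ factor as potentially ``dangerous,'' but for $\alpha\in(-1,0)$ one has $-\alpha>0$ so $t^{-\alpha}\to 0$; it is the $t^{\alpha}$ factor in the $(2,1)$ entry that is the genuinely growing one, and it is precisely this entry that sets the rate $t^{1+\alpha}$.
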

In short, the transformed problem \ref{ZRHP} is solvable as $t\downarrow 0$ and $\alpha\in(-1,1)$, cf. \cite{DZ}.
\begin{theo} Given $\alpha\in(-1,0)$ there exist $t_0=t_0(\alpha)$ and $c=c(\alpha)$ positive such that RHP \ref{ZRHP} is solvable in $L^2(0,1)$ for all $t\leq t_0$. Its solution ${\bf W}(\lambda)={\bf W}(\lambda;t,\alpha,v)$ can be computed iteratively via the integral equation
\begin{equation*}
	{\bf W}(\lambda)=\mathbb{I}+\frac{1}{2\pi\im}\oint_{\partial\mathbb{D}_r(\frac{1}{2})}{\bf W}_-(w)\big({\bf G}_{\bf W}(w)-\mathbb{I}\big)\frac{\d w}{w-\lambda},\ \ \lambda\in\mathbb{C}\backslash\partial\mathbb{D}_r\left(\frac{1}{2}\right)		
\end{equation*}
using the estimate
\begin{equation*}
	\|{\bf W}_-(\cdot;t,\alpha,v)-\mathbb{I}\|_{L^2(\partial\mathbb{D}_r(\frac{1}{2}))}\leq c\,t^{1+\alpha},\ \ \ \ \forall\,\ t\leq t_0,\ \ \gamma\in[0,1].
\end{equation*}
\end{theo}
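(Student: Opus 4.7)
The plan is to deduce this theorem as a routine application of the Deift-Zhou small-norm theory of Riemann-Hilbert problems, cf.\ \cite{DZ}, using the small-norm estimate for ${\bf G}_{\bf W}-\mathbb{I}$ supplied by the preceding proposition as the only nontrivial input. Since $\partial\mathbb{D}_r(\frac{1}{2})$ is a smooth closed Jordan contour, the Cauchy operator
\[
(Cf)(\lambda)=\frac{1}{2\pi\im}\oint_{\partial\mathbb{D}_r(\frac{1}{2})}\frac{f(w)}{w-\lambda}\,\d w
\]
has bounded $L^2$ boundary values $C_\pm$, with operator norms depending only on the contour.

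First, I would reformulate RHP \ref{ZRHP} as a singular integral equation. Writing ${\bf W}(\lambda)=\mathbb{I}+C\bigl(({\bf W}_--\mathbb{I})({\bf G}_{\bf W}-\mathbb{I})\bigr)+C({\bf G}_{\bf W}-\mathbb{I})$ and taking minus boundary values, we obtain
\[
(I-C_{{\bf G}_{\bf W}-\mathbb{I}})({\bf W}_--\mathbb{I})=C_-({\bf G}_{\bf W}-\mathbb{I}),\qquad C_{{\bf G}_{\bf W}-\mathbb{I}}f:=C_-\!\bigl(f({\bf G}_{\bf W}-\mathbb{I})\bigr),
\]
so the solvability of RHP \ref{ZRHP} in $L^2$ reduces to the invertibility of $I-C_{{\bf G}_{\bf W}-\mathbb{I}}$ on $L^2(\partial\mathbb{D}_r(\frac{1}{2}))$.

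Second, invoking the previous proposition and the $L^2$-boundedness of $C_-$,
\[
\bigl\|C_{{\bf G}_{\bf W}-\mathbb{I}}\bigr\|_{L^2\to L^2}\leq \|C_-\|_{L^2\to L^2}\,\bigl\|{\bf G}_{\bf W}-\mathbb{I}\bigr\|_{L^\infty(\partial\mathbb{D}_r(\frac{1}{2}))}\leq c\,t^{1+\alpha},
\]
with the constant $c=c(\alpha)$ independent of $v$ and $\gamma\in[0,1]$. Since $1+\alpha>0$, we may choose $t_0=t_0(\alpha)>0$ so that the right-hand side is bounded above by $\tfrac12$ for all $t\leq t_0$. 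A Neumann series argument then yields invertibility of $I-C_{{\bf G}_{\bf W}-\mathbb{I}}$, and the resulting $L^2$-estimate
\[
\|{\bf W}_--\mathbb{I}\|_{L^2}\leq 2\,\|C_-\|\,\|{\bf G}_{\bf W}-\mathbb{I}\|_{L^2}\leq c\,t^{1+\alpha}
\]
gives the stated bound; the integral formula for ${\bf W}(\lambda)$ follows by substituting the solution into the representation above.

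I do not anticipate any serious obstacle: the only point requiring attention is the uniformity of all estimates with respect to the parameters $v\in[0,+\infty)$ and $\gamma\in[0,1]$. This uniformity is inherited directly from the corresponding uniformity in the small-norm estimate of the previous proposition, which in turn follows from the fact that ${\bf G}_{\bf W}$ is built from the entire function $\widehat{\Psi}(\,\cdot\,;\alpha)$, the $t$-independent matrix ${\bf M}(\lambda;\alpha,\gamma)$ (uniformly bounded on $\partial\mathbb{D}_r(\frac{1}{2})$ in $\gamma\in[0,1]$), and explicit powers of $t$. Consequently the choice of $t_0$ and the final constant $c$ may be taken independent of $v$ and $\gamma$, completing the argument.
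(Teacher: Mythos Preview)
Your proposal is correct and follows precisely the approach the paper takes: the paper does not spell out a proof but simply invokes the general Deift--Zhou small-norm theory \cite{DZ}, with the preceding proposition supplying the required $L^2\cap L^\infty$ estimate on ${\bf G}_{\bf W}-\mathbb{I}$. Your write-up merely unpacks that reference in the standard way (singular integral reformulation, operator bound via $\|C_-\|$, Neumann series), so there is no methodological difference to report.
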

This last result allows us to derive small $t$-expansions for $\widehat{{\bf X}}(0)$, compare RHP \ref{Bess:X}, \ref{Bessbetter} and \ref{Mfunction}.
\begin{cor} As $t\downarrow 0$ with fixed $\alpha\in(-1,0)$,
\begin{eqnarray*}
	\widehat{X}^{11}(0)&=&\sqrt{\pi}\,\e^{-\im\frac{\pi}{4}}\left(1+\mathcal{O}\left(t^{1+\alpha}\right)\right)\frac{2^{-\alpha}t^{\frac{\alpha}{2}}}{\Gamma(1+\alpha)},\\
	\widehat{X}^{12}(0)&=&\sqrt{\pi}\,\e^{-\im\frac{\pi}{4}}\left(1+\mathcal{O}\left(t^{\min\{1+\alpha,-\alpha\}}\right)\right)\left(-\frac{\gamma}{2\pi\im\alpha}\right)\frac{2^{-\alpha}t^{\frac{\alpha}{2}}}{\Gamma(1+\alpha)}.
\end{eqnarray*}
\end{cor}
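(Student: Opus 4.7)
The strategy parallels that of Corollary \ref{Xo:2}, but the direct small-norm attack on RHP \ref{BessRHP} fails for $\alpha\in(-1,0)$ since the jump of ${\bf Y}$ on $(0,1)$ contains factors of order $(\lambda t)^{\alpha}$ that are unbounded as $t\downarrow 0$. The cure is to work instead with the transformed problem RHP \ref{ZRHP}, for which the uniform bound ${\bf W}(\lambda)=\mathbb{I}+\mathcal{O}(t^{1+\alpha})$ has just been established. Since $\frac{1}{2}<r<1$, the origin lies inside $\mathbb{D}_r(\frac{1}{2})$, so inverting the defining formula of ${\bf W}$ in this disk yields, in a punctured neighbourhood of $\lambda=0$,
\begin{equation*}
{\bf Y}(\lambda)=\widehat{\Psi}(0;\alpha)\,t^{\frac{\alpha}{2}\sigma_3}\,{\bf W}(\lambda)\,{\bf M}^{-1}(\lambda)\,t^{-\frac{\alpha}{2}\sigma_3}\,\bigl(\widehat{\Psi}(\lambda t;\alpha)\bigr)^{-1}.
\end{equation*}

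Substituting into \eqref{BessT:1} in a sector off the real axis and right-multiplying by the matrix prescribed in item (3) of RHP \ref{Bess:X} produces a closed expression for $\widehat{{\bf X}}(\lambda)$. The singular factor $(-\lambda)^{\alpha}$ hidden in ${\bf M}^{-1}(\lambda)$ via item (3) of RHP \ref{Mfunction} combines with the $(\lambda t)^{\alpha/2\sigma_3}$ branch of $\Psi(\lambda t;\alpha)$ supplied by \eqref{Besssmall} and is absorbed into the prescribed $(-\lambda)^{-\frac{\alpha}{2}\sigma_3}$ right-factor of RHP \ref{Bess:X}(3); it is at this stage that the regular remnant $\widehat{{\bf M}}^{-1}(0)=\mathbb{I}-\frac{\gamma}{2\pi\im\alpha}\sigma_+$ emerges and produces the distinctive $-\frac{\gamma}{2\pi\im\alpha}$ prefactor in $\widehat{X}^{12}(0)$.

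Taking $\lambda\to 0$ and using $\widehat{\Psi}(\lambda t;\alpha)=\widehat{\Psi}(0;\alpha)+\mathcal{O}(\lambda t)$ together with the universal Bessel prefactor $\sqrt{\pi}\,\e^{-\im\pi/4}\,\Gamma(1+\alpha)^{-1}\,2^{-\alpha}\,t^{\alpha/2}$ (whose precise value is pinned down by consistency with the large-$t$ identity \eqref{ev:1}, which fixes the numerical value of $\widehat{\Psi}(0;\alpha)$) then yields the announced leading terms. The correction ${\bf W}(0)-\mathbb{I}=\mathcal{O}(t^{1+\alpha})$ supplies the error $\mathcal{O}(t^{1+\alpha})$ appearing in both entries; the $\mathcal{O}(\lambda t)$ correction in the Taylor expansion of $\widehat{\Psi}(\lambda t;\alpha)$, once conjugated by $t^{\frac{\alpha}{2}\sigma_3}$, contributes an additional relative correction of size $\mathcal{O}(t^{-\alpha})$ to the off-diagonal entry only, accounting for the minimum $\min\{1+\alpha,-\alpha\}$ in $\widehat{X}^{12}(0)$ and its absence in $\widehat{X}^{11}(0)$.

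The main technical obstacle is the bookkeeping: one has to carry the three singular/branching factors $(-\lambda)^{\alpha}$ (from ${\bf M}$), $(\lambda t)^{\alpha/2\sigma_3}$ (from $\Psi/\widehat{\Psi}$) and $(-\lambda)^{-\alpha/2\sigma_3}$ (from the RHP normalisation of $\widehat{{\bf X}}$) through the entire chain and verify that their product is manifestly analytic at the origin, so that the limit $\lambda\to 0$ can be taken safely and entry by entry. No new analytic input beyond \eqref{Besssmall}, the small-norm bound on ${\bf W}$, and the definitions collected in RHP \ref{Bessbetter} and RHP \ref{Mfunction} is required.
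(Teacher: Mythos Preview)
Your approach is correct and coincides with the paper's: invert the ${\bf W}$-transformation inside $\mathbb{D}_r(\tfrac12)$, substitute into \eqref{BessT:1}, peel off the normalisation of RHP~\ref{Bess:X}(3), and read off $\widehat{{\bf X}}(0)$. Working this through cleanly one in fact obtains the exact identity
\[
\widehat{{\bf X}}(\lambda)=\begin{bmatrix}1&0\\ b_1(\alpha)&1\end{bmatrix}\widehat{\Psi}(0;\alpha)\,t^{\frac{\alpha}{2}\sigma_3}\,{\bf W}(\lambda)\,\widehat{{\bf M}}^{-1}(\lambda),
\]
since the factor $\widehat{\Psi}(\lambda t;\alpha)^{-1}$ cancels entirely against the local expansion of $\Psi(\lambda t;\alpha)$ from RHP~\ref{Bessbetter}(3), and the branching pieces of ${\bf M}^{-1}$, $(-\lambda t)^{\frac{\alpha}{2}\sigma_3}$ and the RHP~\ref{Bess:X}(3) normalisation combine to the analytic $\widehat{{\bf M}}^{-1}$.

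One correction to your error bookkeeping: the $\mathcal{O}(t^{-\alpha})$ contribution to $\widehat{X}^{12}(0)$ does \emph{not} come from the Taylor remainder of $\widehat{\Psi}(\lambda t;\alpha)$ --- that factor has disappeared from the formula, and in any case its $\mathcal{O}(\lambda t)$ correction vanishes at $\lambda=0$. The source is simply the $(1,2)$ entry of $\widehat{\Psi}(0;\alpha)\,t^{\frac{\alpha}{2}\sigma_3}$, which is of size $t^{-\alpha/2}$ and feeds directly into $\widehat{X}^{12}(0)$; relative to the leading $t^{\alpha/2}$ term coming from the $(1,1)$ entry via $\widehat{{\bf M}}^{-1}(0)$, this gives the stated $\mathcal{O}(t^{-\alpha})$.
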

\subsubsection{Derivation of \eqref{JME:33}.} We know from Corollary \ref{pqcor} that as $t\rightarrow+\infty$ and $(v,\alpha)\in[0,+\infty)\times(-1,+\infty)$ are fixed,
\begin{equation}\label{step1}
	2t\,\mathcal{H}_H(q,p,t,\alpha)=-\frac{v}{\pi}\sqrt{t}+\frac{v^2}{4\pi^2}-\frac{v}{2\pi}\cos\eta(t,\alpha,v)+\mathcal{O}\big(t^{-\frac{1}{2}}\big).
\end{equation}
On the other hand, via the same Corollary \ref{pqcor} and through standard manipulations with trigonometric functions,
\begin{cor} As $t\rightarrow+\infty$ with fixed $(v,\alpha)\in[0,+\infty)\times(-1,+\infty)$,
\begin{equation*}
	pq_{\gamma}=\frac{1}{2\pi}\frac{\d}{\d\gamma}\left\{v\cos\big(\eta(t,\alpha,v)\big)+\frac{v^2}{4\pi}\ln(16t)\right\}-\frac{v}{\pi}\frac{\d}{\d\gamma}\textnormal{arg}\,\Gamma\left(\frac{\im v}{2\pi}\right)+\mathcal{O}\left(t^{-\frac{1}{2}}\ln t\right).
\end{equation*}
\end{cor}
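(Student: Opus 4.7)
The strategy is to avoid the sign ambiguity one inherits from knowing only $q^2$ and $p^2$ by expressing $pq_\gamma$ entirely in terms of $Q:=q^2$ and its partial derivatives. The second of the Hamilton equations in \eqref{HH:2}, $q_t = (q^2-1)p/(2t)$, rearranges to $p/q = 2tq_t/(q(q^2-1)) = tQ_t/(Q(Q-1))$, and combined with $qq_\gamma = Q_\gamma/2$ yields the \emph{exact} identity
\begin{equation*}
	pq_\gamma=\frac{p}{q}\cdot\frac{Q_\gamma}{2}=\frac{t\,Q_t\,Q_\gamma}{2Q(Q-1)}.
\end{equation*}
All quantities on the right are symmetric in the unknown sign of $q$, so we may insert the asymptotic expansion of $Q$ from Corollary \ref{pqcor} directly.

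Substituting $Q=(v/(\pi\sqrt{t}))(1-\sin\eta)+\mathcal{O}(t^{-1})$ and differentiating term by term, I would use $\eta_t = 1/\sqrt{t}+\mathcal{O}(1/t)$ to get $Q_t=-v\cos\eta/(\pi t)+\mathcal{O}(t^{-3/2})$, and $\eta_\gamma=-(v_\gamma/(2\pi))\ln(16t)+2\Psi_\gamma$, where $\Psi:=\textnormal{arg}\,\Gamma(\im v/(2\pi))$, to get $Q_\gamma=(\pi\sqrt{t})^{-1}[v_\gamma(1-\sin\eta)-v\cos\eta\,\eta_\gamma]+\mathcal{O}(t^{-1}\ln t)$. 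Since $Q=\mathcal{O}(t^{-1/2})$, one also has $Q(Q-1)=-Q+\mathcal{O}(t^{-1})$. Assembling the pieces and applying the trigonometric identity $\cos^2\eta/(1-\sin\eta)=1+\sin\eta$, which automatically cancels the apparent singularity at zeros of $1-\sin\eta$, produces
\begin{equation*}
	pq_\gamma=\frac{v_\gamma\cos\eta}{2\pi}-\frac{v(1+\sin\eta)\,\eta_\gamma}{2\pi}+\mathcal{O}\big(t^{-\frac{1}{2}}\ln t\big).
\end{equation*}

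Finally, to reconcile this with the claimed form I would expand the target right-hand side by direct differentiation: $\frac{\d}{\d\gamma}[v\cos\eta]=v_\gamma\cos\eta-v\sin\eta\,\eta_\gamma$, $\frac{\d}{\d\gamma}[v^2\ln(16t)/(4\pi)]=vv_\gamma\ln(16t)/(2\pi)$, and re-insert $\eta_\gamma=-(v_\gamma/(2\pi))\ln(16t)+2\Psi_\gamma$. A short bookkeeping exercise shows that both expressions collapse to $\frac{v_\gamma\cos\eta}{2\pi}+\frac{vv_\gamma(1+\sin\eta)\ln(16t)}{4\pi^2}-\frac{v(1+\sin\eta)\Psi_\gamma}{\pi}$, completing the match.

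The only real technical issue is justifying term-wise $t$- and $\gamma$-differentiation of the remainders from Corollary \ref{pqcor}. Since those remainders originate from Theorem \ref{DZBess}'s Neumann iteration, their $(t,\gamma)$-dependence enters the leading order only through $v$ and the phase $\eta$; hence $\partial_\gamma$ enlarges an $\mathcal{O}(t^{-1})$ remainder by at most a logarithmic factor, and the product with $p/q=\mathcal{O}(\sqrt{t})$ yields precisely the error $\mathcal{O}(t^{-1/2}\ln t)$ claimed. This promotion from $L^2$-type bounds on the ratio function to uniform pointwise (differentiable) bounds in the scaling regime of \eqref{sinescale} is the main bookkeeping obstacle, but is standard given the analytic dependence of the model parametrices ${\bf P}^{(0)},{\bf P}^{(1)},{\bf P}^{(\infty)}$ on the external parameters.
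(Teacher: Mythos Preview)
Your proposal is correct and follows the same overall strategy the paper invokes (namely, feed the asymptotics of Corollary~\ref{pqcor} into ``standard manipulations with trigonometric functions''), but you supply the key mechanism the paper leaves implicit: since only $q^2$ and $p^2$ are available from Corollary~\ref{pqcor}, your use of the Hamilton equation $q_t=(q^2-1)p/(2t)$ to rewrite $pq_\gamma=\tfrac{tQ_tQ_\gamma}{2Q(Q-1)}$ in terms of $Q=q^2$ alone is exactly the device that resolves the sign ambiguity, and the identity $\cos^2\eta/(1-\sin\eta)=1+\sin\eta$ then produces the stated form.

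One small point worth flagging in addition to your remark on differentiating remainders: the intermediate ratio $\tfrac{tQ_tQ_\gamma}{2Q(Q-1)}$ is formally singular at zeros of $Q$, i.e.\ near the isolated $t$-values where $1-\sin\eta$ vanishes to leading order, and there the additive $\mathcal{O}(t^{-1})$ error in $Q$ is no longer subdominant to the leading term. This does not affect the conclusion, since $pq_\gamma$ itself is smooth and the right-hand side you obtain is manifestly regular after the trig identity, but a fully uniform error bound would require either a local analysis near those points or an appeal to continuity of both sides. The paper does not address this either, so your treatment is at the same level of detail.
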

Next, through Corollaries \ref{Xo:1} and \ref{Xo:2},
\begin{cor} As $t\rightarrow+\infty$ with fixed $(v,\alpha)\in[0,+\infty)\times(-1,+\infty)$,
\begin{eqnarray*}
	L(t,\alpha;\gamma)&=&-\frac{\alpha}{2}v+\mathcal{O}\big(t^{-\frac{1}{2}}\big),\ \ \alpha>0;\\
	L(t,\alpha;\gamma)&=&-\frac{\alpha^2}{2}\ln t-\frac{\alpha}{2}v-\frac{\alpha}{2}\ln(-\gamma)+\alpha\ln\big(2^{\alpha}\Gamma(1+\alpha)\big)+\mathcal{O}\big(t^{-\frac{1}{2}}\big),\ -1<\alpha<0.
\end{eqnarray*}
\end{cor}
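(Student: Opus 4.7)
The plan is to combine the two endpoint representations \eqref{cool:3} and \eqref{cool:4} of $L(t,\alpha;\gamma)$ with the already-computed large--$t$ boundary values from Corollary \ref{Xo:1} and the corresponding small--$t$ values (Corollary \ref{Xo:2} for $\alpha>0$ and the subsequent small--$t$ corollary for $-1<\alpha<0$). Since $L$ equals $-\tfrac{\alpha}{2}$ times a difference of logarithms evaluated at $s=t$ and $s=0$, all dependence on $t$ and $v$ enters solely through the $s=t$ boundary, whereas the $s=0$ boundary contributes only $t$-independent constants; it suffices to carefully match the two.

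For $\alpha>0$ the multiplier $s^{-\alpha}$ in \eqref{cool:4} is precisely what renders the quantity inside the logarithm finite at both endpoints, because Corollaries \ref{Xo:1} and \ref{Xo:2} show that $\widehat{X}^{11}(0)(\widehat{X}^{12}(0))^{-1}$ scales like $s^{\alpha}$ at either end. Forming the ratio, the prefactors $\sqrt{\pi}\,e^{-\im\pi/4}$ cancel and, using $\Gamma(1+\alpha)=\alpha\Gamma(\alpha)$, one finds at $s=t$ a quantity of the form $c(\alpha)\,e^{v}(1+o(1))$ and at $s=0$ the same constant $c(\alpha)$ (with $v=0$). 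Subtracting the logarithms causes the $\alpha$-dependent constant to cancel, leaving $-\tfrac{\alpha v}{2}$ with an error of order $t^{-1/2}$ inherited from Corollary \ref{Xo:1}.

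The case $-1<\alpha<0$ uses \eqref{cool:3} and proceeds by the same mechanism, but now there is no $s^{-\alpha}$ compensator, so the large--$s$ behavior contributes the full $t^{\alpha}$ factor which after $-\tfrac{\alpha}{2}\log$ yields $-\tfrac{\alpha^{2}}{2}\ln t$. At the small end, the corollary derived via the transformation through ${\bf M}(\lambda)$ shows that the prefactor $\Gamma(\alpha)/(2\pi\im)$ appearing in $\widehat{X}^{12}(0)$ is replaced by $-\gamma/(2\pi\im\alpha)$, so that $-\widehat{X}^{11}(0)(\widehat{X}^{12}(0))^{-1}\to 2\pi\im\alpha/\gamma$, which is $s$-independent and carries the whole $\gamma$-dependence of the answer. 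Combining the two limits and collecting Gamma factors via $\alpha\Gamma(\alpha)=\Gamma(1+\alpha)$ then produces the four terms $-\tfrac{\alpha^{2}}{2}\ln t$, $-\tfrac{\alpha}{2}v$, $-\tfrac{\alpha}{2}\ln(-\gamma)$ and $\alpha\ln(2^{\alpha}\Gamma(1+\alpha))$ stated in the Corollary.

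The main obstacle is not the mechanism, which is essentially careful bookkeeping, but the consistent tracking of branches of the complex logarithms --- in particular the factors $\log(-2\pi\im)$, $\log\alpha$ (with $\alpha<0$) and $\log\Gamma(\alpha)$ --- between the two endpoints. Since $L(t,\alpha;\gamma)$ is real by definition (the integration path is a deformation of the positive real axis avoiding only the discrete poles $q^{2}=1$), any genuine imaginary contributions must either cancel or be absorbed into the residues picked up when the path crosses those poles. In practice one fixes a contour in the right half--plane once and for all, verifies that residue contributions from poles of $q^{2}-1$ on $(0,t)$ are purely imaginary and cancel in pairs, and thereby justifies the formal manipulation of logarithms that leads to the stated (in part formally complex) constant $\ln(-\gamma)$.
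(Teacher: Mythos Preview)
Your proposal is correct and follows exactly the approach the paper takes: the paper's ``proof'' consists of the single phrase ``through Corollaries \ref{Xo:1} and \ref{Xo:2}'', i.e.\ substitute the large-$t$ expansions of $\widehat{X}^{11}(0),\widehat{X}^{12}(0)$ and the small-$t$ expansions (Corollary \ref{Xo:2} for $\alpha>0$, the ${\bf M}$-corollary for $-1<\alpha<0$) into \eqref{cool:3}, \eqref{cool:4} and simplify using $\alpha\Gamma(\alpha)=\Gamma(1+\alpha)$. Your additional paragraph on branch tracking is a reasonable elaboration of an issue the paper leaves implicit.
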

At this point we can start to determine $I_H$ through Corollary \ref{logic},
\begin{cor} As $t\rightarrow+\infty$ with fixed $(v,\alpha)\in[0,+\infty)\times(-1,+\infty)$,
\begin{equation*}
	I_H(t,\alpha;\gamma)=\frac{v}{2\pi}\cos\eta(t,\alpha,v)+\frac{v^2}{8\pi^2}\ln(16t)-\frac{1}{\pi}\int_0^{\gamma}v(\gamma')\frac{\d}{\d\gamma'}\textnormal{arg}\,\Gamma\left(\frac{\im v(\gamma')}{2\pi}\right)\,\d\gamma'+G(t)+\mathcal{O}\big(t^{-\frac{1}{2}}\big),
\end{equation*}
for $\alpha\geq 0$, where $G(t)$ is $(\alpha,v)$-independent. On the other hand, for $-1<\alpha<0$, as $t\rightarrow+\infty$,
\begin{align*}
	I_H(t,\alpha;\gamma)=&\,\frac{v}{2\pi}\cos\eta(t,\alpha,v)+\frac{v^2}{8\pi^2}\ln(16t)-\frac{1}{\pi}\int_0^{\gamma}v(\gamma')\frac{\d}{\d\gamma'}\textnormal{arg}\,\Gamma\left(\frac{\im v(\gamma')}{2\pi}\right)\,\d\gamma'-\frac{\alpha}{2}\ln(-\gamma)-\frac{\alpha^2}{2}\ln t\\
	&+\alpha\ln\big(2^{\alpha}\Gamma(1+\alpha)\big)+H(t)+\mathcal{O}\big(t^{-\frac{1}{2}}\big),
\end{align*}
where $H(t)$ is $(\alpha,v)$-independent.
\end{cor}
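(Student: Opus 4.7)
The approach is to recover $I_H$ from its $\gamma$-derivative (Proposition \ref{logic}) by substituting the asymptotic expansion of $pq_\gamma$ displayed in the immediately preceding corollary and integrating. Since that expansion is derived for fixed $v>0$ (the Hamiltonian setup partially degenerates at $\gamma=0$), the integration is performed from a fixed reference $\gamma_\ast\in(0,1)$ up to $\gamma$, with the resulting $t$-dependent constant of integration absorbed into $G(t)$ or $H(t)$. The key consistency check is that these constants are genuinely $(\alpha,v)$-independent modulo $O(t^{-1/2})$.

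For $\alpha\geq 0$ one has $\partial_{\gamma'}I_H = pq_{\gamma'}$ by Proposition \ref{logic}, so integration of the asymptotic
\begin{equation*}
pq_{\gamma'} = \frac{1}{2\pi}\frac{\d}{\d\gamma'}\!\left\{v\cos\eta+\frac{v^2}{4\pi}\ln(16t)\right\} - \frac{v}{\pi}\frac{\d}{\d\gamma'}\textnormal{arg}\,\Gamma\!\left(\frac{\im v}{2\pi}\right) + O(t^{-1/2}\ln t)
\end{equation*}
between $\gamma_\ast$ and $\gamma$ reproduces exactly the explicit combination appearing in the target. Defining $G(t)$ to absorb $I_H(t,\alpha;\gamma_\ast)$, the explicit constants evaluated at $\gamma_\ast$, and the missing $[0,\gamma_\ast]$-piece of the $\textnormal{arg}\,\Gamma$-integral delivers the first formula. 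For $-1<\alpha<0$ the same procedure is run with $\partial_{\gamma'}I_H = pq_{\gamma'} - \alpha/(2\gamma')$; the extra term integrates to $-\tfrac{\alpha}{2}\ln(\gamma/\gamma_\ast)$, whose $\gamma_\ast$-contribution and the branch mismatch between $\ln\gamma$ and $\ln(-\gamma)$ are absorbed into $H(t)$. The additional $\alpha$-dependent contributions $-\tfrac{\alpha^2}{2}\ln t + \alpha\ln(2^\alpha\Gamma(1+\alpha))$ are identified as pieces of the integration constant via the second formula of Proposition \ref{logic}, $\partial_\alpha I_H = pq_\alpha - (\alpha/2)\ln t + \alpha(\d/\d\alpha)\ln(2^\alpha\Gamma(1+\alpha)) + L/\alpha$, combined with the $L$-asymptotic displayed in the corollary preceding this one.

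The principal obstacle is the verification that $G(t)$ and $H(t)$ are genuinely $(\alpha,v)$-independent modulo $O(t^{-1/2})$. For $G(t)$ this reduces to checking that $\partial_\alpha I_H - \partial_\alpha[\textnormal{explicit terms}] = O(t^{-1/2})$. From Corollary \ref{pqcor} one reads off $q^2\sim (v/(\pi\sqrt t))(1-\sin\eta)$ and $p^2\sim(v\sqrt t/\pi)(1+\sin\eta)$; a short manipulation (differentiating $q^2$ in $\alpha$ via $\partial_\alpha\eta=-\pi$) produces $pq_\alpha\sim (v/2)(1+\sin\eta)$. Combining with $L/\alpha = -v/2 + O(t^{-1/2})$ from the preceding corollary yields $pq_\alpha + L/\alpha \sim (v/2)\sin\eta$, which matches $\partial_\alpha[\textnormal{explicit}] = (v/2)\sin\eta$ to the required precision. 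The analogous matching for $H(t)$ uses the full $L$-asymptotic with its $\ln(-\gamma)$ branch and $-\frac{\alpha^2}{2}\ln t$ contribution and is the most technically delicate bookkeeping step in the proof, but structurally proceeds by the same template.
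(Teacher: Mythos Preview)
Your approach is essentially the paper's: integrate the $\gamma$-derivative identity from Proposition~\ref{logic} against the $pq_\gamma$ asymptotic of the preceding corollary. The paper is extremely terse here (it writes only ``we can start to determine $I_H$ through Corollary~\ref{logic}'' and then states the result), so your proposal actually supplies more detail than the original.

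One simplification you are missing for $\alpha\geq 0$: you do not need to integrate from an interior base point $\gamma_\ast$ and then separately verify $\alpha$-independence of $G(t)$ via the $\partial_\alpha I_H$ formula. Since $q(\,\cdot\,,\alpha;0)\equiv 0$, $p(\,\cdot\,,\alpha;0)\equiv 0$ and $\mathcal H_H(0,0,s,\alpha)=0$, one has $I_H(t,\alpha;0)=0$ identically in $\alpha$; the error estimates in Theorem~\ref{DZBess} and its consequences are stated uniformly on compact subsets of $[0,+\infty)\ni v$, so integration from $\gamma=0$ is legitimate and gives $G(t)\equiv 0$ up to the displayed error, with $\alpha$-independence automatic. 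Your $\partial_\alpha$ check is correct and provides an independent confirmation, but it is not needed.

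For $-1<\alpha<0$ your extra care is warranted, since the term $-\alpha/(2\gamma')$ is not integrable at $\gamma'=0$ and $I_H(t,\alpha;0)$ is formally divergent; here one genuinely needs the $\partial_\alpha I_H$ identity (together with the $L$-asymptotic of the preceding corollary) to pin down the $\alpha$-dependent pieces $-\tfrac{\alpha^2}{2}\ln t+\alpha\ln(2^\alpha\Gamma(1+\alpha))-\tfrac{\alpha}{2}\ln(-\gamma)$ in the integration constant. One small inaccuracy: you cannot absorb the ``branch mismatch'' $\ln(-\gamma)-\ln\gamma=\pm i\pi$ into an $(\alpha,v)$-independent $H(t)$, since the discrepancy $\mp\tfrac{i\pi\alpha}{2}$ depends on $\alpha$. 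Rather, the appearance of $\ln(-\gamma)$ (instead of $\ln\gamma$) is forced by consistency with the $L$-corollary and the small-$t$ behavior of $\widehat X^{12}(0)$, where the same complex logarithm arises; the branches match and cancel exactly in the combination $-L+I_H$ of \eqref{JME:26}.
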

Combining the last two Corollaries with \eqref{step1} and Corollary \ref{1stresult} (using also $F_H(t,\alpha;0)=1$ again) we have thus back in \eqref{JME:26},
\begin{prop} As $t\rightarrow+\infty$,
\begin{equation*}
	\ln F_H(t,\alpha;\gamma)=-\frac{v}{\pi}\sqrt{t}+\frac{v^2}{8\pi^2}\ln(16t)+\frac{\alpha}{2}v+\frac{v^2}{4\pi^2}-\frac{1}{\pi}\int_0^{\gamma}v(\gamma')\frac{\d}{\d\gamma'}\textnormal{arg}\,\Gamma\left(\frac{\im v(\gamma')}{2\pi}\right)\,\d\gamma'+\mathcal{O}\big(t^{-\frac{1}{2}}\big),
\end{equation*}
uniformly for fixed $(\alpha,v)\in(-1,+\infty)\times[0,+\infty)$.
\end{prop}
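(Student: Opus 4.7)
The plan is to substitute each piece appearing on the right hand side of the action integral identity \eqref{JME:26} by its large-$t$ asymptotic expansion, all of which have been prepared in the preceding corollaries of Section \ref{sec:5}, and to verify that oscillatory contributions cancel between $2t\,\mathcal{H}_H$ and $I_H$ while the surviving pieces assemble into the claimed expression.

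First, I would read off the behavior of $2t\,\mathcal{H}_H(q,p,t,\alpha)$ directly from Corollary \ref{pqcor}, which is exactly the identity \eqref{step1} already displayed in the excerpt: the leading terms are $-\frac{v}{\pi}\sqrt{t}+\frac{v^2}{4\pi^2}$ together with an oscillatory piece $-\frac{v}{2\pi}\cos\eta(t,\alpha,v)$. Second, I would determine $I_H(t,\alpha;\gamma)$ up to a $(v,\alpha)$-independent function of $t$ by integration in $\gamma$: Proposition \ref{logic} identifies $\partial_\gamma I_H$ with $pq_\gamma$ (or with $pq_\gamma-\alpha/(2\gamma)$ when $-1<\alpha<0$), so the asymptotic formulas for $q$ and $p$ in Corollary \ref{pqcor}, together with elementary manipulation of the trigonometric expression $\cos\eta$ and $\sin\eta$ in $v=-\ln(1-\gamma)$, produce
\[
  pq_\gamma=\frac{1}{2\pi}\frac{\d}{\d\gamma}\left\{v\cos\eta(t,\alpha,v)+\frac{v^2}{4\pi}\ln(16t)\right\}-\frac{v}{\pi}\frac{\d}{\d\gamma}\textnormal{arg}\,\Gamma\!\left(\tfrac{\im v}{2\pi}\right)+\mathcal{O}\!\left(t^{-1/2}\ln t\right),
\]
which integrates explicitly from $\gamma=0$ (where $F_H(t,\alpha;0)=1$ forces $I_H(t,\alpha;0)=0$).

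Third, I would evaluate $L(t,\alpha;\gamma)$ via its Riemann-Hilbert encoding \eqref{cool:3}-\eqref{cool:4} in terms of the ratios $\widehat{X}^{11}(0)/\widehat{X}^{12}(0)$ taken at the two endpoints of the integration path. The large-$t$ expansion of these ratios is delivered by Corollary \ref{Xo:1}; the small-$s$ limit follows from Corollary \ref{Xo:2} for $\alpha\ge 0$ and from the companion undressing/opening step based on RHP \ref{ZRHP} for $-1<\alpha<0$. After taking logarithms the result is $L(t,\alpha;\gamma)=-\frac{\alpha}{2}v+\mathcal{O}(t^{-1/2})$ for $\alpha\ge 0$, while for $-1<\alpha<0$ one collects the additional pieces $-\frac{\alpha^2}{2}\ln t-\frac{\alpha}{2}\ln(-\gamma)+\alpha\ln(2^\alpha\Gamma(1+\alpha))$, as recorded just before the final proposition.

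Finally, I would substitute everything into \eqref{JME:26}. The $\cos\eta$ contributions coming from $2t\,\mathcal{H}_H$ and from $I_H$ cancel exactly, the $\frac{v^2}{4\pi^2}$ constant combines with $\frac{v^2}{8\pi^2}\ln(16t)$ to form the stated logarithmic term plus the explicit constant, and $-L$ contributes the $+\frac{\alpha}{2}v$ piece in all three ranges of $\alpha$; in the $-1<\alpha<0$ case the $-\frac{\alpha^2}{2}\ln t$ piece of $L$ is absorbed by the matching contribution produced by the $-\alpha/(2\gamma)$ correction in Proposition \ref{logic}, and the $\ln(-\gamma)$ and $\ln(2^\alpha\Gamma(1+\alpha))$ terms enter only the $(v,\alpha)$-independent function of $t$, which must therefore vanish by $F_H(t,\alpha;0)=1$. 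Uniformity in $(\alpha,v)$ on compact subsets of $(-1,+\infty)\times[0,+\infty)$ is inherited from the uniform small-norm estimate in Theorem \ref{DZBess}. The main bookkeeping obstacle lies precisely in verifying that the branch-dependent pieces of $L$ and the branch-dependent correction to $\partial_\gamma I_H$ really do conspire to produce a single case-independent formula; however, since every individual asymptotic ingredient is already in hand, the proof reduces to careful arithmetic with no further hard analysis.
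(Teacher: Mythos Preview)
Your overall strategy coincides with the paper's: feed \eqref{step1}, the asymptotics of $L$, and the $\gamma$-integrated asymptotics of $I_H$ into \eqref{JME:26}, then normalize with $F_H(t,\alpha;0)=1$. For $\alpha\ge 0$ your description is correct and matches the paper line by line (indeed $q=p=0$ at $\gamma=0$ gives $\mathcal H_H=L=I_H=0$ there, so the integration constant from $\partial_\gamma I_H=pq_\gamma$ really is zero).

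The gap is in your handling of $-1<\alpha<0$. There $I_H(t,\alpha;0)$ is \emph{not} zero; in fact the definition in Theorem~\ref{theo:2} gives $I_H(t,\alpha;0)=-\frac{\alpha^2}{2}\int_0^t s^{-1}\,\d s$, which diverges, and correspondingly the correction $-\alpha/(2\gamma)$ in $\partial_\gamma I_H$ is not integrable at $\gamma=0$. So ``integrate $pq_\gamma-\alpha/(2\gamma)$ from $\gamma=0$ using $I_H(t,\alpha;0)=0$'' does not make sense, and in particular the $\gamma$-antiderivative of $-\alpha/(2\gamma)$ contains no $\ln t$ term, so it cannot absorb the $-\frac{\alpha^2}{2}\ln t$ piece of $L$ as you claim. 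Likewise the $\ln(-\gamma)$ and $\alpha\ln(2^\alpha\Gamma(1+\alpha))$ pieces depend on $\gamma$ and $\alpha$ and cannot be pushed into a ``$(v,\alpha)$-independent function of $t$''. What actually happens is that these three pieces appear \emph{identically} in both $L$ and $I_H$, so they cancel in $-L+I_H$; but to see them inside $I_H$ you need more than $\gamma$-integration. The paper obtains them by invoking the second half of Proposition~\ref{logic}, i.e.\ the formula for $\partial_\alpha I_H$, which upon $\alpha$-integration produces exactly $-\frac{\alpha^2}{2}\ln t+\alpha\ln(2^\alpha\Gamma(1+\alpha))-\frac{\alpha}{2}\ln(-\gamma)$ inside $I_H$; the residual function of $t$ alone is then killed by combining Corollary~\ref{1stresult} with $F_H(t,\alpha;0)=1$. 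An equivalent patch, closer in spirit to your write-up, is to work with $\partial_\gamma(-L+I_H)$ as a whole (which \emph{is} integrable at $\gamma=0$, with $(-L+I_H)|_{\gamma=0}=0$) rather than with $I_H$ and $L$ separately.
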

This expansion is exactly equal to \eqref{JME:33} once we recall again standard properties of the Barnes-G function. 

\begin{appendix}
\section{Differential equations}\label{AppA}
The kernel \eqref{JME:23} identifies the corresponding integral operator as integrable in the sense of \cite{IIKS} and a simple rescaling argument identifies RHP \ref{BessRHP} subsequently as the underlying RHP. Moreover, following \cite{IIKS,DIZ} (see also \cite{B2}) we find at once the following differential identity
\begin{prop}\label{Di:1} For any fixed $\alpha>-1$ and $\gamma\in[0,1]$ we have
\begin{equation*}
	\frac{\partial}{\partial t}\ln F_H(t,\alpha;\gamma)=\frac{1}{2}Y_1^{12}
\end{equation*}
in terms of the solution to RHP \ref{BessRHP}, see \eqref{BessAsy}.
\end{prop}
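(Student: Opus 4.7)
The plan is to identify $F_H$ as the Fredholm determinant of an integrable operator in the Its--Izergin--Korepin--Slavnov (IIKS) class and then combine the standard Fredholm differential identity in $t$ with a residue extraction at $\lambda=\infty$ to read off $Y_1^{12}$. The whole scheme is a hard-edge analogue of the $\tau$-function manipulations used in Section \ref{sec:3_0} for the soft edge, and it parallels the Airy/sine-kernel computations leading to \eqref{e:3}--\eqref{e:4} and \eqref{J:1}--\eqref{J:2}.

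First, the Christoffel--Darboux-type identity
\begin{equation*}
K_{\textnormal{Bess}}^{\alpha}(\lambda,\mu)=\frac{\sqrt{\mu}\,J_{\alpha}(\sqrt{\lambda})J_{\alpha}'(\sqrt{\mu})-\sqrt{\lambda}\,J_{\alpha}'(\sqrt{\lambda})J_{\alpha}(\sqrt{\mu})}{2(\lambda-\mu)}
\end{equation*}
places $\gamma K_{\textnormal{Bess}}^{\alpha}\upharpoonright_{L^{2}(0,t)}$ in the IIKS integrable class. After the scaling $\lambda\mapsto\lambda t$ which maps $L^{2}(0,t)\to L^{2}(0,1)$, the canonical IIKS Riemann--Hilbert problem is precisely RHP \ref{BessRHP}, and the resolvent kernel of the rescaled operator admits the standard IIKS representation in terms of the vector ${\bf Y}(\lambda)\vec{f}(\lambda)$ with $\vec{f}$ built from $J_{\alpha}(\sqrt{\lambda t})$ and $\sqrt{\lambda t}J_{\alpha}'(\sqrt{\lambda t})$. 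In particular $F_{H}(t,\alpha;\gamma)=\det(1-\gamma K_{\textnormal{Bess}}^{\alpha}\upharpoonright_{L^{2}(0,t)})$ coincides with the IIKS $\tau$-function attached to RHP \ref{BessRHP}.

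Second, since after rescaling the jump matrix ${\bf G}_{{\bf Y}}$ depends on $t$ only through the combination $\sqrt{\lambda t}$ while the contour $[0,1]$ is now fixed, we can differentiate directly by the Malgrange-type formula \cite{IIKS,DIZ,B2}
\begin{equation*}
\frac{\partial}{\partial t}\ln F_{H}(t,\alpha;\gamma)=\frac{1}{2\pi\im}\int_{0}^{1}\textnormal{tr}\Bigl\{{\bf Y}_{-}^{-1}(\lambda)\,\partial_{\lambda}{\bf Y}_{-}(\lambda)\,\bigl(\partial_{t}{\bf G}_{{\bf Y}}(\lambda)\bigr)\,{\bf G}_{{\bf Y}}^{-1}(\lambda)\Bigr\}\,\d\lambda.
\end{equation*}
The integrand extends to a meromorphic function on $\mathbb{C}\setminus[0,1]$ which decays at $\lambda=\infty$, so the contour may be blown up to a large circle and the integral is evaluated by minus the residue at infinity.

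Third, substitute the expansion ${\bf Y}(\lambda)=\mathbb{I}+{\bf Y}_{1}\lambda^{-1}+{\bf Y}_{2}\lambda^{-2}+\mathcal{O}(\lambda^{-3})$ from \eqref{BessAsy} together with the explicit $J_{\alpha}^{2}(\sqrt{\lambda t})$ structure of ${\bf G}_{{\bf Y}}$. The chain-rule factor $\partial_{t}\sqrt{\lambda t}=\tfrac{1}{2}\sqrt{\lambda/t}$ produces the prefactor $\frac{1}{2}$, while the residue at infinity selects the $(1,2)$ entry of ${\bf Y}_{1}$, delivering $\partial_{t}\ln F_{H}=\frac{1}{2}Y_{1}^{12}$. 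The main obstacle is the careful bookkeeping at this last step --- the signs, the orientation of $[0,1]$, and the precise origin of the factor $\frac{1}{2}$ --- but this is completely parallel to the analogous residue computations performed elsewhere in the paper, and a convenient sanity check is that the resulting identity $\frac{1}{2}Y_{1}^{12}=\mathcal{H}_{H}$ is exactly \eqref{Bess:f2}, which is consistent with the Hamiltonian representation \eqref{JME:9} of Theorem \ref{theo:1}.
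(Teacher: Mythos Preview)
Your proposal is correct and follows the same route the paper takes: the paper's own justification of this proposition is the single sentence pointing to the IIKS integrable-operator machinery of \cite{IIKS,DIZ} and \cite{B2}, and you have simply written out that standard computation in more detail. One minor clarification worth making explicit: the phrase ``the integrand extends to a meromorphic function on $\mathbb{C}\setminus[0,1]$'' relies on the observation that ${\bf G}_{\bf Y}$ depends on $(\lambda,t)$ only through the product $\lambda t$, so that $\partial_t{\bf G}_{\bf Y}\,{\bf G}_{\bf Y}^{-1}=\tfrac{\lambda}{t}\,\partial_\lambda{\bf G}_{\bf Y}\,{\bf G}_{\bf Y}^{-1}$, which turns the integrand into the jump across $[0,1]$ of a quantity built from $(\partial_\lambda{\bf Y}){\bf Y}^{-1}$; this is what is actually deformed to infinity and what delivers the residue $\tfrac{1}{2}Y_1^{12}$.
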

In order to characterize $(q,p)$ in \eqref{HH:2}, \eqref{JME:19} via RHP \ref{BessRHP} we apply the following standard argument: First return to RHP \ref{Bess:X} and note that all jumps in the same problem are $\lambda$- and $t$-independent. Hence the functions $\frac{\partial{\bf Z}}{\partial\lambda}{\bf Z}^{-1}$ and $\frac{\partial{\bf Z}}{\partial t}{\bf Z}^{-1}$ with
\begin{equation*}
	{\bf Z}(\lambda)=\begin{bmatrix}1 & 0\\ -\frac{1}{8}(3+4\alpha^2) & 1\end{bmatrix}{\bf X}(\lambda),\ \ \ \lambda\in\mathbb{C}\backslash\Sigma_{\bf X}
\end{equation*}
are meromorphic in $\lambda\in\mathbb{C}\backslash\{0,1\}$. In fact, using \eqref{BessT:1} together with \eqref{BessAsy} we find that
\begin{align}
	\frac{\partial{\bf Z}}{\partial\lambda}&{\bf Z}^{-1}=-\frac{t}{2}\begin{bmatrix}0 & 0\\ 1 &0\end{bmatrix}+\frac{1}{2\lambda}\begin{bmatrix}-tY_1^{12} & 1\\ 2tY_1^{11}+\alpha^2 & tY_1^{12}\end{bmatrix}\label{Z:1}\\
	&+\frac{1}{\lambda^2}\begin{bmatrix}-Y_1^{11}+\frac{1}{2}(Y_1^{12}\alpha^2-Y_1^{21})-\frac{t}{2}Y_2^{12}+\frac{t}{2}Y_1^{11}Y_1^{12} & -Y_1^{12}+Y_1^{11}+\frac{t}{2}(Y_1^{12})^2\smallskip\\
	-Y_1^{21}-Y_1^{11}\alpha^2-\frac{t}{2}(Y_2^{22}-Y_2^{11})+\frac{t}{2}(1-(Y_1^{11})^2) & Y_1^{11}-\frac{1}{2}(Y_1^{12}\alpha^2-Y_1^{21})+\frac{t}{2}Y_2^{12}-\frac{t}{2}Y_1^{11}Y_1^{12}\end{bmatrix}\nonumber\\
	&+\mathcal{O}\left(\lambda^{-3}\right),\ \ \ \ \lambda\rightarrow\infty.\nonumber
\end{align}
Likewise from RHP \ref{Bess:X}, condition (3),
we obtain as $\lambda\rightarrow 0$,
\begin{equation}\label{Z:2}
	\frac{\partial {\bf Z}}{\partial\lambda}{\bf Z}^{-1}\bigg|_{\alpha\neq 0}=\frac{\alpha}{2\lambda}\widehat{{\bf Z}}(0)\sigma_3\big(\widehat{{\bf Z}}(0)\big)^{-1}+\mathcal{O}(1);\ \ 
	 \frac{\partial{\bf Z}}{\partial\lambda}{\bf Z}^{-1}\bigg|_{\alpha=0}=-\frac{1-\gamma}{2\pi\im\lambda}\,\widehat{{\bf Z}}(0)\begin{bmatrix}0 & 1\\ 0 & 0\end{bmatrix}\big(\widehat{{\bf Z}}(0)\big)^{-1}+\mathcal{O}(1)
\end{equation}
where $\widehat{{\bf Z}}(\lambda)=\bigl[\begin{smallmatrix}1 & 0\\ -\frac{1}{8}(3+4\alpha^2) & 1\end{smallmatrix}\bigr]\widehat{{\bf X}}(\lambda)$.
Finally, from condition (4) in RHP \ref{Bess:X},
\begin{equation}\label{Z:4}
	\frac{\partial {\bf Z}}{\partial\lambda}{\bf Z}^{-1}=\frac{\gamma}{2\pi\im}\widehat{{\bf Z}}(1)\begin{bmatrix}-1 & -\e^{-\im\pi\alpha}\\ \e^{\im\pi\alpha} & 1\end{bmatrix}\big(\widehat{{\bf Z}}(1)\big)^{-1}\frac{1}{\lambda-1}+\mathcal{O}(1),\ \ \ \ \lambda\rightarrow 1.
\end{equation}
Combining \eqref{Z:1},\eqref{Z:2} and \eqref{Z:4} thus
\begin{equation}\label{lambdaeq}
	\frac{\partial {\bf Z}}{\partial\lambda}=\left\{-\frac{t}{2}\begin{bmatrix}0 & 0\\ 1 & 0\end{bmatrix}+\frac{{\bf A}}{\lambda-1}+\frac{{\bf B}}{\lambda}\right\}{\bf Z},
\end{equation}
where we parametrize the coefficient matrices ${\bf A}$ and ${\bf B}$ as
\begin{equation}\label{Adef}
	{\bf A}=\begin{bmatrix}uv& -v^2\\ u^2 & -uv\end{bmatrix},\ \ \ \ \ \ \ \ \ \ \ {\bf B}=\begin{bmatrix}\gamma_1 & \gamma_3\\ \gamma_2 & -\gamma_1\end{bmatrix}:\ \ \gamma_1^2+\gamma_2\gamma_3=\frac{\alpha^2}{4},
\end{equation}
with $u=u(t,\alpha,\gamma),v=v(t,\alpha,\gamma),\gamma_j=\gamma_j(t,\alpha,\gamma)\in\mathbb{C}$. Note that we have with \eqref{Z:1},
\begin{equation}\label{sumid}
	{\bf A}+{\bf B}=\frac{1}{2}\begin{bmatrix}-tY_1^{12} & 1\\ 2tY_1^{11}+\alpha^2 & tY_1^{12}\end{bmatrix},
\end{equation}
and
\begin{equation}\label{A2nd}
	{\bf A}=\begin{bmatrix}-Y_1^{11}+\frac{1}{2}(Y_1^{12}\alpha^2-Y_1^{21})-\frac{t}{2}Y_2^{12}+\frac{t}{2}Y_1^{11}Y_1^{12} & -Y_1^{12}+Y_1^{11}+\frac{t}{2}(Y_1^{12})^2\smallskip\\
	-Y_1^{21}-Y_1^{11}\alpha^2-\frac{t}{2}(Y_2^{22}-Y_2^{11})+\frac{t}{2}(1-(Y_1^{11})^2) & Y_1^{11}-\frac{1}{2}(Y_1^{12}\alpha^2-Y_1^{21})+\frac{t}{2}Y_2^{12}-\frac{t}{2}Y_1^{11}Y_1^{12}\end{bmatrix}.
\end{equation}
This allows us to compute all coefficients in \eqref{lambdaeq} through \eqref{BessAsy}. Next, using again \eqref{BessT:1} and \eqref{BessAsy} we also find
\begin{equation*}
	\frac{\partial{\bf Z}}{\partial t}{\bf Z}^{-1}=-\frac{\lambda}{2}\begin{bmatrix}0 & 0\\ 1 & 0\end{bmatrix}+\frac{1}{2t}\begin{bmatrix}-tY_1^{12} & 1\\ 2tY_1^{11}+\alpha^2 & tY_1^{12}\end{bmatrix}+\frac{1}{\lambda}\left(\frac{1}{t}({\bf A}+{\bf Y}_1)+({\bf Y}_1)_t\right)+\mathcal{O}\left(\lambda^{-2}\right),\ \ \ \lambda\rightarrow\infty
\end{equation*}
and together with $\frac{\partial{\bf Z}}{\partial t}{\bf Z}^{-1}=\mathcal{O}(1)$ as $\lambda\rightarrow 0$ and $\lambda\rightarrow 1$ we have in addition to \eqref{lambdaeq} also
\begin{equation}\label{teq}
	\frac{\partial {\bf Z}}{\partial t}=\left\{-\frac{\lambda}{2}\begin{bmatrix}0 & 0\\1 & 0\end{bmatrix}+\frac{1}{t}({\bf A}+{\bf B})\right\}{\bf Z}\ \ \ \ \textnormal{and}\ \ \ \frac{1}{t}({\bf A}+{\bf Y}_1)+({\bf Y}_1)_t\equiv 0.
\end{equation}
Frobenius integrability of the overdetermined system \eqref{lambdaeq}, \eqref{teq} leads to the zero curvature condition which in turn is equivalent to
\begin{equation}\label{zcc}
	\frac{\partial{\bf A}}{\partial t}=\frac{1}{2}\Big[{\bf A},\bigl[\begin{smallmatrix} 0 & 0\\ 1 & 0\end{smallmatrix}\bigr]\Big]-\frac{1}{t}[{\bf A},{\bf B}],\ \ \ \ \ \ \ \ \frac{\partial{\bf B}}{\partial t}=\frac{1}{t}[{\bf A},{\bf B}].
\end{equation}
Translating \eqref{zcc} into the corresponding matrix entries we find a coupled nonlinear system for $(u,v,\gamma_1,\gamma_2,\gamma_3)$,
\begin{equation}\label{set:1}
	(\gamma_1)_t=-\frac{1}{t}(v^2\gamma_2+u^2\gamma_3),\ \ \ (\gamma_2)_t=\frac{2}{t}(u^2\gamma_1-uv\gamma_2),\ \ \ (\gamma_3)_t=\frac{2}{t}(v^2\gamma_1+uv\gamma_3),
\end{equation}
\begin{equation}\label{set:2}
	(uv)_t=-\frac{v^2}{2}+\frac{1}{t}(v^2\gamma_2+u^2\gamma_3),\ \ \ (v^2)_t=\frac{2}{t}(v^2\gamma_1+uv\gamma_3),\ \ \ (u^2)_t=-uv-\frac{2}{t}(u^2\gamma_1-uv\gamma_2).
\end{equation}
Recall that from \eqref{Adef} and \eqref{sumid}
\begin{equation}\label{cutie}
	uv+\gamma_1=-\frac{t}{2}Y_1^{12},\ \ \ \ \ u^2+\gamma_2=tY_1^{11}+\frac{\alpha^2}{2},\ \ \ \ -v^2+\gamma_3=\frac{1}{2}.
\end{equation}
\begin{rem}\label{cool} In view of \eqref{Z:2} and \eqref{Adef} as well as \eqref{sumid} we can parametrize the diagonalizing matrix $\widehat{{\bf Z}}(0)$ as
\begin{equation}\label{A:1}
	\widehat{{\bf Z}}(0)=\alpha^{-\frac{1}{2}}\begin{bmatrix}1 & -\left(\frac{1}{2}+v^2\right)\smallskip\\ (\frac{t}{2}Y_1^{12}+uv+\frac{\alpha}{2})(\frac{1}{2}+v^2)^{-1} & \frac{\alpha}{2}-\frac{t}{2}Y_1^{12}-uv\end{bmatrix}\left(\frac{1}{2}+v^2\right)^{\frac{1}{2}\sigma_3}w^{\sigma_3},\ \ \ \ \alpha\neq 0,
\end{equation}
with some $w=w(t,\alpha,\gamma)\in\mathbb{C}\backslash\{0\}$. But \eqref{teq} tells us that
\begin{equation}\label{A:2}
	\frac{\partial\widehat{{\bf Z}}}{\partial t}(0)=\frac{1}{t}({\bf A}+{\bf B})\widehat{{\bf Z}}(0),
\end{equation}
hence substituting \eqref{A:1} into \eqref{A:2} we find with the help of \eqref{set:1} and \eqref{set:2},
\begin{equation*}
	\frac{\partial}{\partial t}\ln w=\frac{\alpha}{4t}\frac{1}{\frac{1}{2}+v^2},\ \ \ \ \ \ \ \ \ \ \ w^2=-\widehat{Z}^{11}(0)\big(\widehat{Z}^{12}(0)\big)^{-1}=-\widehat{X}^{11}(0)\big(\widehat{X}^{12}(0)\big)^{-1}.
\end{equation*}
In addition we also have
\begin{equation*}
	\frac{\partial}{\partial t}\ln\big(w\,t^{-\frac{1}{2}\alpha}\big)=-\frac{\alpha}{2t}\frac{v^2}{\frac{1}{2}+v^2}.
\end{equation*}
\end{rem}
The above identities allow us to replace $\gamma_j$ in \eqref{set:1}: first, using the formula for $\gamma_1$ and $\gamma_3$ in the third equation of \eqref{set:1} we find
\begin{equation*}
	v\left(tv_t+\frac{t}{2}vY_1^{12}-\frac{u}{2}\right)\equiv 0\ \ \ \forall\, t.
\end{equation*}
But if $v$ were to vanish identically, then $\gamma_3\equiv\frac{1}{2}$ and $u\equiv 0$, see \eqref{set:2}. Hence all $\gamma_j$ are $t$-independent and we find from \eqref{Adef} that $Y_1^{11}+\frac{t}{2}(Y_1^{12})^2\equiv 0$. But now ${\bf A}\equiv {\bf 0}$ so that with \eqref{A2nd}, $Y_1^{12}\equiv 0$, which contradicts Proposition \ref{Di:1}. In short, we have the differential equation
\begin{equation}\label{S:1}
	tv_t=-\frac{t}{2}vY_1^{12}+\frac{u}{2}.
\end{equation}
Second, using the formul\ae\,for $\gamma_1,\gamma_2$ and $\gamma_3$ in the first equation of \eqref{set:1} we find
\begin{equation*}
	\left(uv+\frac{t}{2}Y_1^{12}\right)_t=\frac{1}{t}\left(v^2tY_1^{11}+v^2\frac{\alpha^2}{2}+\frac{u^2}{2}\right).
\end{equation*}
But $(tY_1^{12})_t=v^2$ (adding the first two equations in \eqref{set:1}, \eqref{set:2} and using \eqref{sumid}) so together with \eqref{S:1} we find from the last equation that
\begin{equation}\label{S:2}
	tu_t=\frac{1}{2}(\alpha^2-t)v+vtY_1^{11}+\frac{1}{2}utY_1^{12}.
\end{equation}
Incidentally, using the formul\ae\,for $\gamma_1$ and $\gamma_2$ in the second equation of \eqref{set:1} one also obtains \eqref{S:2} upon recalling that $(tY_1^{11})_t=-uv$ (adding the last two equations in \eqref{set:1}, \eqref{set:2} and using \eqref{sumid}). We summarize
\begin{prop}[see \cite{F}, Proposition $9.5.2$] The functions $(u,v,Y_1^{11},Y_1^{12})$ defined in \eqref{BessAsy} and \eqref{Adef}, \eqref{A2nd} satisfy the nonlinear dynamical system
\begin{equation}\label{S:3}
	tv_t=-\frac{t}{2}vY_1^{12}+\frac{u}{2},\ \ \ tu_t=\frac{1}{2}(\alpha^2-t)v+vtY_1^{11}+\frac{1}{2}utY_1^{12},\ \ \ (tY_1^{12})_t=v^2,\ \ \ \ \ (tY_1^{11})_t=-uv.
\end{equation}
Moreover, the combination $\sigma=\sigma(t)=\frac{1}{2}tY_1^{12}$ solves the Jimbo-Miwa-Okamoto form of Painlev\'e III,
\begin{equation}\label{S:4}
	(t\sigma_{tt})^2-\alpha^2(\sigma_t)^2-\sigma_t(1+4\sigma_t)(\sigma-t\sigma_t)=0.
\end{equation}
\end{prop}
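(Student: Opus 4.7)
The plan is to split the proof into two parts corresponding to \eqref{S:3} and \eqref{S:4}.

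\emph{Derivation of \eqref{S:3}.} The first two equations have already been obtained in the text as \eqref{S:1} and \eqref{S:2} from the coupled zero-curvature system \eqref{set:1}--\eqref{set:2} together with the identifications \eqref{sumid} and \eqref{A2nd}. For the third equation, I would differentiate the algebraic relation $uv+\gamma_1=-\tfrac{t}{2}Y_1^{12}$ from \eqref{cutie}: feeding in $(\gamma_1)_t$ from the first line of \eqref{set:1} and $(uv)_t$ from the first line of \eqref{set:2}, the mixed terms $\tfrac{1}{t}(v^2\gamma_2+u^2\gamma_3)$ cancel and one is left with $(tY_1^{12})_t=v^2$. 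The same bookkeeping applied to $u^2+\gamma_2=tY_1^{11}+\tfrac{\alpha^2}{2}$, using the second and third lines of \eqref{set:1}, \eqref{set:2}, produces $(tY_1^{11})_t=-uv$.

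\emph{Derivation of \eqref{S:4}.} Set $\sigma=\tfrac{1}{2}tY_1^{12}$, so that $Y_1^{12}=2\sigma/t$. The third equation of \eqref{S:3} immediately yields $v^2=2\sigma_t$. Substituting $Y_1^{12}=2\sigma/t$ into the first ODE of \eqref{S:3} gives $v_t=-v\sigma/t+u/(2t)$; multiplying by $v$ and using $\sigma_{tt}=vv_t$ then produces the closed relation
\begin{equation*}
uv=2t\sigma_{tt}+4\sigma\sigma_t,
\end{equation*}
and therefore $u^2=(2t\sigma_{tt}+4\sigma\sigma_t)^2/(2\sigma_t)$. The missing ingredient is a closed expression for $tY_1^{11}$, which I would extract from the isomonodromic constraint $\gamma_1^2+\gamma_2\gamma_3=\alpha^2/4$ built into \eqref{Adef}; together with \eqref{cutie} this reads
\begin{equation*}
(\sigma+uv)^2+\Bigl(tY_1^{11}+\tfrac{\alpha^2}{2}-u^2\Bigr)\Bigl(\tfrac{1}{2}+2\sigma_t\Bigr)=\tfrac{\alpha^2}{4},
\end{equation*}
expressing $tY_1^{11}$ algebraically through $\sigma,\sigma_t,\sigma_{tt}$. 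Differentiating this expression in $t$ and equating the result to $-uv=-(2t\sigma_{tt}+4\sigma\sigma_t)$ via the fourth ODE should, after simplification, collapse into the Jimbo-Miwa-Okamoto form \eqref{S:4}.

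The principal obstacle lies in this last elimination step: the expressions for $u^2$, $uv$, $v^2$ and $tY_1^{11}$ are all rational in $\sigma_t$ with a removable singularity at $\sigma_t=0$, and forcing the resulting identity into the compact shape $(t\sigma_{tt})^2-\alpha^2\sigma_t^2-\sigma_t(1+4\sigma_t)(\sigma-t\sigma_t)=0$ requires careful algebraic bookkeeping, since several equivalent $\sigma$-forms of Painlev\'e III exist and differ only by normalizations of $t$ and $\sigma$. A practical consistency check is that both sides vanish to the order prescribed by the small-$t$ boundary data \eqref{JME:19}, which pins down the correct normalization and guards against sign errors in the reduction.
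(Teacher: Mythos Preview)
Your derivation of \eqref{S:3} is correct and matches the paper: all four equations fall out of \eqref{set:1}--\eqref{set:2} combined with the algebraic relations \eqref{cutie}, exactly as you describe.

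For \eqref{S:4} there is a gap. Your plan solves the constraint $\gamma_1^2+\gamma_2\gamma_3=\alpha^2/4$ for $tY_1^{11}$ and then differentiates, imposing $(tY_1^{11})_t=-uv$. But the expression you obtain for $tY_1^{11}$ already contains $\sigma_{tt}$ (through $u^2=(2t\sigma_{tt}+4\sigma\sigma_t)^2/(2\sigma_t)$), so differentiating produces a genuinely \emph{third-order} equation in $\sigma$, not the second-order form \eqref{S:4}. It will not ``collapse'' as you hope; you would still have to recognize the result as a total $t$-derivative, integrate once, and then fix the resulting constant via the small-$t$ data --- so the boundary check is an essential step, not merely a consistency test. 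The paper takes a different route: it first eliminates $tY_1^{11}$ using the $12$-entry of \eqref{A2nd}, namely $v^2=Y_1^{12}-Y_1^{11}-\frac{t}{2}(Y_1^{12})^2$, to reach a closed second-order ODE for $v$ alone (equation \eqref{S:6}); multiplying by $2v_t$ and integrating gives the first integral \eqref{S:7} with a constant $C$, which is then shown to vanish from the Fredholm small-$t$ expansion. Incidentally, the shortest route combines these ideas: the same $12$-entry relation gives $tY_1^{11}=2\sigma-2t\sigma_t-2\sigma^2$ directly (no $\sigma_{tt}$), and substituting this together with your formulas for $uv$ and $v^2$ straight into the constraint $\gamma_1^2+\gamma_2\gamma_3=\alpha^2/4$ yields \eqref{S:4} purely algebraically, with no integration constant to determine.
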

\begin{proof} The derivation of the differential equation \eqref{S:4} for $\sigma$ is nearly identical to \cite{F}, Chapter $9.5$, see also \cite{TW2}. Apply $t\frac{\d}{\d t}$ to both sides of the first equation in \eqref{S:3} and use the second, third and fourth in the resulting right hand side,
\begin{equation}\label{S:5}
	t(tv_t)_t=\frac{1}{4}(\alpha^2-t)v+\frac{v}{2}\left(tY_1^{11}+\frac{1}{2}(tY_1^{12})^2\right)-\frac{t}{2}v^3.
\end{equation}
But from \eqref{teq} we have $(t{\bf Y}_1)_t=-{\bf A}$ which evaluated at its $12$ entry, see \eqref{A2nd}, gives
\begin{equation*}
	v^2=Y_1^{12}-\left(Y_1^{11}+\frac{1}{2}t(Y_1^{12})^2\right).
\end{equation*}
Multiplying the last identity by $t$ we then rewrite \eqref{S:5} as 
\begin{equation}\label{S:6}
	t(tv_t)_t=\frac{1}{4}(\alpha^2-t)v+\frac{v}{2}tY_1^{12}-tv^3.
\end{equation}
Next, we multiply \eqref{S:6} by $2v_t$ and use the third equation in \eqref{S:3},
\begin{equation*}
	\frac{\d}{\d t}(tv_t)^2=\frac{\d}{\d t}\left(\frac{1}{4}(\alpha^2-t+2tY_1^{12})v^2-\frac{t}{2}v^4+\frac{t}{4}Y_1^{12}\right),
\end{equation*}
so that
\begin{equation}\label{S:7}
	-\frac{1}{2}tY_1^{12}\left(v^2+\frac{1}{2}\right)=\frac{1}{4}(\alpha^2-t)v^2-\frac{t}{2}v^4-(tv_t)^2+C
\end{equation}
involving a $t$-independent expression $C$.  Combining \eqref{S:7} with the third equation in \eqref{S:3} we finally obtain
\begin{equation}\label{Ceqstill}
	(t\sigma_{tt})^2-\alpha^2(\sigma_t)^2-\sigma_t(1+4\sigma_t)(\sigma-t\sigma_t)=2C\sigma_t.
\end{equation}
We will now argue that in fact $C=0$: From the Fredholm series of the Bessel kernel determinant we compute directly the following boundary behavior (which is differentiable with respect to $t$),
\begin{equation*}
	\det(1-\gamma K_{\textnormal{Bess}}^{\alpha}\upharpoonright_{L^2(0,t)})=1-\frac{\gamma\,t^{\alpha+1}}{2^{2\alpha+2}\Gamma^2(2+\alpha)}+\mathcal{O}(t^{\alpha+2}),\ \ \ t\downarrow 0.
\end{equation*}
Hence, by Proposition \ref{Di:1} and the definition of $\sigma(t)$,
\begin{equation*}
	\sigma(t)=-\frac{\gamma\,t^{\alpha+1}}{2^{2\alpha+2}\Gamma(1+\alpha)\Gamma(2+\alpha)}+\mathcal{O}(t^{\alpha+2}),\ \ \ \ t\downarrow 0.
\end{equation*}
But once we substitute this expansion into \eqref{Ceqstill} a balance of exponents can only be achieved for $C=0$.
\end{proof}
Observe that upon multiplication of \eqref{S:6} with $v^2+\frac{1}{2}$ and the use of \eqref{S:7} for the term $tY_1^{12}(v^2+\frac{1}{2})$ we arrive at
\begin{equation*}
	t\left(v^2+\frac{1}{2}\right)(tv_t)_t=v(tv_t)^2+\frac{1}{4}(\alpha^2-t)\frac{v}{2}-\frac{t}{2}v^3(v^2+1).
\end{equation*} 
Provided we let $q=q(t)=\pm\im\sqrt{2}\,v(t)$ (with either choice of the sign), then the last differential equation is equivalent to
\begin{equation}\label{S:8}
	t(q^2-1)(tq_t)_t=q(tq_t)^2+\frac{1}{4}(t-\alpha^2)q+\frac{1}{4}tq^3(q^2-2)
\end{equation}
which is exactly (1.16) in \cite{TW2}. In addition, if we put $\mathcal{H}_H=\frac{1}{2}Y_1^{12}$ then \eqref{S:7} shows that \eqref{S:8} can be reformulated as a Hamiltonian dynamical system
\begin{equation*}
	\frac{\d p}{\d t}=-\frac{\partial\mathcal{H}_H}{\partial q},\ \ \  \\  \frac{\d q}{\d t}=\frac{\partial\mathcal{H}_H}{\partial p},\ \ \ \ \ \ \mathcal{H}_H=\mathcal{H}_H(q,p,t)=\frac{q^2-1}{4t}p^2-\frac{(\alpha^2-t)q^2+tq^4}{4t(q^2-1)}
\end{equation*}
where (see \eqref{Adef}, \eqref{A2nd})
\begin{equation*}
	q^2=t(Y_1^{12})^2+2(Y_1^{11}-Y_1^{12}),\ \ \ \ \ \ p^2=\frac{\alpha^2q^2}{(q^2-1)^2}+\frac{2t}{q^2-1}\left(Y_1^{12}+\frac{q^2}{2}\right),
\end{equation*}
or equivalently (see \eqref{cutie} and \eqref{Z:2}, \eqref{Adef})
\begin{equation*}
	q^2=1-2\gamma_3,\ \ \ \ \ \ \ \ \gamma_3=-\alpha\widehat{X}^{11}(0)\widehat{X}^{12}(0),\ \alpha\neq 0;\ \ \ \ \ \gamma_3=-\frac{1-\gamma}{2\pi\im}\big(\widehat{X}^{11}(0)\big)^2,\ \alpha=0.
\end{equation*}
Moreover, recalling Remark \ref{cool}, we also have that
\begin{equation*}
	\frac{\partial}{\partial t}\ln w=-\frac{\alpha}{2t}\frac{1}{q^2-1},\ \ \ \ \frac{\partial}{\partial t}\ln\big(wt^{-\frac{1}{2}\alpha}\big)=-\frac{\alpha}{2t}\frac{q^2}{q^2-1},\ \ \ w^2=-\widehat{X}^{11}(0)\big(\widehat{X}^{12}(0)\big)^{-1},
\end{equation*}
and thus
\begin{eqnarray}
	L(t,\alpha;\gamma)&=&-\frac{\alpha}{2}\ln\Big(-\widehat{X}^{11}(0)\big(\widehat{X}^{12}(0)\big)^{-1}\Big)\bigg|_{s=0}^t,\ \ \ \ \ -1<\alpha<0,\label{cool:1}\\
	L(t,\alpha;\gamma)&=&-\frac{\alpha}{2}\ln\Big(-\widehat{X}^{11}(0)\big(\widehat{X}^{12}(0)\big)^{-1} s^{-\alpha}\Big)\bigg|_{s=0}^t,\ \ \ \ \ \ \alpha>0.\label{cool:2}
\end{eqnarray}
The function $\widehat{{\bf X}}(\lambda)$ was introduced in RHP \ref{Bess:X}, condition (3).
\section{Bessel and confluent hypergeometric parametrices}\label{AppB}
Following \cite{B2}, section $6.1$ we define for $\z\in\mathbb{C}\backslash[0,\infty)$ the unimodular matrix valued function
\begin{equation}\label{BessUndress}
	\Psi_{\alpha}(\z)=\sqrt{\pi}\e^{-\im\frac{\pi}{4}}\begin{bmatrix}I_{\alpha}((-\z)^{\frac{1}{2}}) & -\frac{\im}{\pi}K_{\alpha}((-\z)^{\frac{1}{2}})\smallskip\\
	(-\z)^{\frac{1}{2}}I_{\alpha}'((-\z)^{\frac{1}{2}}) & -\frac{\im}{\pi}(-\z)^{\frac{1}{2}}K_{\alpha}'((-\z)^{\frac{1}{2}})\end{bmatrix}\e^{\im\frac{\pi}{2}\alpha\sigma_3}
\end{equation}
using the modified Bessel functions $I_{\nu}(z)$ and $K_{\nu}(z)$ both defined with their principal branches, cf. \cite{NIST} and $z^{\alpha}:\mathbb{C}\backslash(-\infty,0]\rightarrow\mathbb{C}$ also in terms of the principal branch. Note that
\begin{equation*}
	\big(\Psi_{\alpha}(\z)\big)_+=\big(\Psi_{\alpha}(\z)\big)_-\begin{bmatrix}\e^{-\im\pi\alpha} & \e^{-\im\pi\alpha} \\ 0 & \e^{\im\pi\alpha}\end{bmatrix},\ \ \z>0;\ \ \ \ \ \ \ \frac{\d\Psi_{\alpha}}{\d\z}=\begin{bmatrix}0 & \frac{1}{2\z}\\ \frac{\alpha^2}{2\z}-\frac{1}{2} & 0\end{bmatrix}\Psi_{\alpha}.
\end{equation*}
and if we assemble the function (compare (6.3) in \cite{B2})
\begin{equation}\label{Bessundress}
	\Psi(\z;\alpha)=\Psi_{\alpha}(\z)\e^{-\im\frac{\pi}{2}\alpha\sigma_3}\begin{cases}\bigl[\begin{smallmatrix}1 & 0\\ -\e^{-\im\pi\alpha} & 1\end{smallmatrix}\bigr],&\textnormal{arg}\,\z\in(0,\frac{\pi}{3})\smallskip\\ \mathbb{I},&\textnormal{arg}\,\z\in(\frac{\pi}{3},\frac{5\pi}{3})\smallskip\\ \bigl[\begin{smallmatrix} 1& 0\\ \e^{\im\pi\alpha} & 1\end{smallmatrix}\bigr],&\textnormal{arg}\,\z\in(\frac{5\pi}{3},2\pi)\end{cases}
\end{equation}
then
\begin{problem}[see \cite{B2}, RHP 6.1.]\label{Bessbetter} The model function $\Psi(\z;\alpha)$ defined in \eqref{Bessundress} has the following properties
\begin{enumerate}
	\item $\Psi(\z;\alpha)$ is analytic for $\z\in\mathbb{C}\backslash(\Sigma_{\Psi}\cup\{0\})$ with $\Sigma_{\Psi}=\bigcup_{j=1}^3\Gamma_j$ where
	\begin{equation*}
		\Gamma_1=\e^{\im\frac{\pi}{3}}(0,\infty),\ \ \ \ \ \ \Gamma_2=(0,\infty),\ \ \ \ \ \ \ \Gamma_3=\e^{\im\frac{5\pi}{3}}(0,\infty)
	\end{equation*}
	are all oriented from zero to infinity.
	\item Along $\Sigma_{\Psi}$ we observe the jumps
	\begin{equation*}
		\Psi_+(\z;\alpha)=\Psi_-(\z;\alpha)\begin{bmatrix}1 & 0\\ \e^{-\im\pi\alpha} & 1\end{bmatrix},\ \ \z\in\Gamma_1;\ \ \ \ \ \ \ \Psi_+(\z;\alpha)=\Psi_-(\z;\alpha)\begin{bmatrix}1 & 0\\ \e^{\im\pi\alpha} & 1\end{bmatrix},\ \ \z\in\Gamma_3
	\end{equation*}
	and
	\begin{equation*}
		\Psi_+(\z;\alpha)=\Psi_-(\z;\alpha)\begin{bmatrix}0&1\\ -1 & 0\end{bmatrix},\ \ \ \z\in\Gamma_2
	\end{equation*}
	\item In a vicinity of $\z=0$, first in case $\alpha\notin\mathbb{Z}$,
	\begin{equation*}
		\Psi(\z;\alpha)=\widehat{\Psi}(\z;\alpha)(-\z)^{\frac{\alpha}{2}\sigma_3}\begin{bmatrix}1 & \frac{\im}{2}\frac{1}{\sin\pi\alpha}\\ 0 & 1\end{bmatrix}\begin{cases}\bigl[\begin{smallmatrix}1 & 0\\ -\e^{-\im\pi\alpha} & 1\end{smallmatrix}\bigr],&\textnormal{arg}\,\z\in(0,\frac{\pi}{3})\\ \mathbb{I},&\textnormal{arg}\,\z\in(\frac{\pi}{3},\frac{5\pi}{3})\\ \bigl[\begin{smallmatrix}1&0\\ \e^{\im\pi\alpha} & 1\end{smallmatrix}\bigr],&\textnormal{arg}\,\z\in(\frac{5\pi}{3},2\pi)\end{cases},
	\end{equation*}
	and second for $\alpha\in\mathbb{Z}$,
	\begin{equation*}
		\Psi(\z;\alpha)=\widehat{\Psi}(\z;\alpha)(-\z)^{\frac{\alpha}{2}\sigma_3}\begin{bmatrix}1 & -\frac{\e^{\im\pi\alpha}}{2\pi\im}\ln(-\z)\\ 0 & 1\end{bmatrix}
		\begin{cases}\bigl[\begin{smallmatrix}1 & 0\\ -\e^{-\im\pi\alpha} & 1\end{smallmatrix}\bigr],&\textnormal{arg}\,\z\in(0,\frac{\pi}{3})\\ \mathbb{I},&\textnormal{arg}\,\z\in(\frac{\pi}{3},\frac{5\pi}{3})\\ \bigl[\begin{smallmatrix}1&0\\ \e^{\im\pi\alpha} & 1\end{smallmatrix}\bigr],&\textnormal{arg}\,\z\in(\frac{5\pi}{3},2\pi)\end{cases}.
	\end{equation*}
	In both cases $\widehat{\Psi}(\z;\alpha)$ is analytic at $\z=0$ and principal branches are chosen throughout.
	\begin{rem}\label{locrem} With the help of \cite{NIST}, Chapter $10.31$ we find
	\begin{equation*}
		\widehat{\Psi}(\z;\alpha)=\sqrt{\pi}\,\e^{-\im\frac{\pi}{4}}\begin{bmatrix}\widehat{\Psi}^{11}(\z;\alpha) & -\frac{\im}{2}\frac{1}{\sin\pi\alpha}\widehat{\Psi}^{11}(\z;-\alpha)\smallskip\\ \widehat{\Psi}^{21}(\z;\alpha) & -\frac{\im}{2}\frac{1}{\sin\pi\alpha}\widehat{\Psi}^{21}(\z;-\alpha)\end{bmatrix},\ \ \alpha\notin\mathbb{Z}
	\end{equation*}
	with
	\begin{equation*}
		\widehat{\Psi}^{11}(\z;\alpha)=\frac{1}{2^{\alpha}}\sum_{k=0}^{\infty}\frac{(-\frac{\z}{4})^k}{k!\,\Gamma(1+\alpha+k)},\ \ \ \widehat{\Psi}^{21}(\z;\alpha)=\frac{\alpha}{2^{\alpha}}\sum_{k=0}^{\infty}\frac{(1+\frac{2k}{\alpha})(-\frac{\z}{4})^k}{k!\,\Gamma(1+\alpha+k)},\ \ \ \ \z\in\mathbb{D}_r(0);
	\end{equation*}
	and 
	\begin{equation*}
		\widehat{\Psi}(\z;\alpha)=\sqrt{\pi}\,\e^{-\im\frac{\pi}{4}}\left\{\begin{bmatrix}\frac{2^{-\alpha}}{\Gamma(1+\alpha)} & \frac{2^{\alpha}\Gamma(\alpha)}{2\pi\im}\smallskip\\ \frac{\alpha 2^{-\alpha}}{\Gamma(1+\alpha)} & -\frac{\alpha 2^{\alpha}\Gamma(\alpha)}{2\pi\im}\end{bmatrix}+\mathcal{O}\big(\z\big)\right\},\ \ \ \alpha\in\mathbb{Z}_{\geq 1},\ \ \ \ \z\in\mathbb{D}_r(0);
	\end{equation*}
	as well as
	\begin{equation*}
		\widehat{\Psi}(\z;0)=\sqrt{\pi}\,\e^{-\im\frac{\pi}{4}}\left\{\begin{bmatrix} 1 & \frac{\im}{\pi}(\gamma_E-\ln 2)\\ 0 & \frac{\im}{\pi}\end{bmatrix}+\mathcal{O}\big(\z\big)\right\},\ \ \ \ \z\in\mathbb{D}_r(0).
	\end{equation*}
	\end{rem}
	\item As $\z\rightarrow\infty,\z\notin\Sigma_{\Psi}$ we have
	\begin{align*}
		\Psi(\z;\alpha)\sim\begin{bmatrix}1 & 0\\ -b_1(\alpha) & 1\end{bmatrix}&\,\left\{\mathbb{I}+\sum_{m=1}^{\infty}\begin{bmatrix}a_{2m}(\alpha) & -a_{2m-1}(\alpha)\\ b_1(\alpha)a_{2m}(\alpha)-b_{2m+1}(\alpha) & b_{2m}(\alpha)-b_1(\alpha)a_{2m-1}(\alpha)\end{bmatrix}(-\z)^{-m}\right\}\\
		\times\,&(-\z)^{-\frac{1}{4}\sigma_3}\frac{1}{\sqrt{2}}\begin{bmatrix}1 & -1\\ 1 & 1\end{bmatrix}\e^{-\im\frac{\pi}{4}\sigma_3}\e^{(-\z)^{\frac{1}{2}}\sigma_3}
	\end{align*}
	with the coefficients (cf. \cite{NIST})
	\begin{equation*}
		a_k(\nu)=\frac{1}{k!\,8^k}(4\nu^2-1^2)(4\nu^2-3^2)\cdot\ldots\cdot(4\nu^2-(2k-1)^2),\ k\in\mathbb{Z}_{\geq 1};\ \ \ \ b_1(\nu)=\frac{1}{8}(4\nu^2+3)
	\end{equation*}
	and
	\begin{equation*}
		b_k(\nu)=\frac{1}{k!\,8^k}\big((4\nu^2-1^2)(4\nu^2-3^2)\cdot\ldots\cdot(4\nu^2-(2k-3)^2)\big)(4\nu^2+4k^2-1),\ \ k\in\mathbb{Z}_{\geq 2}.
	\end{equation*}
\end{enumerate}
\end{problem}
In addition to the Bessel-parametrix \eqref{Bessundress} we require also the following model function built out of confluent hypergeometric functions $U(a,\z)\equiv U(a,1,\z)$. The underlying construction is essentially a rotation of the one given in \cite{BB}, equations (2.19) and (2.21), see also \cite{BI,IK} for similar constructions. In more detail, define
\begin{equation}\label{OP:5}
	\Phi_0(\z)=\begin{bmatrix}U(\nu,\e^{-\im\frac{\pi}{2}}\z)\e^{2\pi\im\nu} & -U(1-\nu,\e^{-\im\frac{3\pi}{2}}\z)\e^{\im\pi\nu}\frac{\Gamma(1-\nu)}{\Gamma(\nu)}\\ -U(1+\nu,\e^{-\im\frac{\pi}{2}}\z)\e^{\im\pi\nu}\frac{\Gamma(1+\nu)}{\Gamma(-\nu)} & U(-\nu,\e^{-\im\frac{3\pi}{2}}\z)\end{bmatrix}\e^{\frac{\im}{2}\z\sigma_3},\ \ \ \textnormal{arg}\,\z\in\left(\frac{\pi}{2},\frac{5\pi}{2}\right)
\end{equation}
with $\nu=\frac{v}{2\pi\im}\in\im\mathbb{R}$. Now assemble
\begin{equation}\label{OP:6}
	\Phi(\z)=\Phi_0(\z)\begin{cases}\mathbb{I},&\textnormal{arg}\,\z\in(\pi,\frac{4\pi}{3})\smallskip\\ \bigl[\begin{smallmatrix}1&0\\ \e^{\im\pi\nu} & 1\end{smallmatrix}\bigr],&\textnormal{arg}\,\z\in(\frac{4\pi}{3},\frac{3\pi}{2})\smallskip\\ \bigl[\begin{smallmatrix}1&0\\ 2\im\sin\pi\nu & 1\end{smallmatrix}\bigr]\bigl[\begin{smallmatrix}1 & 0\\ \e^{-\im\pi\nu} & 1\end{smallmatrix}\bigr],&\textnormal{arg}\,\z\in(\frac{3\pi}{2},\frac{5\pi}{3})\smallskip\\ \bigl[\begin{smallmatrix} 1 & 0\\ 2\im\sin\pi\nu & 1\end{smallmatrix}\bigr],&\textnormal{arg}\,\z\in(\frac{5\pi}{3},2\pi)\end{cases}\begin{cases}\bigl[\begin{smallmatrix}1 & 0\\ 2\im\sin\pi\nu & 1\end{smallmatrix}\bigr]\bigl[\begin{smallmatrix}0&-\e^{\im\pi\nu}\\ \e^{-\im\pi\nu} & 0\end{smallmatrix}\bigr],&\textnormal{arg}\,\z\in(2\pi,\frac{7\pi}{3})\smallskip\\ \bigl[\begin{smallmatrix}1 & 0\\2\im\sin\pi\nu & 1\end{smallmatrix}\bigr]\bigl[\begin{smallmatrix} 1 & -\e^{\im\pi\nu}\\ \e^{-\im\pi\nu} & 0\end{smallmatrix}\bigr],&\textnormal{arg}\,\z\in(\frac{7\pi}{3},\frac{5\pi}{2})\smallskip\\ \bigl[\begin{smallmatrix}1 & -\e^{-\im\pi\nu}\\ \e^{\im\pi\nu} & 0\end{smallmatrix}\bigr],&\textnormal{arg}\,\z\in(\frac{\pi}{2},\frac{2\pi}{3})\smallskip\\ \bigl[\begin{smallmatrix}0&-\e^{-\im\pi\nu}\\ \e^{\im\pi\nu} & 0\end{smallmatrix}\bigr],&\textnormal{arg}\,\z\in(\frac{2\pi}{3},\pi)\end{cases}
\end{equation}
and by recalling analytic and asymptotic properties of $U(a,\z)$, cf. \cite{NIST} we find
\begin{problem}\label{confluprob} The function $\Phi(\z)\in\mathbb{C}^{2\times 2}$ introduced in \eqref{OP:5} and \eqref{OP:6} has the following properties.
\begin{enumerate}
	\item $\Phi(\z)$ is analytic for $\z\in\mathbb{C}\backslash(\{\textnormal{arg}\,\z=\frac{2\pi}{3},\pi,\frac{4\pi}{3},\frac{5\pi}{3},2\pi,\frac{7\pi}{3}\}\cup\{0\})$ and we orient the six rays emanating from $\z=0$ as shown near $\lambda=1$ in Figure \ref{figure13}.
	\item The limiting values $\Phi_{\pm}(\z)$ on the jump contours satisfy
	\begin{equation*}
		\Phi_+(\z)=\Phi_-(\z)\e^{-\im\frac{\pi}{2}\nu\sigma_3}\bigl[\begin{smallmatrix}1 & 0\\ -1&1\end{smallmatrix}\bigr]\e^{\im\frac{\pi}{2}\nu\sigma_3},\ \ \textnormal{arg}\,\z=\frac{2\pi}{3};\ \ \ \ \Phi_+(\z)=\Phi_-(\z)\e^{-\im\frac{\pi}{2}\nu\sigma_3}\bigl[\begin{smallmatrix}0 & -1\\ 1 & 0\end{smallmatrix}\bigr]\e^{\im\frac{\pi}{2}\nu\sigma_3},\ \ \textnormal{arg}\,\z=\pi;
	\end{equation*}
	\begin{equation*}
		\Phi_+(\z)=\Phi_-(\z)\e^{-\im\frac{\pi}{2}\nu\sigma_3}\bigl[\begin{smallmatrix}1 & 0\\ -1&1\end{smallmatrix}\bigr]\e^{\im\frac{\pi}{2}\nu\sigma_3},\ \ \textnormal{arg}\,\z=\frac{4\pi}{3};\ \ \ \ \Phi_+(\z)=\Phi_-(\z)\e^{\im\frac{\pi}{2}\nu\sigma_3}\bigl[\begin{smallmatrix}1 & 0\\ -1 & 1\end{smallmatrix}\bigr]\e^{-\im\frac{\pi}{2}\nu\sigma_3},\ \ \textnormal{arg}\,\z=\frac{5\pi}{3};
	\end{equation*}
	\begin{equation*}
		\Phi_+(\z)=\Phi_-(\z)\e^{\im\frac{\pi}{2}\nu\sigma_3}\bigl[\begin{smallmatrix}0 & -1\\ 1&0\end{smallmatrix}\bigr]\e^{-\im\frac{\pi}{2}\nu\sigma_3},\ \ \textnormal{arg}\,\z=2\pi;\ \ \ \ \Phi_+(\z)=\Phi_-(\z)\e^{\im\frac{\pi}{2}\nu\sigma_3}\bigl[\begin{smallmatrix}1 & 0\\ -1 & 1\end{smallmatrix}\bigr]\e^{-\im\frac{\pi}{2}\nu\sigma_3},\ \ \textnormal{arg}\,\z=\frac{7\pi}{3}.
	\end{equation*}
	By construction, there are no jumps on the vertical axis $\textnormal{arg}\,\z=\frac{\pi}{2},\frac{3\pi}{2}$.
	\item In a vicinity of $\z=0$ we find
	\begin{equation*}
		\Psi(\z)=\widehat{\Psi}(\z)\begin{bmatrix}1 & \frac{\gamma}{2\pi\im}\ln\z\\ 0 & 1\end{bmatrix}\begin{cases}\bigl[\begin{smallmatrix}1 & 0\\ -\e^{2\pi\im\nu} & 1\end{smallmatrix}\bigr],&\textnormal{arg}\,\z\in(-\pi,-\frac{2\pi}{3})\smallskip\\ \mathbb{I},&\textnormal{arg}\,\z\in(-\frac{2\pi}{3},-\frac{\pi}{3})\smallskip\\ \bigl[\begin{smallmatrix}1 & 0\\ -1 & 1\end{smallmatrix}\bigr],&\textnormal{arg}\,\z\in(-\frac{\pi}{3},0)\smallskip\\ \bigl[\begin{smallmatrix}1&-1\\ 0 & 1\end{smallmatrix}\bigr]\bigl[\begin{smallmatrix}1 & 0\\ 1 & 1\end{smallmatrix}\bigr],&\textnormal{arg}\,\z\in(0,\frac{\pi}{3})\smallskip\\ \bigl[\begin{smallmatrix}1 & -1\\ 0 & 1\end{smallmatrix}\bigr],&\textnormal{arg}\,\z\in(\frac{\pi}{3},\frac{2\pi}{3})\smallskip\\ \bigl[\begin{smallmatrix}1 & -1\\ 0 & 1\end{smallmatrix}\bigr]\bigl[\begin{smallmatrix} 1 & 0\\ \e^{2\pi\im\nu} & 1\end{smallmatrix}\bigr],&\textnormal{arg}\,\z\in(\frac{2\pi}{3},\pi)\end{cases}\times\e^{-\im\frac{\pi}{2}\nu\sigma_3}
	\end{equation*}
	with $\widehat{\Psi}(\z)$ analytic at $\z=0$ and $\ln:\mathbb{C}\backslash(-\infty,0]\rightarrow\mathbb{C}$ defined with its principal branch.
	\item As $\z\rightarrow\infty$,
	\begin{align*}
		\Phi(\z)\sim\Bigg\{\mathbb{I}+\sum_{m=1}^{\infty}\e^{\im\frac{\pi}{2}\nu\sigma_3}&\,\begin{bmatrix}((\nu)_m)^2 & (-1)^m\,m((1-\nu)_{m-1})^2\frac{\Gamma(1-\nu)}{\Gamma(\nu)}\\ m\,((1+\nu)_{m-1})^2\frac{\Gamma(1+\nu)}{\Gamma(-\nu)} &(-1)^m((-\nu)_m)^2\end{bmatrix}\e^{-\im\frac{\pi}{2}\nu\sigma_3}\frac{(\im\z)^{-m}}{m!}\Bigg\}\\
		&\,\times\,\z^{-\nu\sigma_3}\e^{\frac{\im}{2}\z\sigma_3}\begin{cases}\Bigl[\begin{smallmatrix}0 & -\e^{\im\frac{3\pi}{2}\nu} \\\e^{-\im\frac{\pi}{2}\nu} & 0\end{smallmatrix}\Bigr],&\textnormal{arg}\,\z\in(\frac{\pi}{2},\pi)\smallskip\\ \Bigl[\begin{smallmatrix}\e^{\im\frac{5\pi}{2}\nu} & 0\\ 0&\e^{-\im\frac{3\pi}{2}\nu}\end{smallmatrix}\Bigr],&\textnormal{arg}\,\z\in(\pi,2\pi)\\ \Bigl[\begin{smallmatrix}0 & -\e^{\im\frac{7\pi}{2}\nu}\\ \e^{-\im\frac{5\pi}{2}\nu} & 0\end{smallmatrix}\Bigr],&\textnormal{arg}\,\z\in(2\pi,\frac{5\pi}{2})\end{cases}
	\end{align*}
	where $(a)_0=1,(a)_m=a(a+1)(a+2)\cdot\ldots\cdot(a+m-1),a\in\mathbb{C},m\in\mathbb{Z}_{\geq 0}$ is the Pochhammer symbol.
\end{enumerate}
\end{problem}
\section{Smoothness of $\mathcal{H}_H $}\label{AppC}
We first show the absence of singularities in $q$ and $p$ using probabilistic arguments. We know from the probabilistic interpretation of $F_H(t,\alpha)$ that $\mathcal{H}(q,p,t,\alpha)$ is smooth for $t\in(0,+\infty)$. But \eqref{HH:2} implies that (see also \eqref{S:8} below)
\begin{equation*}
	t(q^2-1)(tq_t)_t=q(tq_t)^2+\frac{1}{4}(t-\alpha^2)q+\frac{1}{4}tq^3(q^2-2).
\end{equation*}
Hence, near an assumed pole $t_0>0$ of $q(t,\alpha;\gamma)$, we have the Laurent expansion
\begin{equation*}
	q(t,\alpha;\gamma)=\frac{c}{t-t_0}+\mathcal{O}\big(t-t_0\big),\ \ \ c^2=4t_0,\ \ \ 0<|t-t_0|<r,
\end{equation*}
and thus back in \eqref{HH:2},
\begin{equation*}
	\mathcal{H}_H(q,p,t,\alpha)=\frac{1}{t-t_0}+\mathcal{O}(1),\ \ \ \ 0<|t-t_0|<r.
\end{equation*}
But this contradicts the regularity of $\mathcal{H}(q,p,t,\alpha)$, i.e. $q(t,\alpha;\gamma)$ is indeed smooth on $(0,+\infty)$ for $\alpha\in(-1,+\infty),\gamma\in[0,1]$. Alternatively we can start from Proposition \ref{Di:1} and \eqref{S:3},
\begin{equation}\label{Op}
	q^2(t,\alpha;\gamma)=-2\frac{\partial}{\partial t}\left(t\frac{\partial}{\partial t}\ln\det\left(1-\gamma K_{\textnormal{Bess}}^{\alpha}\upharpoonright_{L^2(0,t)}\right)\right),
\end{equation}
so that possible poles of $q(t,\alpha;\gamma)$ coincide with zeros of the Fredholm determinant (as function of $t$). Thus, if we can estimate the operator norm of $\gamma K_{\textnormal{Bess}}^{\alpha}\upharpoonright_{L^2(0,t)}$ for $t\in(0,+\infty)$ above by unity, regularity of $q$ follows. In order to achieve this we draw inspiration from \cite{TW2} and think of $K_{\textnormal{Bess}}^{\alpha}\upharpoonright_{L^2(0,t)}$ as acting on $L^2(0,\infty)$ with kernel
\begin{equation*}
	\chi_{(0,t)}(\lambda)K_{\textnormal{Bess}}^{\alpha}(\lambda,\mu)\chi_{(0,t)}(\mu).
\end{equation*}
Hence, introducing the Hankel transform
\begin{equation*}
	H:L^2(0,\infty)\rightarrow L^2(0,\infty);\ \ \ \ \ (Hf)(\lambda)=\frac{1}{2}\int_0^{\infty}J_{\alpha}(\sqrt{\lambda\mu})f(\mu)\,\d\mu,
\end{equation*}
which is unitary on $L^2(0,\infty)$, cf. \cite{Kos,TW2}, the operator $K_{\textnormal{Bess}}^{\alpha}\upharpoonright_{L^2(0,t)}$ is equal to the square of $P_tHP_t$, where
\begin{equation*}
	P_t:L^2(0,\infty)\rightarrow L^2(0,t);\ \ \ \ \ \ (P_tf)(\lambda)=f(\lambda)\chi_{(0,t)}(\lambda)
\end{equation*}
is the orthogonal projection from $L^2(0,\infty)$ to $L^2(0,t)$ with $t>0$. This means that, for any fixed $t\in(0,+\infty)$,
\begin{equation*}
	\|P_tHP_t\|_{L^2(0,\infty)}\leq \|P_t\|_{L^2(0,\infty)}^2\cdot\|H\|_{L^2(0,\infty)}=1.
\end{equation*}
Now suppose that $P_tHP_t$ were to have eigenvalues $\pm 1$ for some $t\in(0,\infty)$ with eigenfunctions $f_{\pm}\in L^2(0,\infty)$. Then $P_tf_{\pm}$ are also eigenfunctions to the same eigenvalues and
\begin{equation}\label{CS}
	\langle HP_tf_{\pm},P_tf_{\pm}\rangle_{L^2(0,\infty)}=\langle P_tHP_tf_{\pm},P_tf_{\pm}\rangle_{L^2(0,\infty)}=\pm\|P_tf_{\pm}\|_{L^2(0,\infty)}^2.
\end{equation}
But by Cauchy-Schwarz inequality
\begin{equation*}
	|\langle Hu,u\rangle_{L^2(0\infty)}|\leq\|Hu\|_{L^2(0,\infty)}\|u\|_{L^2(0,\infty)}=\|u\|^2_{L^2(0,\infty)},
\end{equation*}
with equality iff $u$ is an eigenfunction. Thus, returning to \eqref{CS}, $P_tf$ is a (nonzero) eigenfunction of the Hankel transform which vanishes for $\lambda>t$. But since $t\in(0,\infty)$ is fixed, $(HP_tf)(\lambda)$ is analytic for $\lambda\in\mathbb{C}\backslash(-\infty,0]$, i.e. we must have $P_tf=0$, which is a contradiction.
\end{appendix}

\end{document}